\newtheorem{theorem}{Theorem}
\newtheorem{lemma}{Lemma}
\newtheorem{corollary}{Corollary}
\newcommand{\comm}[2]{ { \left[  #1 , #2 \right]_-  } }
\newcommand{\acomm}[2]{ { \{  #1 , #2 \}_+  } }
\DeclareMathOperator{\Tr}{Tr}
\DeclareMathOperator{\anc}{anc}
\DeclareMathOperator{\N}{N}
\DeclareMathOperator{\C}{C}
\DeclareMathOperator{\depth}{depth}
\DeclareMathOperator{\p}{V}
\begin{document}

\title{Optimal logical Bell measurements on stabilizer codes with linear optics}
\author{Simon D. Reiß}
\email{sreiss@uni-mainz.de}
\author{Peter van Loock}
\email{loock@uni-mainz.de}
\affiliation{Johannes-Gutenberg University of Mainz, Institute of Physics, Staudingerweg 7, 55128 Mainz, Germany}

\begin{abstract}
Bell measurements~(BMs) are ubiquitous in quantum information and technology. They are basic elements for quantum commmunication, computation, and error correction. In particular, when performed on logical qubits encoded in physical photonic qubits, they allow for a read-out of stabilizer syndrome information to enhance loss tolerance in qubit-state transmission and fusion. However, even in an ideal setting without photon loss, BMs cannot be done perfectly based on the simplest experimental toolbox of linear optics. Here we demonstrate that any logical BM on stabilizer codes can always be mapped onto a single physical BM perfomed on any qubit pair from the two codes. As a necessary condition for the success of a logical BM, this provides a general upper bound on its success probability, especially ruling out the possibility that the stabilizer information obtainable from  only partially succeeding, physical linear-optics BMs could be combined into the full logical stabilizer information. We formulate sufficient criteria to find schemes for which a single successful BM on the physical level will always allow to obtain the full logical information by suitably adapting the subsequent physical measurements. Our approach based on stabilizer group theory is generally applicable to any stabilizer code, which we demonstrate for quantum parity, five-qubit, standard and rotated planar surface, tree, and seven-qubit Steane codes. Our schemes attain the general upper bound for all these codes, while this bound had previously only been reached for the quantum parity code.
\end{abstract}
 
\maketitle

\section{Introduction}
\label{sec:introduction}
All hardware used in information technology is susceptible to errors, and quantum technology is no exception. In fact, owing to the fragility of quantum states, quantum hardware is especially vulnerable. In the pursuit of fault-tolerant quantum information processing, quantum error correction codes~\cite{nielsen_chuang_2010} play a crucial role. Stabilizer codes~\cite{gottesman1997stabilizer} form an especially relevant subclass of quantum error correction codes that are highly prevalent in the literature, with extensive research committed to the topic. Their structured framework, efficient error correction, scalability, and fault-tolerant gate implementation, combined with strong theoretical foundations, make them particularly suitable for developing reliable and scalable quantum information processing. In this work, we focus on optical platforms, a promising approach because of the intrinsic error robustness of photons, the ease of implementing single-qubit gates at room temperature, and the use of photons as flying qubits for quantum communication and quantum computers.

Bell measurements~(BMs) are a key computational primitive in quantum information processing, with applications in both quantum computation and quantum communication. In measurement-based quantum computation~(MBQC)~\cite{PhysRevLett.86.5188}, BMs provide the fusion operations that combine smaller photonic resource states into the large-scale graph states serving as the universal resource for computation~\cite{PhysRevA.68.022312,PhysRevLett.95.010501,Lobl2025transforminggraph}. Fusion-based quantum computation~(FBQC) builds on this idea by formulating universal quantum computation directly in terms of such fusion operations, enabling scalable and loss-tolerant architectures for photonic quantum computing~\cite{Bartolucci2023}. In the context of all-optical quantum repeaters, BMs enable entanglement swapping at the logical level or general teleportation-based quantum error correction~(QEC),	 which provides resilience against photon loss and thus enhances both the scalability and efficiency of long-distance quantum communication~\cite{Azuma2015,PhysRevA.100.052303,PhysRevLett.117.210501,PhysRevA.95.012327,Hilaire_2021,patil2024improveddesignallphotonicquantum,PRXQuantum.4.040322}.

Unfortunately, in linear optical setups, BMs are only achievable with a non-unit success probability~\cite{PhysRevA.59.3295,Calsamiglia2001,PhysRevA.69.012302}. In this setting, stabilizer codes may serve a dual purpose. Besides providing error correction, they enable logical BMs on encoded qubits with an improved success probability \cite{PhysRevLett.117.210501,PhysRevA.95.012327,PhysRevA.99.062308,PhysRevA.100.052303,PhysRevLett.133.050604,PhysRevLett.133.050605}. Furthermore, logical BMs are necessary for computation on encoded qubits, which are regarded as essential for realizing fault-tolerant architectures~\cite{PhysRevLett.98.190504,Bartolucci2023}. However, for a given encoding, the success probability still depends heavily on the specific logical measurement scheme. The success rate of logical BMs is critical to the scalability of optical implementations, given their prevalent use in quantum computation and communication. Therefore, optimizing this probability without increasing the number of qubits can substantially improve their viability.

Various static and feedforward-based photonic linear-optics logical BMs have been proposed~\cite{PhysRevLett.117.210501,PhysRevA.95.012327,PhysRevA.99.062308,PhysRevA.100.052303,Hilaire_2021,PRXQuantum.4.040322,patil2024improveddesignallphotonicquantum}. Static logical measurement schemes for the quantum parity code~(QPC)~\cite{PhysRevLett.95.100501} were introduced in Refs.~\cite{PhysRevLett.117.210501,PhysRevA.95.012327}. Ref.~\cite{PhysRevA.99.062308} showed that any logical BM on stabilizer codes can be decomposed into transversal physical BMs. It also proposed efficient static logical BMs for three Calderbank–Shor–Steane~(CSS) codes~\cite{PhysRevA.54.1098,10.1098/rspa.1996.0136}, a subclass of stabilizer codes derived from classical codes, namely, the QPC code, the standard planar surface code~\cite{KITAEV20032}, and the seven-qubit Steane code~\cite{doi:10.1098/rspa.1996.0136}. Ref.~\cite{PhysRevA.100.052303} presented a feedforward-based logical BM for the QPC code that achieves a higher success probability than the static schemes in Refs.~\cite{PhysRevLett.117.210501,PhysRevA.95.012327,PhysRevA.99.062308}, at the expense of requiring feedforward. For the tree code~\cite{PhysRevLett.97.120501}, feedforward-based logical BMs were presented in Refs.~\cite{Hilaire_2021} and~\cite{patil2024improveddesignallphotonicquantum}, with a focus on improving loss tolerance and error robustness. Finally, Ref.~\cite{PRXQuantum.4.040322} derived tight, fundamental upper bounds on the loss tolerance thresholds for both static and feedforward-based logical linear-optics BMs.

This paper addresses several critical open questions in the field. Firstly, there is currently no established bound on the success probability for feedforward-based logical BMs. Additionally, while the schemes proposed in Ref.~\cite{PhysRevA.99.062308} have been optimized within static setups, it remained open whether these strategies can be extended to feedforward-based schemes and if such extensions offer further improvements. Furthermore, while Ref.~\cite{PhysRevA.100.052303} provides a feedforward-based scheme with optimal no-loss success probability for the QPC code, the development of optimal feedforward-based schemes for other stabilizer codes remains an open problem.

In this work, we develop a comprehensive mathematical framework for analyzing logical BM schemes with great generality. We rigorously formalize the process of logical measurement schemes from the ground up. Along the way, we derive several minor results that provide valuable insights into the dynamics of physical measurements on entangled quantum states, particularly those encoded with stabilizer codes. Based on these foundations, we eventually address feedforward-based logical BMs on two logical qubits encoded in arbitrary, not necessarily identical, stabilizer codes where the physical BMs have a non-unit success probability. The most prominent scenario for this are BMs on photonic dual-rail qubits using linear optics. These schemes make use of single-qubit Clifford operations, single-qubit Pauli measurements, and BMs. The inclusion of single-qubit gates is well motivated, as they are typically simple to implement on dual-rail qubits. The restriction to single-qubit Clifford gates arises from the limitations of the stabilizer formalism. We focus on a scenario without errors, considering only the probabilistic nature of physical BMs as the sole imperfection. We prove that having at least one successful physical BM performed on a pair of physical qubits, one from each of the two separate codes, where we have no prior knowledge about the outcome, is a necessary condition for a successful logical BM. This requirement imposes an upper bound on the logical success probability for these schemes, given by the probability of having at least one such successful physical BM. This improves upon a proof previously given in Ref.~\cite{PhysRevA.100.052303} by extending it from static linear optics to feedforward-based schemes and by circumventing the restrictive assumption that photon-number-resolving detectors can distinguish only up to two-photon events.

We also propose a conceptual framework for designing logical BM schemes akin to the approach used in Ref.~\cite{PhysRevA.99.062308}. However, while Ref.~\cite{PhysRevA.99.062308} is limited to static linear optics and CSS codes, our approach extends to the full class of stabilizer codes and incorporates feedforward-based schemes. This broader scope is achieved by employing a more group-theoretic approach, rather than the classical vector space methods used in Ref.~\cite{PhysRevA.99.062308}. We further present sufficient conditions for the optimality of a scheme. In this work, we define a scheme as optimal if it reaches the bound that we have determined. Our approach based on stabilizer group theory is generally applicable to any stabilizer code, which we demonstrate for quantum parity, five-qubit~\cite{PhysRevLett.77.198}, standard and rotated planar surface~\cite{Horsman_2012}, tree, and seven-qubit Steane codes. Our schemes attain the general upper bound for all these codes, while this bound had previously only been reached for the QPC code in Ref.~\cite{PhysRevA.100.052303}. Additionally, we present an optimized static scheme for the rotated planar surface code. While this scheme does not achieve the success probability of the feedforward-based bound, it still performs significantly better than a simple static scheme. To the best of the authors' knowledge, no logical BM schemes have been proposed for the five-qubit code or the rotated planar surface code to date. For the tree code, feedforward-based logical BM schemes were presented in Refs.~\cite{Hilaire_2021} and~\cite{patil2024improveddesignallphotonicquantum}, improving loss tolerance and error robustness compared to a simpler approach. However, in the absence of loss, our scheme achieves a significantly higher success probability than these schemes. For the Steane code, Ref.~\cite{PhysRevA.99.062308} presents a logical BM scheme that is sub-optimal, in contrast to ours, but has the advantage of requiring only static linear optics.

Interestingly, the scheme we develop for the five-qubit code does not require feedforward, thus it can be fully implemented using static operations alone. This observation implies that, in general, there is no tighter bound for static schemes than for feedforward-based ones. However, for certain codes, the standard toolbox, which relies on what we call guaranteed partial information BMs, fails to achieve the bound when constrained to static operations~\cite{PhysRevA.99.062308}. Therefore, even though we have disproved the existence of a tighter bound for static schemes for general stabilizer codes, it seems unlikely that the bound can be achieved with static means in many cases.

On a more conceptual level, our work provides deeper insights into the dynamics of physical measurements on entangled quantum states, particularly those encoded using stabilizer codes. Furthermore, we expect that our results are of high relevance to the implementation of fault-tolerant optical quantum technologies, because in a regime of sufficiently small errors a realistic logical BM starts closely resembling the ideal measurement. Although a treatment including photon loss would ultimately yield a more complete picture, a full analysis of the idealized, lossless case is a necessary prerequisite. The present work undertakes this foundational step in detail, providing the basis upon which any treatment of imperfections can be built.

In Sec.~\ref{sec:physical-bell-measurements}, we introduce the mathematical formalism, and discuss general physical BMs and their application to subspaces of entangled quantum states. In Sec.~\ref{sec:logical-bell-measurements-on-stabilizer-codes}, we describe the encoding of a uniform mixture of Bell states in stabilizer codes, examine their dynamics under Pauli measurements, and present the proof of the bound for the success probability of a logical BM. In Sec.~\ref{sec:logical-bell-measurements-and-their-optimality}, we discuss our conceptual framework for designing logical BM schemes, present sufficient conditions for an optimal scheme, and discuss heuristics for finding optimal schemes. In Sec.~\ref{sec:optimal-logical-bell-measurements-for-specific-codes} we present optimal schemes for the stabilizer codes considered in this work. In Sec.~\ref{sec:conclusion}, we summarize and conclude our results, and give a brief outlook on future work. Further technical details and extended derivations are provided in the appendices.

\section{Physical Bell measurements}
\label{sec:physical-bell-measurements}
In this section, we establish the mathematical formalism for physical, destructive BMs on a subset of two qubits from an entangled quantum state. In the context of linear optics, a destructive measurement refers to one in which the photonic qubits are absorbed by the detector, making them unavailable for further processing. We define a complete physical destructive BM as a perfect projective measurement characterized by four projectors onto the Bell states. Furthermore, we assume that the entangled quantum state satisfies the following property: when a complete physical BM is applied to a subset of two qubits, the probabilities of the four possible measurement outcomes are uniformly distributed. Intuitively, the local state of these two qubits, after tracing out the rest of the quantum state, mimics a uniform mixture of Bell states. Consequently, we would expect that a non-complete physical BM on these two qubits, allowing for partial outcomes, would have the same success probability as on a uniform mixture of Bell states. Moreover, if the measurement is successful, it is expected that the physical BM will project the remaining quantum state in the same manner as a complete BM.

However, the local state of the two measured qubits may belong to a broader class of states that also satisfy this property. To illustrate, consider the following example of an entangled quantum state:
\begin{equation}
	\ket{\psi} = \frac{1}{\sqrt{2}} \ket{0} \otimes \left( \ket{00} + \ket{11} \right).
\end{equation}
Now, consider a BM applied to the first two qubits of this state. The local state of these two qubits is given by:
\begin{equation}
	\Tr_3 \{ \ket{\psi} \bra{\psi} \}  = \frac{1}{2} \ket{00} \bra{00} + \frac{1}{2} \ket{01} \bra{01}.
\end{equation}
It is straightforward to see, by rewriting the local state as follows, that this state satisfies the condition that all four outcomes of a complete BM are equally likely:
\begin{equation}
	\begin{aligned}
	\Tr_3 \{ \ket{\psi} \bra{\psi} \}	& = \frac{1}{4} \left( \ket{\Phi^+} + \ket{\Phi^-} \right) \left( \bra{\Phi^+} + \bra{\Phi^-} \right) \\
										& + \frac{1}{4} \left( \ket{\Psi^+} - \ket{\Psi^-} \right) \left( \bra{\Psi^+} - \bra{\Psi^-} \right).	
	\end{aligned}
\end{equation}
However, it is important to note that while these probabilities replicate those of a local uniform mixture of Bell states, the local state of these two qubits is clearly distinct from such an ensemble. Thus, rigorously proving the aforementioned expectations appears to be a useful, non-trivial starting point. As a prerequisite for the main results of this paper, we prove that these expectations hold. We conclude the section with a formal definition of partial BM results.

We start with a brief review of linear-optics BMs in Sec.~\ref{sec:linear-optics-bell-measurements}. In Sec.~\ref{sec:generalized-physical-bell-measurements}, we introduce our definition of a physical BM. We then present our formal results addressing physical BMs on a subset of two qubits from an entangled quantum state, as well as the definition of a partial BM, in Sec.~\ref{sec:physical-bell-measurements-on-entangled-quantum-states}.

\subsection{Linear-optics Bell measurements}
\label{sec:linear-optics-bell-measurements}

The Bell states constitute the four simultaneous eigenstates of the two two-qubit operators $X \otimes X$ and $Z \otimes Z$. For convenience, we will omit the tensor product in the notation. The corresponding eigenvalues of the four Bell states are:
\begin{equation}
\begin{aligned}
\ket{\Phi^+} &= \frac{\ket{00} + \ket{11}}{\sqrt{2}} \quad \corresponds \quad ZZ \rightarrow +1, \quad XX \rightarrow +1, \\
\ket{\Phi^-} &= \frac{\ket{00} - \ket{11}}{\sqrt{2}} \quad \corresponds \quad ZZ \rightarrow +1, \quad XX \rightarrow -1, \\
\ket{\Psi^+} &= \frac{\ket{10} + \ket{01}}{\sqrt{2}} \quad \corresponds \quad ZZ \rightarrow -1, \quad XX \rightarrow +1, \\
\ket{\Psi^-} &= \frac{\ket{10} - \ket{01}}{\sqrt{2}} \quad \corresponds \quad ZZ \rightarrow -1, \quad XX \rightarrow -1. \\
\end{aligned}
\end{equation}
This means that a BM essentially consists of the measurement of the two two-qubit observables $XX$ and $ZZ$. Throughout this paper, we shall denote the collection of Bell states as follows:
\begin{equation}
	\{ \ket{\Phi_j} \}_{j \in \{1,2,3,4\}} = \{ \ket{\Phi^+}, \ket{\Phi^-}, \ket{\Psi^+}, \ket{\Psi^-} \}.
\end{equation}

In the context of discrete-variable quantum information processing, photonic qubits are most commonly dual-rail encoded. In this encoding, a single photon occupying one of two optical modes defines the logical basis states, with $\ket{10}_p \equiv \ket{0}$ and $\ket{01}_p \equiv \ket{1}$. Here, the subscript $p$ is used to indicate a quantum state on the photonic level in a Fock space. In the following discussion on linear-optics BMs, we will assume all qubits to be dual-rail encoded. What makes this encoding particularly appealing is its inherent ability to serve as a loss detection code, where any instance of photon loss instantly removes the qubit from the code space.

Using linear optics it is impossible to unambiguously identify Bell states with unit probability~\cite{PhysRevA.59.3295}. This no-go includes the use of feedforward and ancillary photons. A scenario of particular interest is the unambiguous discrimination of four equiprobable Bell states. It was shown that without the use of feedforward and ancillary photons the success probability in this scenario is upper bounded by $\frac{1}{2}$~\cite{Calsamiglia2001}. With a simple setup which uses two beam splitters it is possible to achieve this bound~\cite{Weinfurter_1994,PhysRevA.51.R1727,PhysRevA.53.R1209}. Note that while the mathematical description of the BM process requires two beam splitter interactions, practical implementations using polarization encoding typically need only a single physical beam splitter to achieve this operation by interacting each polarization separately. The Bell states in dual-rail encoding are:
\begin{equation}
\begin{aligned}
	\ket{\Phi^\pm} & = \frac{1}{\sqrt{2}} \left( \ket{1010}_p \pm \ket{0101}_p \right), \\
	\ket{\Psi^\pm} & = \frac{1}{\sqrt{2}} \left( \ket{1001}_p \pm \ket{0110}_p \right). \\
\end{aligned}
\end{equation}
Two symmetric, 50:50 beam splitters defined by their action on two mode creation operators,
\begin{equation}
	\begin{pmatrix}
		a^\dagger	\\
		b^\dagger	\\
	\end{pmatrix}
	\rightarrow
	\frac{1}{\sqrt{2}}
	\begin{pmatrix}
		1	&	i	\\
		i	&	1	\\
	\end{pmatrix}
	\begin{pmatrix}
		a^\dagger	\\
		b^\dagger	\\
	\end{pmatrix}
\end{equation}
are now used to interfere mode 1 with mode 3 and mode 2 with mode 4. Under this action the Bell states transform as follows:
\begin{eqnarray}
\ket{\Phi^\pm} & \rightarrow & \frac{i}{2} \left( \ket{2000}_p + \ket{0020}_p \pm \ket{0200}_p \pm \ket{0002}_p \right), \nonumber \\
\ket{\Psi^+}   & \rightarrow & \frac{i}{\sqrt{2}} \left( \ket{1100}_p + \ket{0011}_p \right), \nonumber \\
\ket{\Psi^-}   & \rightarrow & \frac{i}{\sqrt{2}} \left( \ket{1001}_p - \ket{0110}_p \right).
\label{eq:std-bm-output}
\end{eqnarray}
With photon-number-resolving-detectors (assuming no loss, even only on-off detectors are sufficient) it is now possible to unambiguously discriminate $\ket{\Psi^+}$ and $\ket{\Psi^-}$, among each other and against $\ket{\Phi^\pm}$. However, if we detect two photons in one of the output modes (assuming no loss, this is simply ``on'' in one mode and ``off'' in any other mode) we cannot discriminate $\ket{\Phi^+}$ and $\ket{\Phi^-}$. Therefore, assuming a uniform mixture of Bell states, we achieve a success probability of $\frac{1}{2}$.

There is another important aspect to this measurement. In the event of an unsuccessful outcome, i.e., a single-mode detection event without loss, we still gain information about the measured quantum state. Since in this case we can still distinguish the two subspaces spanned by $\ket{\Phi^\pm}$ and $\ket{\Psi^\pm}$ we obtain the eigenvalue of $ZZ$ with unit probability. This is referred to as a partial BM. In fact, even more information is obtained in this case. Since the first two terms of the quantum state on the rhs of the first line in Eq.~\eqref{eq:std-bm-output} stem from the transformation of the first term on the lhs in $\ket{\Phi^\pm}$, and the last two terms on the rhs originate from the transformation of the second term in $\ket{\Phi^\pm}$,
\begin{equation}
\begin{aligned}
	\frac{1}{\sqrt{2}} \ket{1010}_p \rightarrow \ket{2000}_p + \ket{0020}_p, \\
	\frac{1}{\sqrt{2}} \ket{0101}_p \rightarrow \ket{0200}_p + \ket{0002}_p,
\end{aligned}
\end{equation}
measuring the two photons in the same mode reveals the eigenvalues of the operators $ZI$ and $IZ$. Furthermore, single-qubit Pauli transformations on the input state offer a simple way to achieve an arbitrary permutation of the Bell states. While this does not change the possible outputs in Eq.~\eqref{eq:std-bm-output}, it does alter which Bell state transforms to which output state. Therefore, we can choose which of the three eigenvalues $XX$, $YY$, and $ZZ$ is obtained with unit probability, or more precisely the eigenvalues of $\{ XI , IX \}$, $\{ YI , IY \}$, and $\{ ZI , IZ \}$. We call this class of BMs guaranteed partial information BMs. For convenience we refer to BMs which obtain the $XX$, $YY$, or $ZZ$ eigenvalues as guaranteed partial information as $XX$-, $YY$-, and $ZZ$-BMs, respectively. Note that this includes but is not restricted to the scheme we just discussed here. The single-qubit Pauli transformations on dual-rail encoded qubits needed for the permutation of Bell states are experimentally rather simple operations and can, in principle, be done with unit probability. As a final note, the linear-optics bound of $\frac{1}{2}$ for physical BMs can be exceeded with the use of ancilla photons~\cite{PhysRevLett.113.140403, PhysRevA.84.042331, doi:10.1126/sciadv.adf4080, Hauser2025, guo2024boostedfusiongatespercolation}. Asymptotically, by allowing highly entangled multi-photon ancillae, a unit success probability can even be approached~\cite{Knill2001,PhysRevLett.113.140403,PhysRevA.84.042331}.

\subsection{Generalized physical Bell measurements}
\label{sec:generalized-physical-bell-measurements}

We now introduce our definition of general physical BMs. We consider two qubits on which a BM is to be performed. These qubits are encoded within a subspace of a larger Hilbert space $\mathcal{H}_B$. The logical computational basis states $\ket{00}$, $\ket{01}$, $\ket{10}$, and $\ket{11}$ are represented by orthonormal vectors in $\mathcal{H}_B$, which together span the two-qubit coding subspace. Further we include an ancillary Hilbert space $\mathcal{H}_A$, which is prepared in some state $\ket{A}_A \in \mathcal{H}_A$. The physical measurement consists of a unitary operator $U$ and a projective measurement described by a Hermitian observable $M = \sum_m m P_m$, both jointly acting on $\mathcal{H}_B \otimes \mathcal{H}_A$. To perform the measurement the unitary $U$ is applied and then, on the resulting state, the observable $M$ is measured with some outcome in $\{ m \}$. Here, $P_m = \ket{m}\bra{m}$ denotes the projectors onto the orthogonal eigenstates of $M$.

Note that we confine ourselves to projective measurements without sacrificing the generality of the measurement, as we have introduced the unspecified ancillary state $\ket{A}_A$ and joint unitary operation $U$~\cite{nielsen_chuang_2010}. Further note that no restrictions were imposed on the encoding of the qubits within the Hilbert space $\mathcal{H}_B$; thus, the encoded qubits do not necessarily span the full Hilbert space $\mathcal{H}_B$. For example, $\mathcal{H}_B$ could be the full 4-mode Fock space with the physical code space being dual-rail encoded and thus being a subspace of $\mathcal{H}_B$. Additionally, the unitary $U$ as well as the measurement operators $P_m$ are kept completely general. This includes cases where the unitary $U$ takes the state out of its coding space, which is a subspace of $\mathcal{H}_B$. This is, for instance, the case for standard linear-optics BMs, as described in the previous section~\ref{sec:linear-optics-bell-measurements}. Furthermore, due to the principle of deferred measurements \cite{nielsen_chuang_2010}, the physical BM we defined here is not restricted to static circuits but can, in principle, include arbitrary feedforward. In conclusion, we have introduced a definition of a very general physical BM. 

We draw some connections between the formal definitions above to the standard linear-optics BM we discussed in the previous section. The dual-rail encoded state on which the measurement is applied lives in $\mathcal{H}_B$, the scheme does not use an ancilla state $\ket{A}$, and the unitary $U$ is defined by the action of the beam splitters. Therefore, Eq.~\eqref{eq:std-bm-output} represents $U\varrho_B U^\dagger$. Lastly, the projectors $\{P_m\}$ are the projectors onto all possible 4-mode Fock states.

Next, we present the definition of an unambiguous BM as introduced in Ref.~\cite{Calsamiglia2001}. First, we shall recall that we denote the four Bell states, here encoded in $\mathcal{H}_B$, as $\{ \ket{\Phi_j}_B \}_{j \in \{1,2,3,4\}}$. Subsequently, we focus on examining the term $P_m U \ket{\Phi_j}_B \ket{A}_A$ for all $j \in \{1,2,3,4\}$, to categorize the measurement results $\{m\}$. The defining properties of a measurement outcome --- probability of occurrence and post-measurement state --- for an arbitrary measured quantum state can be derived solely from this term. For simplicity, we will usually omit the Hilbert space in the subscript of a ket after its introduction to improve readability. The following definition in Eq.~\eqref{eq:s} was essentially introduced in Ref.~\cite{Calsamiglia2001}, but we have adapted its formulation to suit our formalism. We define an unambiguous BM result as a measurement outcome $s \in \{ m \}$ that identifies a Bell state $\ket{\Phi_\sigma} \in \{ \ket{\Phi_j} \}_{j \in \{1,2,3,4\}}$ unambiguously:
\begin{equation}
	\exists \sigma : P_s U \ket{\Phi_\sigma} \ket{A} \neq 0 \quad \wedge \quad \forall i \neq \sigma \quad P_s U \ket{\Phi_i} \ket{A} = 0.
	\label{eq:s}
\end{equation}
For brevity we refer to such a measurement result as a successful BM, or simply success. In the special case where no ancillary state is used, we can simply remove the ket $\ket{A}$ throughout Eq.~\eqref{eq:s}.

We denote by $\mathbb{P}_B$ the probability of having a successful BM on a uniform mixture of Bell states, i.e., a maximally mixed state. A detailed derivation of this quantity is provided in App.~\ref{app:generalized-physical-bell-measurements}. In the case of static linear optics without ancillary photons, $\mathbb{P}_B$ was shown to be upper bounded by $\frac{1}{2}$~\cite{Calsamiglia2001}. In the scheme proposed by Ewert and van Loock~\cite{PhysRevLett.113.140403}, this success probability is increased to $\frac{3}{4}$ by employing four ancilla photons and static linear optics. As in the standard scheme, the dual-rail encoded state on which the measurement is applied lives in $\mathcal{H}_B$. Here, an ancilla state $\ket{A} = \ket{1111}_p$, consisting of four photons in four modes, is introduced. The unitary operation $U$ includes the action of eight beam splitters (in polarization-encoding, counting those acting on distinct polarization nodes separately), while the projectors ${P_m}$ once again correspond to the projectors onto all possible multi-mode Fock states.

\subsection{Physical Bell measurements on entangled quantum states}
\label{sec:physical-bell-measurements-on-entangled-quantum-states}

In the following we investigate destructive physical BMs that are performed on a subset of two qubits of an entangled quantum state. We will approach this in a general fashion without making any assumptions about the quantum state. We start by considering an arbitrary $n$-qubit pure state $\ket{\psi}$. Without loss of generality we split the $n$-qubit Hilbert space of $\ket{\psi}$ such that the encoded two-qubit Hilbert space $\mathcal{H}_B$ on which the physical BM will be performed is separated out:
\begin{equation}
	\ket{\psi}_{BR} \in \mathcal{H}_B \otimes \mathcal{H}_R.
\end{equation}
From now on, throughout the paper we consider destructive physical BMs. Therefore, when we consider the post-measurement state after we performed a BM on the two qubits encoded in $\mathcal{H}_B$ we will trace out $\mathcal{H_B}$. So, in conclusion, there are two facets which completely characterize a measurement outcome $m \in \{m\}$. The probability of the outcome,
\begin{equation}
	p_{\ket{\psi}\bra{\psi}} (m) = \Tr \{ (P_m U \otimes I_R) ( \ket{\psi} \bra{\psi} \otimes \ket{A} \bra{A} ) (U^\dagger \otimes I_R) \},
\end{equation}
where $\Tr \equiv \Tr_{BAR}$, and its post-measurement state,
\begin{equation}
	\frac{1}{p_{\ket{\psi}\bra{\psi}} (m)} \Tr_{BA} \{ (P_m U \otimes I_R) ( \ket{\psi} \bra{\psi} \otimes \ket{A} \bra{A} ) (U^\dagger \otimes I_R) \},
\end{equation}
where we recall that $\ket{A} \in \mathcal{H}_A$ is an ancillary state and the operators $P_m$ and $U$ act on $\mathcal{H}_B \otimes \mathcal{H}_A$. A detailed treatment of this concluding in the proof of the upcoming Lem.~\ref{lem:successful-bell-measurment} can be found in App.~\ref{app:physical-bell-measurements-on-entangled-quantum-states}. We now return to the scenario introduced at the beginning of this section and present Lem.~\ref{lem:successful-bell-measurment}.
\begin{lemma}
\label{lem:successful-bell-measurment}
(Success of a physical BM) We consider a quantum state $\ket{\psi} \in \mathcal{H}_B \otimes \mathcal{H}_R$, where the state in $\mathcal{H}_B$ is entirely within the two-qubit code space. Furthermore, we assume that measuring $ZZ$ and $XX$ on these two qubits in $\mathcal{H}_B$ has uniform probability for all four outcomes. 

Then, a successful physical BM has the same success probability $\mathbb{P}_B$ as measuring a uniform mixture of Bell states, and the post-measurement state on $\mathcal{H}_R$ is identical to the projection of a complete BM which identified the same Bell state.
\end{lemma}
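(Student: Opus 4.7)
The plan is to expand $\ket{\psi}$ in the Bell basis on $\mathcal{H}_B$ and exploit the unambiguity condition \eqref{eq:s} to reduce everything to a single Bell-state branch. First, I would write
\begin{equation}
	\ket{\psi}_{BR} = \sum_{j=1}^{4} \ket{\Phi_j}_B \otimes \ket{\phi_j}_R,
\end{equation}
where the $\ket{\phi_j}_R$ are un-normalized residual states on $\mathcal{H}_R$. Since the Bell states are the joint eigenstates of $ZZ$ and $XX$, the uniform-probability assumption is equivalent to $\braket{\phi_j|\phi_j}_R = 1/4$ for all four $j$. Importantly, no constraint is placed on the off-diagonal overlaps $\braket{\phi_j|\phi_k}_R$ for $j \neq k$; these can be non-zero, as the motivating example earlier in this section already illustrates.

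Next, I would apply $U$ and $P_s$ to $\ket{\psi}\ket{A}$ for an unambiguous outcome $s$ with associated Bell state $\ket{\Phi_\sigma}$. Linearity together with the defining condition $P_s U \ket{\Phi_j}\ket{A} = 0$ for $j \neq \sigma$ collapses the sum to a single term,
\begin{equation}
	(P_s U \otimes I_R)\,\ket{\psi}\ket{A} = \bigl( P_s U \ket{\Phi_\sigma}\ket{A} \bigr) \otimes \ket{\phi_\sigma}_R.
\end{equation}
This product form is the heart of the argument: the potentially troublesome off-diagonal Bell-basis coherences that live in $\Tr_R \ket{\psi}\bra{\psi}$ are annihilated by the unambiguity condition. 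The probability of $s$ therefore factorizes as the single-Bell-state transition probability $\bra{\Phi_\sigma}\bra{A} U^\dagger P_s U \ket{\Phi_\sigma}\ket{A}$ times $\braket{\phi_\sigma|\phi_\sigma}_R = 1/4$. For the uniform Bell mixture $\tfrac{1}{4}\sum_j \ket{\Phi_j}\bra{\Phi_j}$, unambiguity leaves only the same $\sigma$ branch weighted by the same $1/4$, so the per-outcome and hence total success probabilities coincide, yielding $\mathbb{P}_B$.

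For the post-measurement state, I would trace $\mathcal{H}_{BA}$ out of the rank-one operator $(P_s U \otimes I_R)\,\ket{\psi}\bra{\psi}\otimes\ket{A}\bra{A}\,(U^\dagger P_s \otimes I_R)$. The factorization above immediately gives $\ket{\phi_\sigma}\bra{\phi_\sigma}_R$ multiplied by the same positive scalar that appeared in the probability, so after normalization one obtains $\ket{\phi_\sigma}\bra{\phi_\sigma}_R / \braket{\phi_\sigma|\phi_\sigma}_R$. This is exactly the state that a complete BM identifying $\ket{\Phi_\sigma}$ would leave on $\mathcal{H}_R$, proving the second claim.

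The main conceptual hurdle is not arithmetic but making rigorous the fact that the Bell-basis coherences $\braket{\phi_j|\phi_k}$ with $j\neq k$ — which can make the reduced state on $\mathcal{H}_B$ genuinely different from a uniform Bell mixture — nevertheless do not influence the statistics or the output of a successful BM. Once the Bell-basis expansion is adopted and \eqref{eq:s} is used to kill all cross terms in $(P_s U \otimes I_R)\ket{\psi}\ket{A}$, both halves of the lemma follow from roughly one line of algebra each.
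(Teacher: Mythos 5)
Your proposal is correct and follows essentially the same route as the paper's proof (App.~B): a Bell-basis decomposition of $\ket{\psi}$ with residual vectors on $\mathcal{H}_R$, the observation that the uniform-probability assumption fixes $\braket{\phi_j|\phi_j}=1/4$ while leaving the cross overlaps unconstrained, and the use of the unambiguity condition to collapse $(P_sU\otimes I_R)\ket{\psi}\ket{A}$ to the single $\sigma$ branch, from which both the probability $p_{\varrho_B}(s)$ and the post-measurement state $\ket{\phi_\sigma}\bra{\phi_\sigma}/\braket{\phi_\sigma|\phi_\sigma}$ follow. The only difference is organizational: the paper first proves a more general effective-projector lemma valid for \emph{any} (also ambiguous) outcome, which it reuses later for partial BMs, whereas you specialize to unambiguous outcomes from the start.
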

\begin{proof}
Provided in App.~\ref{app:physical-bell-measurements-on-entangled-quantum-states}.
\end{proof}

It should be noted that uniform probability for all four outcomes means that the outcomes of $XX$ and $ZZ$ are independent and thus that the outcomes of $YY$ are also equally likely. Note that this is not equivalent to uniform probability for $XX$ and $ZZ$ separately, which would be an insufficient criteria. This can be easily understood by considering the quantum state $\left( \ket{\Phi^-} + \ket{\Psi^+} \right) / \sqrt{2}$ as an example. For this state, the outcomes of $XX$ and $ZZ$ are uniformly distributed for each observable separately, while the eigenvalue of $YY$ is always one. Thus, in this case the outcomes of $XX$ and $ZZ$ are not independent.

We have concluded our discussion on unambiguous measurement results and will now discuss what is commonly known as a partial BM. We will begin with a simple example. Let us consider an ambiguous measurement outcome $p \in \{ m \}$, with corresponding projector $P_p = \ket{p}\bra{p}$, which satisfies,
\begin{equation}
	P_p U \ket{\Phi^+}_B \ket{A}_A = P_p U \ket{\Phi^-}_B \ket{A}_A = \alpha \ket{p}_{BA},
\end{equation}
where $\alpha$ is a complex number with $|\alpha|^2 < 1$. Using Lem.~\ref{lem:pm-general} from App.~\ref{app:physical-bell-measurements-on-entangled-quantum-states} we obtain the effective projection of this measurement result:
\begin{equation}
	\Pi_p = \frac{1}{2} \left( \ket{\Phi^+} + \ket{\Phi^-} \right) \left( \bra{\Phi^+} + \bra{\Phi^-} \right) = \ket{00}\bra{00}.
\end{equation}
Thus, the measurement result $p$ projects the remaining post-measurement state on $\mathcal{H}_R$ identically to measuring the observables $ZI = +1$ and $IZ = +1$ on $\mathcal{H}_B$. In general, we define a partial BM result as a measurement outcome $p \in \{m\}$, with corresponding projector $P_p = \ket{p}\bra{p}$ which effectively projects onto a simultaneous eigenstate of a pair in $\{ \{XI,IX\} , \{YI,IY\} ,\{ZI,IZ\} \}$. In other words, obtaining a partial BM outcome is equivalent to performing two single-qubit Pauli measurements. Partial BMs thus can be used to leverage stabilizer information which is obtained even when an unambiguous Bell projection has failed~\cite{PhysRevA.99.062308,Hilaire_2021,patil2024improveddesignallphotonicquantum}. Finally we note that even though we are not aware of possibilities to obtain more information from a non-successful BM, to our knowledge, there exists no bound in the literature which limits the amount of information that can be obtained from a measurement which does not unambiguously identify a Bell state.

\section{Logical Bell measurements on stabilizer codes}
\label{sec:logical-bell-measurements-on-stabilizer-codes}
In the previous section we considered two physical qubits which are encoded in a Hilbert space $\mathcal{H}_B$. In this section, we extend our discussion to include a second layer of encoding. Rather than focusing on the low-level encoding of physical qubits we shift our attention to logical qubits which are encoded in multiple physical qubits using quantum error correction codes, specifically stabilizer codes. In Sec.~\ref{sec:encoded-uniform-mixture-of-bell-states} we formally introduce the logical encoding of a uniform mixture of Bell states. Following this, in Sec.~\ref{sec:pauli-measurements-and-logical-operators-on-encoded-uniform-mixtures-of-bell-states} we will examine the dynamics of measurements of elements of the Pauli group and Clifford operations on an encoded Bell state. Finally, in Sec.~\ref{sec:necessary-condition-for-an-optimal-logical-Bell-measurement-with-feedforward-based-linear-optics} we will focus on a more practical scenario, particularly relevant to linear-optics setups. Within this context, we will establish a bound on the success probability of logical BMs with feedforward-based linear optics.

\subsection{Encoded uniform mixture of Bell states}
\label{sec:encoded-uniform-mixture-of-bell-states}
In what follows we will make extensive use of the stabilizer formalism~\cite{nielsen_chuang_2010,gottesman1997stabilizer}. We recall that the Pauli group is defined to consist of all Pauli matrices, together with multiplicative factors $\pm 1, \pm i$:
\begin{equation}
	\mathcal{P}_1 \equiv \{ \pm I, \pm i I, \pm X, \pm i X, \pm Y, \pm i Y, \pm Z, \pm i Z \}.
\end{equation}
The general Pauli group $\mathcal{P}_n$ on $n$ qubits is defined to consist of all $n$-fold tensor products of Pauli matrices, and again we allow multiplicative factors $\pm 1, \pm i$. All elements of the general Pauli group $\mathcal{P}_n$ have exactly two eigenvalues which are $+1$ and $-1$. For convenience, we will omit the tensor products in the notation. We refer to the Hilbert subspace on which an operator acts non-trivially as the support of the operator. For example, the operator $X \otimes I \otimes Z = X I Z \in \mathcal{P}_3$ has support on the first and third qubit. Additionally, for brevity, we define an operator to commute with a set of operators if it commutes with each individual element in that set.

We denote a quantum error correction code that encodes $k$ logical qubits using $n$ physical qubits as an $[n,k]$ error correction code. We now consider the encoding of two logical qubits in separate $[n_i,1]$ stabilizer error correction codes, where $i \in \{ 1, 2 \} $ denotes the two codes. Note that, in our treatment, the two logical qubits are encoded in disjoint sets of physical qubits. Each code is characterized by a set of $(n_i-1)$ independent and commuting stabilizer generators denoted as $G_i = \{ g_{i,s} \}_{s \in \{ 1, \dots, n_i-1 \} }$, where $g_{i,s} \in \mathcal{P}_{n_i}$. To construct a joint code for the two logical qubits, we define the $(n_1+n_2)-2=n_c-2$ independent and commuting stabilizer generators as:
\begin{equation}
	G_c = \{ g_j \}_{j \in \{ 1, \dots, n_c-2\}} = G_1 \cup G_2.
	\label{eq:Gc}
\end{equation}
In Eq.~\eqref{eq:Gc}, we extend all elements in $G_i$ trivially to the other code. A trivial extension of an element $g_{1,s} \in G_1$ is defined as $g_{1,s} \otimes I^{\otimes n_2}$, and analogously for the other code. For convenience, we omit the trivial extension in the notation in Eq.~\eqref{eq:Gc} and throughout the rest of the paper.  We will refer to the elements of $G_c$ as code stabilizers. Generally, we are always free to choose a different set of generators $\tilde{G}_c = \{ \tilde{g}_j \}_{j \in \{ 1, \dots, n_c-2 \} }$ which generates the same group:
\begin{equation}
	\langle G_c \rangle = \langle \tilde{G}_c \rangle.
\end{equation}
The stabilizer group of the joint code is generated by $G_c$:
\begin{equation}
	S_c = \langle G_c \rangle.
\end{equation}
The logical operators of this two-qubit code can be related to the equivalence classes formed by the elements of the quotient group
\begin{equation}
	N(S_c)/S_c = \{ nS_c \mid n \in N(S_c) \},
\end{equation}
where $N(S_c)$ denotes the normalizer of $S_c$. From group theory, we deduce that $N(S_c)/S_c$ is isomorphic to the Pauli group~\cite{gottesman1997stabilizer}. Thus, we define the generators of $N(S_c)/S_c$ as:
\begin{equation}
	\{ [i\overline{II}] , [\overline{XI}] , [\overline{IX}] , [\overline{ZI}] , [\overline{IZ}] \},
	\label{eq:cosets}
\end{equation}
where $ \overline{II}, \overline{XI} , \overline{IX} , \overline{ZI} $ and $ \overline{IZ}$ are arbitrary representatives of the cosets. It is important to note that all operators in a particular coset act identically on the codespace. Consequently, each coset comprises all logical operators that exhibit identical actions on the codespace.

To define a uniform mixture of Bell states, we introduce the random variables $l_x$ and $l_z$, which are independent symmetric Bernoulli random variables with outcomes $\{ -1, +1 \}$. We then define the randomly distributed logical operators:
\begin{equation}
	L_{x,z} = \{ l_x \overline{XX} , l_z \overline{ZZ} \},
	\label{eq:lx-lz}
\end{equation}
where $\overline{XX} \in [\overline{XX}]$ and $\overline{ZZ} \in [\overline{ZZ}]$. The elements of $L_{x,z}$ are independent and commute with the stabilizer group $S_c$, and are not elements of $S_c$. Therefore, we conclude that the stabilizer group of the uniform mixture of encoded Bell states is
\begin{equation}
\begin{aligned}
	S 	& = \langle G_c \cup L_{x,z} \rangle \\
		& \corresponds \overline{\varrho_B},
\end{aligned}
	\label{eq:init-state}
\end{equation}
which defines a uniform mixture of logical Bell states. Since $G_c \cup L_{x,z}$ consists of $n_c = n_1 + n_2$ independent stabilizer generators, the group $S$ stabilizes a stabilizer state. We usually do not specify the logical operators $ \overline{XX}$ and $\overline{ZZ} $, since the stabilizer group $S$ is independent of the chosen representatives.

At the beginning of a logical measurement scheme we know everything about the stabilizers in $S$ except for the values of the random variables $l_x$ and $l_z$. In this work we treat logical measurement schemes which consist of Clifford gates and measurements of Pauli-operators, both of which can be tracked in the stabilizer formalism. Therefore, throughout a logical measurement scheme, we keep track of the quantum state via transformations of the stabilizer generators and all information we obtain is contained in the measurement results. The task of a BM scheme can thus be interpreted as obtaining measurement results which are correlated with $l_x$ and $l_z$ to determine their values. Hence we refer to the values of $l_x$ and $l_z$ as the logical information.

To generalize the logical information, we consider the following. The stabilizer group of each Bell state has order four: $S = \{ II, l_x XX, l_y YY, l_z ZZ \}$. This stabilizer group contains three distinct pairs of commuting, nontrivial Pauli operators that can serve as generators: $\{XX, ZZ\}$, $\{XX, YY\}$, and $\{YY, ZZ\}$. Hence, these pairs comprise all possible Pauli measurements that, when measured jointly, project onto the Bell basis. Therefore, we define the random variable $l_y = -l_x l_z$, which is the logical $\overline{YY}$ information of the encoded state. In conclusion, the complete logical information is contained in any pair chosen from the set $\{l_x, l_y, l_z\}$.

Because global phases of quantum observables are physically irrelevant for measurements~\cite{6778074}, we work with the effective Pauli group for the cosets of measured logical operators. The effective $n$-qubit Pauli group is the set
\begin{equation}
    \Pi^{\otimes n} = \{ I, X, Y, Z \}^{\otimes n}
\end{equation}
endowed with the phase-stripped multiplication rule $(\Pi^{\otimes n},\star)$, where
\begin{equation}
    \sigma_a \star \sigma_b
    = I\,\delta_{a,b}
    + \sum_{c=1}^3 \lvert \epsilon_{abc} \rvert\, \sigma_c,
\end{equation}
with $(\sigma_1,\sigma_2,\sigma_3) = (X,Y,Z)$~\cite{9456887}. For two logical qubits, the effective group is therefore generated by
\begin{equation}
    \{[\overline{XI}], [\overline{IX}], [\overline{ZI}], [\overline{IZ}]\}.
\end{equation}
However, in the definition of the logical Bell information in Eq.~\eqref{eq:lx-lz}, the phase of the logical operators is relevant because it determines the signs of $l_x$ and $l_z$. In such contexts we must therefore work with the full Pauli group, as defined in Eq.~\eqref{eq:cosets}. Throughout the remainder of the paper, we will use the full Pauli group when working with stabilizer groups of encoded states, and the effective Pauli group when considering measurements of logical operators. Which version of the Pauli group is being used will be clear from context, so we do not introduce separate notation for the two.

\subsection{Pauli measurements and logical operators on encoded uniform mixtures of Bell states}
\label{sec:pauli-measurements-and-logical-operators-on-encoded-uniform-mixtures-of-bell-states}

In this section, we examine two topics related to encoded states: measurements of Pauli group elements, and properties of pairs of logical operators that constitute a logical BM. From this point onward we will assume that all observables and operators are elements of the Pauli group. Throughout this section, we will analyze how the generators of the stabilizer group transform under measurements. Additionally, we implicitly account for Clifford gates acting on the quantum state, as they normalize the Pauli group and can therefore always be absorbed into the Pauli measurements. In preparation for the logical measurement schemes we will especially treat multiple successive measurements on a stabilizer state. The reason to consider a sequence of measurements as opposed to combining the set of commuting observables into a single measurement observable lies in the application to measurement schemes using feedforward. In the remainder of this paper it will be necessary to track the full transformations of quantum states through measurement schemes. The only restriction we impose on the measurements in this section is that all successive measurements commute with each other. This is motivated by the fact that, throughout this paper, we will always consider destructive single-qubit measurements and destructive BMs, which naturally commute. We will term the stabilizer group that stabilizes the quantum state at a given time as the current stabilizer state. This means that if we relate an observable to the current stabilizer group we relate the observable to the stabilizer group that defines the quantum state at the time the measurement is performed. Throughout this discussion, we assume that no information regarding the logical variables is known prior to any measurement. While it is evident that this is a reasonable assumption at the beginning of the logical measurement scheme it might seem questionable to be assumed for later measurements. However, the reason for this will become clear in the proof of Thm.~\ref{thm:bound} where we demonstrate that all relevant measurements satisfy this assumption. In Sec.~\ref{sec:pauli-measurements-on-encoded-uniform-mixtures-of-bell-states}, we examine Pauli measurements on encoded uniform mixtures of Bell states. In Sec.~\ref{sec:logical-operators-on-encoded-uniform-mixtures-of-bell-states}, we address properties of pairs of logical operators that constitute a logical BM.

\subsubsection{Pauli measurements on encoded uniform mixtures of Bell states}
\label{sec:pauli-measurements-on-encoded-uniform-mixtures-of-bell-states}

In the following, we discuss all possible cases of Pauli observables measured on an encoded uniform mixture of Bell states. Firstly, we note that multi-qubit Pauli observables exhibit exclusively two behaviors: they either commute or anticommute with elements of the stabilizer group. When the observable anticommutes with at least one stabilizer generator, the two measurement outcomes occur with equal probability, and the measurement necessarily changes the global $n$-qubit state. This implies that the measurement outcome is uncorrelated with the logical variables and obtains no logical information. The new stabilizer generators can be calculated straightforwardly, a process we will detail in the proofs of Lem.~\ref{lem:standard-form} and Lem.~\ref{lem:forbidden-measurements}. In the second case the observable commutes with the stabilizer group. We recall that the stabilizer group is generated by a complete set of $n_c$ elements, corresponding to a stabilizer state. It is well known that consequently, the observable is also an element of the stabilizer group up to a sign. Therefore, the global $n$-qubit state remains unchanged by the measurement, and the outcome is predetermined with unit probability by the stabilizer group. For our logical measurement schemes this implies two possibilities: either the observable is uncorrelated with the logical information or it is correlated. In the case where the observable is uncorrelated with the logical information, we know the outcome beforehand and we obtain no additional knowledge from the measurement; we merely completed measuring a code stabilizer in $\langle G_c \rangle$. In the case where the observable is correlated with the logical information the outcome is predetermined in one-to-one correspondence by one of the logical variables $\{ l_x, l_y, l_z \}$. Therefore, assuming no prior knowledge of the logical information, both outcomes are equally likely, since the logical variables are uniformly distributed. Consequently, the measurement will yield the value of one of the three logical variables. A complete formal treatment, along with an illustrative discussion of the cases where the observables commute with the current stabilizer, is provided in App.~\ref{app:observables-that-commute-with-s}.

In the following we examine the transformation of the stabilizer state under measurements, assuming no information on the logical variables has been acquired. In Lem.~\ref{lem:standard-form}, we present a compact form of the stabilizer generators after successive measurements. Additionally, Lem.~\ref{lem:forbidden-measurements} analyzes a special case of the observable, demonstrating that this measurement invariably removes one logical variable from the quantum state without learning its value. Consequently, we prohibit this measurement in our schemes and in the treatment of Lem.~\ref{lem:standard-form}.

\begin{lemma}
\label{lem:standard-form}
(General stabilizer evolution under measurements) We consider an encoded Bell state as defined in Eq.~\eqref{eq:init-state} and a sequence of mutually commuting observables $\{ M_i \}$ with measurement results $\{ m_i \}$. We define the set of operators:
\begin{equation}
	\mathbb{M} = \{ m_i M_i \}.
\end{equation}
We propose the form $S^{(\mathbb{M})}$ of the current stabilizer group of the global quantum state after the measurements as:
\begin{equation}
	S^{(\mathbb{M})} = \langle G_c^{(\mathbb{M})} \cup \mathbb{M} \cup L^{(\mathbb{M})}_{x,z} \rangle,
	\label{eq:general-state}
\end{equation}
where $\langle G_c^{(\mathbb{M})} \rangle \subseteq \langle G_c \rangle$ and $L^{(\mathbb{M})}_{x,z} = \{ l_x \overline{XX}^{(\mathbb{M})} , l_z \overline{ZZ}^{(\mathbb{M})}  \}$ with $\overline{XX}^{(\mathbb{M})} \in [\overline{XX}]$ and $\overline{ZZ}^{(\mathbb{M})} \in [\overline{ZZ}]$.

With the restriction that we never measure an observable which anticommutes with at least one element of $L^{(\mathbb{M})}_{x,z}$ and commutes with the rest of the generators of the current stabilizer group $G_c^{(\mathbb{M})} \cup \mathbb{M}$, and assuming no information on $l_x$ or $l_z$ has been obtained yet, Eq.~\eqref{eq:general-state} describes the global quantum state at any point in a sequence of multi-qubit Pauli measurements.

\end{lemma}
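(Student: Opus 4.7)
The plan is to prove the lemma by induction on the number of measurements $N = \lvert \mathbb{M}\rvert$. The base case $N=0$ is immediate: with $\mathbb{M} = \emptyset$, setting $G_c^{(\emptyset)} = G_c$ and $L^{(\emptyset)}_{x,z} = L_{x,z}$ recovers exactly the initial stabilizer group of Eq.~\eqref{eq:init-state}. For the inductive step I assume the claimed form holds after $k$ commuting measurements and consider appending one more observable $M_{k+1}$. Because the entire sequence $\{M_i\}$ is mutually commuting, $M_{k+1}$ automatically commutes with every generator in $\mathbb{M}_k$; the only question is its commutation relations with $G_c^{(\mathbb{M}_k)} \cup L^{(\mathbb{M}_k)}_{x,z}$.

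I would split the inductive step into two cases. In Case A, $M_{k+1}$ commutes with the whole generating set, so $\pm M_{k+1}$ already lies in $S^{(\mathbb{M}_k)}$. By the assumption that no information on $l_x$ or $l_z$ has been gained, the sign of $M_{k+1}$ cannot be pinned to $l_x$ or $l_z$, hence $\pm M_{k+1} \in \langle G_c^{(\mathbb{M}_k)} \cup \mathbb{M}_k\rangle$; the new generator $m_{k+1}M_{k+1}$ is redundant, and we may take $G_c^{(\mathbb{M}_{k+1})} = G_c^{(\mathbb{M}_k)}$, $L^{(\mathbb{M}_{k+1})}_{x,z} = L^{(\mathbb{M}_k)}_{x,z}$. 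In Case B, $M_{k+1}$ anticommutes with at least one generator, which (since it commutes with $\mathbb{M}_k$) must lie in $G_c^{(\mathbb{M}_k)} \cup L^{(\mathbb{M}_k)}_{x,z}$. The restriction in the lemma statement is precisely what guarantees that we can choose a pivot $g^\ast \in G_c^{(\mathbb{M}_k)}$ with $\{g^\ast, M_{k+1}\}_+ = 0$. I then apply the standard stabilizer update: every other generator $g$ that anticommutes with $M_{k+1}$ is replaced by $g\cdot g^\ast$, the pivot $g^\ast$ itself is dropped, and $m_{k+1} M_{k+1}$ is appended to $\mathbb{M}$.

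The critical check, which I expect to be the main obstacle, is that the updated generators still decompose into the three prescribed pieces. For a $g \in G_c^{(\mathbb{M}_k)}$ the replacement $g \to g\cdot g^\ast$ stays inside $\langle G_c\rangle$, so the new $G_c^{(\mathbb{M}_{k+1})}$ still generates a subgroup of $\langle G_c\rangle$. For a logical generator $l_x \overline{XX}^{(\mathbb{M}_k)}$ (and analogously for the $\overline{ZZ}$ element) that happens to anticommute with $M_{k+1}$, the replacement produces $l_x (\overline{XX}^{(\mathbb{M}_k)} \cdot g^\ast)$; since $g^\ast \in \langle G_c\rangle \subseteq S_c$, the new representative still lies in the coset $[\overline{XX}]$, so the label $l_x$ is preserved and $L^{(\mathbb{M}_{k+1})}_{x,z}$ retains its required form. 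This is exactly where the restriction matters: had we been forced to pick the pivot from $L^{(\mathbb{M}_k)}_{x,z}$, the multiplication would have mixed the $l_x$ and $l_z$ labels and knocked the representatives out of their cosets, invalidating the structural form of Eq.~\eqref{eq:general-state}. Together with the observation that the outcome sign $m_{k+1}$ in Case B is uniformly random and uncorrelated with $l_x, l_z$ (so the ``no information'' hypothesis is preserved for the next step), this completes the induction.
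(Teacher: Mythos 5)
Your proposal is correct and follows essentially the same route as the paper: an inductive (single-measurement-at-a-time) application of the standard stabilizer update, with the pivot deliberately chosen from $G_c^{(\mathbb{M})}$ so that anticommuting logical representatives are only multiplied by code stabilizers and hence stay in their cosets $[\overline{XX}]$ and $[\overline{ZZ}]$, and with the lemma's restriction identified as exactly what guarantees such a pivot exists. The paper likewise dismisses commuting observables as redundant generators that leave the generated group unchanged, so your Case~A matches as well.
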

\begin{proof}
In the beginning of the logical measurement scheme the quantum state is in the stabilizer state defined in Eq.~\eqref{eq:init-state}. Consider an observable $M$ that anticommutes with at least one element $g$ of $G_c$. A standard group-theoretic fact is that a generator may be replaced by its product with an element of the stabilizer subgroup generated by the remaining generators without changing the stabilizer group. Specifically, it is always possible to select the generators of the stabilizer group $S$ in such a manner that $g$ is the sole element that anticommutes with $M$. If there exists another element $g'$ in $G_c$ that anticommutes with $M$, then the operator $gg'$ will commute with $M$. Consequently, we replace each stabilizer generator $g'$ in $(G_c \cup L_{x,z}) \setminus \{g\}$ that anticommutes with $M$ with $gg'$. This ensures that $g$ remains the only element of $G_c\cup L_{x,z}$ that anticommutes with $M$.

We shall illustrate this with an example. Let us assume that $g_1$, $g_3$ and $l_x \overline{XX}$ anticommute with $M$. We rewrite the generators as discussed:

\begin{equation}
\begin{aligned} 
	S & = \langle g_1, g_2 , g_3, \dots , g_{n_c-2} , l_x \overline{XX}, l_z \overline{ZZ} \rangle \\
	& = \langle g_1, g_2 ,g_1 g_3, \dots , g_{n_c-2} , g_1 l_x \overline{XX}, l_z \overline{ZZ} \rangle.
\end{aligned}
\end{equation}
After this transformation, $g_1$ is the sole generator which anticommutes with $M$. To get the post-measurement stabilizer group after measuring $M$ with outcome $m$, we replace the only generator that still anticommutes with $M$ with $m M$. In our example this yields:
\begin{equation}
	S^{(M)} = \langle g_2 ,g_1 g_3, \dots , g_{n_c-2}, m M , g_1 l_x \overline{XX}, l_z \overline{ZZ} \rangle.
\end{equation}
Now we identify all elements of the new generators which are products of elements of $G_c$ as $G_c^{(M)}$. In our example, these elements are:
\begin{equation}
	G_c^{(M)} = \{ g_2, g_1 g_3, \dots g_{n_c-2} \}.
\end{equation}
We also identify the new logical operators $L^{(M)}_{x,z}$ which are the initial logical operators $\{ l_x \overline{XX}, l_z \overline{ZZ} \}$ up to factors of a code stabilizer. In our example, these operators are:
\begin{equation}
\begin{aligned}
	L^{(M)}_{x,z} & = \{ g_1 l_x \overline{XX}, l_z \overline{ZZ} \} \\
	& = \{ l_x \overline{XX}^{(M)} , l_z \overline{ZZ}^{(M)}  \}. \\
\end{aligned}
\end{equation}
We obtain the generators of the stabilizer group after the measurement:
\begin{equation}
	S^{(M)} = \langle G_c^{(M)} \cup m M \cup L^{(M)}_{x,z} \rangle.
\end{equation}
in conclusion, the set $G_c^{(M)}$ is obtained by first choosing a set that generates the same group as $G_c$ and has only one element which anticommutes with the measurement and then removing this element. Therefore it holds that $\langle G_c^{(M)} \rangle \subseteq \langle G_c \rangle$. The logical operators in the set $L^{(M)}_{x,z} = \{ l_x \overline{XX}^{(M)} , l_z \overline{ZZ}^{(M)}  \}$ differ only by factors of a code stabilizer $g \in G_c$, therefore the coset for each element has not changed and it holds that $\overline{XX}^{(M)} \in [\overline{XX}]$ and $\overline{ZZ}^{(M)} \in [\overline{ZZ}]$.

We generalize the above for multiple measurements $\{ M_i \}$ with results $\{ m_i \}$. Recall that we always assume that all $M_i$ commute with each other. We define the set
\begin{equation}
	\mathbb{M} = \{ m_i M_i \},
\end{equation}
and by iteratively applying the result for a single measurement we obtain the stabilizer group of the post-measurement state,
\begin{equation}
	S^{(\mathbb{M})} = \langle G_c^{(\mathbb{M})} \cup \mathbb{M} \cup L^{(\mathbb{M})}_{x,z} \rangle.
\end{equation}
This form also characterizes the stabilizer state at the beginning of the scheme, prior to any measurement, as Eq.~\eqref{eq:init-state} emerges as a special case of Eq.~\eqref{eq:general-state} when ${M}=\emptyset$. We conclude the proof by referring back to the discussion earlier in this section where we argued that any measurements not addressed in this proof either commute with the current stabilizer group and thus leaves the quantum state invariant or are explicitly excluded by the assumptions of this lemma. Therefore, including measurements which commute with the current stabilizer in $\mathbb{M}$ will have no impact on the generated group. However, it is important to note that these operators will be redundant in the generator which will cause the set $G_c^{(\mathbb{M})} \cup \mathbb{M} \cup L^{(\mathbb{M})}_{x,z}$ to not be a minimal generating set.
\end{proof}

Note that Lem.~\ref{lem:standard-form} applies regardless of whether observables that commute with the current stabilizer are included in $\mathbb{M}$. We may exclude all such observables and thereby define the reduced set $\mathbb{M}_r$:
\begin{equation}
\begin{aligned}
	\mathbb{M}_r = \mathbb{M} \setminus \{ 	& m_i M_i \in \mathbb{M} = \{m_i M_i\} \\
											& \mid \forall s \in S^{( \{ m_j M_j \}_{j<i} )} : \comm{m_i M_i}{s} = 0 \}.
\end{aligned}
\end{equation}
With this definition, the set $G_c^{(\mathbb{M})} \cup \mathbb{M}_r \cup L^{(\mathbb{M})}_{x,z}$ forms a minimal generating set. Nevertheless, for the purposes of this work it is often more convenient to keep the current stabilizer in its redundant form. The reduced set $\mathbb{M}_r$ is introduced here to avoid ambiguity and will be used in App.~\ref{app:observables-that-commute-with-s}.

We will now treat the case where an observable anticommutes with at least one element of $L^{(\mathbb{M})}_{x,z}$ and commutes with the rest of the generators of the current stabilizer group $G_c^{(\mathbb{M})} \cup \mathbb{M}$. As we stated earlier in this section this measurement invariably removes one logical variable from the quantum state without learning its value. We will now formally prove this in Lem.~\ref{lem:forbidden-measurements}.

\begin{lemma}
\label{lem:forbidden-measurements}
(Observables that commute with $G_c^{(\mathbb{M})}$ and anticommute with an element of $L_{x,z}$) We consider a current stabilizer state as defined in Eq.~\eqref{eq:general-state}. If an observable $M$ which commutes with all elements of $G_c \cup \mathbb{M}$ and anticommutes with at least one element of $L_{x,z}$ is measured, the new stabilizer of the state does not contain the logical Bell-state information and we obtain no information about $l_x$, $l_y$, and $l_z$.
\end{lemma}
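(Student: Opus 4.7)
The plan is to prove two separate claims: (i) the outcome $m$ of the measurement of $M$ is uniform and uncorrelated with $(l_x,l_y,l_z)$, so it conveys no logical information; and (ii) the post-measurement stabilizer retains at most one of the three bits $\{l_x,l_y,l_z\}$ of logical information, so the complete Bell-state information, which requires any two of them, is no longer contained in the stabilizer.

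For (i), I would observe that $M$ anticommutes with at least one generator of $S^{(\mathbb{M})}$. Thus $M\notin \pm S^{(\mathbb{M})}$, because stabilizer elements commute with each other. The standard stabilizer-measurement argument then gives $\mathrm{Tr}\{M\,\rho^{(\mathbb{M})}\}=0$ and hence $P(m=\pm1)=\tfrac12$. Crucially, the random signs $l_x,l_z$ only appear as $\pm1$ prefactors on the logical generators; changing their values does not alter whether $M$ commutes or anticommutes with any generator, so the above holds for every value of $(l_x,l_z)$. Therefore $m$ is statistically independent of $l_x$, $l_z$, and $l_y=-l_xl_z$.

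For (ii), I would use the generator-rewriting trick from the proof of Lem.~\ref{lem:standard-form}. Since by hypothesis $M$ commutes with every element of $G_c^{(\mathbb{M})}\cup\mathbb{M}$, the only generators of $S^{(\mathbb{M})}$ that can anticommute with $M$ lie in $L^{(\mathbb{M})}_{x,z}$. If only one of $l_x\overline{XX}^{(\mathbb{M})}$, $l_z\overline{ZZ}^{(\mathbb{M})}$ anticommutes with $M$, it is already the unique anticommuting generator; if both anticommute, replace $l_z\overline{ZZ}^{(\mathbb{M})}$ by $(l_x\overline{XX}^{(\mathbb{M})})(l_z\overline{ZZ}^{(\mathbb{M})})=\pm l_y\overline{YY}^{(\mathbb{M})}$, which commutes with $M$ and lies in the coset $[\overline{YY}]$ because the coset product in $N(S_c)/S_c$ is inherited from the (effective) Pauli multiplication. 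The surviving anticommuting generator is then replaced by $mM$ when $M$ is measured, and the new stabilizer is generated by $G_c^{(\mathbb{M})}\cup\mathbb{M}\cup\{mM\}$ together with at most one logical generator, which is a representative of exactly one coset in $\{[\overline{XX}],[\overline{YY}],[\overline{ZZ}]\}$ (up to sign).

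Finally, I would argue that no element of either of the two remaining logical cosets can belong to the new stabilizer. Because $M$ commutes with all of $G_c$, and thus with $S_c=\langle G_c\rangle$, the commutation relation of $M$ with any Pauli operator depends only on its coset in $N(S_c)/S_c$. The two removed cosets therefore anticommute with $M$ as a whole, so no representative of them can belong to the abelian new stabilizer, which contains $mM$. Consequently only one of the three bits in $\{l_x,l_y,l_z\}$ remains encoded, and in particular the Bell-state information, which requires any commuting pair from this set, is no longer contained in the stabilizer. Combined with (i), this proves the lemma. The main obstacle is the case where $M$ anticommutes with both elements of $L^{(\mathbb{M})}_{x,z}$, which requires carefully using the coset-invariance of the commutation relation with $M$ and the identity $\overline{YY}=\pm\,\overline{XX}\cdot\overline{ZZ}$ to identify the single surviving logical coset.
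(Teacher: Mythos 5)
Your proposal is correct and follows essentially the same route as the paper's proof: the outcome is unbiased because $M$ anticommutes with a generator independently of the signs $l_x,l_z$, and the case analysis (one versus both logical generators anticommuting, with the replacement $l_x\overline{XX}\,l_z\overline{ZZ}=\pm l_xl_z\overline{YY}$ in the latter case) is exactly the paper's argument. Your additional step, using coset-invariance of the commutation relation with $M$ to show that no representative of the two discarded logical cosets can reappear in the new abelian stabilizer, makes explicit a point the paper only asserts ("only one logical variable is contained in the post-measurement stabilizer generators"), so it is a slight strengthening rather than a different approach.
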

\begin{proof}
Since the operator $M$ anticommutes with a stabilizer generator both outcomes are equally likely and thus uncorrelated with $l_x$ and $l_z$. Thus we learn no information about the value of these two variables from this measurement. Now we consider two cases to discuss the post-measurement stabilizer group. In the first case $M$ anticommutes with exactly one element of $L_{x,z}$ and in the second case with both. In the first case the stabilizer generator which anticommutes with $M$ gets replaced in the stabilizer generators of the state with $m M$, where $m \in \{ -1,+1 \} $ is the measurement outcome. In the second case we replace $l_x \overline{X_1 X}_2$ with $l_x l_z \overline{X_1 X}_2 \ \overline{Z_1 Z}_2 = -l_x l_z \overline{Y_1 Y}_2$ in the stabilizer generators. The operator $M$ commutes with this new element. Now the measurement replaces $l_z \overline{Z_1 Z}_2$ with $m M$ in the stabilizer generators. In both cases only one logical variable is contained in the post-measurement stabilizer generators. Therefore, the necessary information to unambiguously identify the logical Bell state has been irreversibly destroyed.
\end{proof}
Having discussed all possible cases of Pauli measurements, we conclude our discussion on the dynamics of Pauli measurements on entangled quantum states.

\subsubsection{Logical operators on encoded uniform mixtures of Bell states}
\label{sec:logical-operators-on-encoded-uniform-mixtures-of-bell-states}

In this section, we present Lem.~\ref{lem:acomm-logicals}, an observation that will be crucial in the proof of Thm.~\ref{thm:bound} in the subsequent section. We also provide discussion and motivation to frame the problem addressed there. Before we can adequately introduce Lem.~\ref{lem:acomm-logicals} we need to define some terminology. Let us consider two elements of the general Pauli group, denoted as $p_1$ and $p_2$, both belonging to $\mathcal{P}_{n}$. We represent them through their single-qubit decompositions: $p_1 = \bigotimes_{i=1}^n u_i$ and $p_2 = \bigotimes_{i=1}^n w_i$, where $u_i, w_i \in \{ I, X, Y, Z \}$ are single-qubit Pauli operators acting on the $i$-th qubit. We define the number of qubits in which the two operators anticommute as the number of qubits for which the isolated single-qubit operators $u_i$ and $w_i$ anticommute. To illustrate, consider the two operators $IZZX$ and $ZXZZ$ in $\mathcal{P}_4$. These operators anticommute in two qubits, specifically in qubits $2$ and $4$. Formally, this is defined as the cardinality of the set:
\begin{equation}
	| \{ \{ u_i ,w_i \} | \acomm{u_i}{w_i} = 0 \} |.
\end{equation}
In the beginning of this section, we constructed a stabilizer code that encodes two logical qubits by combining two codes, each encoding one logical qubit. For such a joint code, we can still refer to the physical qubits of the initial two codes. Therefore, we can still tell on which of the two initial codes the operators anticommute. This is illustrated with the following example. Consider two logical qubits, each encoded in the QPC($2,2$) code. The stabilizer generators of the QPC($2,2$) code are:
\begin{equation}
	G_i = \{ XXXX, ZZII, IIZZ \},
\end{equation}
where $i \in \{1,2\}$, and the trivial extension is omitted. Therefore, the stabilizer generators of the combined code encoding two logical qubits are:
\begin{equation}
\begin{aligned}
	G_c = 	& \ G_1 \cup G_2 \\
		= 	& \ \{  XXXX \ IIII, ZZII \ IIII, IIZZ \ IIII, \\
			& \ \phantom{\{ } IIII \ XXXX, IIII \ ZZII, IIII \ IIZZ \phantom{,} \}.
\end{aligned}
\end{equation}
This illustrates that the first four qubits of the combined code belong to the first instance of the QPC($2,2$) code, while the next four qubits belong to the second instance of it. Applying this observation, we next consider the operators $XXII \ IIXX$ and $IZZI \ IIZZ$. These operators anticommute in one and two qubits on the first and second code, respectively, thus they anticommute in three qubits in total. We now present Lem.~\ref{lem:acomm-logicals}, which states an observation that serves as a prerequisite and motivation for the proof of Thm.~\ref{thm:bound}.
\begin{restatable}{lemma}{acommlogicalsrestatable}
\label{lem:acomm-logicals}
(Logical operators constituting a BM) We consider a stabilizer code constructed as in Eq.~\eqref{eq:Gc}, which encodes two logical qubits. Then any two logical operators which constitute a logical BM, e.g., $\overline{XX} \in [\overline{XX}]$ and $\overline{ZZ} \in [\overline{ZZ}]$, anticommute in an odd number of qubits in each code.
\end{restatable}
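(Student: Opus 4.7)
The plan is to exploit the fact that the joint code's stabilizer group $\langle G_c\rangle$ acts on two disjoint sets of physical qubits (those of code 1 and those of code 2). This means that any element of the normalizer of $\langle G_c\rangle$ must, when restricted to code $i$'s physical qubits, lie in the normalizer of $\langle G_i\rangle$. I will use this to decompose the representatives of $[\overline{XX}]$ and $[\overline{ZZ}]$ and then count anticommuting single-qubit positions on each code separately.

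First, I would fix any representatives $\overline{XX}\in [\overline{XX}]$ and $\overline{ZZ}\in [\overline{ZZ}]$ and write them as tensor products across the bipartition of physical qubits into code 1 and code 2:
\begin{equation}
    \overline{XX} = A_1 \otimes A_2, \qquad \overline{ZZ} = C_1 \otimes C_2,
\end{equation}
where $A_i, C_i \in \mathcal{P}_{n_i}$ act only on the qubits of code $i$. Because $\overline{XX}$ and $\overline{ZZ}$ commute with every trivial extension of an element of $G_i$, the restrictions $A_i$ and $C_i$ commute with all of $G_i$, i.e., $A_i, C_i \in N(\langle G_i\rangle)$. Moreover, since the cosets of $\overline{XX}$ and $\overline{ZZ}$ project down, under this decomposition, onto the logical $\overline{X}$ and $\overline{Z}$ cosets of each single-qubit code, the restrictions $A_i$ and $C_i$ must be representatives of the logical $\overline{X}$ and $\overline{Z}$ of code $i$, respectively.

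Next I would use the defining property that on a single encoded qubit, logical $\overline{X}$ and logical $\overline{Z}$ anticommute as operators on the physical Hilbert space. Hence $A_i$ and $C_i$ anticommute as Pauli strings on $n_i$ qubits, which is equivalent, by the standard fact recalled in the paragraph preceding the lemma, to saying that they anticommute on an odd number of qubits within code $i$. Since $A_i$ and $C_i$ are supported entirely on code $i$'s qubits, the number of qubits on which $\overline{XX}$ and $\overline{ZZ}$ anticommute inside code $i$ is exactly the number of qubits where $A_i$ and $C_i$ anticommute, which is odd. This gives the claim for both codes.

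The only point requiring care is independence of the chosen representatives: different choices of $\overline{XX}\in[\overline{XX}]$ differ by elements of $S_c$, and these in turn split as products of stabilizers of code 1 and code 2. I would note that multiplying $A_i$ by any $g\in \langle G_i\rangle$ does not change whether $A_i$ and $C_i$ anticommute, because $g$ commutes with $C_i \in N(\langle G_i\rangle)$; the parity of positions at which two anticommuting Pauli strings disagree in commutation is therefore invariant under such multiplication. This is the main conceptual hurdle, but it follows straightforwardly once the bipartite decomposition is in place, so no real difficulty is anticipated.
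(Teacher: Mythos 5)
Your proposal is correct and follows essentially the same route as the paper's proof: factor each two-code logical operator into its single-code components supported on disjoint qubit sets, invoke the single-code anticommutation of logical $\overline{X}$ and $\overline{Z}$ to get an odd number of anticommuting positions within each code, and note that the cross terms contribute nothing. Your closing remark on representative independence is a harmless addition that the paper handles implicitly, since the argument applies verbatim to any choice of representatives.
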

\begin{proof}
Provided in App.~\ref{app:proof-lem-acomm-logicals}.
\end{proof}

To perform a logical BM, we need to measure a pair of observables, specifically two logical operators that constitute a BM. A pair of observables that constitutes a BM does commute. Thus, one might be tempted to assume that, in principle, it could be possible for them to anticommute in zero qubits. If Lem.~\ref{lem:acomm-logicals} did not hold and we knew a pair of observables which constitutes a logical BM while not anticommuting in any qubit, a logical BM scheme with unit success probability would be trivial to find. We could simply decompose the two logical operators into single-qubit Pauli operators and measure them. However, since Lem.~\ref{lem:acomm-logicals} does hold, we deduce that if we tried to measure any two logical operators in this fashion they would conflict on the necessary single-qubit Pauli measurement on at least one qubit in each code.

To illustrate this, let us return to our example of two logical qubits, each encoded in a QPC($2,2$) code. As an example for two logical operators which constitute a logical BM let us consider $XXII \ XXII \in \overline{XX}$ and $ZIZI \ ZIZI \in \overline{ZZ}$. We cannot decompose these two operators into single-qubit measurements, which do not conflict on the measurement on the first qubit of each code, since they anticommute in these qubits. We conclude that a logical BM needs to measure this ``double'' information on the qubits where the two logical operators which constitute the BM anticommute.

One might be tempted to think that we can trivially proof the bound for logical BMs via the following argument. We know that we have to get the double information on two qubits. Seemingly, this is only possible to obtain via a successful physical BM. Thus, one might conclude that the bound for the success probability of a logical BM is simply the probability to have at least one successful physical BM. While this bound is, in fact, correct, as we will see in Thm.~\ref{thm:bound}, the proof is more involved. In stabilizer codes the information on different qubits is correlated. Thus, in principle, we could obtain the information in an indirect fashion, i.e. revealing information on one qubit by measuring another. In Sec.~\ref{sec:pauli-measurements-and-logical-operators-on-encoded-uniform-mixtures-of-bell-states} and App.~\ref{app:observables-that-commute-with-s} we have seen that we can already know the result of an observable with support on exclusively unmeasured qubits, if it completes a stabilizer measurement. In other words, the question is whether it is possible to obtain the logical Bell-state information even without successfully identifying any one of the physical Bell states. In the proof of Thm.~\ref{thm:bound} we will see that indirect measurements cannot be leveraged to overcome the bound.

\subsection{Necessary condition for an optimal logical Bell measurement with feedforward-based linear optics}
\label{sec:necessary-condition-for-an-optimal-logical-Bell-measurement-with-feedforward-based-linear-optics}
Up to this point, our treatment of measurements has been quite general. We only required that the measurements belong to the Pauli group and that all sequential measurements commute with each other. We will now focus on a more practical scenario, particularly relevant to linear-optics setups. In linear optics, implementing single-qubit operations and measurements is typically simple. Therefore, we will allow for arbitrary single-qubit measurements and single-qubit Clifford gates. Recall that all Clifford gates can be absorbed in the measurements, since they normalize the Pauli group. Finally, our set of possible measurements includes physical BMs, which can either succeed and unambiguously identify a Bell state, or yield a partial measurement result. We will show that having at least one successful physical BM performed on a pair of physical qubits, one from each of the two separate codes, where we have no prior knowledge about the outcome is a necessary condition for a successful logical BM. Consequently, the success probability is upper bounded by the probability to have at least one such successful physical BM.

\begin{theorem}
\label{thm:bound}
(Necessary condition for an optimal logical BM with feedforward-based linear optics) The success probability of a destructive logical BM on two logical qubits, each encoded using stabilizer codes of $n_1$ and $n_2$ physical qubits, respectively, using only destructive BMs, destructive single-qubit Pauli measurements, and single-qubit Clifford gates, and allowing for feedforward is upper bounded by:
\begin{equation}
	1- \left( 1 - \mathbb{P}_B \right)^{ \min(n_1,n_2) }.
\end{equation}
\end{theorem}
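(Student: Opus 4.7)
The plan is to prove the bound in two stages. First, I establish via stabilizer commutation that at least one successful cross-code physical BM is necessary for the logical BM to succeed. Second, I combine this with the per-attempt success probability $\mathbb{P}_B$ from Lem.~\ref{lem:successful-bell-measurment} and destructive-measurement counting to obtain the probabilistic bound. For stage one, I work inside the framework of Lem.~\ref{lem:standard-form}: the value of $l_x$ (resp.\ $l_z$) is learned at the end of the protocol iff $\overline{XX}$ (resp.\ $\overline{ZZ}$), viewed modulo phase in the Pauli group, is a product of elements of $G_c$ together with measurement outcomes in $\mathbb{M}$, i.e.\ a product of operators whose individual phases are known. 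Suppose for contradiction that no successful cross-code BM occurred during the run. Then every element of $\mathbb{M}$ is supported on qubits of a single code: single-qubit Pauli measurements and intra-code BMs by construction, and cross-code partial BMs because, by the definition in Sec.~\ref{sec:physical-bell-measurements-on-entangled-quantum-states}, their outcomes reduce to two separate single-qubit Pauli eigenvalue readouts, one per code. We can therefore split $\mathbb{M}=\mathbb{M}^{(1)}\cup\mathbb{M}^{(2)}$ with $\mathbb{M}^{(i)}$ supported on code~$i$, and combined with $G_c=G_1\cup G_2$ this implies every element of $\langle G_c\cup\mathbb{M}\rangle$ factorises as $A_1 A_2$ with $A_i\in\langle G_i\cup\mathbb{M}^{(i)}\rangle$.

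Writing $\overline{XX}=\overline{X}_1\otimes\overline{X}_2$ and $\overline{ZZ}=\overline{Z}_1\otimes\overline{Z}_2$, the factorisation forces the code-$1$ restrictions $\overline{X}_1$ and $\overline{Z}_1$ to lie in $\langle G_1\cup\mathbb{M}^{(1)}\rangle$. Moreover, for $\overline{XX}$ and $\overline{ZZ}$ to remain stabilizers of the current state they must commute with $\mathbb{M}$, so their code-$1$ parts commute with $\mathbb{M}^{(1)}$, which lets us restrict $G_1$ to the subgroup commuting with $\mathbb{M}^{(1)}$. That subgroup together with $\mathbb{M}^{(1)}$ is pairwise commuting --- the standing assumption that the measurement sequence consists of mutually commuting observables makes $\mathbb{M}^{(1)}$ pairwise commuting, code stabilizers commute among themselves, and the cross-commutation holds by construction --- so both $\overline{X}_1$ and $\overline{Z}_1$ lie in an isotropic subspace of the code-$1$ Pauli group and must therefore commute. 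This contradicts Lem.~\ref{lem:acomm-logicals}, which guarantees they anticommute on an odd, and hence nonzero, number of code-$1$ qubits.

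For stage two, cross-code BMs are destructive and each attempt consumes one qubit from each code, bounding the total number of cross-code BMs in any run by $\min(n_1,n_2)$. Whenever such a BM is attempted in the ``no prior knowledge'' regime --- i.e.\ when the reduced two-qubit state on the chosen pair is a uniform mixture of Bell states, which by Lem.~\ref{lem:standard-form} is automatic on any still-fresh cross-code pair --- Lem.~\ref{lem:successful-bell-measurment} gives conditional success probability $\mathbb{P}_B$. Ordering the attempts along the execution of the protocol and applying the chain rule yields $\Prob(\text{no successful cross-code BM})\ge(1-\mathbb{P}_B)^{\min(n_1,n_2)}$, and combined with the necessity from stage one this gives the claimed upper bound. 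I expect the main obstacle to be the stage-one bookkeeping: verifying that the absence of a successful cross-code BM really does enforce $\mathbb{M}=\mathbb{M}^{(1)}\cup\mathbb{M}^{(2)}$ along every branch of a feedforward tree, and carefully tracking which code stabilizers of $G_c$ remain usable after measurements that may have anticommuted with them, so that the isotropy argument concluding in the commutativity contradiction is airtight.
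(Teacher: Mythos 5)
Your stage-one argument takes a genuinely different route from the paper. The paper localizes the obstruction at a pair of qubits $(\lambda_1,\lambda_2)$ where $\overline{XX}$ and $\overline{ZZ}$ anticommute and then runs an iterative regress ($\lambda_1,\lambda_2\to\lambda_1',\lambda_2'\to\dots$) showing the required ``double information'' can never be predetermined; you instead argue globally that, absent any successful cross-code BM, the record $\mathbb{M}$ splits as $\mathbb{M}^{(1)}\cup\mathbb{M}^{(2)}$, so the code-$1$ projection of the group of determined operators is abelian and cannot contain both a $[\overline{X}]$ and a $[\overline{Z}]$ representative of code~$1$. That part is sound (indeed it can be streamlined: the measured logical representative is, up to a known sign, literally an element of $\langle\mathbb{M}\rangle$ times a code stabilizer from the \emph{current} stabilizer group, so both code-$1$ restrictions land in a single isotropic subgroup), and it has the virtue of absorbing all ``indirect'' inferences automatically. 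The timing worry you'd need to settle --- $l_x$ and $l_z$ being completed at different points, so that the two code-stabilizer factors $\gamma_x,\gamma_z$ a priori centralize different portions of $\mathbb{M}^{(1)}$ --- disappears in that streamlined form.

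The genuine gap is in how the two stages fit together around \emph{non-blind} cross-code BMs. Stage one only rules out logical success when \emph{no} cross-code BM succeeds; stage two only bounds the success probability of attempts made on pairs whose four Bell outcomes are uniform. Your bridging claim --- that the uniform-mixture condition ``is automatic on any still-fresh cross-code pair'' --- is false. A transversal observable on an unmeasured pair can be predetermined by completing a product of code stabilizers with earlier single-code outcomes (e.g.\ in QPC, a partial $Z$-type outcome on qubit $1$ of each code fixes $Z_2\otimes Z_2$ on the untouched second pair via $Z_1Z_2\in G_i$). On such a pair Lem.~\ref{lem:successful-bell-measurment} does not apply: the physical BM's success probability can exceed $\mathbb{P}_B$ (for the standard linear-optics BM it becomes $0$ or $1$ depending on which subspace is known), so your chain-rule bound $\Prob(\text{no cross-code success})\ge(1-\mathbb{P}_B)^{\min(n_1,n_2)}$ does not follow; and if such a non-blind BM does succeed it injects a genuinely cross-code element $X_iX_j$ into $\mathbb{M}$, the factorization $\mathbb{M}=\mathbb{M}^{(1)}\cup\mathbb{M}^{(2)}$ fails, and your isotropy contradiction evaporates --- precisely because the pair now carries double information. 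Closing this loophole is the actual content of the paper's regress argument: predetermining part of the Bell information on the pair where $\overline{XX}$ and $\overline{ZZ}$ conflict forces a code stabilizer to anticommute with a logical operator on a further pair of qubits, and so on until the finite code runs out. You need either that argument or a strengthening of stage one to the statement that no \emph{blind} success implies failure; as written, the two stages do not meet.
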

While in Thm.~\ref{thm:bound} we present the most general bound, it is interesting to note that for the case of two identical error correction codes with $n$ physical qubits each, and assuming standard linear-optics BMs which have a success probability of $P_B=\frac{1}{2}$, the bound simplifies to:
\begin{equation}
	1-2^{-n}.
\end{equation}

\begin{proof}
We proof that at least one successful physical BM performed on a pair of physical qubits, one from each of the two separate codes, where we have no prior knowledge about the outcome is necessary for a successful BM. We term such a success a blind success. The probability to have a blind success for an attempted physical BM was shown to be $\mathbb{P}_B$ in Lem.~\ref{lem:successful-bell-measurment}. In total it is possible to perform $\min(n_1,n_2)$ physical BMs on a pair of physical qubits, one from each of the two separate codes. Thus, the probability to have at least one blind success is $1- \left( 1 - \mathbb{P}_B \right)^{\min(n_1,n_2)}$. We proof this claim via contradiction, by showing that any logical measurement without a blind success will fail to obtain the full logical information.

In this proof we assume without loss of generality, that a logical BM is completed by measuring $\overline{XX} \in [\overline{XX}]$ and $\overline{ZZ} \in [\overline{ZZ}]$. We could have chosen any other pair from $\{ \overline{XX}, \overline{YY}, \overline{ZZ} \}$. For further clarification, the reader can refer to Fig.~\ref{fig:bound} while following the subsequent argument. From Lem.~\ref{lem:acomm-logicals} we know, that $\overline{XX}$ and $\overline{ZZ}$ anticommute in an odd number of qubits in each code. Hence, there exist two indices $\lambda_1$ and $\lambda_2$ where $\overline{XX}$ and $\overline{ZZ}$ anticommute in the first and second code, respectively. It is important to note that this applies to any two representatives of $[\overline{XX}]$ and $[\overline{ZZ}]$. Consequently, if we were to measure any representative of one of the two logical operators destructively without obtaining the double information, it would become impossible to measure any representative of the other logical operator. Hence, we can assume that we have no knowledge about any of the logical variables until the double information is acquired. Therefore, without a blind success we need another way to obtain this double information on $\lambda_1$ and $\lambda_2$. Since we cannot measure the operators which give the double information directly and destructively without a blind success, at least one of them has to be measured in an indirect fashion, leveraging the correlations between physical qubits in a stabilizer code.

Specifically, we need to indirectly measure an operator with support on $\lambda_1$ and $\lambda_2$ or two operators with support on one of the two individually. In the following, we will discuss the first of these two cases, with the discussion of the second proceeding analogously.

From the results derived in Sec.~\ref{sec:pauli-measurements-and-logical-operators-on-encoded-uniform-mixtures-of-bell-states} and App.~\ref{app:observables-that-commute-with-s}, where we covered all possible kinds of measurements, we know that we can only obtain this information indirectly if there exists an observable $M^*$ with support on $\lambda_1$ and $\lambda_2$ which completes a stabilizer measurement:
\begin{equation}
	M^* = \gamma \mu,
\end{equation}
where $\gamma \in \langle G_c^{(\mathbb{M})} \rangle$ and $\mu \in \langle \mathbb{M} \rangle$. Recall that $\langle G_c^{(\mathbb{M})}\rangle \subseteq \langle G_c \rangle$.
We rearrange the terms:
\begin{equation}
	\gamma = M^* \mu.
\end{equation}
Note that either $\gamma$ or $- \gamma$ is a stabilizer of the code. Without loss of generality, assume $\gamma$ is a stabilizer of the code.

Given that $\overline{XX}$ and $\overline{ZZ}$ anticommute in $\lambda_1$, it follows that either one of them anticommutes with $M^*$ in $\lambda_1$. Without loss of generality, assume it is $\overline{XX}$, with the discussion for $\overline{ZZ}$ proceeding analogously. Since the two logical qubits are encoded in independent stabilizer codes on disjoint sets of qubits, we can use the unique factorization:
\begin{equation}
	\overline{XX} = \overline{X}_1 \otimes \overline{X}_2 = \left( \overline{X}_1 \otimes I^{\otimes n_2} \right) \left( I^{\otimes n_1} \otimes \overline{X}_2 \right),
\end{equation}
where only the first factor has support on the first code. Thus, it follows that $\left( \overline{X}_1 \otimes I^{\otimes n_2} \right) \in [\overline{XI}]$ anticommutes with $M^*$ in $\lambda_1$. However, the code stabilizer $\gamma$ commutes with $\overline{XI}$, therefore $\gamma$ has to anticommute with $\overline{X}_1 \otimes I^{\otimes n_2}$ in another qubit $\lambda_1^\prime \neq \lambda_1$ on the first code. An analog argument on the second code leads to another qubit $\lambda_2^\prime \neq \lambda_2$ where $I^{\otimes n_1} \otimes \overline{X}_2 \in [\overline{IX}]$ anticommutes with $\gamma$ on the second code. For the analogous discussion of the second case, we would simply replace $M^*$ with two observables with support exclusively on $\lambda_1$ and $\lambda_2$, respectively.

We have shown that to obtain the double information on $\lambda_1$ and $\lambda_2$ without a blind success we need double information on another pair of qubits $\lambda_1^\prime$ and $\lambda_2^\prime$ which have shared support with $\overline{XX}$ and $\overline{ZZ}$ on the first and second code, respectively. However, the same argument we applied to $\lambda_1$ and $\lambda_2$ also applies to $\lambda_1^\prime$ and $\lambda_2^\prime$. We would need double information on a different pair $\lambda_1^{\prime \prime}$ and $\lambda_2^{\prime \prime}$. Since we have yet to obtain the necessary information on $\lambda_1$ and $\lambda_2$, this pair cannot be identical to the first. This reasoning extends iteratively until we reach the finite size limit of the code. Consequently, it is impossible to acquire the required double information, rendering it impossible to obtain the complete logical Bell information without a blind success.

\end{proof}

\begin{figure}[tb]
	\def\svgwidth{0.45\textwidth}
	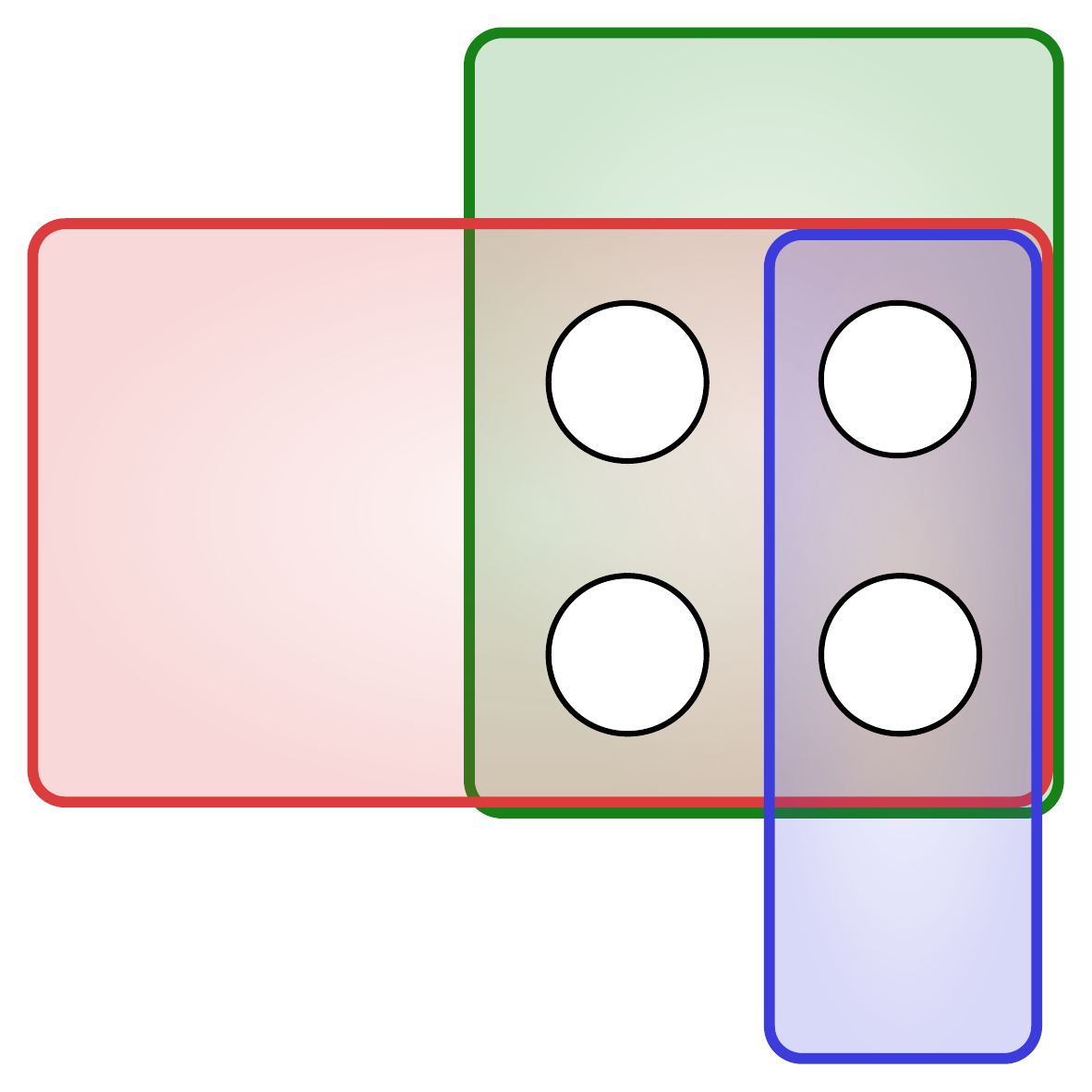
	\caption{\label{fig:bound}This figure illustrates a crucial aspect of the proof for Thm.~\ref{thm:bound}. Any pair of logical operators $\overline{XX}$ and $\overline{ZZ}$ anticommutes in two qubits $\lambda_1$ and $\lambda_2$. In the proof we argue that any code stabilizer $\gamma$ with support on $\lambda_1$ and $\lambda_2$ must also anticommute with one of the logical operators in two other qubits $\lambda_1^\prime$ and $\lambda_2^\prime$.}
\end{figure}

\section{Logical Bell measurements and their optimality}
\label{sec:logical-bell-measurements-and-their-optimality}
In this section, we present the principles of our optimal logical Bell measurement schemes. We characterize a scheme as optimal if it meets the bound defined in Thm.~\ref{thm:bound}. Here, we will consider the scenario where two logical qubits are encoded in identical stabilizer codes, which leads us to define $n_1=n_2=\frac{n_c}{2}\eqqcolon n$.

In Sec.~\ref{sec:exemplary-optimal-measurement-scheme-for-qpc(2,2)}, as an initial simple example, we will present our optimal Bell measurement scheme for the small QPC($2,2$) code in great detail. Subsequently, in Sec.~\ref{sec:strategy-and-single-code-reduction}, we describe the general strategy of our schemes and how our measurement schemes can be reduced to a single-code picture. In Sec.~\ref{sec:sufficient-conditions-for-an-optimal-logical-Bell-measurement-with-feedforward-based-linear-optics}, we present sufficient conditions for an optimal Bell measurement scheme. Finally, in Sec.~\ref{sec:heuristics-for-finding-optimal-logical-bell-measurements}, we will discuss some of the heuristics we used to find optimal schemes.

\subsection{Exemplary optimal Bell measurement for the quantum parity code ($2,2$)}
\label{sec:exemplary-optimal-measurement-scheme-for-qpc(2,2)}
In this section, we describe our logical BM scheme on two logical qubits, each encoded in the QPC($2,2$) code~\cite{PhysRevLett.95.100501}. For additional intuition, App.~\ref{app:observables-that-commute-with-s} includes an illustrative discussion of how a logical $\overline{X}$ measurement is performed on a single logical qubit.

Recall the stabilizer generators of the QPC($2,2$) code:
\begin{equation}
G_i = \{ XXXX, ZZII, IIZZ \},
\end{equation}
where $i \in \{1,2\}$, and the trivial extension is omitted. Furthermore, the relevant logical operators are:
\begin{equation}
\{ XXII, IIXX \} \subset [\overline{X}]
\end{equation}
and
\begin{equation}
\{ ZIZI, ZIIZ, IZZI, IZIZ \} \subset [\overline{Z}].
\end{equation}
For convenience, we designate the first two qubits of the code as the first row and the last two as the second row. In the literature, the rows of this code are often referred to as blocks~\cite{PhysRevLett.95.100501,PhysRevLett.117.210501,PhysRevA.95.012327,PhysRevA.99.062308,PhysRevA.100.052303,PRXQuantum.4.040322}. Covering a row with $X$ operators gives a logical $\overline{X}$ operator and a $Z$ operator on one qubit in each row constitutes a logical $\overline{Z}$ operator. We define the code stabilizers of the two codes, $G_1$ and $G_2$, by extending the single-code stabilizers $G_i$ across the full Hilbert space and reordering the elements:
\begin{equation}
	\begin{aligned}
	G_1 & = \{ ZZII \ IIII, XXXX \ IIII, IIZZ \ IIII \} \\
		& = \{ g_{1,s} \}_{s \in \{ 1, \dots, n-1 \} }
	\end{aligned}
\end{equation}
and
\begin{equation}
	\begin{aligned}
	G_2 & = \{ IIII \ ZZII, IIII \ XXXX, IIII \ IIZZ \} \\
		& = \{ g_{2,s} \}_{s \in \{ 1, \dots, n-1 \} }.
	\end{aligned}
\end{equation}
In the preceding section, we combined the two copies of the code:
\begin{equation}
\begin{aligned}
	G_c = & \ G_1 \cup G_2 \\
	= & \ \{ XXXX \ IIII, ZZII \ IIII, IIZZ \ IIII, \\
	& \ \phantom{\{} IIII \ XXXX, IIII \ ZZII, IIII \ IIZZ \phantom{,} \} \\
	= & \ \{ g_j \}_{j \in \{1, \dots, n-2 \}}.
\end{aligned}
\end{equation}
However, since we are working with identical codes in this section, it is instructive to continue treating the generators separately. In the following, for convenience, we term two qubits that correspond to each other between the two codes as a qubit pair. For example, the second qubit of each code forms a qubit pair, so the operator $IXII \ IXII$ acts on the second qubit pair.

We define an operator that acts identically on both codes as a transversal operator. Similarly, a transversal BM refers to any physical BM performed on two corresponding qubits of the two codes. In our measurement schemes, we aim to measure transversal logical operators, namely, elements of the sets
\begin{equation}
	\{ XXII \ XXII, IIXX \ IIXX \} \subset [\overline{XX}]
	\label{eq:qpc22-logical-xx}
\end{equation}
and
\begin{equation}
	\begin{aligned}
	\{ & ZIZI \ ZIZI, ZIIZ \ ZIIZ, & \\
	& IZZI \ IZZI, IZIZ \ IZIZ \phantom{,} \} & \subset [\overline{ZZ}].
	\end{aligned}
	\label{eq:qpc22-logical-zz}
\end{equation}

We define the initial quantum state as
\begin{equation}
	S = \langle G_1 \cup G_2 \cup L_{x,z} \rangle,
\end{equation}
where
\begin{equation}
	L_{x,z} = \{ l_x XXII \ XXII , l_z ZIZI \ ZIZI \}.
\end{equation}
Recall that the stabilizer group is independent of the choice of the representatives for the logical operators.

In our scheme, the only necessary operations are transversal physical BMs. In this section, we call a partial BM a failed BM. The scheme is illustrated in Fig.~\ref{fig:qpc-2-code-example}. We will start by briefly outlining the strategy of our scheme. To simplify, when measuring a transversal operator on a qubit pair, we refer to this action as measuring the Pauli information of the pair. For instance, when measuring the operator $IXII \ IXII$, we are obtaining the $XX$ information of the second qubit pair. For the measurement scheme to succeed we want to measure an element of each of the sets $[\overline{XX}]$ and $[\overline{ZZ}]$. From Eqs.~\eqref{eq:qpc22-logical-xx} and~\eqref{eq:qpc22-logical-zz} we can deduce how to obtain these operators. To measure an element of $[\overline{XX}]$ we need to obtain the $XX$ information of each qubit pair of one row. To measure an element of $[\overline{ZZ}]$ we need to obtain $ZZ$ information of at least one qubit pair in each row. Thus, the strategy is as follows. In each row we start by measuring the $XX$ information on the first qubit pair using a transversal $XX$-BM. If this measurement succeeds we additionally obtain the $ZZ$ information on the first qubit pair. Therefore, we do not need the $ZZ$ information on the remaining qubit pair in that row which then can be measured with another transversal $XX$-BM. This way, we obtain the logical $\overline{XX}$ information $l_x$ in this row. If, however, the BM on the first qubit pair of the row only obtains a partial outcome the remaining qubit pair is measured with a $ZZ$-BM. This ensures that the $ZZ$ information is obtained on at least one qubit pair of each row. Thus, after measuring any row the logical $ZZ$ information $l_z$ remains obtainable. We will now analyze this scheme in more detail. While the transformation of the stabilizer state could be directly taken from Lem.~\ref{lem:standard-form}, we will instead carry out the derivation explicitly in this example.

\begin{figure}
	\def\svgwidth{0.44\textwidth}
	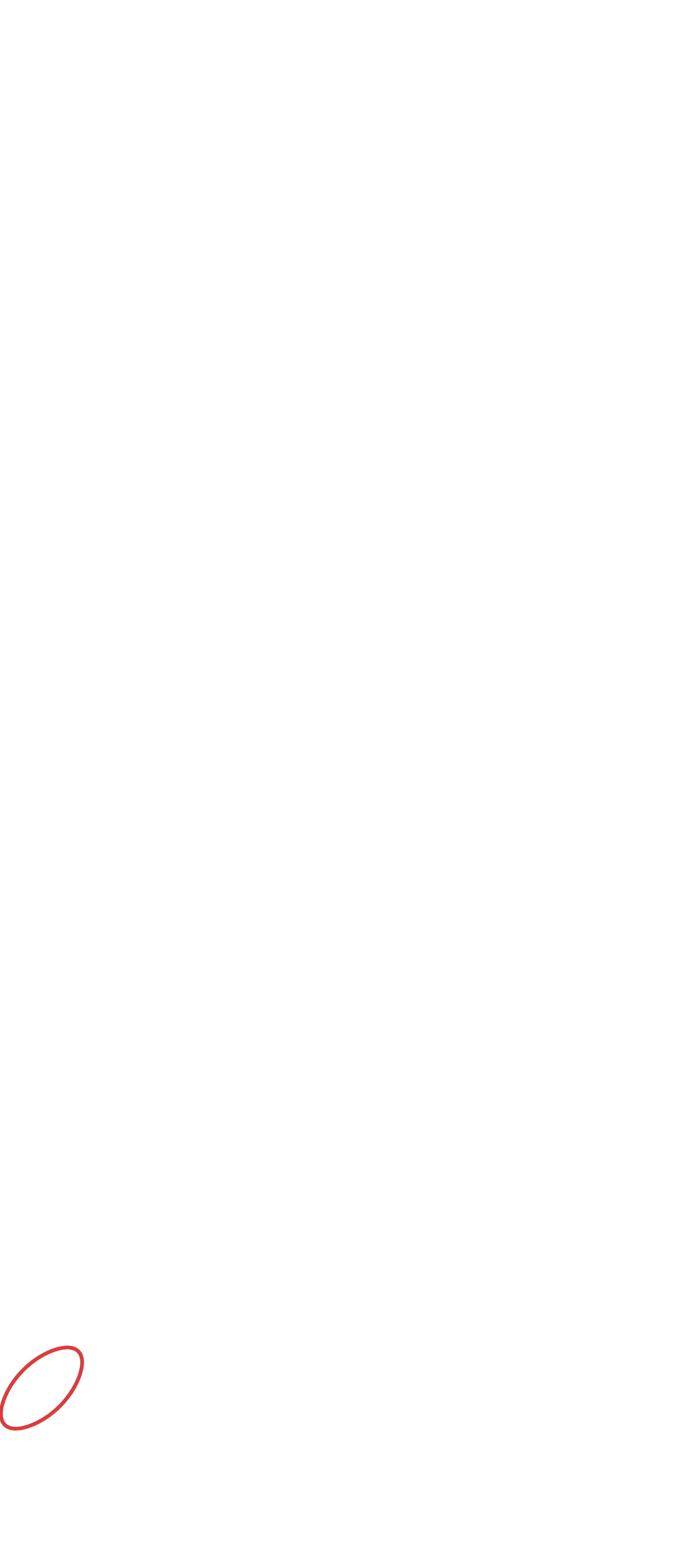
	\caption{\label{fig:qpc-2-code-example}Logical BM scheme for QPC($2,2$). Each gray box represents a single logical qubit encoded in four physical qubits, depicted as circles. Red and blue indicates $XX$-BMs and $ZZ$-BMs, respectively. The qubits filled with both red and blue indicate a successful physical BM.}
\end{figure}

We start our measurement scheme by performing a transversal $XX$-BM on the first qubit pair. The three corresponding transversal operators $XIII \ XIII$, $YIII \ YIII$ and $ZIII \ ZIII$ anticommute with at least one stabilizer generator of the code: e.g., the operators $XIII \ XIII$ and $YIII \ YIII$ anticommute with $g_{1,1}$ and $ZIII \ ZIII$ anticommutes with $g_{1,2}$. We conclude that for each of these operators both outcomes are equally likely. Thus, from Lem.~\ref{lem:successful-bell-measurment} we deduce that the success probability of this measurement is $\mathbb{P}_B$.

From Lem.~\ref{lem:successful-bell-measurment}, we know that a successful physical BM behaves as expected. Thus, upon a successful BM we measure the operators $M_{x,1} = XIII \ XIII$ and $M_{z,1} = ZIII \ ZIII$, yielding results $m_{x,1}$ and $m_{z,1}$, respectively. In this case, we continue to measure the operators $M_{x,2} = IXII \ IXII$ and $M_{z,4} = IIIZ \ IIIZ$ with results $m_{x,2}$ and $m_{z,4}$, respectively. We use the first subscript to denote the Pauli type of the observable and the second subscript to refer to the measured qubit pair. Both of these operators can be measured with unit probability by performing a transversal $XX$-BM and a transversal $ZZ$-BM on the second and fourth qubit pairs, respectively. In this case, since $M_{x,1} M_{x,2} = XXII \ XXII \in [\overline{XX}]$ and $M_{z,1} M_{z,4} = ZIIZ \ ZIIZ \in [\overline{ZZ}]$, we successfully measure the logical Bell information with $l_x = m_{x,1} m_{x,2}$ and $l_z = m_{z,1} m_{z,4}$, thus concluding the scheme.

If, on the other hand, the first physical BM fails, we obtain a partial result. A partial $XX$-BM on the  qubit pair yields results $r_{1,1}$ and $r_{2,1}$ for $XIII \ IIII$ and $IIII \ XIII$, respectively, where we use the first index to denote the code on which the respective observable has support on and the second index to enumerate the measurements. We derive the eigenvalue $m_{x,1} = r_{1,1}r_{2,1}$ of $M_{x,1}$ by multiplying these outcomes. We take note that $g_{1,1}$ and $l_z ZIZI \ ZIZI$ are the only stabilizer generators with which $XIII \ IIII$ anticommutes and $g_{2,1}$ and $l_z ZIZI \ ZIZI$ are the only stabilizer generator with which $IIII \ XIII$ anticommutes. Thus, to capture the change in the quantum state resulting from the measurement, we have to replace the stabilizer generators $g_{1,1}$ and $g_{2,1}$ with the two measured observables and multiply $g_{1,1}$ and $g_{2,1}$ to $l_z ZIZI \ ZIZI$. Therefore, the current stabilizer after the first partial BM reads:
\begin{equation}
	S^{(\mathbb{M}_1)} = \langle G_c^{(\mathbb{M}_1)} \cup \mathbb{M}_1 \cup L_{x,z}^{(\mathbb{M}_1)} \rangle,
\end{equation}
where
\begin{equation}
	\begin{aligned}
	G_c^{(\mathbb{M}_1)} 	= \{ & g_{1,2}, g_{1,3}, g_{2,2}, g_{2,3} \} \\
	 					= \{ & XXXX \ IIII, IIZZ \ IIII, \\
							 & IIII \ XXXX, IIII \ IIZZ \phantom{,} \},
	\end{aligned}	
\end{equation}
\begin{equation}
	\mathbb{M}_1 = \{ r_{1,1} XIII \ IIII , r_{2,1} IIII \ XIII \},
\end{equation}
and
\begin{equation}
	L_{x,z}^{(\mathbb{M}_1)} = \{ l_x XXII \ XXII, l_z IZZI \ IZZI \}.
\end{equation}

We continue by performing a transversal $ZZ$-BM on the second qubit pair. The observables $IYII \ IYII$ and $IZII \ IZII$ each anticommute with a generator of the current stabilizer $S^{(\mathbb{M}_1)}$. Since the eigenvalue $m_{x,1}$ was already obtained, the operator $M_{x,2} = IXII \ IXII$ with outcome $m_{x,2}$ completes a logical $\overline{XX}$ measurement measuring $XXII \ XXII \in [\overline{XX}]$: $m_{x,1} M_{x,1} \times m_{x,2} M_{x,2} = m_{x,1} XIII \ XIII \times m_{x,2} IXII \ IXII = l_x XXII \ XXII \in S^{(\mathbb{M}_1)}$. Since $l_x$ is a symmetric Bernoulli random variable with outcomes $\{ -1,+1 \}$ Lem.~\ref{lem:successful-bell-measurment} applies and the success probability for this second physical BM is $\mathbb{P}_B$ as well.

If this second physical BM is successful we measure $M_{x,2} = IXII \ IXII$ and $M_{z,2} = IZII \ IZII$. Recall, that the partial BM on the first qubit pair obtained the eigenvalue of $M_{x,1} = XIII \ XIII$. We continue the scheme by measuring the observable $M_{z,4} = IIIZ \ IIIZ$ with probability one. We again obtain a successful logical BM since $M_{x,1}M_{x,2} = XIII \ XIII \times IXII \ IXII = XXII \ XXII \in [\overline{XX}]$ and $M_{z,2}M_{z,4} = IZII \ IZII \times IIIZ \ IIIZ = IZIZ \ IZIZ \in [\overline{ZZ}]$.

If the second physical BM fails, we obtain $IZII \ IIII$ and $IIII \ IZII$, with outcomes $r_{1,2}$ and $r_{2,2}$, respectively. Their product again yields the eigenvalue of the transversal operator $M_{z,2}$: $m_{z_2} = r_{1,2} r_{2,2}$. At this point we have obtained the eigenvalues $m_{x,1}$ and $m_{z,2}$ of $M_{x,1} = XIII \ XIII$ and $M_{z,2} = IZII \ IZII$, respectively. The current stabilizer after the second partial BM is
\begin{equation}
	S^{(\mathbb{M}_2)} = \langle G_c^{(\mathbb{M}_2)} \cup \mathbb{M}_2 \cup L_{x,z}^{(\mathbb{M}_2)} \rangle,
\end{equation}
where
\begin{equation}
	\begin{aligned}
	G_c^{(\mathbb{M}_2)} 	= \{ & g_{1,3}, g_{2,3} \} \\ \\
	 					= \{ & IIZZ \ IIII, IIII \ IIZZ \},
	\end{aligned}	
\end{equation}
\begin{equation}
	\begin{aligned}
	\mathbb{M}_2 = \{ 	& r_{1,1} XIII \ IIII, r_{2,1} IIII \ XIII, \\
						& r_{1,2} IZII \ IIII, r_{2,2} IIII \ IZII \phantom{,} \},
	\end{aligned}
\end{equation}
and
\begin{equation}
	L_{x,z}^{(\mathbb{M}_2)} = \{ l_x IIXX \ IIXX, l_z IZZI \ IZZI \}.
\end{equation}

We proceed with the second row of the code identically to the first row. Again, by applying Lem.~\ref{lem:successful-bell-measurment} similarly to the second qubit pair, one can straightforwardly deduce that the success probability of the third BM is $\mathbb{P}_B$. If the $XX$-BM on the third qubit pair succeeds we complete the protocol by performing an $XX$-BM on the last qubit pair. In this case we successfully measure $IIXX \ IIXX \in [\overline{XX}]$ and $IZZI \ IZZI \in [\overline{ZZ}]$. If the BM on the third qubit pair fails we obtain the eigenvalues $m_{x,1}$, $m_{z,2}$ and $m_{x,3}$ of $M_{x,1} = XIII \ XIII$, $M_{z,2} = IZII \ IZII$ and $M_{x,3} = IIXI \ IIXI$, respectively. The current stabilizer after the third partial BM reads:
\begin{equation}
	S^{(\mathbb{M}_3)}	= \langle \mathbb{M}_3 \cup L_{x,z}^{(\mathbb{M}_3)} \rangle,
\end{equation}
where
\begin{equation}
	\begin{aligned}
	\mathbb{M}_3 = \{ 	& r_{1,1} XIII \ IIII, r_{2,1} IIII \ XIII, \\
						& r_{1,2} IZII \ IIII, r_{2,2} IIII \ IZII, \\
						& r_{1,3} IIXI \ IIII, r_{2,3} IIII \ IIXI \phantom{,} \}
	\end{aligned}
\end{equation}
and
\begin{equation}
	L_{x,z}^{(\mathbb{M}_3)} = \{ l_x IIXX \ IIXX, l_z IZIZ \ IZIZ \}.
\end{equation}

In this case, we perform one last physical BM on the final qubit pair. For the last physical BM any transversal physical operator completes a logical operator,
\begin{equation}
	\begin{aligned}
	M_{x,3} \times IIIX \ IIIX 	& = IIXI \ IIXI \times IIIX \ IIIX \\
								& = IIXX \ IIXX \in [\overline{XX}],
	\end{aligned}
\end{equation}
\begin{equation}
	\begin{aligned}
	M_{z,2} \times IIIZ \ IIIZ	& = IZII \ IZII \times IIIZ \ IIIZ \\
								& = IZIZ \ IZIZ \in [\overline{ZZ}],
	\end{aligned}
\end{equation}
\begin{equation}
	\begin{aligned}	
	& M_{z,2} \times M_{x,3} \times IIIY IIIY\\	& = IZII \ IZII \times IIXI \ IIXI \times IIIY \ IIIY \\
												& = IZIY \ IZIY \in [\overline{YY}].
	\end{aligned}
\end{equation}

Therefore, we apply Lem.~\ref{lem:successful-bell-measurment} to deduce that the success probability is $\mathbb{P}_B$. If the last measurement succeeds we successfully measure $IIXX \ IIXX \in [\overline{XX}]$ and $IZIZ \ IZIZ \in [\overline{ZZ}]$. To achieve a successful logical BM, if all previous physical BMs failed, the final physical BM must succeed. This directly follows from Lem.~\ref{lem:forbidden-measurements}, since any single-qubit Pauli observable on one qubit of the last qubit pair commutes with $\mathbb{M}_3$ and anticommutes with at least one element of $\{ l_x IIXX \ IIXX, l_z IZIZ \ IZIZ \}$. Additionally, we showed in Thm.~\ref{thm:bound} that if all physical BMs fail we are unable to obtain both logical eigenvalues $\overline{XX}$ and $\overline{ZZ}$. We can now easily infer the success probability of this scheme. All physical BMs have a success probability of $\mathbb{P}_B$ and if at least one succeeds we are able to measure the logical Bell information with probability one. Thus, we conclude that the success probability of the logical measurement scheme reaches the bound $1- \left( 1 - \mathbb{P}_B \right)^{n} = 1- \left( 1 - \mathbb{P}_B \right)^{4}$. For a standard linear-optics BM with $\mathbb{P}_B = \frac{1}{2}$, we thus reproduce the value of $\frac{15}{16}$ which coincides with the success probability of the scheme in Ref.~\cite{PhysRevA.100.052303} for the QPC($2,2$) code.

\subsection{Strategy and single-code reduction}
\label{sec:strategy-and-single-code-reduction}
Let us summarize some terminology introduced in Sec.~\ref{sec:exemplary-optimal-measurement-scheme-for-qpc(2,2)} we will use throughout this work. We term two qubits that correspond to each other between the two codes as a qubit pair. For example, the second qubit of each code forms the second qubit pair. Furthermore, we call an operator that acts identically on both codes as a transversal operator. Similarly, a transversal BM refers to any physical BM performed on a qubit pair. For brevity, from this point forward, ``transversal BM'' will refer to a transversal guaranteed-partial information physical BM. Furthermore, recall that two operators are said to conflict on a qubit if their decompositions into single-qubit Pauli operators require different Pauli information on that qubit.

In Sec.~\ref{sec:strategy-for-optimal-measurement-schemes} we describe our general strategy to design optimal Bell measurement schemes. In Sec.~\ref{sec:single-code-reduction} we demonstrate how our schemes can be reduced to a single-code picture.

\subsubsection{Strategy for optimal logical Bell measurements}
\label{sec:strategy-for-optimal-measurement-schemes}
Our goal is to present logical BM schemes that reach the bound established in Thm.~\ref{thm:bound} for two logical qubits encoded in the same code. All our measurement schemes follow a similar conceptual framework and consist of two parts. In the first part of the schemes, the scheme remains fixed and qubit pairs are measured one by one in a predetermined order using predefined transversal BMs, until the first successful outcome occurs. This first part of the scheme is designed such that, there always exists a pair of transversal logical operators $\overline{XX}$ and $\overline{ZZ}$ which anticommute exclusively in the qubits where the success occurred and do not conflict with any prior measurement. Therefore, once a transversal BM is successful, the logical Bell information can be obtained with probability one by completing the measurement of these two logical operators, in the second part of scheme. Completing the measurement of the logical operators can for example be performed by transversal BMs. Since the logical operators do not conflict on any unmeasured qubits no other successful BM is necessary in the second part of the scheme. Our strategy may appear restrictive within the broad range of possible strategies, but we have discovered optimal schemes for all codes considered in this work within these simplifications.

We recall from Sec.~\ref{sec:physical-bell-measurements-on-entangled-quantum-states} that an unsuccessful BM, i.e., a partial measurement outcome, is equivalent to two single-qubit Pauli measurements. Therefore, alternatively, one can use single-qubit measurements to complete the logical operators in the second part of the scheme. Using this equivalence, our treatment remains identical, regardless of whether transversal BMs or single-qubit Pauli measurements are used in the second part of the scheme. In a related context, Refs.~\cite{Hilaire_2021,patil2024improveddesignallphotonicquantum} demonstrated that switching to single-qubit Pauli measurements following a successful transversal BM can enhance the loss tolerance of logical BMs on the tree code. However, for our schemes, it is unclear whether single-qubit measurements are always the better choice, because photon loss can make the initially chosen logical operators unobtainable. In this situation, a second successful BM can potentially allow a different logical operator pair to be measured. Thus, for general codes and measurement schemes, determining the loss regimes in which single-qubit Pauli measurements or transversal BMs maximize the success probability remains an open question. Since the use of single-qubit or transversal BMs in the second part of the scheme is equivalent in our loss-less treatment, we restrict our discussion to transversal BMs for simplicity, without loss of generality.

\subsubsection{Single-code reduction}
\label{sec:single-code-reduction}
Up to this point, we have consistently tracked the logical two-qubit code space. In the following, we present an argument demonstrating how our measurement schemes can be reduced to a single-code picture. 

Tracking the stabilizer generators sequentially, instead of combining the commuting measurement operators into a single measurement is necessary whenever feedforward may occur between measurements. However, after the first successful BM, the second part of the scheme is fixed and feedforward is no longer needed, so tracking the stabilizer generators is necessary only for the first part of the scheme. The first part of the scheme consists of partial BMs and one successful BM. At any point in the first part of the scheme, the stabilizer generators can be chosen so that they exhibit the following symmetry between the two codes. Every element of the stabilizer generators that is not transversal has support on only one of the two codes, and there exists another element in the generators which is the same operator on the other code up to a sign. The signs of these two operators only differs for elements of the measurements $\mathbb{M}$, but not for the elements of the code generators $G_c^{\mathbb{M}}$. Furthermore, the logical generators $L_{x,z}^{(\mathbb{M})}$ are transversal. However, in cases where the signs of the two operators in $\mathbb{M}$, which are otherwise identical in their respective codes, differ, we are only interested in the product of the signs, since we aim to obtain eigenvalues of transversal logical operators. Thus, the stabilizer generators can essentially be treated as symmetric under the exchange of the two codes. A detailed derivation of this symmetry is provided in App.~\ref{app:single-code-reduction}.

As a consequence, we can simplify the treatment of our schemes. Since the generators in the code stabilizers $G_c^{(\mathbb{M})}$ are always identical on both codes it is sufficient to track them for one code. This simplification is illustrated in Fig.~\ref{fig:qpc-1-code-example} where we reduce the scheme of Fig.~\ref{fig:qpc-2-code-example} to a single-code picture. For simplicity, we use the same notation for the sets $S_c$, $G_c^{(\mathbb{M})}$, $\mathbb{M}$, and $L_{x,z}^{(\mathbb{M})}$ in both pictures throughout this work. The picture they are defined in will always be made clear in context.

\begin{figure}
	\def\svgwidth{0.44\textwidth}
	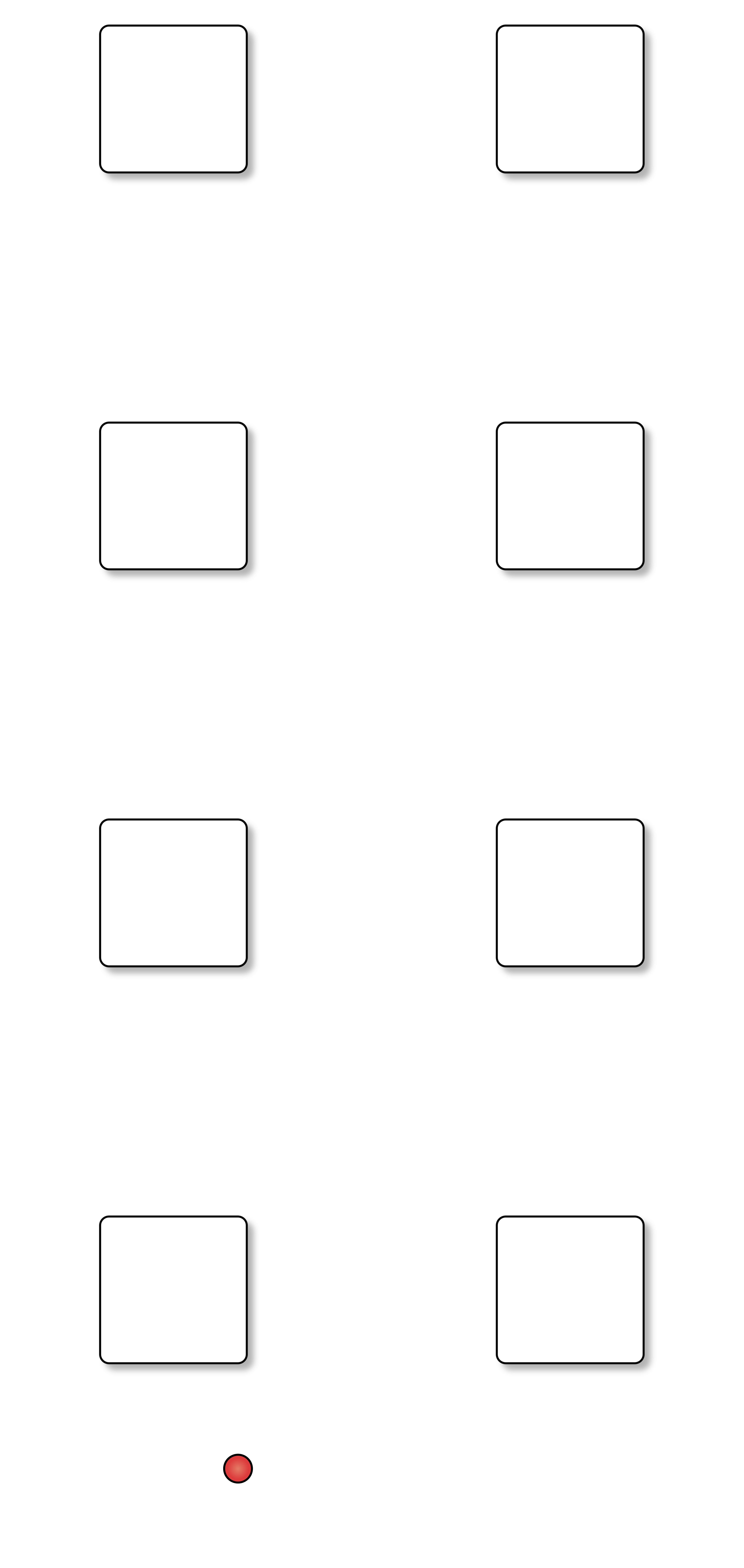	
	\caption{\label{fig:qpc-1-code-example}Logical BM scheme for QPC($2,2$) in the single-code picture. Each gray box represents a single logical qubit encoded in four physical qubits, depicted as circles. Red and blue indicates $X$-BMs and $Z$-BMs, respectively. Qubits filled with both red and blue indicate a successful physical BM. The red and blue lines illustrate the measured logical operators $\overline{X}$ and $\overline{Z}$, respectively.}
\end{figure}

We now discuss how partial transversal BMs transform stabilizer generators in the single-code picture. In App.~\ref{app:single-code-reduction}, we formally demonstrate that the code stabilizers $G_1^{(\mathbb{M})}$ and $G_2^{(\mathbb{M})}$ in each code transform identically under a partial transversal BM. To illustrate how this applies to the single-code picture, let us revisit the example from Sec.~\ref{sec:exemplary-optimal-measurement-scheme-for-qpc(2,2)}. Consider the stabilizer generators of two logical qubits in a uniform mixture of logical Bell states, where each logical qubit is encoded in the QPC($2,2$) code:
\begin{equation}
	\begin{aligned}
	S = \langle & \{ ZZII \ IIII, IIZZ \ IIII, XXXX \ IIII, \\
	 			& \phantom{\{ }IIII \ ZZII, IIII \ IIZZ, IIII \ XXXX \phantom{,} \}	\\
	 			& \cup L_{x,z} \rangle.
	\end{aligned}
\end{equation}
Suppose we perform a partial $XX$-BM on the first qubit pair. The measured observables $XIII \ IIII$ and $IIII \ XIII$ anticommute exclusively with $ZZII \ IIII$ and $IIII \ ZZII$, respectively. As a result, each measured observable replaces the respective stabilizer generator. The resulting stabilizer generators are:
\begin{equation}
	S^{(\mathbb{M}_1)} = \langle G_c^{(\mathbb{M}_1)} \cup \mathbb{M}_1 \cup L_{x,z}^{(\mathbb{M}_1)} \rangle,
\end{equation}
where
\begin{equation}
	\begin{aligned}
	G_c^{(\mathbb{M}_1)} = \{ 	& IIZZ \ IIII, XXXX \ IIII, \\
							 	& IIII \ IIZZ, IIII \ XXXX \phantom{,} \},
	\end{aligned}	
\end{equation}
\begin{equation}
	\mathbb{M}_1 = \{ r_{1,1} XIII \ IIII , r_{2,1} IIII \ XIII \}.
\end{equation}

In the single-code picture, this process simplifies as follows. The single-code stabilizer generators are:
\begin{equation}
	G_c = \{ ZZII, IIZZ, XXXX \}.
\end{equation}
Measuring the first qubit pair with a partial $XX$-BM reduces to measuring the single-qubit observable $XIII$, which anticommutes exclusively with $ZZII$ and thus replaces this stabilizer generator. The updated stabilizer generators are:
\begin{equation}
	G_c^{(\mathbb{M}_1)} = \{ IIZZ, XXXX \},
\end{equation}
and
\begin{equation}
	\mathbb{M}_1 = \{ r_{1,1}r_{2,1} XIII \} = \{ m_1 XIII \},
\end{equation}
where $m_1 = r_{1,1}r_{2,1}$ is the product of the two outcomes of the partial BM in the two-code picture. The observations in the above example are generalized in Lem.~\ref{lem:transversal-bm}.

\begin{restatable}{lemma}{transversalbmrestatable}
\label{lem:transversal-bm}
(Transversal BMs in the single-code picture) Partial transversal BMs in the two-code picture act as single-qubit Pauli measurements in the single-code picture. Specifically, an $XX$-, $YY$-, or $ZZ$-BM on the $i$-th qubit pair with a partial result transforms the single-code stabilizer as an $X$, $Y$, or $Z$ measurement on the $i$-th qubit, respectively.
\end{restatable}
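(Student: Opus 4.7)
The plan is to invoke the single-code/two-code correspondence formalized in App.~\ref{app:single-code-reduction}, under which at every stage of the scheme the stabilizer generators can be chosen so that every non-transversal element appears in a mirror pair $\{g^{(1)}, g^{(2)}\}$, where $g^{(1)}$ and $g^{(2)}$ are the same Pauli string supported on code~1 and code~2 respectively (possibly differing in sign, but only for elements of $\mathbb{M}$), while the code generators in $G_c^{(\mathbb{M})}$ are identical on the two codes and the logical generators $L_{x,z}^{(\mathbb{M})}$ are transversal. A partial transversal $PP$-BM on the $i$-th qubit pair with $P \in \{X,Y,Z\}$ then contributes two commuting single-qubit Pauli measurements: $P^{(1)}_i$ on the $i$-th qubit of code~1 with outcome $r_1$, and the mirror observable $P^{(2)}_i$ on the $i$-th qubit of code~2 with outcome $r_2$.

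First I would apply Lem.~\ref{lem:standard-form} to these two measurements in succession. By the mirror symmetry, $P^{(1)}_i$ anticommutes with a stabilizer generator on code~1 if and only if $P^{(2)}_i$ anticommutes with the corresponding mirror generator on code~2. Choosing the anticommuting partner in Lem.~\ref{lem:standard-form} symmetrically on the two codes then produces a post-measurement generating set that preserves the mirror-pair structure: the updated $G_c^{(\mathbb{M})}$ remains the same on both codes, $L_{x,z}^{(\mathbb{M})}$ stays transversal, and the new entries of $\mathbb{M}$ are precisely $r_1 P^{(1)}_i$ and $r_2 P^{(2)}_i$.

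Next I would match this against a hypothetical single-qubit $P_i$ measurement in the single-code picture. Since the single-code generators in $G_c^{(\mathbb{M})}$ are the shared Pauli strings of the mirror pairs, the commutation pattern of $P_i$ with the single-code generators is identical to that of $P^{(1)}_i$ with the code-1 generators. The replacement rule of Lem.~\ref{lem:standard-form} therefore acts on exactly the same generator, and the new element inserted into the single-code stabilizer is $m\,P_i$ (trivially extended) with $m = r_1 r_2$, in agreement with the convention used in the QPC example of Sec.~\ref{sec:exemplary-optimal-measurement-scheme-for-qpc(2,2)}. This is precisely the statement of the lemma for each of the three Pauli types $P$.

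The main obstacle will be sign bookkeeping when earlier partial BMs have introduced mirror pairs whose two signs disagree. This is resolved by the observation that the logical BM only ever needs the products of such pairs to reconstruct transversal eigenvalues: the single-code picture records exactly this product as $m = r_1 r_2$, and the commutation relations that drive Lem.~\ref{lem:standard-form} are insensitive to these signs, so the symmetric update prescription is well-defined at every step and the reduction to a single-qubit $P_i$ measurement goes through.
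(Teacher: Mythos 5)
Your proposal is correct and follows essentially the same route as the paper: both arguments rest on the mirror symmetry of the stabilizer generators established in App.~\ref{app:single-code-reduction} (Lem.~\ref{lem:symmetry}), from which the two single-qubit observables of a partial transversal BM act identically and independently on $G_1^{(\mathbb{M})}$ and $G_2^{(\mathbb{M})}$, reducing to a single-qubit $b_j$ measurement with recorded outcome $r_1 r_2$. The paper's own proof is just a terser statement of the same mechanism; your explicit appeal to Lem.~\ref{lem:standard-form} and the sign bookkeeping are consistent elaborations rather than a different approach.
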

\begin{proof}
Provided in App.~\ref{app:single-code-reduction}.
\end{proof}

For the partial BMs it is only necessary to track the result of the transversal operator, i.e., the product of the two singe-qubit measurement results. Recall that, when measuring a transversal operator on a qubit pair, we refer to this action as measuring the Pauli information of the pair. For example, when dealing with two logical qubits, each encoded in the QPC($2,2$) code, measuring the operator $IXII \ IXII$ yields the $XX$ information of the second qubit pair. In our reduced single-code picture, this measurement corresponds to obtaining the Pauli information of a single qubit. Specifically, in our example, it means measuring the $X$ information of the second qubit. To clarify the context, in the single-code picture, we will refer to transversal BMs as $X$-, $Y$-, and $Z$-BMs from now on. For brevity, we will often associate a Pauli operator acting on a qubit with the corresponding BM. For instance, the BM $X_i$ denotes the $X$-BM on qubit $i$. Similarly, transversal logical operators transform to single-code logical operators in the single-code picture, for instance $\overline{XX}$ becomes $\overline{X}$.

In this section, we have demonstrated that our schemes exhibit a symmetry which can be leveraged to reduce them to a single-code picture. We will employ this concept in the rest of the paper to simplify the discussions of our schemes, providing a more elegant treatment over a two-code treatment.

\subsection{Sufficient conditions for an optimal logical Bell measurement with feedforward-based linear optics}
\label{sec:sufficient-conditions-for-an-optimal-logical-Bell-measurement-with-feedforward-based-linear-optics}
In this section, we define our measurement schemes algebraically and present sufficient conditions for an optimal logical BM with feedforward-based linear optics.

Our measurement schemes can be characterized by two sequences. To define these sequences, we enumerate the qubits of the code in the order they are measured. Furthermore, let us define the notation for a Pauli operator acting on exclusively the $j$-th qubit of an $n$-qubit code space: $W_j \equiv I^{\otimes (j-1)} \otimes W \otimes I^{\otimes (n-j)}$. For instance, the operator $X_2$ on a four-qubit code is $X_2 = IXII$. The first sequence $\mathbb{B}$ consists of $n-1$ Pauli operators,

\begin{equation}
	\mathbb{B} = \left( b_j \right)_{j \in \{ 1, \dots, n-1 \}}, \quad \text{where} \quad b_j \in \{ X_j, Y_j, Z_j \},
	\label{eq:B-def}
\end{equation}
where the index $j$ in $b_j$ indicates the position in the sequence, and the subscript $j$ in $\{ X_j, Y_j, Z_j \}$ specifies the qubit the Pauli operator acts on. They are equal by definition, since the operator $b_j$ acts on qubit $j$. It defines the BMs that are performed until the first successful one, i.e., the $j$-th qubit pair is measured with a transvesal BM corresponding to $b_j$. For instance, if $b_3 = X_3$, then the third qubit pair is measured with an $X$-BM as long as no successful BM has occurred on the first two qubit pairs. The second sequence $\mathbb{L}$ consists of $n$ ordered pairs. Each ordered pair consists of two logical operators of the respective code. The $j$-th pair defines the two logical operators which are measured in the event that the $j$-th BM is successful,
\begin{equation}
	\mathbb{L} = \left( \left( \overline{X}_j , \overline{Z}_j \right) \right)_{j \in \{ 1, \dots, n \}},
\end{equation}
where $\overline{X}_j \in \left[ \overline{X} \right]$ and $\overline{Z}_j \in \left[ \overline{Z} \right]$, and the index $j$ denotes the position in the sequence.

The sequence $\mathbb{B}$ is indexed up to $n - 1$ because the chosen measurement of the last qubit is irrelevant. If no successful BM has occurred up to that point, a successful one is required at the final qubit regardless. The sequence $\mathbb{L}$, however, includes the final index $n$, since a successful physical BM on the last qubit completes the logical BM.

In Thm.~\ref{thm:sufficient} we present sufficient conditions for an optimal Bell measurement scheme. We define a scheme as optimal if it reaches the bound established in Thm.~\ref{thm:bound}. After presenting Thm.~\ref{thm:sufficient} we will offer a discussion to facilitate a more intuitive understanding of these conditions and our schemes.

\begin{restatable}{theorem}{sufficientrestatable}
\label{thm:sufficient}
(Sufficient conditions for an optimal logical BM with feedforward-based linear optics) We consider two logical qubits, each encoded in the same single-qubit stabilizer code defined by the stabilizer group $S_c$. Let us further assume, that there exists a minimal generating set $G_c$ of $S_c$ and a sequence $\mathbb{C} = \left( c_j \right)_{j \in \{1, \dots, n-1 \} }$ in which each element of $G_c$ appears exactly once. Then, the sequences $\mathbb{B}$ and $\mathbb{L}$ characterize an optimal Bell measurement scheme if the following five conditions are met. Due to the sequential structure of the scheme, we refer to an operator as later than another if it appears at a higher position in its sequence, and as prior if it appears at a lower position.

Condition~1: Each operator $b_j$ anticommutes with $c_j$:
\begin{equation}
	\forall j \in \{ 1 , \dots, n-1 \} : \quad \acomm{b_j}{c_j} = 0.
	\label{eq:con-1}
\end{equation}

Condition~2: Each operator $b_j$ commutes with every later stabilizer generator:
\begin{equation}
	\forall k>j: \quad \comm{b_j}{c_k} = 0.
	\label{eq:con-2}
\end{equation}

Condition~3: For all $j \in \{ 1, \dots, n-1 \}$ each operator in the set $\tilde{b}_j \in \{ X_j, Y_j, Z_j \} \setminus \{ b_j \}$ either anticommutes with at least one non-prior stabilizer generator,
\begin{equation}
	\exists k \geq j: \quad \acomm{\tilde{b}_j}{c_k} = 0,
	\label{eq:con-3-a}
\end{equation}
or completes a logical measurement,
\begin{equation}
	\exists \mu \in \langle b_1, \dots, b_{j-1} \rangle: \quad \mu \tilde{b}_j \in \left[ \overline{X} \right] \cup \left[ \overline{Y} \right] \cup \left[ \overline{Z} \right].
	\label{eq:con-3-b}
\end{equation}

Condition~4: The logical operators $\overline{X}_j$ and $\overline{Z}_j$ commute with every prior element of $\mathbb{B}$ for all $j \in \{ 1, \dots, n \}$:
\begin{equation}
	\forall k<j : \quad \comm{\overline{X}_j}{b_k} = 0,
	\label{eq:con-4-a}
\end{equation}
\begin{equation}
	\forall k<j : \quad \comm{\overline{Z}_j}{b_k} = 0.
	\label{eq:con-4-b}
\end{equation}

Condition~5: We decompose the logical operators into single-qubit Pauli operators to formulate the last condition:
\begin{equation}
	\overline{X}_j = \bigotimes_{t=1}^{n} u_{j,t}, \quad \text{where } u_{j,t} \in \{ I, X, Y, Z \},
	\label{eq:logical-decomp-x}
\end{equation}
\begin{equation}
	\overline{Z}_j = \bigotimes_{t=1}^{n} v_{j,t}, \quad \text{where } v_{j,t} \in \{ I, X, Y, Z \}.
	\label{eq:logical-decomp-z}
\end{equation}
The logical operators $\overline{X}_j$ and $\overline{Z}_j$ anticommute only in $j$:
\begin{equation}
	\forall j \in \{ 1, \dots, n \} : \acomm{u_{j,j}}{v_{j,j}} = 0,
	\label{eq:double-info}
\end{equation}
\begin{equation}
	\forall k,j \in \{1, \dots, n \} \wedge k \neq j : \comm{u_{j,k}}{v_{j,k}} = 0.
	\label{eq:double-info-b}
\end{equation}

\end{restatable}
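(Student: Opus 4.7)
The plan is to prove the theorem by induction on $j$, where $j$ indexes the qubit pair at which the scheme attempts its next transversal BM. At each step $j$, the goal is to establish two invariants under the standing hypothesis that all previous BMs returned partial outcomes and no logical information has yet been revealed: first, the BM on qubit pair $j$ has success probability $\mathbb{P}_B$; second, upon such a success, the remaining measurements of $\overline{X}_j$ and $\overline{Z}_j$ can be completed deterministically. Because these give $n$ conditionally independent trials of success probability $\mathbb{P}_B$, combining the two invariants yields the overall success probability $1-(1-\mathbb{P}_B)^n$, matching the bound from Thm.~\ref{thm:bound}.

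First I would work entirely in the single-code picture of Sec.~\ref{sec:single-code-reduction} and apply Lem.~\ref{lem:standard-form} together with Conditions~1 and~2 to show inductively that the code generators at the start of step $j$ may be chosen as $G_c^{(j-1)} = \{c_j, c_{j+1}, \dots, c_{n-1}\}$. At each prior step $k$, the measurement $b_k$ anticommutes with $c_k$ (Condition~1) and commutes with every later $c_{k'}$, $k' > k$ (Condition~2), as well as with each earlier $b_{k'}$ since those act on distinct qubits; hence $c_k$ is the unique generator of the non-logical part that gets replaced by $m_k b_k$, while any logical representative that happens to anticommute with $b_k$ is multiplied by $c_k$, which preserves its coset in $[\overline{X}]$ or $[\overline{Z}]$.

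With this structure in hand, I would verify the hypothesis of Lem.~\ref{lem:successful-bell-measurment} at step $j$, namely that the transversal observables $X_j \otimes X_j$, $Y_j \otimes Y_j$, and $Z_j \otimes Z_j$ each have uniform outcome on the current state. In the single-code picture this reduces to showing that none of $X_j$, $Y_j$, $Z_j$ lies in $\pm \langle G_c^{(j-1)} \cup \mathbb{M}_{j-1} \rangle$. For $b_j$ itself this is immediate from Condition~1. For each $\tilde{b}_j \in \{X_j, Y_j, Z_j\} \setminus \{b_j\}$, either Condition~3a provides anticommutation with some $c_k \in G_c^{(j-1)}$, or Condition~3b supplies a $\mu \in \langle b_1, \dots, b_{j-1}\rangle$ with $\mu \tilde{b}_j$ in a logical coset; in the latter case, if $\tilde{b}_j$ were in $\pm \langle G_c^{(j-1)} \cup \mathbb{M}_{j-1}\rangle$, then a logical representative would lie in this subgroup too, contradicting the invariant that no logical information has been revealed. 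Applying Lem.~\ref{lem:successful-bell-measurment} then gives success probability $\mathbb{P}_B$ for the BM at step $j$.

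For the deterministic completion upon a success at step $j$, I would decompose $\overline{X}_j$ and $\overline{Z}_j$ as in Eqs.~\eqref{eq:logical-decomp-x} and \eqref{eq:logical-decomp-z}. On each prior qubit $t<j$, Condition~4 forces $u_{j,t}$ and $v_{j,t}$ to commute with $b_t$, hence to lie in $\{I, b_t\}$, so their eigenvalues are already known from the previous partial outcomes. On qubit $j$ itself, Eq.~\eqref{eq:double-info} states that $u_{j,j}$ and $v_{j,j}$ anticommute, and the successful Bell projection returns the eigenvalues of all three transversal observables on this pair in one shot. On each later qubit $t>j$, Eq.~\eqref{eq:double-info-b} says $u_{j,t}$ and $v_{j,t}$ commute, so at most one nontrivial Pauli must be learned per remaining pair, and is delivered by a further transversal BM whose partial outcome already suffices with unit probability. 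Multiplying the single-qubit eigenvalues reconstructs the eigenvalues of $\overline{X}_j$ and $\overline{Z}_j$, yielding $l_x$ and $l_z$. The main obstacle will be propagating the invariant ``no logical information has been revealed'' through each partial-outcome step, which is needed both to keep Lem.~\ref{lem:standard-form} applicable (avoiding the forbidden observable of Lem.~\ref{lem:forbidden-measurements}) and to make the contrapositive argument for Condition~3b effective; this is verified inductively by noting that each $b_k$ anticommutes with the code generator $c_k$ rather than with $L_{x,z}^{(k-1)}$ alone, so its outcome is uniform and uncorrelated with $l_x$, $l_y$, $l_z$.
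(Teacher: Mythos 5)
Your proposal is correct and follows essentially the same two-part argument as the paper's proof in App.~\ref{app:proof-thm:sufficient}: an inductive tracking of the current stabilizer generators via Conditions~1 and~2, uniformity of the three transversal outcomes via Condition~3 together with Lem.~\ref{lem:successful-bell-measurment}, and deterministic completion of $\overline{X}_j$ and $\overline{Z}_j$ via Conditions~4 and~5. The only differences are presentational: you work in the single-code picture where the paper carries out the identical steps explicitly in the two-code picture, and the paper additionally spells out the boundary case $j=n$ (where Condition~1 is unavailable and every commuting transversal operator on the last pair necessarily completes a logical measurement), which your induction covers only implicitly.
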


\begin{proof}
Provided in App.~\ref{app:proof-thm:sufficient}.
\end{proof}

We will now provide a more illustrative discussion of the conditions. The first two conditions~1 and~2 ensure that the observable of a partial BM outcome, $b_j$ anticommutes with exactly one element $c_j \in G^{(\mathbb{M}_j)}$ of the current code stabilizer generators. Therefore, each measurement $b_j$ replaces the corresponding code stabilizer $c_j$ with the product of its measured eigenvalue and the observable $b_j$. For simplicity, we refer to this process as the measurement replacing the stabilizer generator. These conditions also imply that, up to the first success, no logical information has been obtained, since an observable that completes a logical operator would commute with the current stabilizer. Note that condition~1 does not apply to the last qubit pair. The reason is that condition~5 ensures that, when all but the last qubit have been measured, any operator on the last qubit will always complete a logical measurement. Condition~3 states that all other single-qubit Pauli operators on the $j$-th qubit pair either anticommute with the current stabilizer or complete a logical operator. Since the logical variables $l_x$, $l_y$, and $l_z$ are symmetric Bernoulli random variables with outcomes $\{-1, +1\}$, the measurement outcomes of the operators that complete a logical operator are equally likely. Therefore, Lem.~\ref{lem:successful-bell-measurment} implies that condition~3 ensures a success probability of $\mathbb{P}_B$, provided that no transversal BM has yet succeeded. Note that the third condition is trivially satisfied for $j=1$, but we include it in the index domain for completeness. Condition~4 ensures that the logical operators $\overline{X}_j$ and $\overline{Z}_j$ do not conflict with any prior measurement, and condition~5 ensures that they do not conflict with each other on any unmeasured qubit. It is worth noting that conditions~4 and~5 imply that the logical operators $\overline{X}_j$ and $\overline{Z}_j$ only commute in every qubit except for the $j$-th qubit pair,
\begin{equation}
	\forall k,j \in \{ 1, \dots, n \} \wedge k \neq j : \comm{u_{j,k}}{v_{j,k}} = 0.
\end{equation}
Therefore, given that a successful BM occurred on the $j$-th qubit pair, they can be measured with probability one. Since in the two-code picture the logical operators $\overline{X}_j \otimes \overline{X}_j$ and $\overline{Z}_j \otimes \overline{Z}_j$ are transversal by definition, the unmeasured portions of these operators can either be obtained with probability one through transversal BMs, which are guaranteed to yield the partial information, or via single-qubit Pauli measurements.

In conclusion, each transversal BM up to the first success has a success probability of $\mathbb{P}_B$, and if any of them succeeds, the logical BM can be completed with probability one. For two identical $n$-qubit codes, we achieve the upper-bound success probability given in Thm.~\ref{thm:bound}, $1-(1-\mathbb{P}_B)^{n}$.

For a more detailed understanding of the conditions, we refer the reader to the proof provided in App.~\ref{app:proof-thm:sufficient}.

\subsection{Heuristics for finding optimal logical Bell measurements}
\label{sec:heuristics-for-finding-optimal-logical-bell-measurements}
In this section, we will explore heuristics that have been instrumental in discovering optimal schemes. We present two rules that we argue are necessary for a scheme to be optimal, offering guidelines for finding such schemes. Although we attempted to prove that these rules can also serve as sufficient conditions, we were unable to do so. Therefore, it remains an open question whether these rules are indeed sufficient for an optimal Bell measurement scheme. Proving these conditions to be sufficient would provide a more elegant method for demonstrating the optimality of our schemes than the conditions presented in Thm.~\ref{thm:sufficient}.

The first rule is to never measure logical information without a successful physical BM. The necessity of this rule was essentially argued in Lem.~\ref{lem:acomm-logicals} and the subsequent explanation following it. Upon measuring a logical operator $\overline{XX}$ or $\overline{ZZ}$ without a successful physical BM, the other one becomes unobtainable.

To present the second rule, we introduce the term ``almost measure an operator.'' Almost measuring an operator refers to measuring the operator except for one single-qubit measurement. For instance, if we measure the operator $XXII$, we have almost measured the operator $XXXI$. The second rule is defined in the single-code picture. The second rule is to never almost measure a code stabilizer. The necessity of the second rule can be clarified with the following argument, which is similar to an argument used in the proof of Thm.~\ref{thm:bound}. As discussed in the previous section, the pivotal part of our schemes is to use any successful transversal BM as the double information necessary to complete a logical BM. Let us assume we almost measured the code stabilizer $g$. We define $g=\prod_k \mu_k m$, where $\{ \mu_k \}$ is a subset of the already measured single qubit measurements and $m$ the single-qubit measurement missing to complete the measurement of the stabilizer. Now, let us assume that a successful transversal BM occurs on the support of $m$. Since we need to leverage the double information on the support of $m$, one of the two logical operators comprising the logical BM $\overline{X}$ and $\overline{Z}$ anticommutes with $m$. Without loss of generality, let us assume it is $\overline{X}$. Multiple choices for $\overline{X}$ may exist, but the argument holds for any choice that anticommutes with $m$. Since $\overline{X}$ and $g$ commute with each other, $\overline{X}$ also anticommutes with another operator $m' \in \{ \mu_k \}$. Thus, the measurement of $m'$ conflicts with $\overline{X}$, rendering the logical $\overline{XX}$ information $l_x$ unobtainable.

\section{Optimal logical Bell measurements for specific codes}
\label{sec:optimal-logical-bell-measurements-for-specific-codes}

In this section, we build on the general results discussed thus far by introducing specific optimal Bell measurement schemes applied to individual stabilizer codes.  Recall that we characterize a scheme as optimal if it satisfies the bound established in Thm.~\ref{thm:bound}. We continue to focus on the scenario where two logical qubits are encoded in identical stabilizer codes, so that $n_1=n_2=\frac{n_c}{2}\eqqcolon n$. We demonstrate that the schemes we devised for quantum parity, five-qubit, standard and rotated planar surface, tree, and seven-qubit Steane codes all achieve the bound of Thm.~\ref{thm:bound}.

For each code the explanation is structured similarly. We begin each section by briefly introducing the code and the measurement scheme up to the first successful BM. Then, the explanation is divided into two parts. First, we describe how logical information can be obtained if a successful physical BM occurs. In the second part, we demonstrate that the success probability for every physical BM is given by $\mathbb{P}_B$. As a prerequisite for this second part, we investigate the transformation of the stabilizer generators throughout the scheme. In Sec.~\ref{sec:sufficient-conditions-for-an-optimal-logical-Bell-measurement-with-feedforward-based-linear-optics} we established that this is sufficient to identify an optimal Bell measurement scheme. The order in which the two parts are addressed may vary between the codes.

In Sec.~\ref{sec:quantum-parity-code} we will generalize the scheme discussed in Sec.~\ref{sec:exemplary-optimal-measurement-scheme-for-qpc(2,2)} to the QPC code of arbitrary size. In Sec.~\ref{sec:five-qubit-code} we present our optimal scheme for the five-qubit code. The small number of qubits in the five-qubit code makes the proof both straightforward and easy to understand, providing an excellent example for how to apply Thm.~\ref{thm:sufficient}. In Secs.~\ref{sec:standard-planar-surface-code}, \ref{sec:rotated-planar-surface-code}, \ref{sec:tree-code}, and \ref{sec:steane-code} we present our schemes for the the standard and rotated planar surface code, the tree code, and the Steane code, respectively. Additionally, in Sec.~\ref{sec:optimization-and-comparionn-of-static-logical-bell-measurements-for-rotated-planar-surface-codes} we present an optimized static measurement scheme for the rotated planar surface code and compare various schemes for planar surface codes.

\subsection{Quantum parity code}
\label{sec:quantum-parity-code}
We will now generalize the QPC($2,2$) measurement scheme which we presented in Sec.~\ref{sec:exemplary-optimal-measurement-scheme-for-qpc(2,2)} to the QPC($r,m$) code of arbitrary size. This code consists of $r$ rows (or blocks in the original terminology), each containing $m$ qubits. The structure of the QPC($r,m$) code naturally leads to a double-index notation, where each qubit is indexed by a pair $(i, j)$, with $i \in \{1, \dots, r\}$ denoting the row and $j \in \{1, \dots, m\}$ enumerating the qubits within each row.

The QPC($r,m$) code is stabilized by two types of operators. First, each row $i$ of $m$ qubits is stabilized by $m-1$ stabilizer generators of the form $Z_{i,j}Z_{i,j+1}$ for all $j \in \{1, \dots, m-1\}$. Second, adjacent pairs of rows $i$ and $i+1$ are stabilized by the operator $\prod_{t=1}^{m} X_{i,t} X_{i+1,t}$. In conclusion, the QPC($r,m$) code is defined by the stabilizer generators:
\begin{equation}
	\begin{aligned}
	G_c = & \{ Z_{i,j}Z_{i,j+1} \}_{(i,j) \in \{(i,j) \mid 1 \leq i \leq r, 1 \leq j \leq m-1 \}} \\
		& \cup \{\prod_{t=1}^{m} X_{i,t} X_{i+1,t} \}_{i \in \{1, \dots, r-1\}}.
	\end{aligned}
\end{equation}

A logical $\overline{X}$ operator covers one row with $X$ operators:
\begin{equation}
	\prod_{t=1}^m X_{i,t} \in \left[ \overline{X} \right], \quad \text{where } i \in \{ 1, \dots , r \},
	\label{eq:qpc-logical-x}
\end{equation}
and a logical $\overline{Z}$ operator covers one qubit in every row with a $Z$ operator:
\begin{equation}
	\prod_{t=1}^r Z_{t,j_t} \in \left[ \overline{Z} \right],
	\label{eq:qpc-logical-z}
\end{equation}
where the indices $j_t$ can be chosen arbitrarily. Examples of these logical operators are illustrated in Fig.~\ref{fig:qpc-logical}.

\begin{figure}[tb]
	\def\svgwidth{0.45\textwidth}
	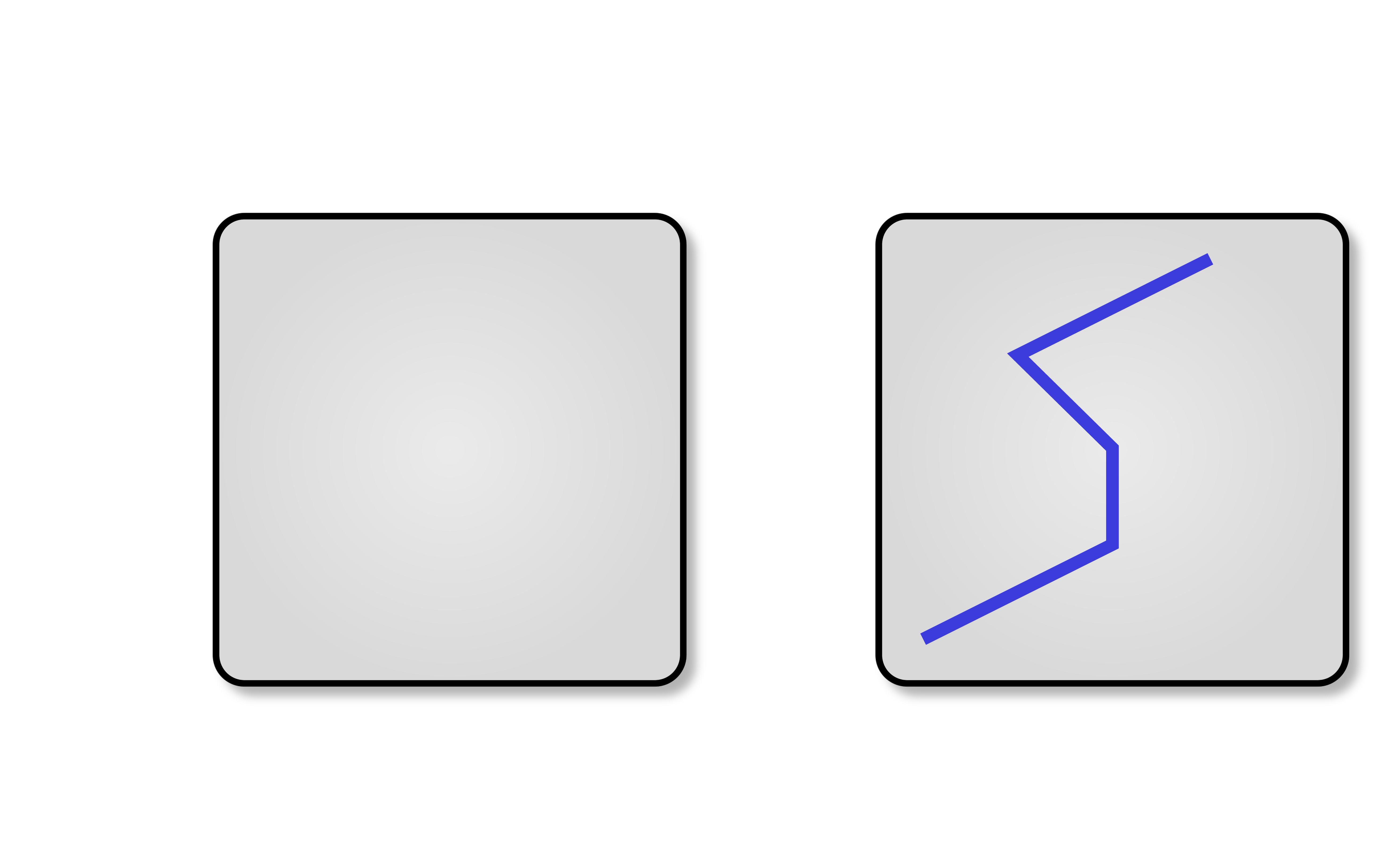
	\caption{\label{fig:qpc-logical}Examples of supports of a logical $\overline{X}$ (left) and $\overline{Z}$ (right) operators for QPC. Qubits on which the logical operators act with $X$, $Z$ or the identitiy are shown in red, blue, and white, respectively. The red and blue strings represent $\overline{X}$ and $\overline{Z}$, respectively. The code dimensions $(r, m)$ are shown in the left example.}
\end{figure}

In Sec.~\ref{sec:exemplary-optimal-measurement-scheme-for-qpc(2,2)} we already discussed our scheme for QPC($2,2$). We will now extend this discussion to the QPC code with arbitrary parameters.

For the measurement scheme to succeed we want to measure an element of each of the sets $[\overline{X}]$ and $[\overline{Z}]$. From Eqs.~\eqref{eq:qpc-logical-x} and~\eqref{eq:qpc-logical-z} we can deduce how to obtain these operators. To measure an element of $[\overline{X}]$ we need to obtain the $X$ information of every qubit of one row. To measure an element of $[\overline{Z}]$ we need to obtain $Z$ information of at least one qubit in each row. Thus, the strategy is as follows. In each row, we need to ensure that $Z$ information is obtained from at least one qubit while maximizing the probability of obtaining $X$ information from all qubits in the row. To achieve this, we measure the qubits sequentially from left to right using $X$-BMs, reserving the last qubit in each row for a $Z$-BM. If one of the $X$-BMs succeeds, we have acquired the necessary $Z$ information for that row. Consequently, we can complete the $\overline{X}$ measurement by performing $X$-BMs on all remaining qubits in the row including the last qubit. (Only if all $X$-BMs in the row fail, we stick to a $Z$-BM for the last qubit.) The $\overline{Z}$ measurement can then be completed by performing a $Z$-BM on one qubit from each remaining row. In the case where none of the BMs of the row succeeds, we proceed with the subsequent row in the same way as with the measured row. In conclusion, once a successful transversal BM is achieved, the logical BM can be completed with probability one. Note that the essence of the above explanation is captured in conditions~4 and~5 of Thm.~\ref{thm:sufficient}. The scheme is illustrated in Fig.~\ref{fig:qpc-scheme}.

\begin{figure}[tb]
	\def\svgwidth{0.45\textwidth}
\begingroup%
  \makeatletter%
  \providecommand\color[2][]{%
    \errmessage{(Inkscape) Color is used for the text in Inkscape, but the package 'color.sty' is not loaded}%
    \renewcommand\color[2][]{}%
  }%
  \providecommand\transparent[1]{%
    \errmessage{(Inkscape) Transparency is used (non-zero) for the text in Inkscape, but the package 'transparent.sty' is not loaded}%
    \renewcommand\transparent[1]{}%
  }%
  \providecommand\rotatebox[2]{#2}%
  \newcommand*\fsize{\dimexpr\f@size pt\relax}%
  \newcommand*\lineheight[1]{\fontsize{\fsize}{#1\fsize}\selectfont}%
  \ifx\svgwidth\undefined%
    \setlength{\unitlength}{2466.14173228bp}%
    \ifx\svgscale\undefined%
      \relax%
    \else%
      \setlength{\unitlength}{\unitlength * \real{\svgscale}}%
    \fi%
  \else%
    \setlength{\unitlength}{\svgwidth}%
  \fi%
  \global\let\svgwidth\undefined%
  \global\let\svgscale\undefined%
  \makeatother%
  \begin{picture}(1,0.63218391)%
    \lineheight{1}%
    \setlength\tabcolsep{0pt}%
    \put(0,0){\includegraphics[width=\unitlength,page=1]{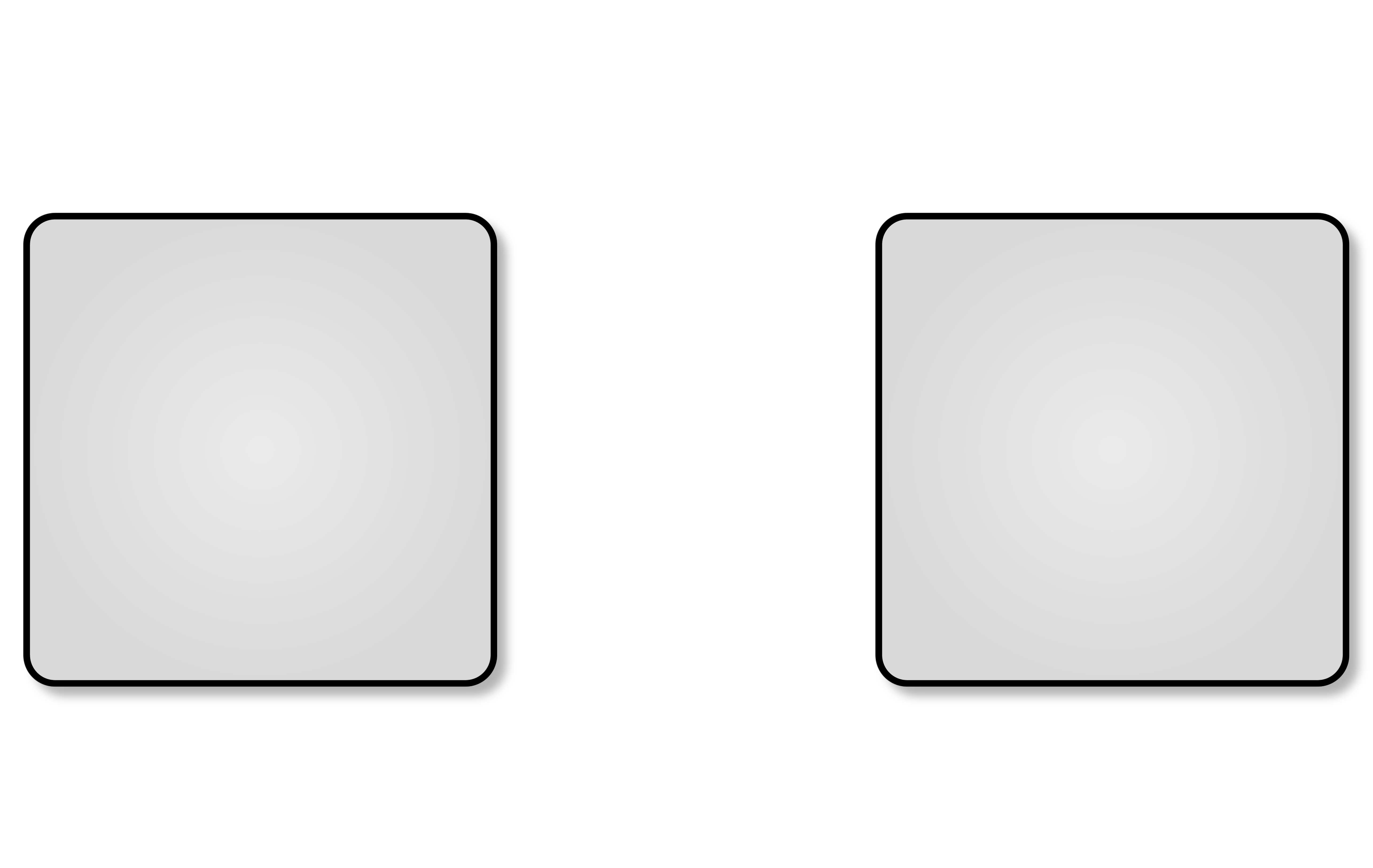}}%
    \put(0.24107442,0.06300642){\color[rgb]{0,0,0}\makebox(0,0)[lt]{\lineheight{0}\smash{\begin{tabular}[t]{l}\makebox[0pt][l]{\raisebox{-0.5\height}{$X$-BM}}\end{tabular}}}}%
    \put(0.48475258,0.06300642){\color[rgb]{0,0,0}\makebox(0,0)[lt]{\lineheight{0}\smash{\begin{tabular}[t]{l}\makebox[0pt][l]{\raisebox{-0.5\height}{$Z$-BM}}\end{tabular}}}}%
    \put(0.72958016,0.06532698){\color[rgb]{0,0,0}\makebox(0,0)[lt]{\lineheight{0}\smash{\begin{tabular}[t]{l}\makebox[0pt][l]{\raisebox{-0.5\height}{successful BM}}\end{tabular}}}}%
    \put(0,0){\includegraphics[width=\unitlength,page=2]{QPCSolutionExample.pdf}}%
    \put(0.5037349,0.33230275){\color[rgb]{0,0,0}\makebox(0,0)[t]{\lineheight{0}\smash{\begin{tabular}[t]{c}\makebox(0,0){success}\end{tabular}}}}%
    \put(0,0){\includegraphics[width=\unitlength,page=3]{QPCSolutionExample.pdf}}%
  \end{picture}%
\endgroup%

	\caption{\label{fig:qpc-scheme}Example for the measurement scheme for QPC($5,5$). In this example, the first two rows were measured without any successful BM. A successful BM occurs on the second qubit of the third row.  The scheme is then completed by performing $X$-BMs on the remaining qubits of that row and $Z$-BMs on one qubit from each of the remaining rows. In this figure, we have chosen the last qubit of each remaining row as an example. The red and blue strings represent the measured $\overline{X}$ and $\overline{Z}$, respectively.}
\end{figure}

We will now argue that each BM up to the first success has a success probability of $\mathbb{P}_B$. While the essence of this argument is captured by conditions~1 through~3 of Thm.~\ref{thm:sufficient}, the explanation here offers a more intuitive perspective. The transformation of the stabilizer generators is illustrated in Fig.~\ref{fig:qpc-stab}.

\newcommand{\boxSizeQPCstab}{0.22}

\begin{figure}
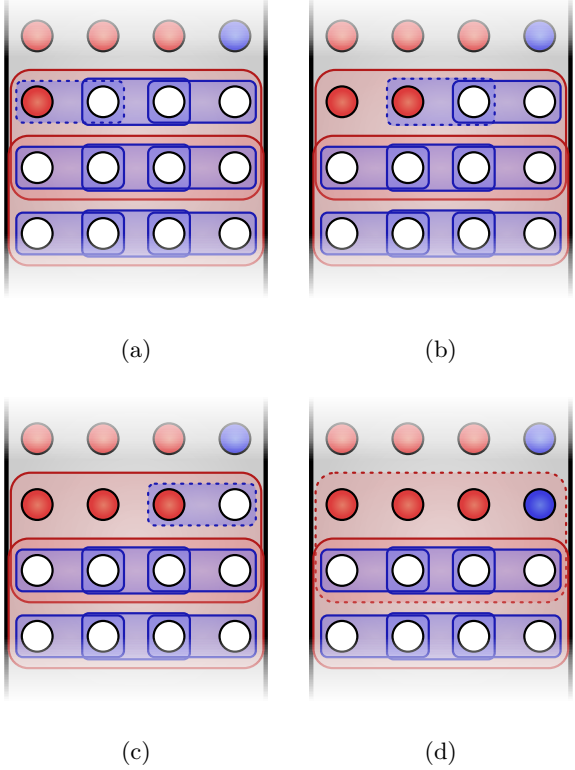

		\begin{subfigure}[c]{\boxSizeQPCstab\textwidth}
			\def\svgwidth{\textwidth}
			\input{QPCStabA.pdf_tex}
			\caption{}
		\end{subfigure}									
		\begin{subfigure}[c]{\boxSizeQPCstab\textwidth}
			\def\svgwidth{\textwidth}
			\input{QPCStabB.pdf_tex}
			\caption{}
		\end{subfigure}	
		\begin{subfigure}[c]{\boxSizeQPCstab\textwidth}
			\def\svgwidth{\textwidth}
			\input{QPCStabC.pdf_tex}
			\caption{}
		\end{subfigure}								
		\begin{subfigure}[c]{\boxSizeQPCstab\textwidth}
			\def\svgwidth{\textwidth}
			\input{QPCStabD.pdf_tex}
			\caption{}
		\end{subfigure}
	\caption{\label{fig:qpc-stab}Illustration of the transformation of stabilizer generators during the measurement of a single row for QPC. Each subfigure (a) through (d) represents a step of measuring the qubits sequentially from left to right. Red and blue qubits indicate $X$- and $Z$-BMs, respectively. Correspondingly, red and blue boxes represent $X$ and $Z$~stabilizers. In each step, the stabilizer generator that is replaced by the measurement is indicated by a dashed contour.}
\end{figure}

We begin by considering the $X$-BMs. We will construct an inductive argument to demonstrate that $X_{i,j}$ always exclusively anticommute with a single element of the current stabilizer generators, specifically $Z_{i,j} Z_{i,j+1}$. As a base case, $X_{i,1}$ anticommutes exclusively with $Z_{i,1} Z_{i,2}$, replacing it as a stabilizer generator after the measurement. For $1 < j < m$, $X_{i,j}$ anticommutes with both $Z_{i,j-1} Z_{i,j}$ and $Z_{i,j} Z_{i,j+1}$.  However, for each of these operators, the stabilizer generator $Z_{i,j-1} Z_{i,j}$ has already been replaced by the the preceding $X_{i,j-1}$ measurement, ensuring that $X_{i,j}$ exclusively anticommutes with $Z_{i,j} Z_{i,j+1}$. A similar argument can be applied to the $Z$-BMs on the last qubit of each row, $Z_{i,m}$. As a base case, $Z_{1,m}$ anticommutes exclusively with $\prod_{t=1}^{m} X_{1,t} X_{2,t}$, replacing it as a stabilizer generator after the measurement. For \text{$1<i<m$}, $Z_{i,m}$ anticommutes with both $\prod_{t=1}^{m} X_{i-1,t} X_{i,t}$ and $\prod_{t=1}^{m} X_{i,t} X_{i+1,t}$. However, in each of these cases, the stabilizer generator $\prod_{t=1}^{m} X_{i-1,t} X_{i,t}$ has already been replaced by the $Z$-BM $Z_{i-1,m}$ of the preceding row, ensuring, that $Z_{i,m}$ exclusively anticommutes with $\prod_{t=1}^{m} X_{i,t} X_{i+1,t}$. Recall from Sec.~\ref{sec:sufficient-conditions-for-an-optimal-logical-Bell-measurement-with-feedforward-based-linear-optics} that this argument does not need to apply to the very last qubit of the code.

We can now conclude the transformation of the stabilizer generators through the measurement scheme using Lem.~\ref{lem:transversal-bm}. Each $X$-BM on qubit $(i,j)$ replaces the stabilizer generator $Z_{i,j} Z_{i,j+1}$ with its corresponding measurement outcome. Similarly, each $Z$-BM on qubit $(i,m)$ replaces the stabilizer generator $\prod_{t=1}^{m} X_{i,t} X_{i+1,t}$. Applying Lem.~\ref{lem:successful-bell-measurment} to show that the success probability of each BM is $\mathbb{P}_B$ requires addressing the remaining observables.

First, we consider $Y_{i,j}$, which anticommutes with the same operators as in the previous discussion of the stabilizer transformation. For $j < m$, each $Y_{i,j}$ anticommutes with the stabilizer generator $Z_{i,j} Z_{i,j+1}$. When $j = m$ and $i \neq r$, $Y_{i,j}$ anticommutes with the stabilizer generator $\prod_{t=1}^{m} X_{i,t} X_{i+1,t}$. In the final measurement, $Y_{r,m}$ completes a logical operator, such as $\prod_{t=1}^r Z_{t,m} \prod_{s=1}^m X_{r,s} \in [\overline{Y}]$.

Next, we examine the $Z$ operators on the qubits of the $X$-BMs. Excluding the last row, for $i < r$ and $j < m$, $Z_{i,j}$ anticommutes with the stabilizer generator $\prod_{t=1}^{m} X_{i,t} X_{i+1,t}$. In the last row, each $Z_{r,j}$ completes a logical operator, for instance $Z_{r,j} \prod_{t=1}^{r-1} Z_{t,m} \in [\overline{Z}]$.

Finally, the $X$ operators on the qubits of the $Z$-BMs must be addressed. In this case, $X_{i,m}$ always completes the logical operator $\prod_{t=1}^m X_{i,t} \in \left[ \overline{X} \right]$.

Having confirmed that all single-qubit observables on the measured qubits either anticommute with an element of the current stabilizer or complete a logical operator, we can now apply Lem.~\ref{lem:successful-bell-measurment}. This leads us to conclude that the success probability for each of the $rm$ transversal BMs is $\mathbb{P}_B$ and the success probability for the logical BM scheme is $1- \left( 1 - \mathbb{P}_B \right)^{r m}$. For the standard linear-optics BM with $\mathbb{P}_B = \frac{1}{2}$, our scheme achieves a success probability of $1 - 2^{-r m}$, matching the performance of the scheme reported in Ref.~\cite{PhysRevA.100.052303}.

A detailed algebraic proof of the scheme's optimality, based on Thm.~\ref{thm:sufficient}, is provided in App.~\ref{app:proof-qpc}.

\subsection{Five-qubit code}
\label{sec:five-qubit-code}
For the five-qubit code~\cite{PhysRevLett.77.198}, the simplest approach of a static logical BM, measuring all qubit pairs with identical transversal BMs, is already optimal.

The five-qubit code is defined via the stabilizer generators:
\begin{equation}
	G_c = \{ XZZXI , IXZZX , XIXZZ , ZXIXZ \}.
\end{equation}
Our optimal scheme for the five-qubit code achieves the bound without requiring feedforward. In the following, we will demonstrate that simply performing $Y$-BMs on every qubit pair is sufficient to reach this bound. The calculation using $Z$-BMs or $X$-BMs measurements works analogously.

We begin our argument by transforming the stabilizer generators $G_c$ into $\tilde{G_c}$:
\begin{equation}
	\tilde{G_c} =  \{ XXYIY, YXXYI, IYXXY, YIYXX \},
\end{equation}
where it is straightforward to verify that this set generates the same stabilizer group:
\begin{equation}
	S_c = \langle G_c \rangle = \langle \tilde{G_c} \rangle.
\end{equation}
Following the approach outlined in Thm.~\ref{thm:sufficient}, we organize the elements of the set $\tilde{G_c}$ into a sequence:
\begin{equation}
	\begin{aligned}
	\mathbb{C} 	& = \left( c_{j} \right)_{j \in \{ 1, \dots, n-1 \} } \\
		& = \left( XXYIY, YXXYI, IYXXY, YIYXX \right).
	\end{aligned}
	\label{eq:5qubit-c}
\end{equation}
In this scheme, our goal is to measure elements of the following logical operators:
\begin{equation}
	\begin{aligned}
	\left( \overline{X}_{j} \right)_{j \in \{ 1, \dots, n \} } = ( 	& XIYYI, IXIYY, YIXIY, \\
																	& YYIXI, IYYIX ) \subset [\overline{X}],
	\end{aligned}
	\label{eq:5qubit-x-logical}
\end{equation}
\begin{equation}
	\begin{aligned}
	\left( \overline{Z}_{j} \right)_{j \in \{ 1, \dots, n \} } = ( 	& ZYIIY, YZYII, IYZYI, \\
																	& IIYZY, YIIYZ ) \subset [\overline{Z}].
	\end{aligned}	
	\label{eq:5qubit-z-logical}
\end{equation}

We will now argue that if any of the transversal BMs succeeds, the logical Bell information can be obtained with probability one. To illustrate this, let us consider an example. Suppose a successful BM occurs on the second qubit. In this case, the relevant logical operators are $\overline{X}_2 = IXIYY$ and $\overline{Z}_2 = YZYII$. For both operators, only the $Y$ information is required from all qubits except the second one. This $Y$ information is guaranteed to be obtained through the $Y$-BMs. As for the second qubit, the $X$ and $Z$ information is necessary to fully determine $\overline{X}_2$ and $\overline{Z}_2$, respectively. However, this information is acquired through the successful BM on the second qubit. This argument extends to all qubits. For every qubit $j$ the respective logical operators $\overline{X}_j$ and $\overline{Z}_j$ require only the $Y$ information on the other qubits. Note that the essence of the above explanation is captured in conditions~4 and~5 of Thm.~\ref{thm:sufficient}.

We will now argue that each BM up to the first success has a success probability of $\mathbb{P}_B$. The essence of this argument is captured by conditions~1 through~3 of Thm.~\ref{thm:sufficient}. Although the scheme is static, we can simplify the following argument by considering the measurements as being performed sequentially. This approach is equivalent to the static scheme, because the measurements commute, ensuring that the physical outcomes remain unchanged by whether the measurements are performed simultaneously or in sequence.

From Eq.~\eqref{eq:5qubit-c}, we observe that the operators $Y_j$ for $j \in \{1, \dots, n-1\}$ anticommute with their corresponding stabilizer generators $c_j$:
\begin{equation}
	\acomm{Y_j}{c_j} = 0,
\end{equation}
and they do not anticommute with any subsequent stabilizer generator $c_k$ for $k > j$:
\begin{equation}
	\comm{Y_j}{c_k} = 0.
\end{equation}
Thus, by Lem.~\ref{lem:transversal-bm}, each transversal BM on qubit $i$ replaces $c_i$ in the current stabilizer. Applying Lem.~\ref{lem:successful-bell-measurment} to show that the success probability of each BM is $\mathbb{P}_B$ requires addressing the remaining observables. The operators $Z_j$ also anticommute with the corresponding stabilizer generators $c_j$:
\begin{equation}
	\acomm{Z_j}{c_j} = 0.
\end{equation}
For the operators $X_j$, we must consider each index individually. The operator $X_1$ anticommutes with $c_2$, $X_2$ with $c_3$, and $X_3$ with $c_4$. Meanwhile, $X_4$ completes $\overline{X}_4$, and $X_5$ completes $\overline{X}_5$.

Having confirmed that all single-qubit observables on the measured qubits either anticommute with an element of the current stabilizer or complete a logical operator, we can now apply Lem.~\ref{lem:successful-bell-measurment}. This leads us to conclude that the success probability for each transversal BM is $\mathbb{P}_B$ and the success probability for the logical BM scheme is $1- \left( 1 - \mathbb{P}_B \right)^{5}$.

A detailed algebraic proof of the scheme's optimality, based on Thm.~\ref{thm:sufficient}, is provided in App.~\ref{app:proof-five-qubit}.

To the best of the authors' knowledge, no logical BM schemes for the five-qubit code has been published to date. Interestingly, since our scheme does not require feedforward this result implies that, in general, there is no tighter bound for static schemes than for feedforward-based ones.

\subsection{Standard planar surface code}
\label{sec:standard-planar-surface-code}
In this section, we introduce our measurement scheme for the standard planar surface code~\cite{KITAEV20032}. In our work, this code refers to the perhaps most studied code in topological quantum computing, and we briefly review it in the following. In this code, qubits reside on the edges of a square lattice. We denote the set of all vertices of the lattice $V$, the set of all edges of the lattice $E$, and the set of all faces of the lattice $F$. Stabilizer generators are associated with the vertices and faces of the lattice, as illustrated in Fig.~\ref{fig:planar-surface-code}. Given a vertex $v \in V$ the associated vertex operator $X_v$ is defined as:
\begin{equation}
	X_v = \prod_{e \mid v \in \partial e} X_e.
\end{equation}
Given a face $f \in F$ the associated face operator $Z_f$ is defined as:
\begin{equation}
	Z_f = \prod_{e \in \partial f} Z_e.
\end{equation}
Note that the lattice vertex or face in the subscript makes this notation always distinguishable from a Pauli operator acting on a single qubit. The stabilizer group $S_c$ of the standard planar surface code is generated by the combined set of vertex and face operators:
\begin{equation}
    G_c = \{ X_v \}_{v \in V} \cup \{ Z_f \}_{f \in F}.
\end{equation}

\begin{figure}[tb]
	\def\svgwidth{0.45\textwidth}
\begingroup%
  \makeatletter%
  \providecommand\color[2][]{%
    \errmessage{(Inkscape) Color is used for the text in Inkscape, but the package 'color.sty' is not loaded}%
    \renewcommand\color[2][]{}%
  }%
  \providecommand\transparent[1]{%
    \errmessage{(Inkscape) Transparency is used (non-zero) for the text in Inkscape, but the package 'transparent.sty' is not loaded}%
    \renewcommand\transparent[1]{}%
  }%
  \providecommand\rotatebox[2]{#2}%
  \newcommand*\fsize{\dimexpr\f@size pt\relax}%
  \newcommand*\lineheight[1]{\fontsize{\fsize}{#1\fsize}\selectfont}%
  \ifx\svgwidth\undefined%
    \setlength{\unitlength}{1955.90551181bp}%
    \ifx\svgscale\undefined%
      \relax%
    \else%
      \setlength{\unitlength}{\unitlength * \real{\svgscale}}%
    \fi%
  \else%
    \setlength{\unitlength}{\svgwidth}%
  \fi%
  \global\let\svgwidth\undefined%
  \global\let\svgscale\undefined%
  \makeatother%
  \begin{picture}(1,0.82608696)%
    \lineheight{1}%
    \setlength\tabcolsep{0pt}%
    \put(0,0){\includegraphics[width=\unitlength,page=1]{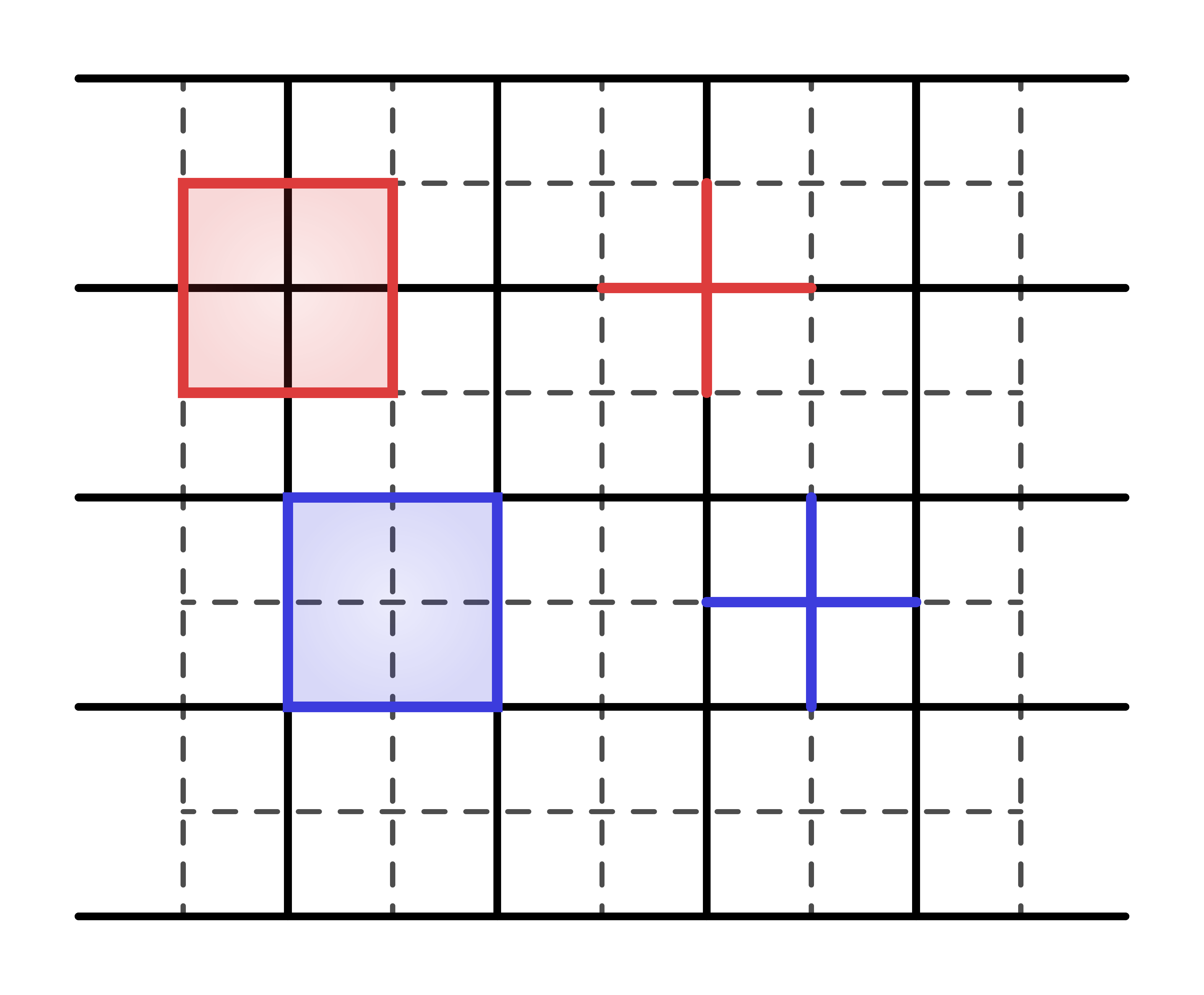}}%
    \put(0.28616074,0.61894551){\color[rgb]{0,0,0}\makebox(0,0)[t]{\lineheight{1.25}\smash{\begin{tabular}[t]{c}\makebox(0,0){$X_{v^*}$}\end{tabular}}}}%
    \put(0.62357865,0.61894551){\color[rgb]{0,0,0}\makebox(0,0)[t]{\lineheight{1.25}\smash{\begin{tabular}[t]{c}\makebox(0,0){$X_{v}$}\end{tabular}}}}%
    \put(0.361339,0.35807595){\color[rgb]{0,0,0}\makebox(0,0)[t]{\lineheight{1.25}\smash{\begin{tabular}[t]{c}\makebox(0,0){$Z_{f}$}\end{tabular}}}}%
    \put(0.71922381,0.35807595){\color[rgb]{0,0,0}\makebox(0,0)[t]{\lineheight{1.25}\smash{\begin{tabular}[t]{c}\makebox(0,0){$Z_{f^*}$}\end{tabular}}}}%
    \put(0,0){\includegraphics[width=\unitlength,page=2]{VertexAndFaceOperators.pdf}}%
  \end{picture}%
\endgroup%

	\caption{\label{fig:planar-surface-code}Examples of vertex and face operators for the standard planar surface code. For each operator, the edges belonging to the respective vertex or face, as well as their supports, are colored: blue for $Z$ operators and red for $X$ operators. The dashed lines indicate the dual lattice.}
\end{figure}

To introduce the necessary non-trivial topologies the lattice has smooth boundaries at the top and bottom and rough boundaries at the left and right. Smooth boundaries consist of three-qubit vertex operators and rough boundaries of three-qubit face operators. 

The dimensions of the lattice determine the distance of the code. We denote the dimensions of the planar surface code by $(r,m)$, where $r$ is the number of edges on the rough (left and right) boundaries and $m$ is the number of edges on the smooth (top and bottom) boundaries. Equivalently, $r$ is the code distance for $\overline{X}$ operators and $m$ is the code distance for $\overline{Z}$ operators. Alternative parametrizations are also used in the literature, e.g., in terms of the number of vertical edges per row.

In order to introduce logical operators it is instructive to introduce the dual lattice, which is displayed in Fig.~\ref{fig:planar-surface-code}. We will refer to the original lattice as the primal lattice. The dual lattice is obtained by mapping each face $f$ of the primal lattice to the corresponding vertex $f^*$ of the dual lattice, and similarly, edges $e$ to dual edges $e^*$, and vertices $v$ to dual faces $v^*$. We denote the sets of the dual vertices, dual edges, and dual faces by $F^*$, $E^*$, and $V^*$, respectively. A special property of the square lattice is that it is self-dual. Therefore, vertex operators map to dual face operators and face operators map to dual vertex operators, and the standard planar surface code can be equivalently defined on the dual lattice:
\begin{equation}
	X_v = X_{v^*} = \prod_{e^* \in \partial v^*} X_{e^*},
\end{equation}
\begin{equation}
	Z_f = Z_{f^*} = \prod_{e^* \mid f^* \in \partial e^*} Z_{e^*}.
\end{equation}

In other words, on the dual lattice the roles of $X$ and $Z$ are inverted, and the smooth and rough boundaries are interchanged. With these observations we can now describe the logical operators of the standard planar surface code. Any trail on the lattice can be naturally interpreted as a Pauli operator by assigning its support to all qubits along the trail. In topological quantum error correction, these trails are commonly called string operators, and we follow this terminology here. For $\overline{X}$ operators the string operators act with $X$ on their qubits, while for $\overline{Z}$ operators they act with $Z$. In this sense, we regard strings and logical operators as two equivalent descriptions of the same object, one topological and the other algebraic. In particular, any string operator on the primal lattice connecting the two rough boundaries (i.e., the left and right boundaries) corresponds to a $\overline{Z}$ operator. To verify this, observe that face operators ${Z_f}$ with ${f\in F}$ commute trivially with such a $\overline{Z}$ string, while vertex operators ${X_v}$ with ${v\in V}$ commute with it since each vertex is touched by the string an even number of times. Analogously, $\overline{X}$ operators can be represented on the dual lattice. Note, especially, that dual edges correspond to edges in the primal lattice, i.e., to qubits. Therefore, strings in the dual lattice can be used as a representation of operators just as in the primal lattice. Likewise, in the dual lattice picture, any string connecting the two rough boundaries (which in the dual lattice are the top and bottom boundaries) corresponds to an $\overline{X}$ operator, by the same reasoning.

We now define layers and columns as coordinate groupings in the planar surface code. A layer refers to all lattice elements, i.e., edges, vertices, or faces, that share the same vertical position in the lattice. Layers are assigned a layer index $l \in \{1, \dots, 2r-1\}$, starting from the top. Odd-numbered layers contain $m$ horizontal edges, and even-numbered layers contain $m-1$ vertical edges connecting adjacent horizontal layers. Analogously, a column refers to all lattice elements that share the same horizontal position. Columns are assigned a column index $c \in \{1, \dots, 2m-1\}$, starting from the left. Notably, in this coordinate system, qubits (edges) correspond exactly to positions $(l, c)$ for which the sum $l + c$ is even. For clarity, we will use the indices $(i,j) \in \left( \{ (1,1), \dots, (r,m) \} \right) \cap \{ (i,j) \mid i+j \text{ even} \} $, when indexing qubits.

We now describe our measurement scheme for the standard planar surface code, as illustrated in Fig.~\ref{fig:standard-planar-surface-5-scheme}. The measurement scheme iterates over the qubits sequentially, first layer by layer, proceeding from left to right within odd-numbered layers and from right to left within even-numbered layers. Odd-numbered layers are measured with $Z$-BMs, except for the last qubit in each layer, which is measured with an $X$-BM if no success has occurred on the other qubits of the layer. Although the position of this last qubit may be chosen anywhere in each row, for convenience we always place it in the last column. Even-numbered layers are measured using $Z$-BMs. For convenience, we will refer to odd- and even-numbered layers as $Z$- and $X$-layers, respectively.

\begin{figure}[tb]
	\def\svgwidth{0.45\textwidth}
	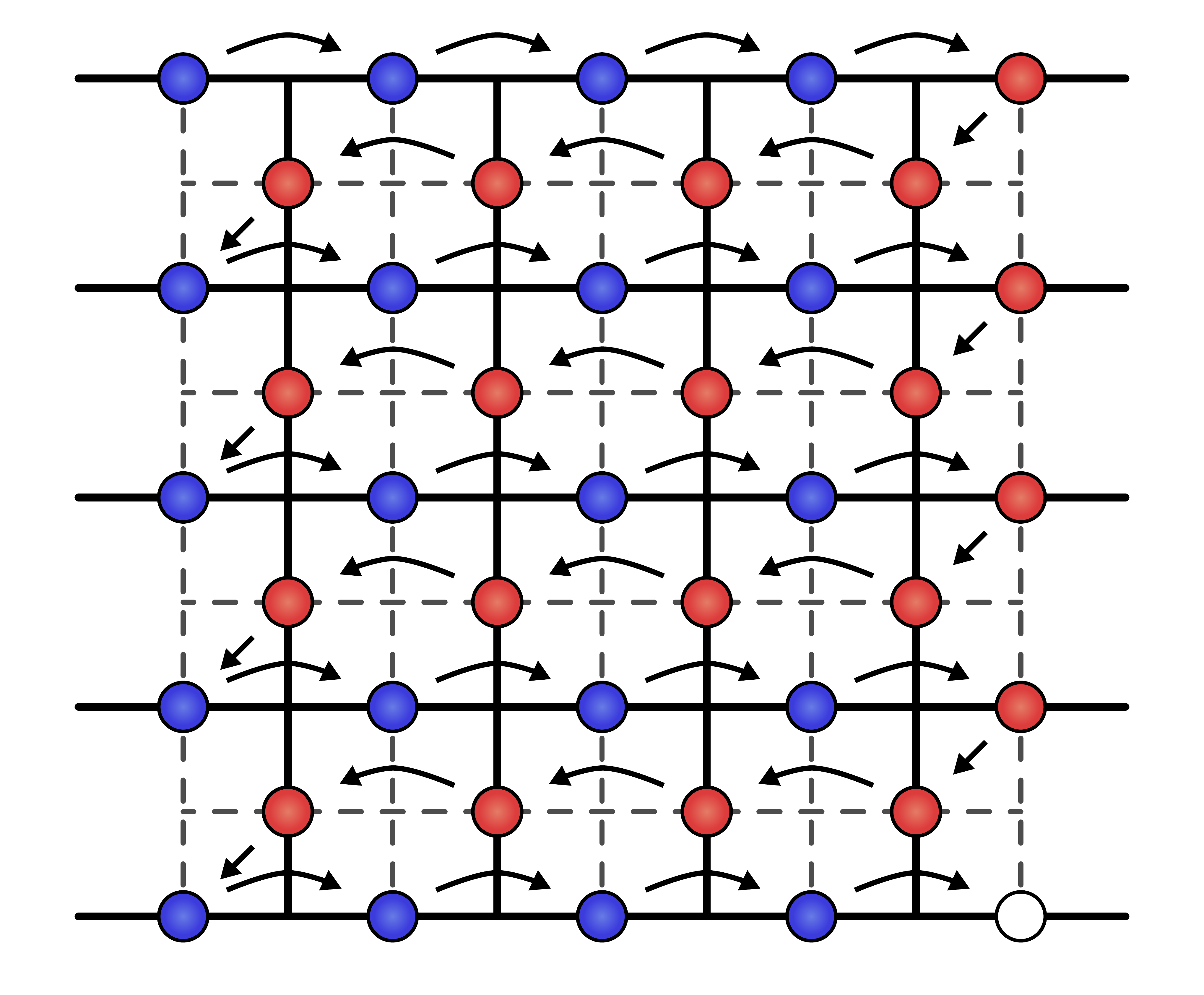
	\caption{\label{fig:standard-planar-surface-5-scheme}Measurement scheme for the standard planar surface code. The qubits at the vertices are measured following the black arrows. Red and blue vertices are measured with $X$- and $Z$-BMs, respectively. The measurement type on the last qubit is inconsequential and thus remains uncolored.}
\end{figure}

$\overline{X}$ strings connect the smooth boundaries, i.e., the top and bottom boundaries, while $\overline{Z}$ strings connect the rough boundaries, i.e., the left and right boundaries. As a direct consequence, in our scheme the two strings $\overline{X}$ and $\overline{Z}$ must intersect at the vertex where the successful BM occurs. For brevity, we will refer to this vertex as the success vertex.

We begin by considering a success vertex at an $X$-layer, as illustrated in Fig.~\ref{fig:standard-planar-surface-5-solution-x}. We denote the coordinate on which the success occurred as $(l_s,c_s)$. In this case, the following $\overline{Z}$ string in the primal lattice can be completed. The string starts on layer $l_s-1$ at the left boundary and traverses this layer to the column $c_s$ of the success vertex, then two steps downwards, passing through the success vertex and connecting to the unmeasured $Z$-layer on level $l_s+1$. From there the string connects on layer $l_s+1$ to the right boundary. The following $\overline{X}$ string can be completed in the dual lattice. Above the layer of the success vertex the string traverses the qubits which where measured with $X$-BMs on every $Z$-layer, which is the last qubit of every $Z$-layer above the success. Then the string traverses in the $X$-layer with index $l_s-1$ to the column $c_s-1$, which is left of the success vertex, and thus crosses the success vertex. From there it connects to the bottom boundary by measuring this column $c_s-1$ straight downward.

We now consider a success vertex at a $Z$-layer, as illustrated in Fig.~\ref{fig:standard-planar-surface-5-solution-z}. Again, the coordinate on which the success occurred is denoted as $(l_s,c_s)$. In this case, the $\overline{Z}$ string is completed by performing a $Z$-BM on the last qubit at $(l_s,2m-1)$ of this layer. If the success was on this last qubit this step is unnecessary and the $\overline{Z}$ string is already completed. The following $\overline{X}$ string can be completed in the dual lattice. Above the layer of the success vertex the string traverses the qubits which where measured with $X$-BMs on every $Z$-layer, which is the last qubit of every $Z$-layer above the success, down to layer $l_s-1$. Then the string traverses this $X$-layer with index $l_s-1$ to the column $c_s$ of the success vertex, then two steps downwards, passing through the success vertex and connecting to the unmeasured $Z$-layer on level $l_s+1$. From there the string can be completed by measuring the rest of the column $c_s$ straight down with $X$-BMs, connecting the string to the bottom boundary.

Importantly, for both cases we discussed, the $\overline{X}$ and $\overline{Z}$ string intersect in exactly one qubit, which is the qubit where the success occurred.

From our discussion of the completion of the logical operators, it is clear how the feedforward can be simplified. If the success vertex is on a $Z$-layer, the remainder of that layer is still measured with $Z$-BMs. Therefore, all $Z$-BMs on the $Z$-layer can be performed in a single step. If the success vertex is on an $X$-layer, no $Z$-BM is performed on this layer. Moreover, if the success vertex is on the final qubit of a $Z$-layer, the $\overline{Z}$ string is complete, and no additional $Z$ information is required. Consequently, an entire $X$-layer and the final $X$-BM of the preceding $Z$-layer can be performed within a single step.

\begin{figure*}
		\begin{subfigure}[c]{0.45\textwidth}
			\def\svgwidth{\textwidth}
			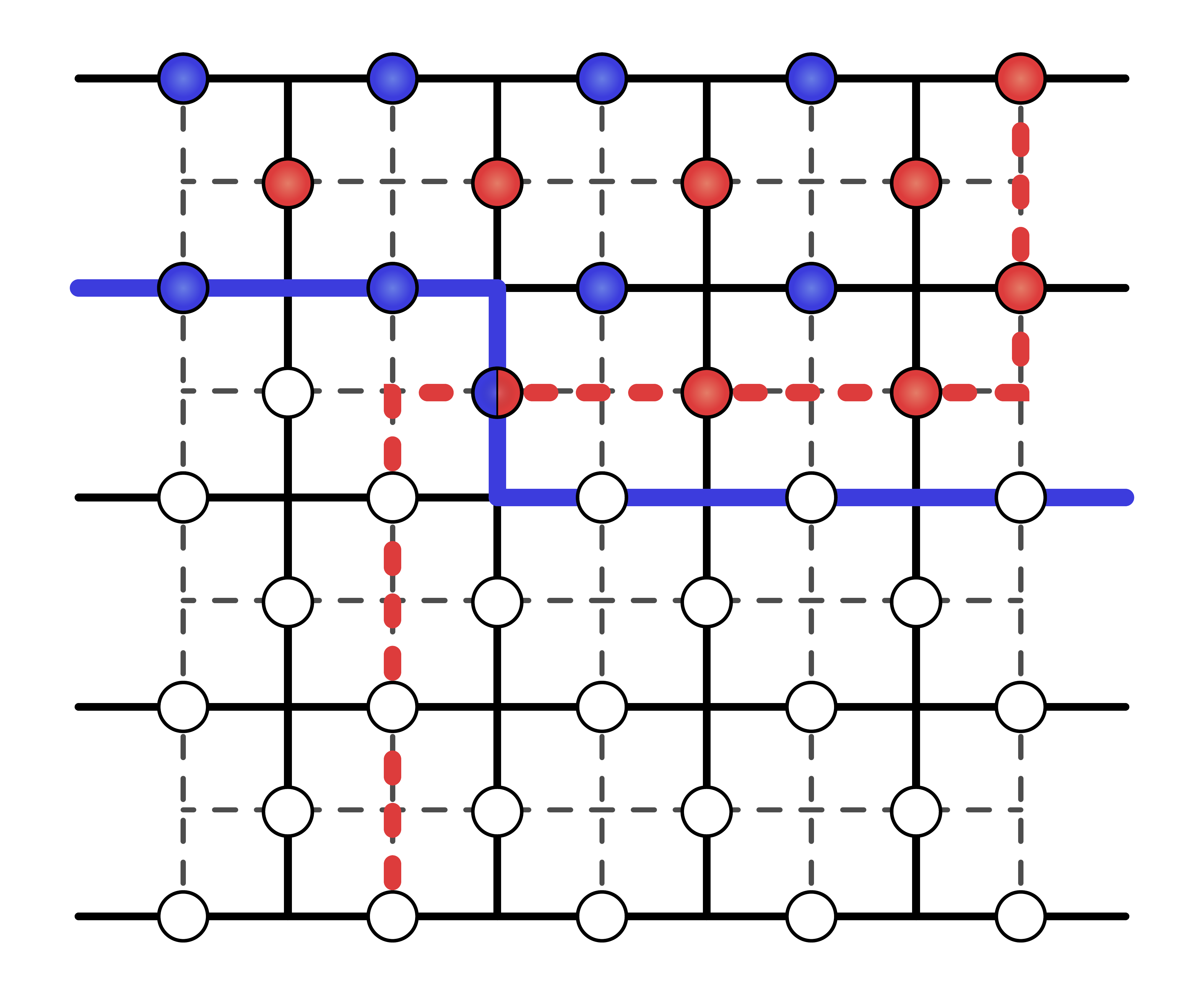
			\caption{\label{fig:standard-planar-surface-5-solution-x}}
		\end{subfigure}									
		\begin{subfigure}[c]{0.45\textwidth}
			\def\svgwidth{\textwidth}
			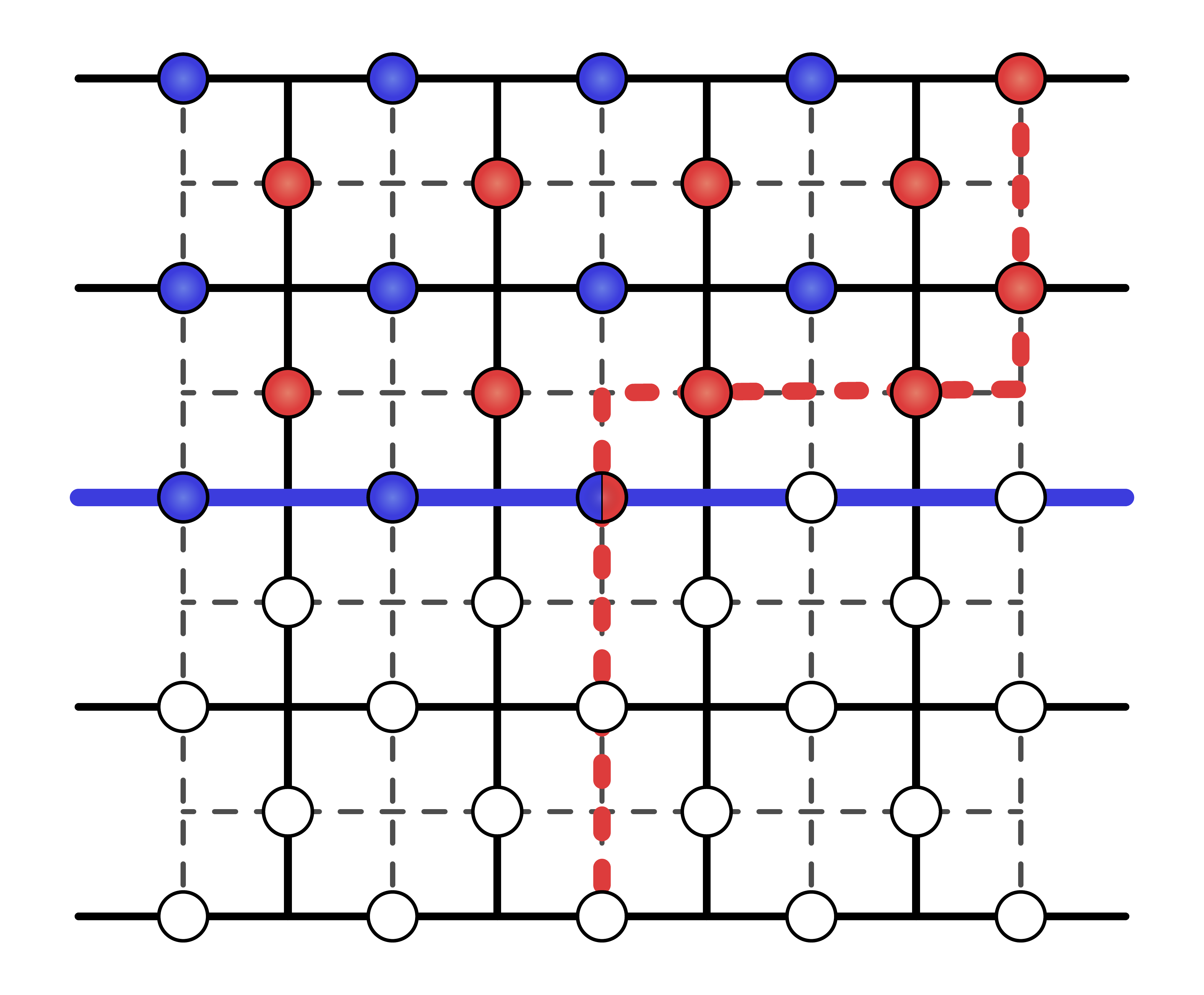
			\caption{\label{fig:standard-planar-surface-5-solution-z}}
		\end{subfigure}
		\caption{{\label{fig:standard-planar-surface-5-solution}}Examples illustrating how the scheme measures the logical operators in the event of a successful BM for the standard surface code. Red and blue qubits indicate $X$-BMs and $Z$-BMs, respectively. Qubits filled with both red and blue indicate a successful physical BM. Red and blue strings indicate the measured $\overline{X}$ and $\overline{Z}$ string, respectively. The $\overline{X}$ string is dashed, as it resides on the dual lattice. The white qubits on the red and blue strings are measured with $X$- and $Z$-BMs after the success occurred, respectively. Success vertices occurring on an (a) $X$-layer and (b) $Z$-layer are shown.}
\end{figure*}

We now turn to the transformation of stabilizer generators through the measurement scheme, as illustrated in Fig.~\ref{fig:standard-planar-surface-5-generators}. We begin by considering the $Z$-BMs on $Z$-layers. We will construct an inductive argument to demonstrate that the observables $Z_{i,j}$ exclusively anticommute with a single element of the current stabilizer generators, specifically $X_{i,j+1}$. As a base case, $Z_{i,1}$ anticommutes exclusively with $X_{i,2}$, replacing it as a stabilizer generator after the measurement. For $1 < j < 2m-1$, $Z_{i,j}$ anticommutes with both $X_{i,j-1}$ and $X_{i,j+1}$. However, for each of these operators, the stabilizer generator $X_{i,j-1}$ has already been replaced by the preceding $Z_{i,j-2}$ measurement, ensuring that $Z_{i,j}$ exclusively anticommutes with $X_{i,j+1}$. The final measurement of each $Z$-layer, $X_{i,2m-1}$, anticommutes exclusively with $Z_{i+1,2m-1}$, and thus replaces it as a stabilizer generator after the measurement. A similar argument can be applied to $X$-layers. The first measurement $X_{i,2m-2}$, which is spatially on the last qubit of this layer, anticommutes with both adjacent face operators $Z_{i,2m-3}$ and $Z_{i,2m-1}$. Again, one of these, specifically $Z_{i,2m-1}$ was replaced by the final $X_{i-1,2m-1}$ measurement of the previous layer. Finally, all remaining measurements $X_{i,j}$ anticommute with $Z_{i,j-1}$ and $Z_{i,j+1}$, where $Z_{i,j+1}$ was replaced by the preceding measurement $X_{i,j+2}$.

\begin{figure}
	\def\svgwidth{0.45\textwidth}
	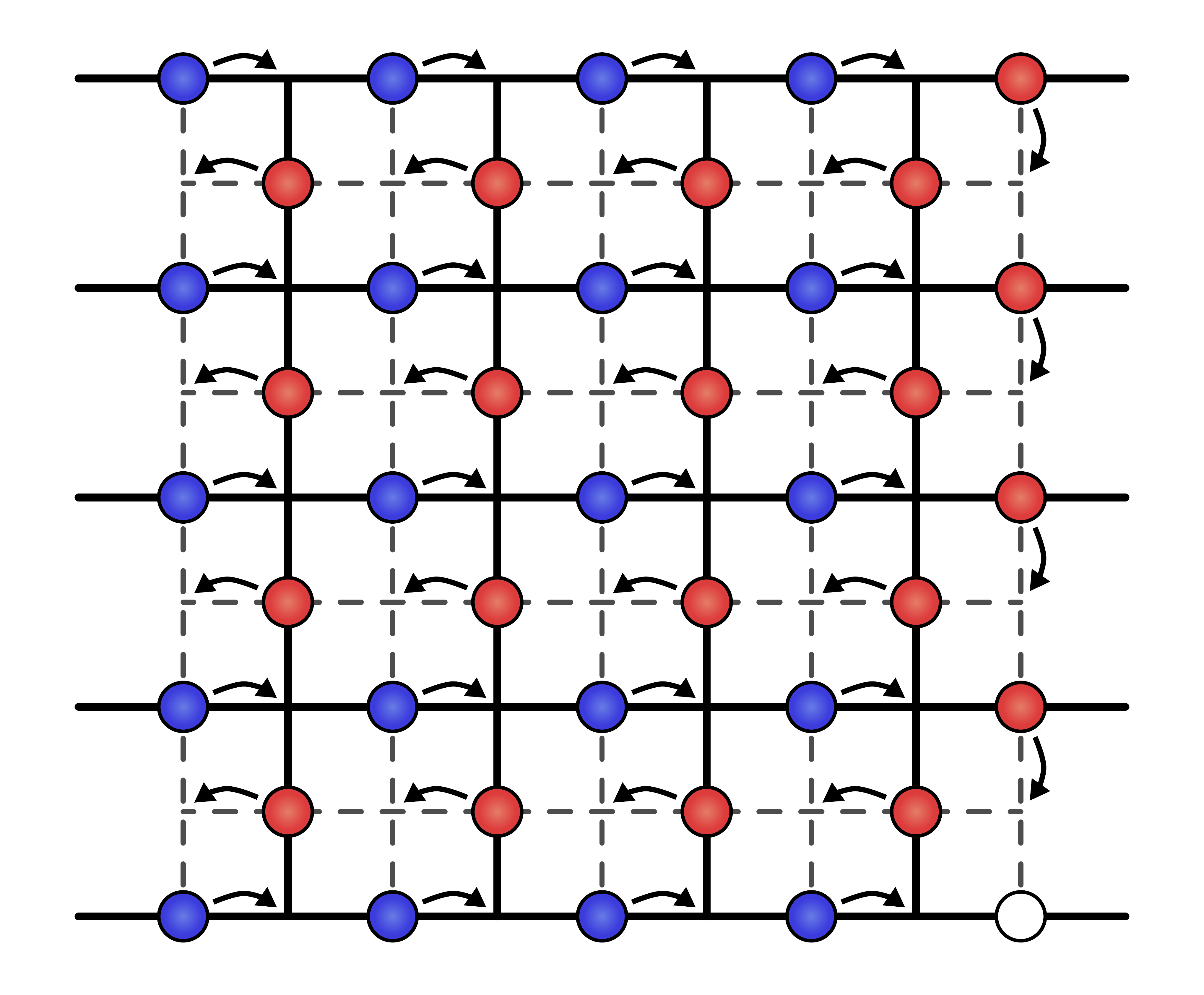
	\caption{\label{fig:standard-planar-surface-5-generators}Schematic representation of the transformation of stabilizer generators for the standard surface code. For each qubit, the black arrow points to the vertex and face (i.e., vertex of the dual lattice) whose stabilizer generator is replaced by the $Z$-BM and $X$-BM, respectively.}
\end{figure}

We now demonstrate that the success probability for every physical BM is given by $\mathbb{P}_B$. For every qubit except the last one at $(2r-1,2m-1)$, the same argument applies to $Y_{i,j}$ as in the discussion of the stabilizer generator transformations. Consequently, each $Y_{i,j}$ anticommutes with the current stabilizer.  For the final qubit, $Y_{2r-1,2m-1}$ completes the measurement of a logical $\overline{Y}$ operator, consisting of the product of the $\overline{X}$ string along the column $2m-1$ and the $\overline{Z}$ string along the layer $2r-1$. 

Next we consider the remaining observables for the $Z$-layers, for now excluding the last layer $2r-1$. Every qubit but the last in a $Z$-layer is measured using an $Z$-BM and $X_{i,j}$ always anticommutes with the face stabilizer generator $Z_{i+1,j}$ below it. On the last $Z$-layer every qubit $(2r+1,j)$ but the last, which is measured with a $Z$-BM completes an $\overline{X}$ string: Starting at the top left corner $(1,2m-1)$, traversing straight downward to the second to last layer $2r-2$, traversing through that layer leftwards to the column of the measurement $j$, from where it connects in $(2r-1,j)$ to the bottom boundary. For the last qubit in each $Z$-layer, which is measured with an $X$-BM, $Z_{i,2m-1}$ completes the $\overline{Z}$ string along the layer $i$. Finally, we consider the remaining observables for the $X$-layers. These qubits are measured using $X$-BMs, and $Z_{i,j}$ always anticommutes with the vertex stabilizer generator $X_{i+1,j}$ directly below it.

Having confirmed that all single-qubit observables on the measured qubits either anticommute with an element of the current stabilizer or complete a logical operator, we can now apply Lem.~\ref{lem:successful-bell-measurment}. This leads us to conclude that the success probability for each transversal BM is $\mathbb{P}_B$. Noting that an $(r,m)$ standard planar surface code consists of $2mr-m-r+1$ qubits, the success probability for the logical BM scheme is
\begin{equation}
	1- \left( 1 - \mathbb{P}_B \right)^{2mr-m-r+1}.
\end{equation}

A more algebraic approach to the proof of the scheme's optimality, based on Thm.~\ref{thm:sufficient}, is provided in App.~\ref{app:proof-standard-planar-surface-code}.

To our knowledge, the only existing work on logical BM schemes for the planar surface code is Ref.~\cite{PhysRevA.99.062308} where an optimized static linear-optics scheme for the standard planar surface code was presented. This static scheme, assuming $\mathbb{P}_B = \frac{1}{2}$, achieves a no-loss success probability of $1 - 2^{-2 \max (r,m) + 1 }$. Thus, we conclude that, in the absence of loss, our scheme achieves a significantly higher success probability than the scheme presented in Ref.~\cite{PhysRevA.99.062308}, at the cost of requiring feedforward.

\subsection{Rotated planar surface code}
\label{sec:rotated-planar-surface-code}
In this section, we introduce our measurement scheme for the rotated planar surface code~\cite{Horsman_2012}, which is an adaptation of the standard planar surface code requiring fewer qubits for the same code distance. In this code, qubits reside on the vertices of a lattice, and plaquettes are defined by the faces of the lattice. The code consists of dark (brown) and light (yellow) plaquettes in an alternating checkerboard pattern. Boundary plaquettes are truncated, with dark plaquettes at the top and bottom boundaries and light plaquettes at the left and right boundaries, so that each boundary plaquette contains two qubits, in contrast to interior plaquettes, which contain four. Similar to the QPC code, this code consists of $r$ rows, each containing $m$ vertices. The vertices are indexed by pairs $(i, j)$, where $i \in \{1, \dots, r\}$ denotes the row, and $j \in \{1, \dots, m\}$ denotes the column position within each row. Here we focus on the quadratic case where $r = m \eqqcolon N$, i.e., a quadratic $N \times N$ lattice. For the rectangular code with arbitrary lattice dimensions, see App.~\ref{app:rectangular-rotated-planar-surface-code}.

We denote the set of vertices in a plaquette $p$ by $\p(p)$ and the sets of dark and light plaquettes by $P_D$ and $P_L$, respectively. The stabilizer group $S_c$ of the rotated planar surface code is generated by $X$-type stabilizers associated with the dark plaquettes and $Z$-type stabilizers associated with the light plaquettes:
\begin{equation}
    G_c = \Big\{ \prod_{v \in \p(p)} X_v \mid p \in P_D \Big\} \cup \Big\{ \prod_{v \in \p(p)} Z_v \mid p \in P_L \Big\}.
\end{equation}
Consequently, we denote the top and bottom, i.e., the dark boundaries as $X$-boundaries and the left and right, i.e., the light boundaries as $Z$-boundaries. Furthermore, we will refer to dark plaquettes as $X$-plaquettes and light plaquettes as $Z$-plaquettes. In this context, the term opposite type refers to the complementary relationship between $X$ and $Z$ operators and their corresponding plaquettes.

Any path on the lattice can be naturally interpreted as a Pauli operator by assigning its support to all qubits along the path. In topological quantum error correction, these paths are commonly called string operators, and we follow this terminology here. For $\overline{X}$ operators the string acts with $X$ on its qubits, while for $\overline{Z}$ operators it acts with $Z$.  In particular, $\overline{X}$ operators correspond to strings connecting the two $X$-boundaries, and $\overline{Z}$ operators to strings connecting the two $Z$-boundaries. To commute with all stabilizer generators, a string must touch plaquettes of the opposite type an even number of times, which enforces that strings can only traverse plaquettes of opposite type diagonally.

We now consider specific types of strings that define logical operators, as illustrated in Fig.~\ref{fig:rotated-planar-surface-5-logicals}. We start by considering a $\overline{Z}$ string that starts at the top-right corner. From each vertex along the string, we move either one step to the left or diagonally to the lower-left. After exactly $N$ steps, the string reaches the left boundary. Recall that diagonal steps are only possible when crossing an $X$-plaquette. Let us examine an arbitrary step along the string. Due to the checkerboard pattern of the lattice, there is exactly one $X$-plaquette that is touched and passed by the string in each step. By the imposed constraints, this plaquette is either traversed along its edge or diagonally. In both cases, the string touches the $X$-plaquette exactly twice. Since all subsequent steps continue leftward, the plaquette will not be touched again. Thus, every $X$-plaquette is touched exactly twice along the string. Since the string consists of $Z$ operators, it trivially commutes with all $Z$-plaquettes. As we have shown, the string also commutes with all $X$-plaquettes, because each is touched twice. Moreover, this string connects the $Z$-boundaries. In conclusion, it represents a $\overline{Z}$ operator for the code.

\begin{figure}
	\def\svgwidth{0.45\textwidth}
\begingroup%
  \makeatletter%
  \providecommand\color[2][]{%
    \errmessage{(Inkscape) Color is used for the text in Inkscape, but the package 'color.sty' is not loaded}%
    \renewcommand\color[2][]{}%
  }%
  \providecommand\transparent[1]{%
    \errmessage{(Inkscape) Transparency is used (non-zero) for the text in Inkscape, but the package 'transparent.sty' is not loaded}%
    \renewcommand\transparent[1]{}%
  }%
  \providecommand\rotatebox[2]{#2}%
  \newcommand*\fsize{\dimexpr\f@size pt\relax}%
  \newcommand*\lineheight[1]{\fontsize{\fsize}{#1\fsize}\selectfont}%
  \ifx\svgwidth\undefined%
    \setlength{\unitlength}{3855.11811024bp}%
    \ifx\svgscale\undefined%
      \relax%
    \else%
      \setlength{\unitlength}{\unitlength * \real{\svgscale}}%
    \fi%
  \else%
    \setlength{\unitlength}{\svgwidth}%
  \fi%
  \global\let\svgwidth\undefined%
  \global\let\svgscale\undefined%
  \makeatother%
  \begin{picture}(1,1.17647059)%
    \lineheight{1}%
    \setlength\tabcolsep{0pt}%
    \put(0,0){\includegraphics[width=\unitlength,page=1]{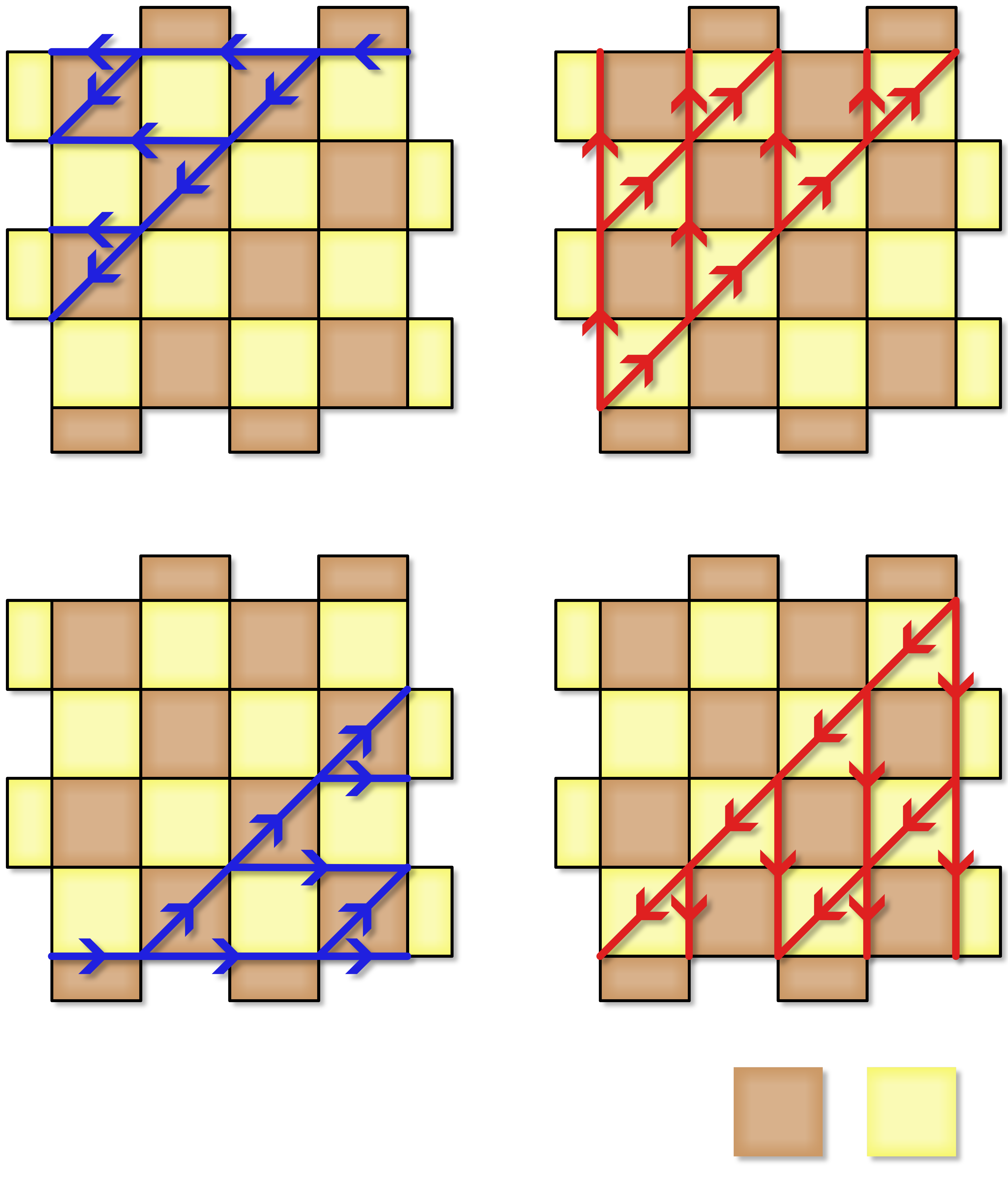}}%
    \put(0.77200183,0.07181172){\color[rgb]{0,0,0}\makebox(0,0)[t]{\lineheight{0}\smash{\begin{tabular}[t]{c}\makebox(0,0){$\contour{white}{X}$}\end{tabular}}}}%
    \put(0.90435468,0.07181172){\color[rgb]{0,0,0}\makebox(0,0)[t]{\lineheight{0}\smash{\begin{tabular}[t]{c}\makebox(0,0){$\contour{white}{Z}$}\end{tabular}}}}%
    \put(0,0){\includegraphics[width=\unitlength,page=2]{RotatedSurfaceStrings.pdf}}%
  \end{picture}%
\endgroup%

	\caption{\label{fig:rotated-planar-surface-5-logicals}Relevant logical operators for the $5 \times 5$ rotated planar surface code. Each blue string, which follows the direction of the arrows and connects the left and right boundaries, represents a $\overline{Z}$ operator. Each red string, following the arrows and connecting the top and bottom boundaries, represents an $\overline{X}$ operator. Note that the string directions are not significant for the logical operators but are used here to compactly illustrate all relevant strings.}
\end{figure}

A similar reasoning applies to $\overline{X}$ operators due to the symmetry of the code. Specifically, the code is symmetric under the simultaneous exchange of $X$- and $Z$-plaquettes and rotation by $90^{\circ}$. Similarly to the $\overline{Z}$ string which moves to the left and the lower-left, an $\overline{X}$ string move to the right and the upper-right. It starts at the bottom-left corner and traverses rightward. Note that for $\overline{X}$ strings the plaquettes which can be diagonally traversed exchange. Along this string, each $Z$-plaquette is touched exactly twice, ensuring that the string commutes with all $X$-plaquettes, thus defining a valid $\overline{X}$ operator.

Additional strings can be obtained from the code’s symmetry under a $180^{\circ}$ rotation. Specifically, we can identify $\overline{Z}$ strings that start from the bottom-left corner and traverse to the right and upper-right, as well as $\overline{X}$ strings that begin from the top-right corner and move downward and to the lower-left.

We now describe our measurement scheme for the rotated planar surface code as illustrated in Fig.~\ref{fig:rotated-planar-surface-5-scheme}. In our scheme, qubits are addressed based on the diagonals of the lattice, with each vertex assigned to a diagonal defined by the sum of its indices, $k = i + j$. For example, the first diagonal corresponds to $k = 2$ and consists of the single vertex $(1, 1)$, while the second diagonal corresponds to $k = 3$ and consists of two vertices, $(1, 2)$ and $(2, 1)$.

The scheme starts at the top-left corner at vertex $(1,1)$ and proceeds along the diagonals. For each diagonal, the vertices are addressed based on the parity of the diagonal sum. Specifically, for diagonals with even index sums, qubits are measured iteratively in a downward direction from the upper boundary using $X$-BMs. Conversely, for diagonals with odd index sums, qubits are measured iteratively in an upward direction from the left boundary using $Z$-BMs. Thus, we will refer to these diagonals as $X$- and $Z$-diagonals, respectively.

Simultaneously, a mirrored process starts at the bottom-right corner at vertex $(N, N)$, iterating over the diagonals in the opposite direction. Here, the same measurements are applied, specifically $X$-BMs for qubits with even index sums and $Z$-BMs for qubits with odd index sums. However, in this mirrored part of the scheme, the order within the diagonals is reversed. Diagonals with even index sums are traversed upwards, starting from the bottom boundary, while diagonals with odd index sums are traversed downwards, starting from the right boundary. Notably, the direction of the middle diagonal is inconsequential, as the scheme does not depend on the traversal order here.

\begin{figure}
	\def\svgwidth{0.45\textwidth}
	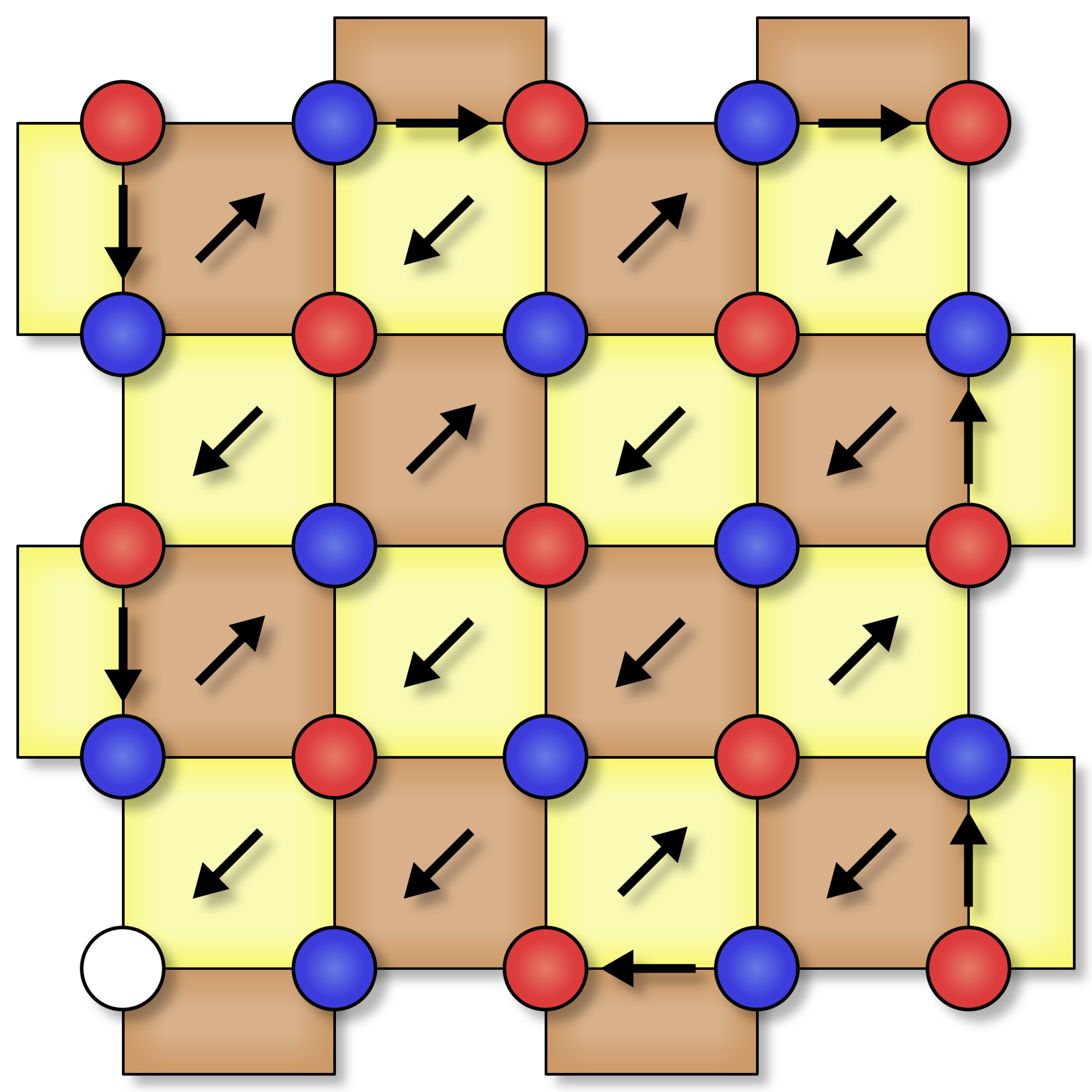
	\caption{\label{fig:rotated-planar-surface-5-scheme}Measurement scheme for the rotated planar surface code. The scheme starts simultaneously at the top-left and the bottom-right vertex. The qubits at the vertices are measured following the black arrows. Red and blue vertices are measured with $X$- and $Z$-BMs, respectively. The measurement type on the last qubit is inconsequential and thus remains uncolored.}
\end{figure}

$\overline{X}$ strings connect the $X$-boundaries, i.e., the top and bottom boundaries, while $\overline{Z}$ strings connect the $Z$-boundaries, i.e., the left and right boundaries. As a direct consequence, in our scheme,the two strings $\overline{X}$ and $\overline{Z}$ must intersect at the vertex where the successful BM occurs. For brevity, we will refer to this vertex as the success vertex. In the following explanations, we describe the strings measured to complete the logical BM. These strings are often decomposed into multiple parts. For qubits near the boundaries, some of these parts may vanish. However, these strings nevertheless extend naturally to such cases, since they remain valid even without these parts.

We begin by considering a success vertex on an $X$-diagonal in the top-left triangle, i.e., a qubit with an even index sum less than or equal to $N+1$, as illustrated in Figs.~\ref{fig:rotated-planar-surface-7-solution-x-a} and~\ref{fig:rotated-planar-surface-7-solution-middle}. All other cases follow from the symmetry of the code. Recall that this diagonal is measured using $X$-BMs, and the plaquettes along this diagonal are $Z$-plaquettes. Therefore, the $\overline{X}$ string can traverse this diagonal fully. This diagonal touches the top $X$-boundary, thus we complete the string by moving from the vertex where the diagonal touches the left boundary straight down towards the bottom-left corner of the lattice.

\begin{figure}
		\begin{subfigure}[c]{0.22\textwidth}
			\def\svgwidth{\textwidth}
			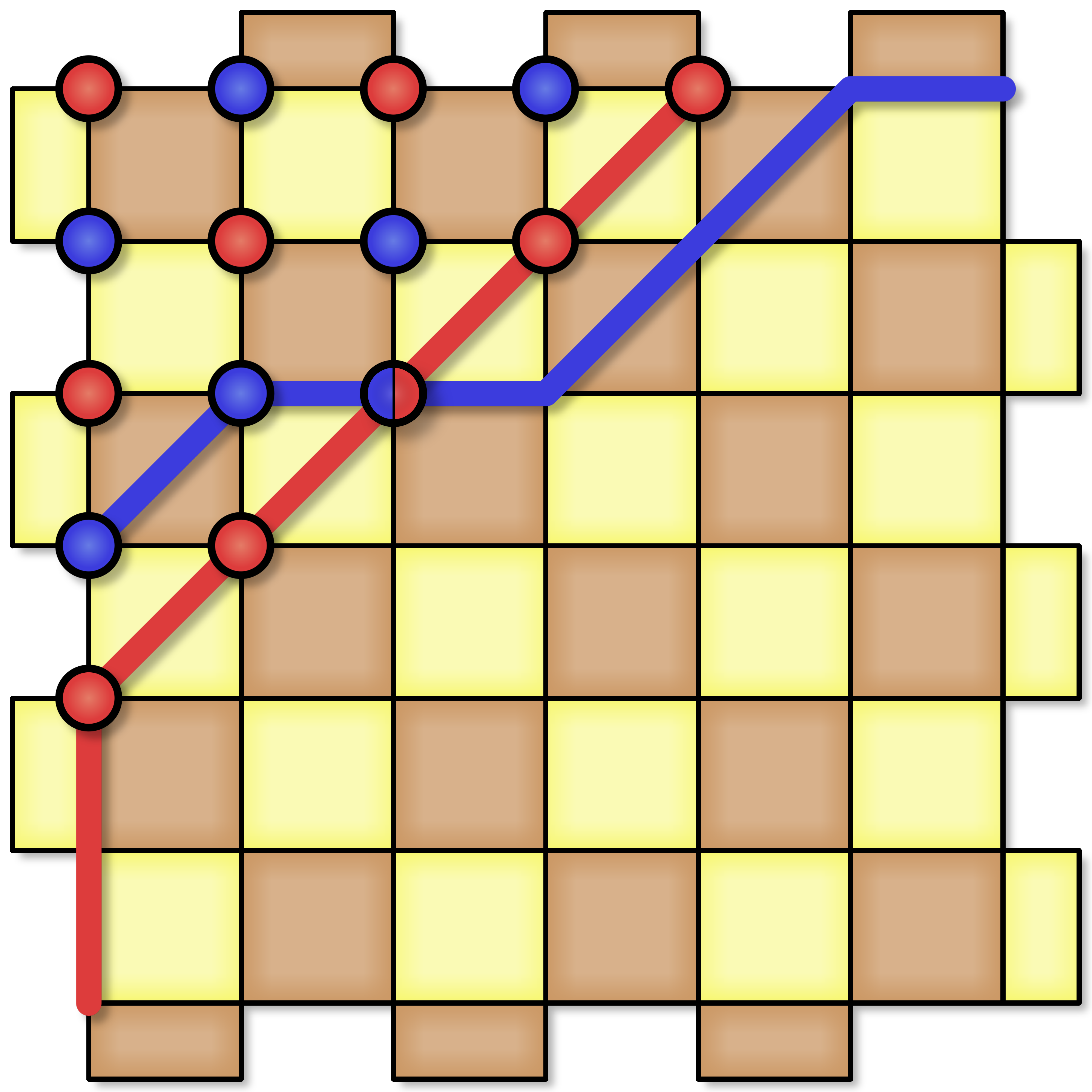
			\caption{\label{fig:rotated-planar-surface-7-solution-x-a}}
		\end{subfigure}									
		\begin{subfigure}[c]{0.22\textwidth}
			\def\svgwidth{\textwidth}
			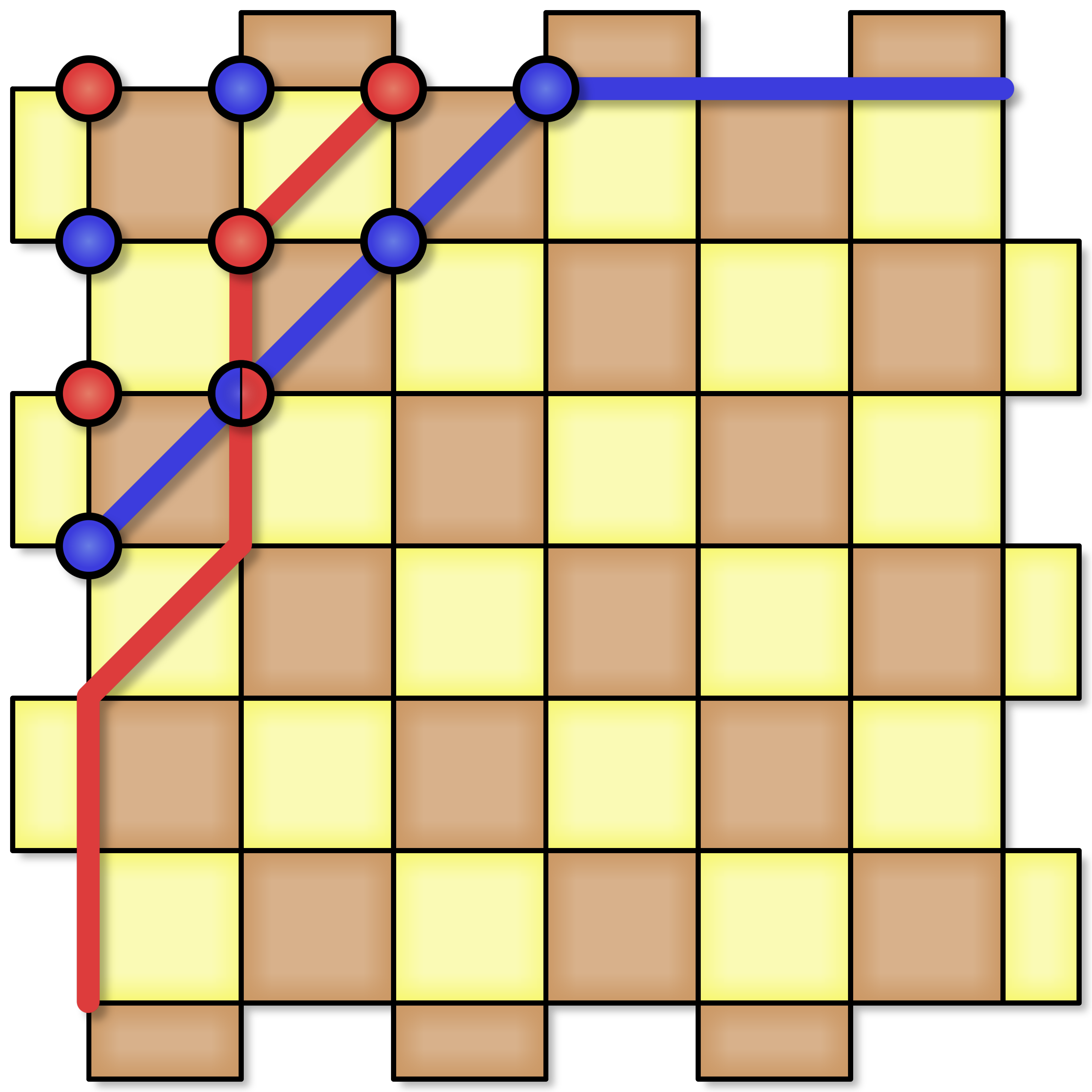
			\caption{\label{fig:rotated-planar-surface-7-solution-z-a}}
		\end{subfigure}
		\begin{subfigure}[c]{0.22\textwidth}
			\def\svgwidth{\textwidth}
			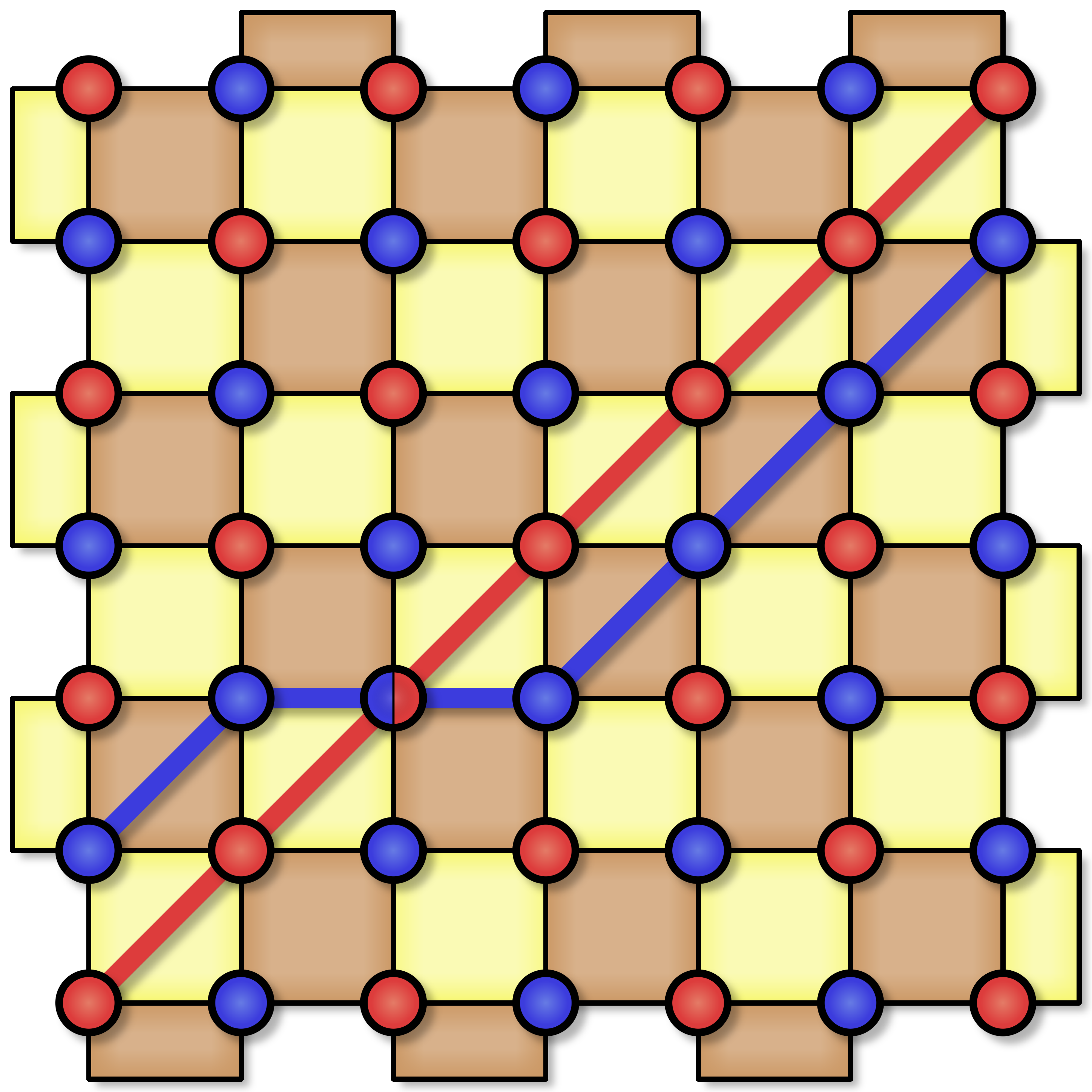
			\caption{\label{fig:rotated-planar-surface-7-solution-middle}}
		\end{subfigure}
		\caption{{\label{fig:rotated-planar-surface-7-solution}}Examples illustrating how the scheme measures the logical operators in the event of a successful BM for the rotated planar surface code. Red and blue qubits indicate $X$-BMs and $Z$-BMs, respectively. Qubits filled with both red and blue indicate a successful physical BM. Red and blue strings indicate the measured $\overline{X}$ and $\overline{Z}$ string, respectively. All possible cases where the success vertex occurred are covered: (a) upper-left triangle, $X$-diagonal; (b) upper-left triangle, $Z$-diagonal. The cases for the lower-right triangle mirror the cases in (a) and (b). The solution for success on the middle diagonal is shown in (c) and is essentially identical to the case in (a).}
\end{figure}

Now, we examine the $\overline{Z}$ string in this case. Let the index sum of the success vertex be $k$.  The $\overline{Z}$ string starts from the vertex on the previous diagonal $k-1$ that touches the left boundary. From there, it moves diagonally upwards to the row of the success vertex, then two steps horizontally to the right, crossing the $\overline{X}$ string at the success vertex. The string continues along the diagonal with index sum $k+1$ up to the top boundary and then extends rightward along the top boundary to the top-right corner.

For the case that the success vertex is on a $Z$-diagonal in the top-left triangle, i.e., a qubit with an odd index sum smaller than or equal to $N+1$, as illustrated in Fig.~\ref{fig:rotated-planar-surface-7-solution-z-a}, a similar argument applies, due to the symmetry of the code under the simultaneous exchange of $X$- and $Z$-plaquettes and a $90^{\circ}$ rotation.

Now, the $\overline{Z}$ string is measured along the diagonal of the success vertex and connected horizontally along the top boundary with the right boundary. The $\overline{X}$ string connects the diagonals on either side of the success vertex by moving two steps vertically through the success vertex and connecting the lower diagonal (i.e., with higher index sum $k+1$) along the left boundary to the bottom boundary. More precisely, assuming the index sum of the success vertex is $k$, the string starts at the vertex where the diagonal with index sum $k-1$ touches the top boundary, and then it traverses downward to the column of the success vertex. At this point, it takes two vertical steps downward, crossing the $\overline{Z}$ string at the success vertex. The remainder of the string follows the diagonal with index sum $k+1$ downwards to the left boundary and continues along the left boundary to the bottom left corner.

The solutions for the bottom-right triangle mirror those of the top-left triangle under a $180^{\circ}$ rotation, as the code is symmetric under such a transformation. The solution naturally extends to the middle diagonal with index sum $N+1$.

We now turn to the transformation of stabilizer generators through the measurement scheme, as illustrated in Fig.~\ref{fig:rotated-planar-surface-5-generators}. A single-qubit measurement operator anticommutes with a stabilizer generator associated with a plaquette if and only if it is of the opposite type and touches the plaquette. For brevity, we will simply say that the measurement anticommutes with the plaquette. For instance, the $X$-BM on the first qubit, $X_{1,1}$, anticommutes solely with the boundary plaquette $Z_{1,1} Z_{2,1}$.

\begin{figure}
	\def\svgwidth{0.45\textwidth}
	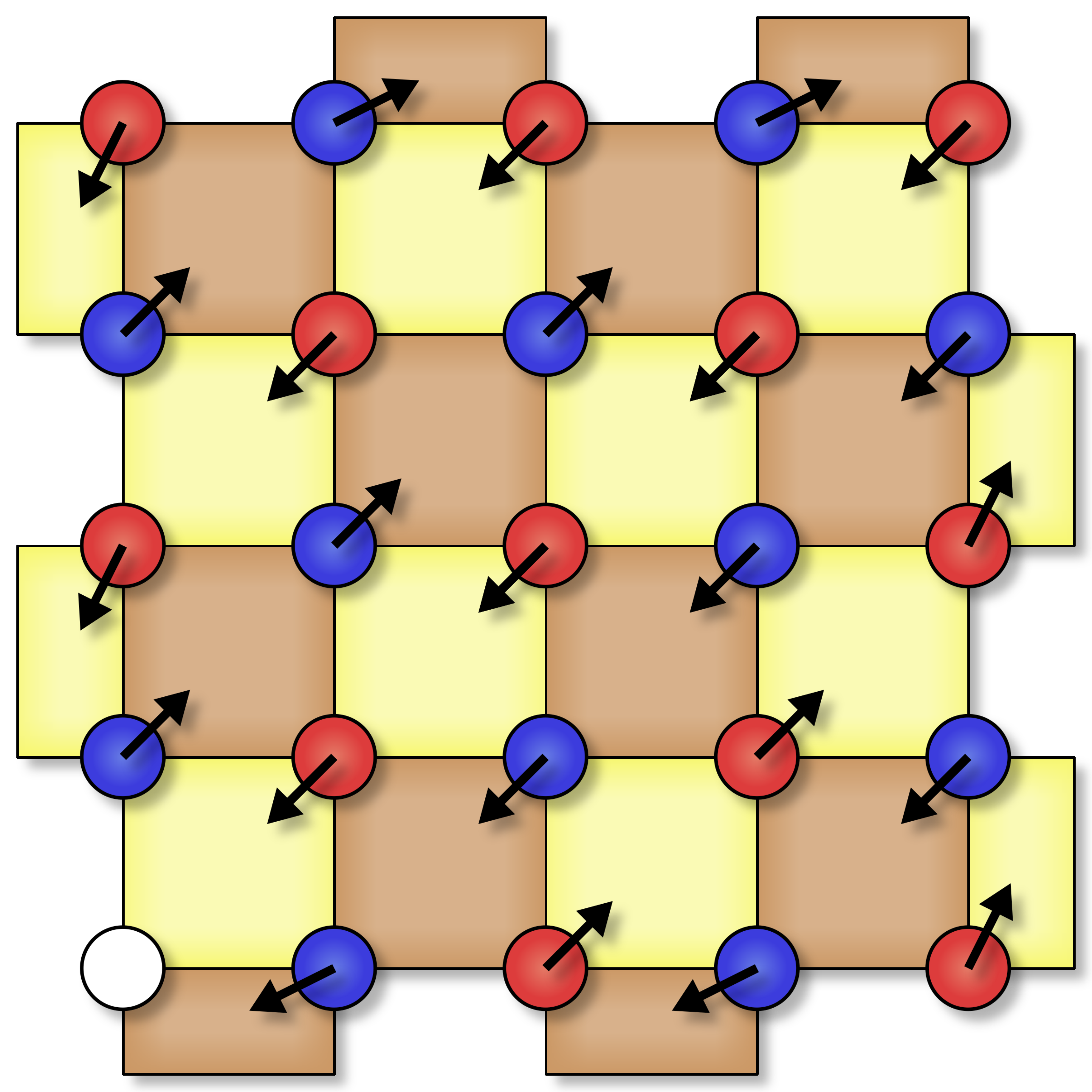
	\caption{\label{fig:rotated-planar-surface-5-generators}Schematic representation of the transformation of stabilizer generators for the rotated planar surface code. For each vertex, the black arrow points to the plaquette whose stabilizer generator is replaced by the BM on that vertex. Note that each vertex touches only one plaquette of the opposite type that has not been replaced by a previous measurement.}
\end{figure}

Consider now any diagonal. The only plaquettes that touch the vertices of the diagonal and are of the opposite type to the BMs applied along the diagonal are the plaquettes on this diagonal, including the boundary plaquette. Therefore, we can apply an inductive argument to show that each measurement anticommutes exclusively with a single element of the current stabilizer generators. The measurements begin on the side opposite to the boundary plaquette, ensuring that the first measurement touches only one plaquette it anticommutes with. This measurement replaces the corresponding stabilizer generator. Each subsequent measurement touches exactly two plaquettes of the opposite type, specifically the one replaced by the previous measurement and the next plaquette along the diagonal. Therefore, the transformation of the stabilizer generators is straightforward to track, as each measurement successively replaces the next plaquette along the measured diagonal. Recall from Sec.~\ref{sec:sufficient-conditions-for-an-optimal-logical-Bell-measurement-with-feedforward-based-linear-optics} that this argument does not need to apply to the very last qubit of the code.

It is possible to simplify the feedforward in this scheme. From our previous discussion of the logical operators, we know that in the event of a successful BM, no vertex on the current diagonal is measured with a different type BM from what it would have been measured with if no success had occurred. Thus, while the order of measurements within a diagonal is useful for understanding the transformation of the stabilizer generators, it is not strictly necessary. In fact, the entire diagonal, as well as its mirrored counterpart, can be measured simultaneously, as the measurements within a single diagonal are independent of each other.

We now demonstrate that the success probability for every physical BM is given by $\mathbb{P}_B$. First, we consider any diagonal except the middle diagonal. For each measurement, the $Y$ operator on this vertex anticommutes with the plaquette with which the type of the BM also anticommutes. The Pauli operator not measured along the diagonal (specifically, $Z$ for $X$-diagonals and $X$ for $Z$-diagonals) anticommutes with the adjacent plaquette towards the center of the lattice, which is unmeasured at this point.

Next, we consider the middle diagonal. The previous argument for the $Y$ operators extends to the middle diagonal, except for the final qubit at the bottom-left corner. The reasoning is analogous if we instead take the final qubit to be at the top-right corner. At this vertex, $Y$ operator completes a $\overline{Y}$ operator. This can be verified by decomposing $\overline{Y}$ into the product of an $\overline{X}$ and a $\overline{Z}$ operator. For example, if the middle diagonal has an even index sum (i.e., $N$ is odd), then the $X$ operators along the middle diagonal form an $\overline{X}$ operator, while the product of $Z$ operators along the adjacent diagonal $N+2$ and the bottom-left vertex $(N,1)$ constitutes a $\overline{Z}$ operator. Their product, which does not conflict with prior measurements, is a $\overline{Y}$ operator in $[\overline{Y}]$ and acts with a $Y$ operator at the bottom-left corner. If the middle diagonal instead has an odd index sum (i.e., $N$ is even), a similar argument applies, namely that the $Z$ operators along the middle diagonal constitute a $\overline{Z}$ operator, while the preceding diagonal $N$ and an $X$ operator at the bottom-left vertex $(N,1)$ form an $\overline{X}$ operator. Again, their product is a $\overline{Y}$ operator in $[\overline{Y}]$, acting with $Y$ on the final measured vertex at the bottom-left corner.

Finally, we consider the operators of the type opposite to the BMs along the middle diagonal. From our previous discussion of the logical operators, it is evident that these operators connect the two diagonals adjacent to the middle diagonal, thereby completing a logical measurement, as these two diagonals touch all boundaries. Based on these observations and Lem.~\ref{lem:successful-bell-measurment}, we conclude that the success probability for each BM is $\mathbb{P}_B$. As a result, we obtain an optimal scheme.

Due to the complexity and structure of the scheme, applying the algebraic conditions of Thm.~\ref{thm:sufficient} would lead to unwieldy and impractical expressions that are beyond the scope of a manual calculation. However, our topological treatment is equally general. Therefore, as the sole exception, we omit a detailed algebraic treatment for the rotated planar surface code.

At the end of Sec.~\ref{sec:optimization-and-comparionn-of-static-logical-bell-measurements-for-rotated-planar-surface-codes}, we compare our scheme with both our optimized static scheme and the scheme presented in Ref.~\cite{PhysRevA.99.062308}.

\subsection{Optimization and comparison of static logical Bell measurements for the rotated planar surface code}
\label{sec:optimization-and-comparionn-of-static-logical-bell-measurements-for-rotated-planar-surface-codes}
In this section, we devise an optimized static scheme for the rotated planar surface code and conclude by comparing schemes for planar surface codes. We begin by briefly outlining the improved static scheme for the standard planar surface code presented in Ref.~\cite{PhysRevA.99.062308}. Similar to our feedforward-based schemes, this approach relies exclusively on transversal guaranteed partial information BMs. Consequently, the single-code reduction can be applied for this discussion. For simplicity, we assume $r \geq m$, noting that the discussion for $r < m$ proceeds analogously by exchanging the roles of the $X$ and $Z$ operators. In this scheme, all qubits along a $\overline{Z}$ string are measured using $Z$-BMs, while all qubits outside the string are measured with $X$-BMs, ensuring that the logical $\overline{Z}$ information is always obtained. Under the assumption that standard linear-optics BMs with a success probability of $\mathbb{P}_B = \frac{1}{2}$ are used, it is shown that the scheme succeeds if and only if at least one of the $Z$-BMs along the $\overline{Z}$ string succeeds, with each measurement having an independent success probability of $\mathbb{P}_B = \frac{1}{2}$.

Given these properties, that a single successful BM along the $\overline{Z}$ string is both necessary and sufficient, and that the success probability for each BM up to the first success is $\mathbb{P}_B$, the success probability of the logical scheme increases with the weight of the $\overline{Z}$ string. As is standard, we define the weight of an operator to be the number of qubits on which it acts nontrivially. We use this approach to devise our scheme by finding a $\overline{Z}$ string which is as long as possible while still allowing the completion of an $\overline{X}$ string after a single successful BM along the $\overline{Z}$ string. Our solution is depicted in Figs.~\ref{fig:static-optimized-string} and~\ref{fig:static-optimized-string-trunc}.

Essentially, the string follows a wave-like pattern along the longer side of the surface. The pattern follows each second vertical edge, leaving the intermediate vertical edge available for the $\overline{X}$ string to connect through the wave-like structure, which will be discussed in greater detail later in this section. At the turning points along the boundaries, it passes diagonally or horizontally through the plaquettes, depending on the type of plaquette to ensure it commutes with all $X$-plaquettes. The period of this wave-like pattern spans four columns.

The validity of this solution is not restricted to cases where the length of the lattice is an exact multiple of the wave’s period (plus one for the starting point). Even in scenarios where the lattice length does not accommodate the last period completely, the solution remains valid. In such cases, the wave-like pattern can simply terminate at the lattice boundary by including only as much of the final period as fits within the given length. This is illustrated in Fig.~\ref{fig:static-optimized-string-trunc}.

\begin{figure*}
	\begin{subfigure}[c]{0.45\textwidth}
		\def\svgwidth{\textwidth}
		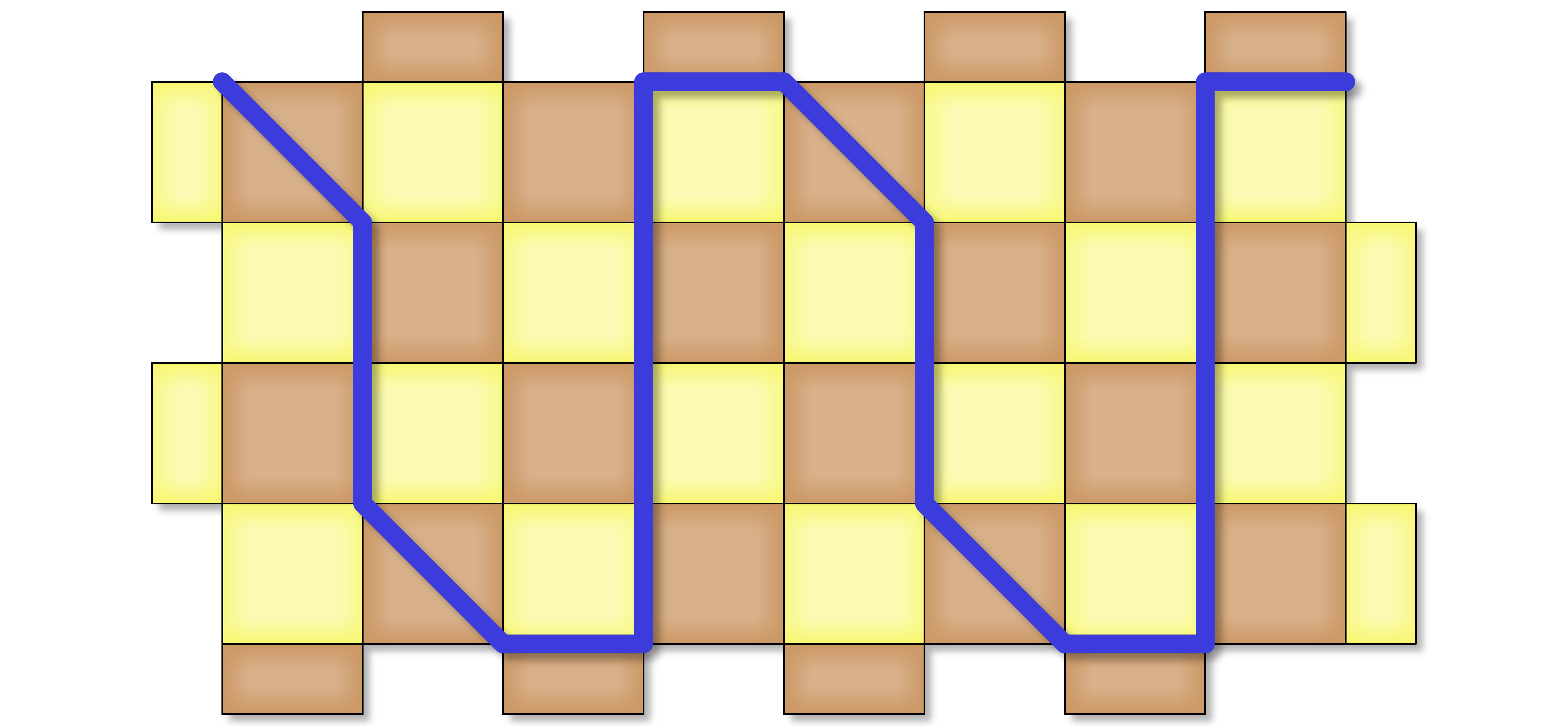
		\caption{}
	\end{subfigure}
	\begin{subfigure}[c]{0.45\textwidth}
		\def\svgwidth{\textwidth}
		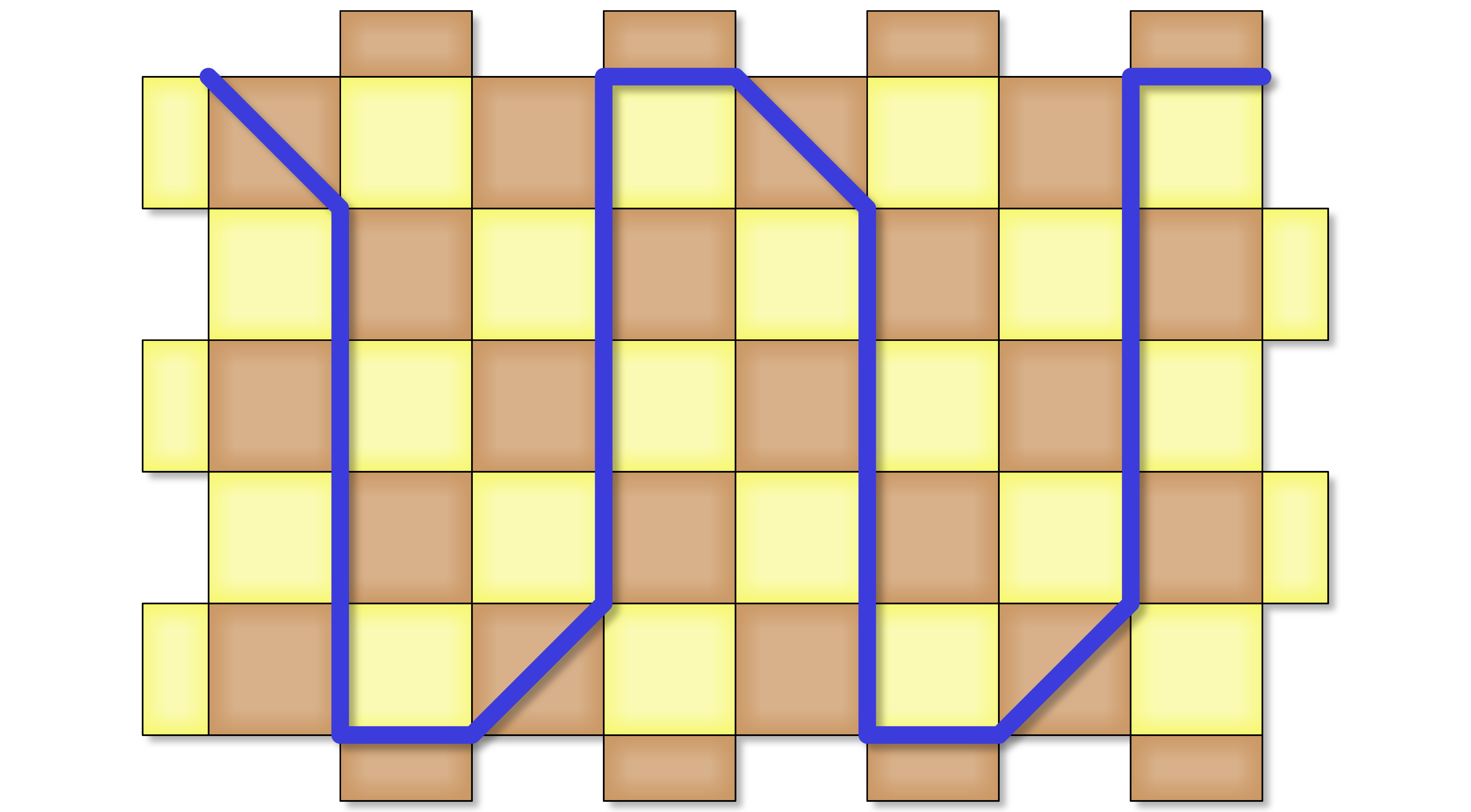
		\caption{}
	\end{subfigure}
	\caption{\label{fig:static-optimized-string}$\overline{Z}$ strings of the optimized static scheme for the rotated planar surface code with parameters $(5,9)$ in (a) and $(6,9)$ in (b). Note that the $X$-plaquettes (dark plaquettes) at the peaks of the wave must be crossed diagonally to ensure that the string commutes with all plaquettes.}
\end{figure*}

\begin{figure*}
	\def\svgwidth{\textwidth}
	\input{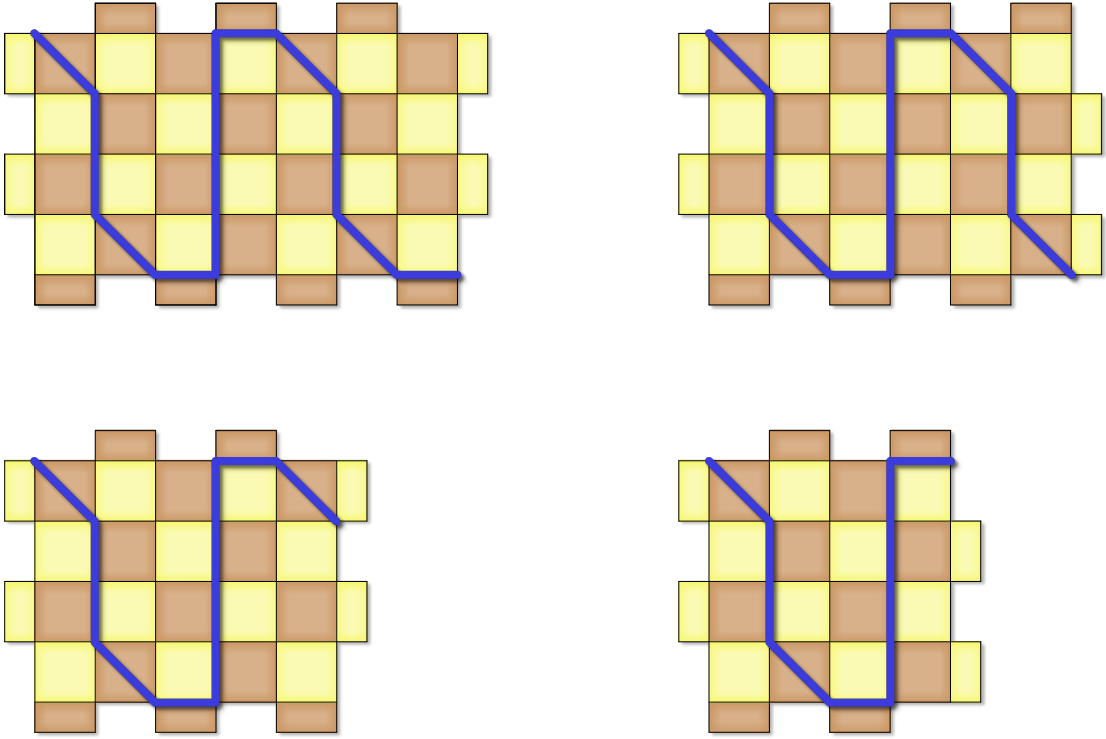}
	\caption{\label{fig:static-optimized-string-trunc}$\overline{Z}$ strings of the optimized static scheme for the rotated planar surface code where the lattice width $m$ does not fully accommodate the final wave period. The wave-like pattern terminates at the lattice boundary by including only as much of the final period as fits within the given length. Note that the string remains a valid $\overline{Z}$ operator.}
\end{figure*}

To argue that the success probability for each BM along the string, up to the first success, is $\mathbb{P}_B$, we observe the following. A static scheme is a special case of a feedforward-based scheme. Since all measurements commute, a static scheme is equivalent to any sequential scheme that applies fixed measurements to each qubit, independent of the results on other qubits. The defining characteristic of a static scheme is that the measurements are predetermined and do not depend on prior outcomes.

For the following argument, we assume that the BMs are performed sequentially along the string before all qubits outside the string are measured. In Fig.~\ref{fig:static-optimized-generators}, we illustrate the transformation of the stabilizer generators. Similar to the discussion of the feedforward-based scheme in the previous section, each $Z$-BM along the string anticommutes with two plaquettes, which are the one replaced by the previous measurement and one additional plaquette. Thus, each measurement replaces the next $X$-plaquette along the string. Since every vertex also touches at least one $Z$-plaquette, we can conclude that for each qubit along the string, every single-qubit Pauli operator anticommutes with at least one plaquette. The only exception is the very last qubit of the string, where the $Z$ operator completes the logical $\overline{Z}$ operator, which has equal probabilities for both of its possible outcomes. Therefore, using Lem.~\ref{lem:successful-bell-measurment}, we conclude that the success probability of each BM along the string up to the first success is $\mathbb{P}_B$, independent of the outcomes of other measurements along the string.

\begin{figure*}
	\begin{subfigure}[c]{0.45\textwidth}
		\def\svgwidth{\textwidth}
		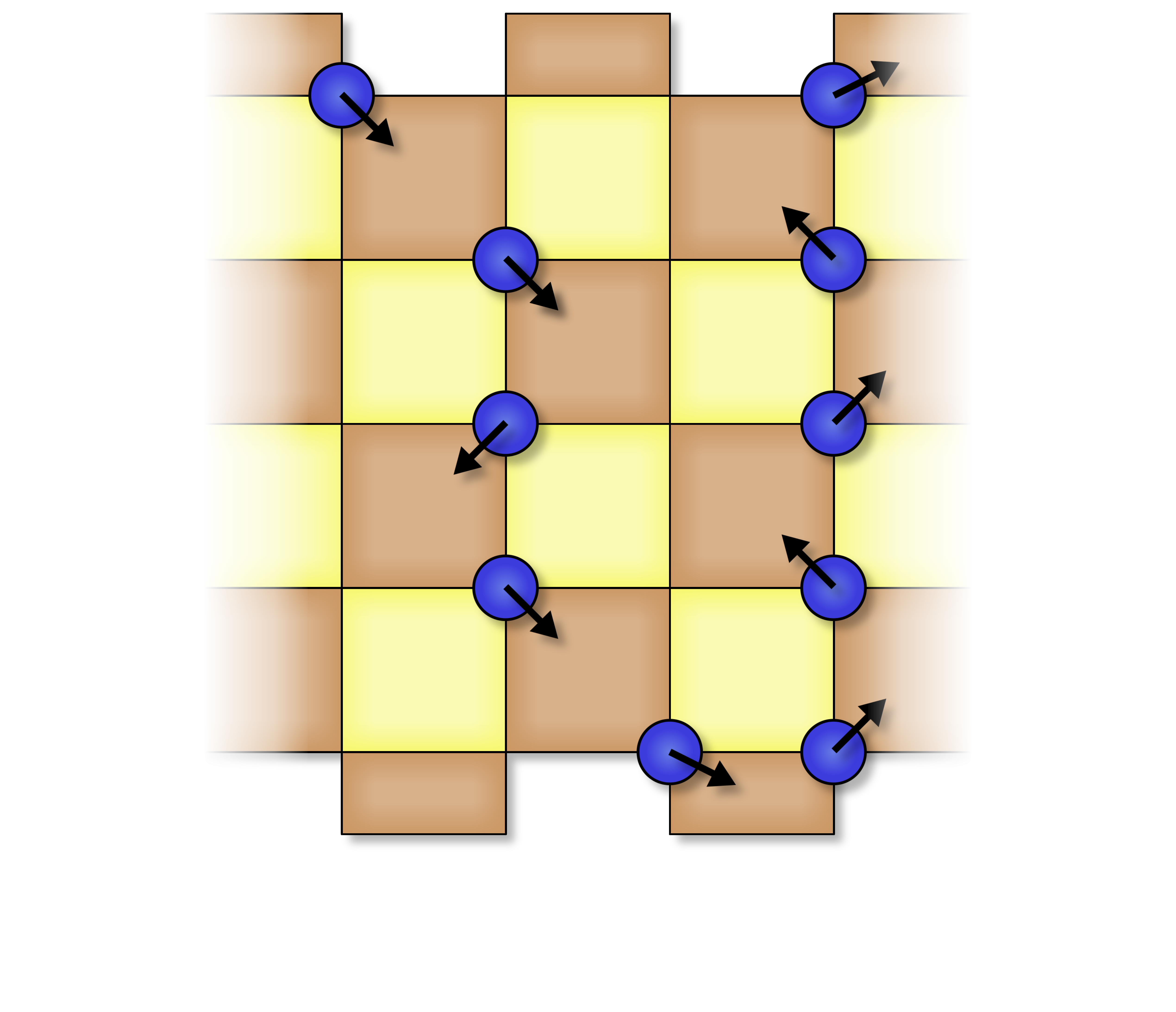
		\caption{}
	\end{subfigure}
	\begin{subfigure}[c]{0.45\textwidth}
		\def\svgwidth{\textwidth}
		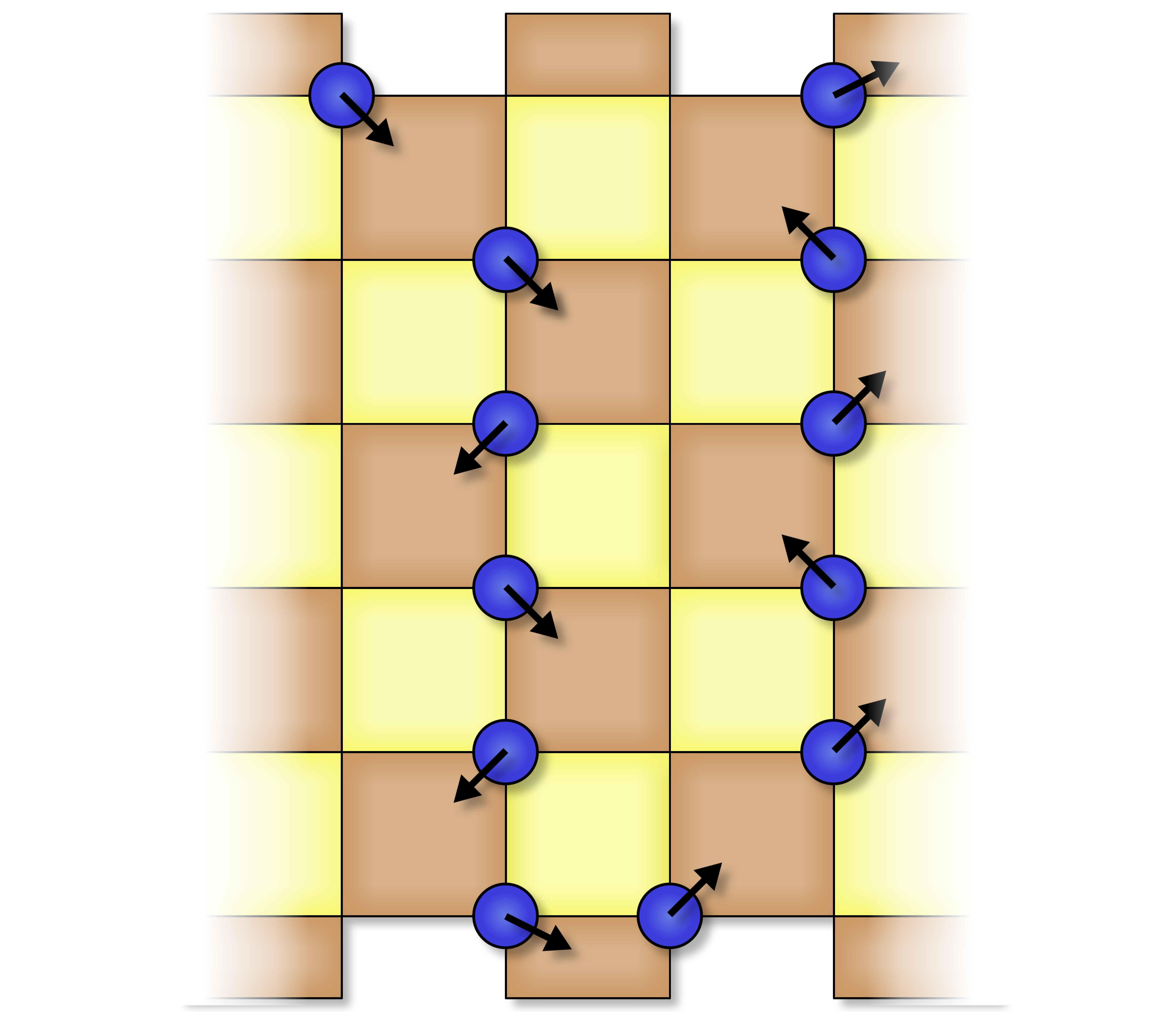
		\caption{}
	\end{subfigure}
	\caption{\label{fig:static-optimized-generators}Schematic representation of the transformation of stabilizer generators in the static scheme for the rotated planar surface code. For each vertex, the black arrow points to the plaquette whose stabilizer generator is replaced by the BM on that vertex. Note that each vertex touches only one plaquette of the opposite type that has not been replaced by a previous measurement. The surface fades toward the left and right to indicate that only a single period of the wave pattern is shown, while the code may extend arbitrarily far in both directions.}
\end{figure*}

If a successful BM occurs on any qubit along the $\overline{Z}$ string, the $\overline{X}$ measurement can be completed. This is illustrated in Fig.~\ref{fig:static-optimized-sol}, which shows, for each qubit along one period of the $\overline{Z}$ string, an $\overline{X}$ string that does not conflict with the $\overline{Z}$ string outside that qubit. With a slight modification for vertices on the right boundary, these $\overline{X}$ strings remain valid even when the $\overline{Z}$ string period is truncated at the right boundary, as we will show in the following argument. Note that for no vertex the $\overline{X}$ string extends more than one qubit-column to the right, thus, for all vertices not on the right boundary, the solutions remain valid. Note that, for any truncation of the $\overline{Z}$ string, it touches the right boundary at only one vertex. In this case, if a success occurs on the qubit of the truncated $\overline{Z}$ string at the right boundary, the $\overline{X}$ string can be measured by performing $X$-BMs on the entire right boundary, forming a valid $\overline{X}$ measurement.

Furthermore, since any $\overline{X}$ string intersects any $\overline{Z}$ string at least once, the logical BM will always fail if no transversal BM along the $\overline{Z}$ string succeeds.

\begin{figure*}
		\begin{subfigure}[c]{0.3\textwidth}
			\def\svgwidth{\textwidth}
			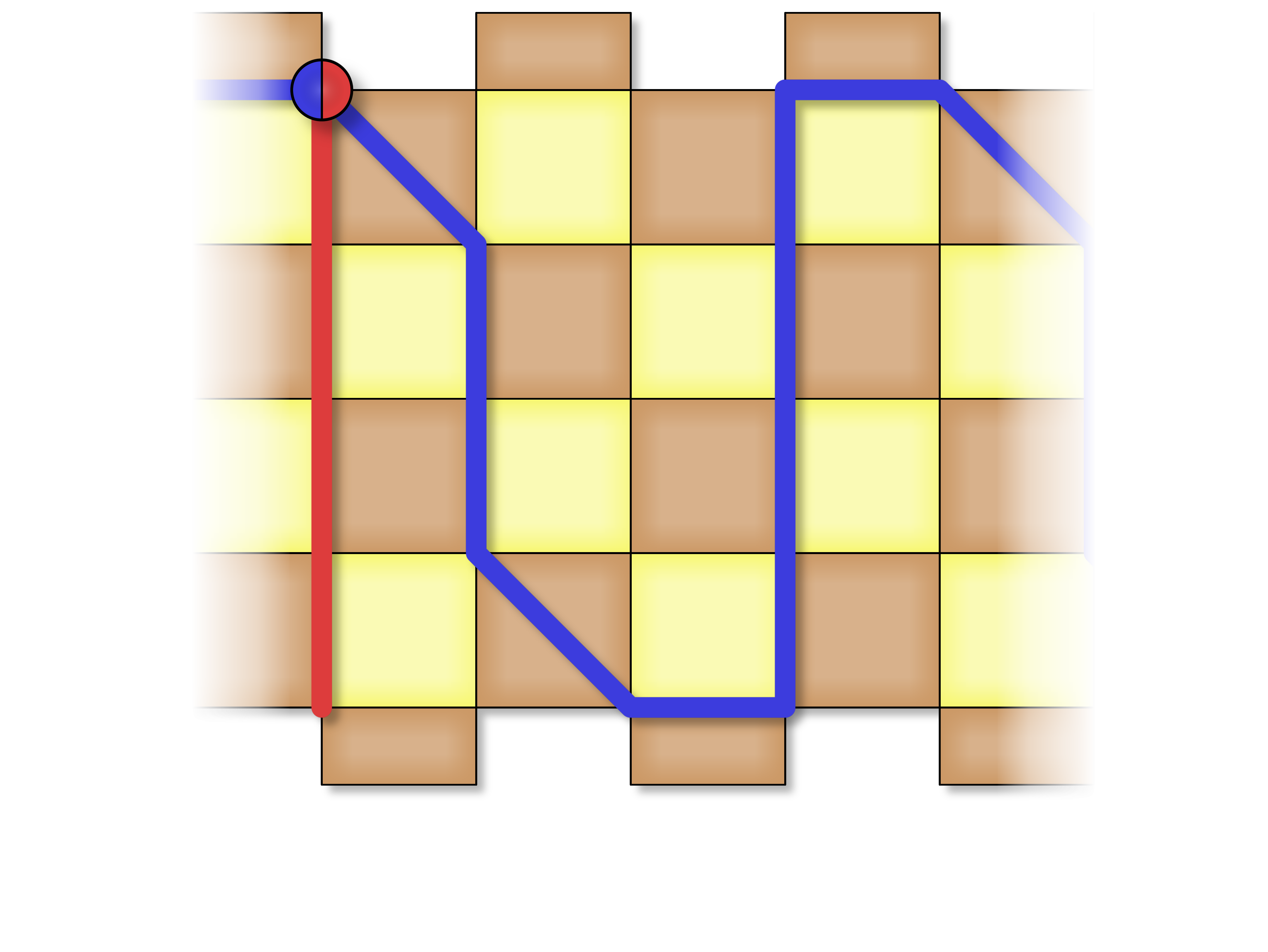
		\end{subfigure}
		\begin{subfigure}[c]{0.3\textwidth}
			\def\svgwidth{\textwidth}
			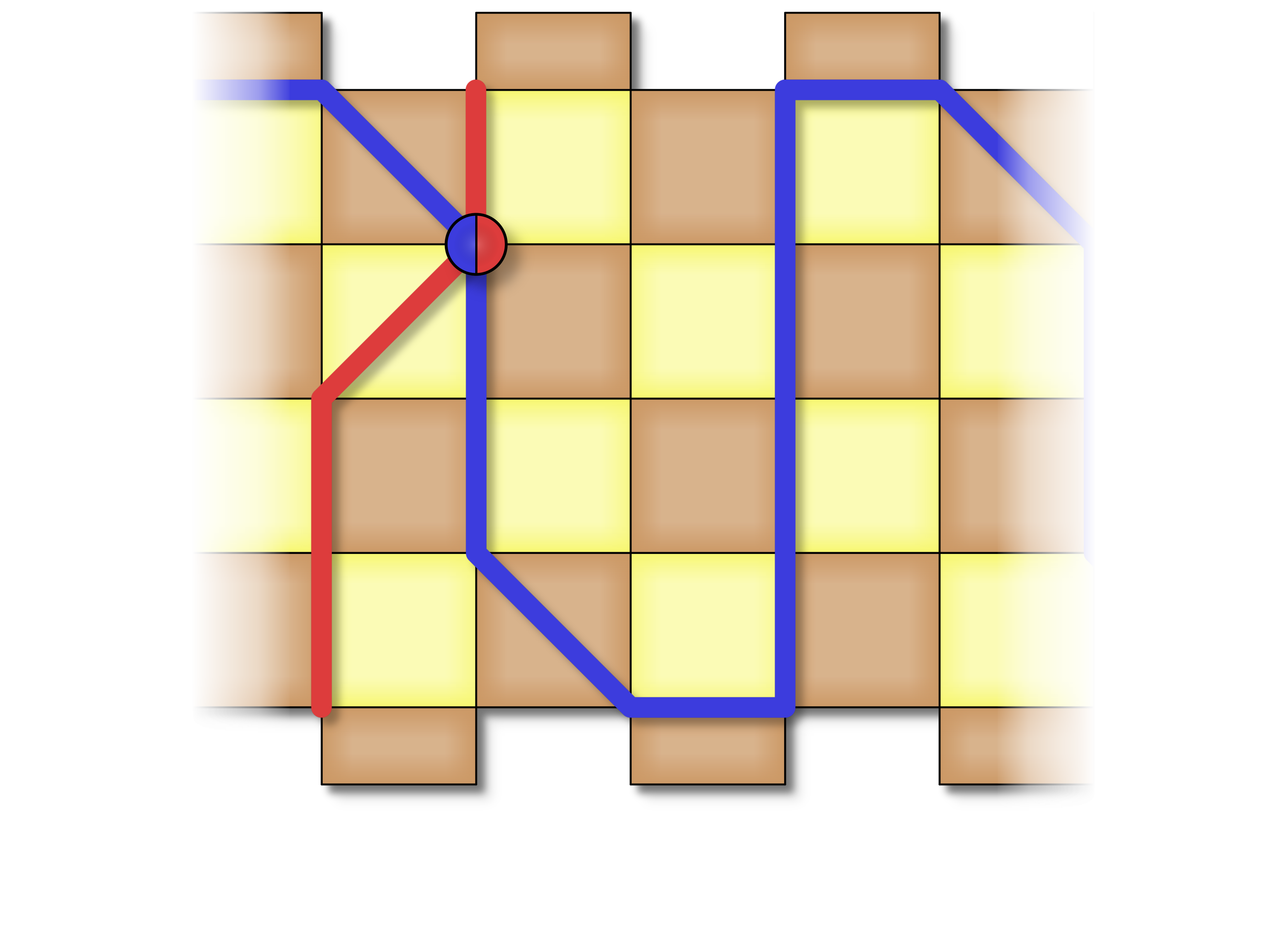
		\end{subfigure}
		\begin{subfigure}[c]{0.3\textwidth}
			\def\svgwidth{\textwidth}
			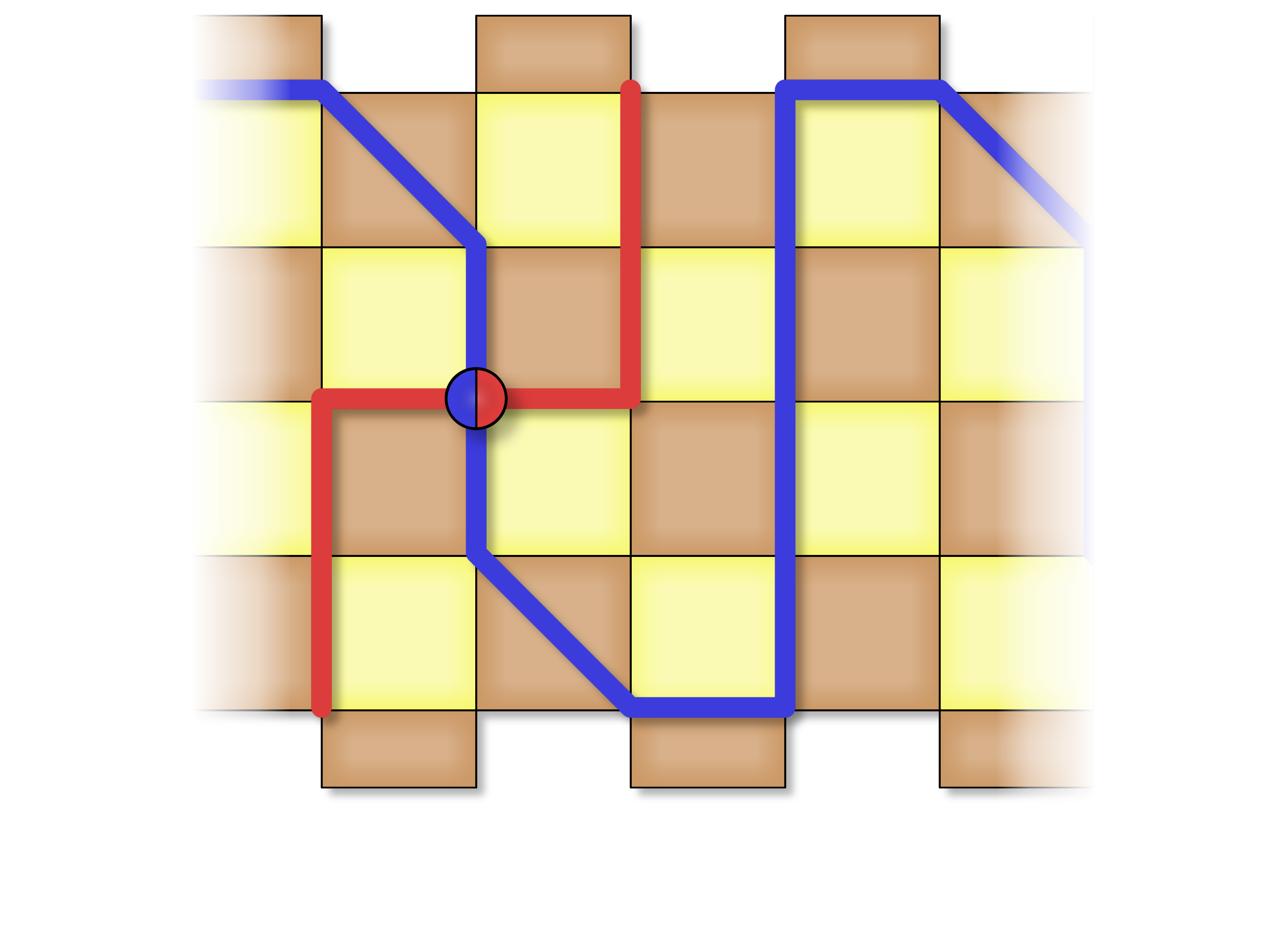
		\end{subfigure}
		\begin{subfigure}[c]{0.3\textwidth}
			\def\svgwidth{\textwidth}
			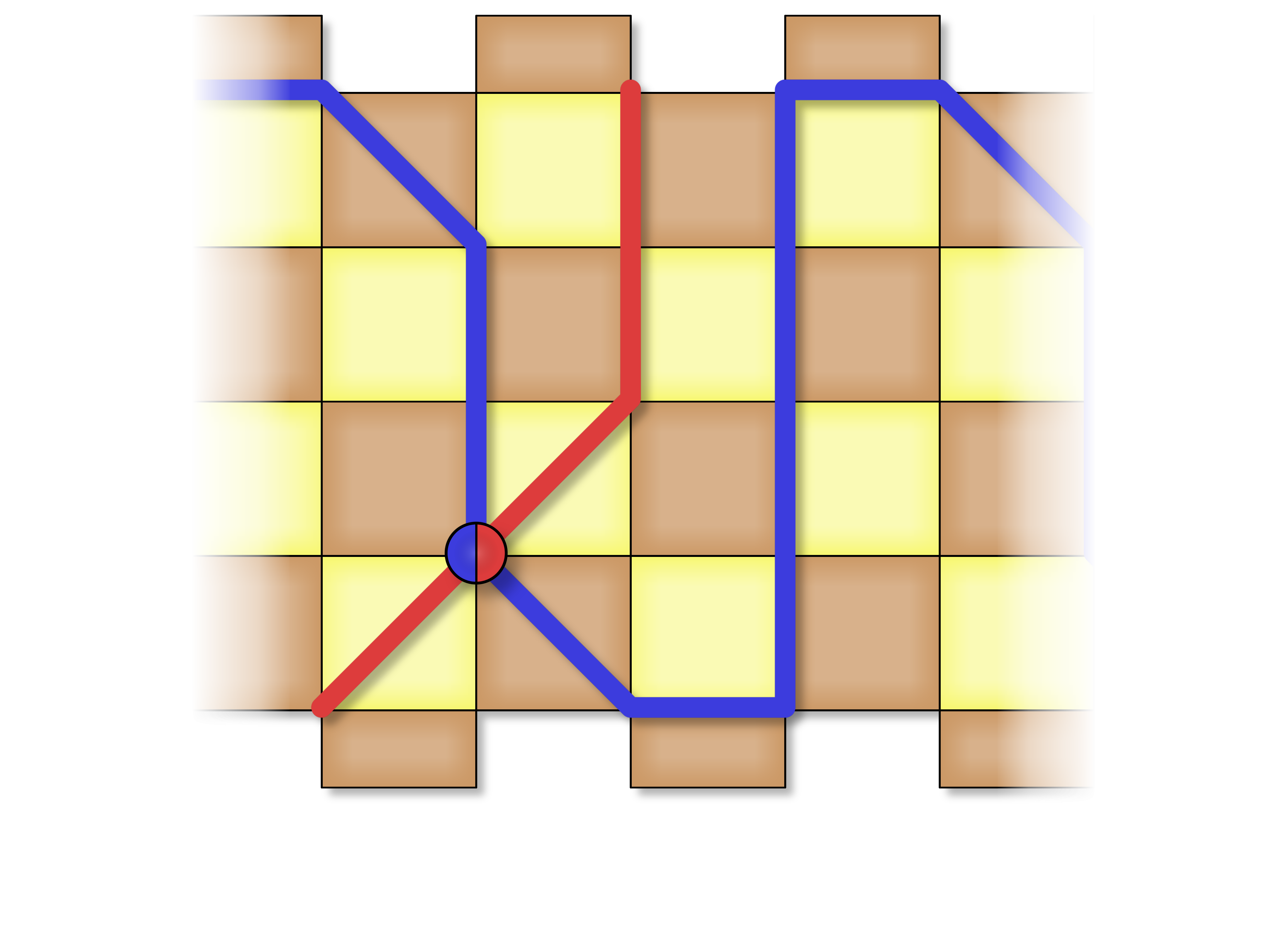
		\end{subfigure}
		\begin{subfigure}[c]{0.3\textwidth}
			\def\svgwidth{\textwidth}
			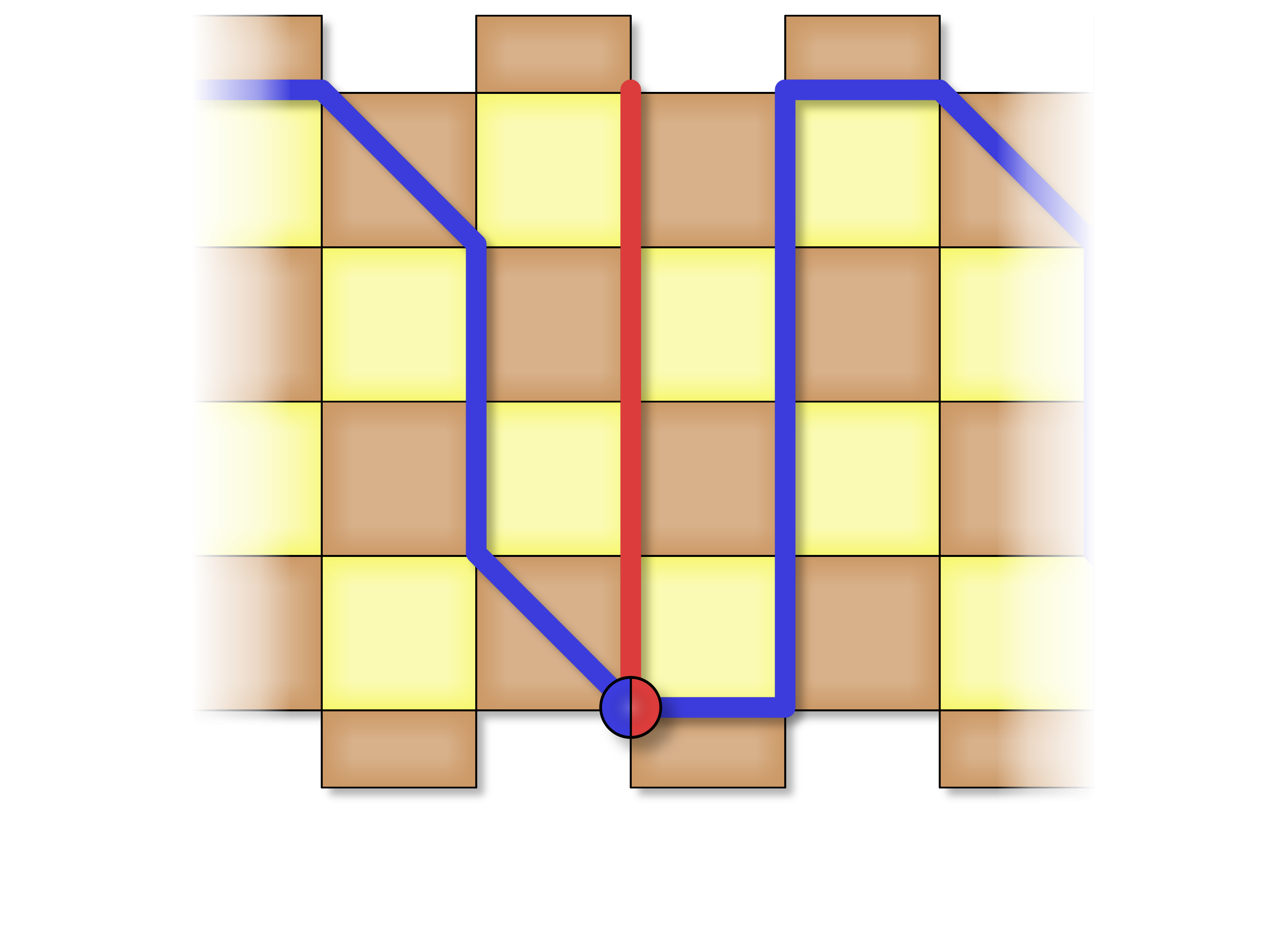
		\end{subfigure}
		\begin{subfigure}[c]{0.3\textwidth}
			\def\svgwidth{\textwidth}
			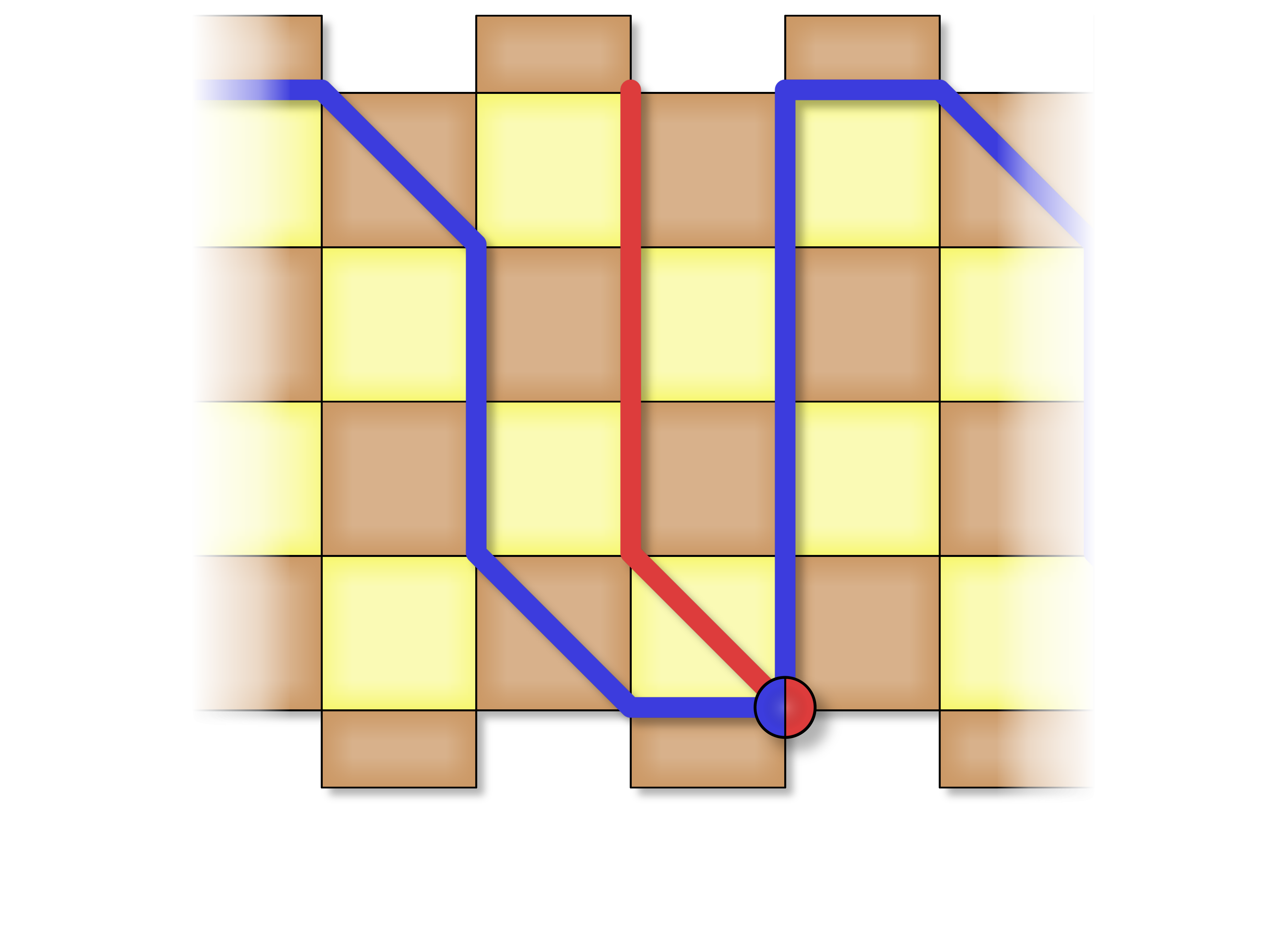
		\end{subfigure}
		\begin{subfigure}[c]{0.3\textwidth}
			\def\svgwidth{\textwidth}
			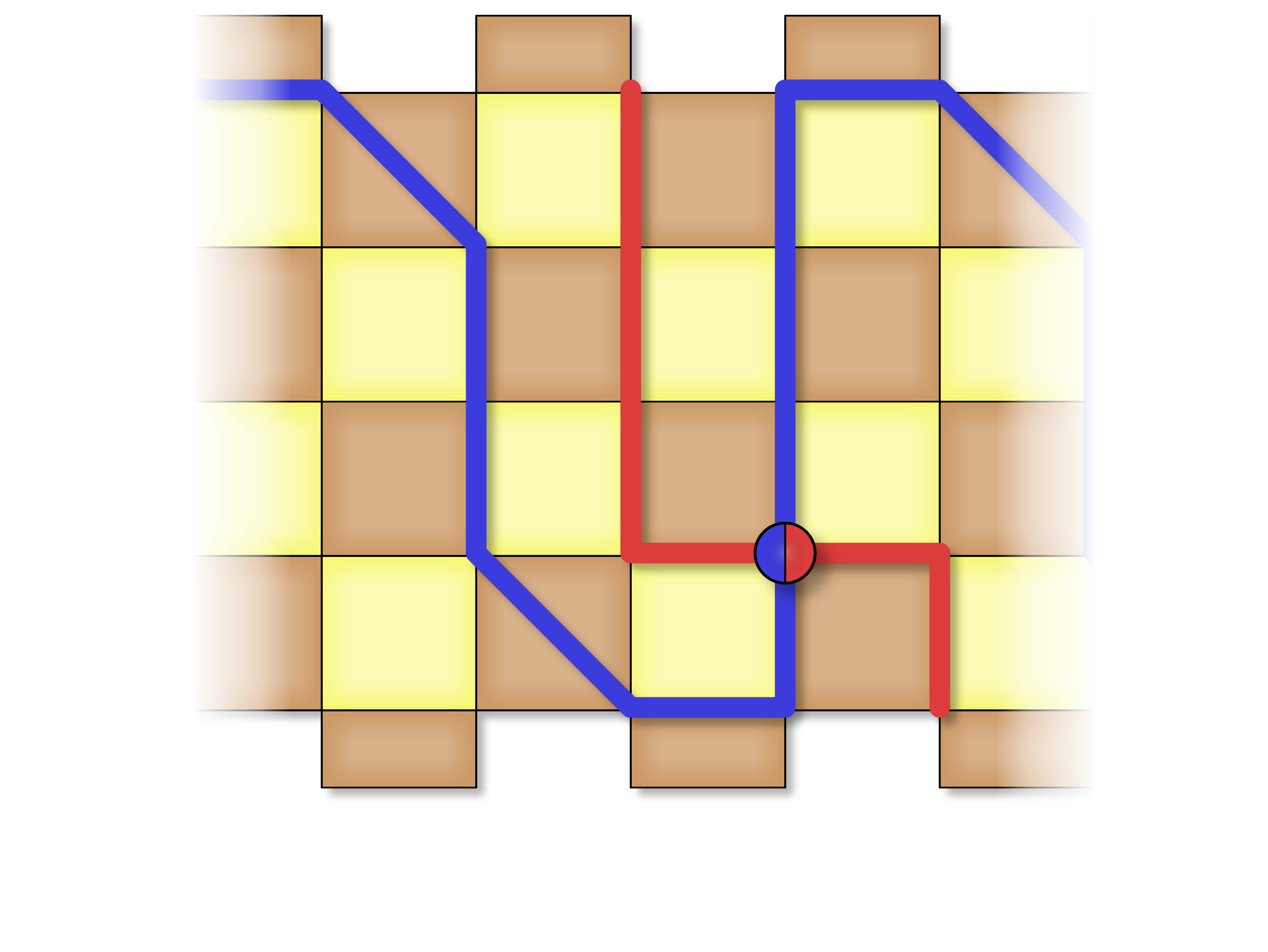
		\end{subfigure}
		\begin{subfigure}[c]{0.3\textwidth}
			\def\svgwidth{\textwidth}
			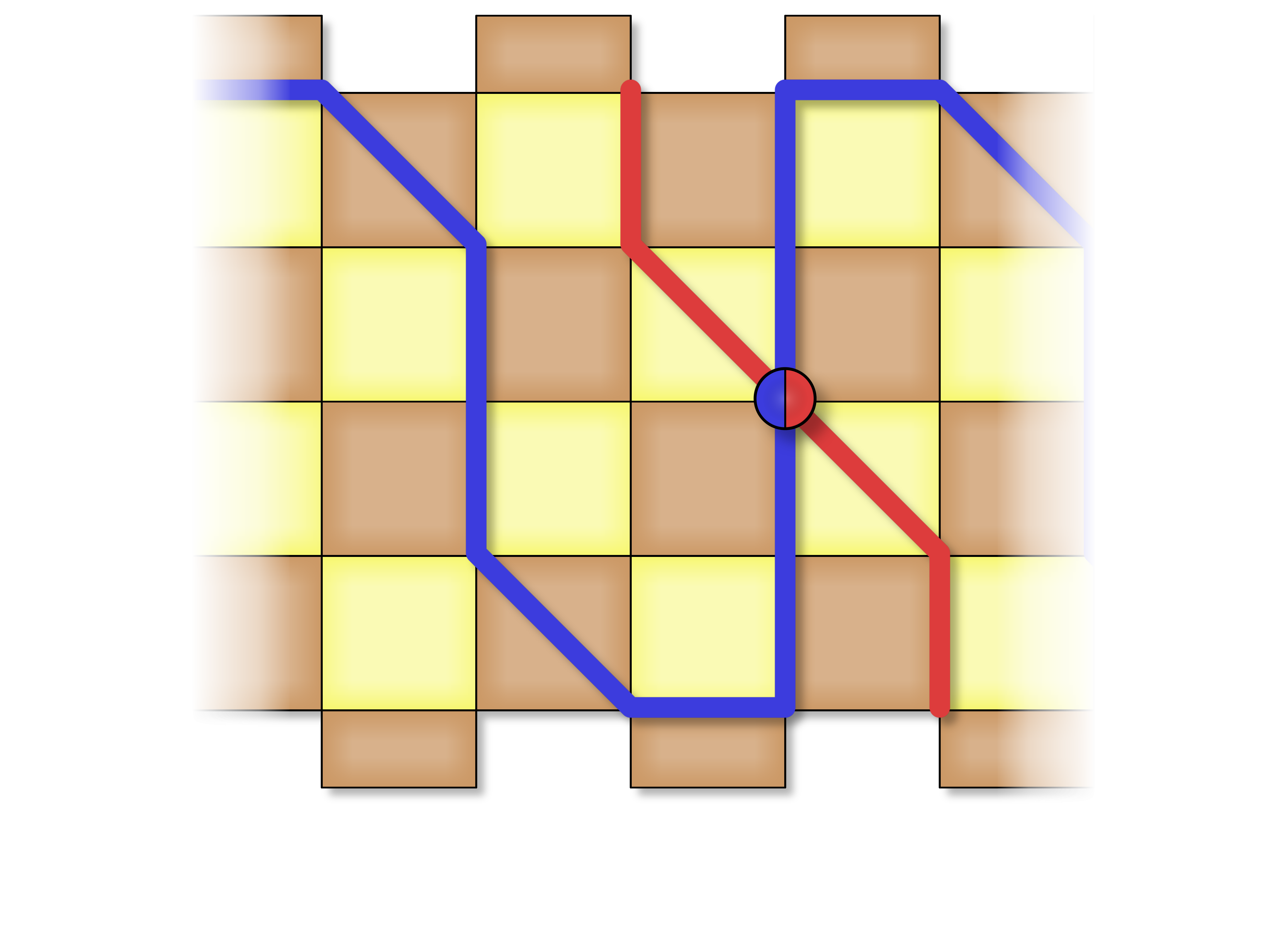
		\end{subfigure}
		\begin{subfigure}[c]{0.3\textwidth}
			\def\svgwidth{\textwidth}
			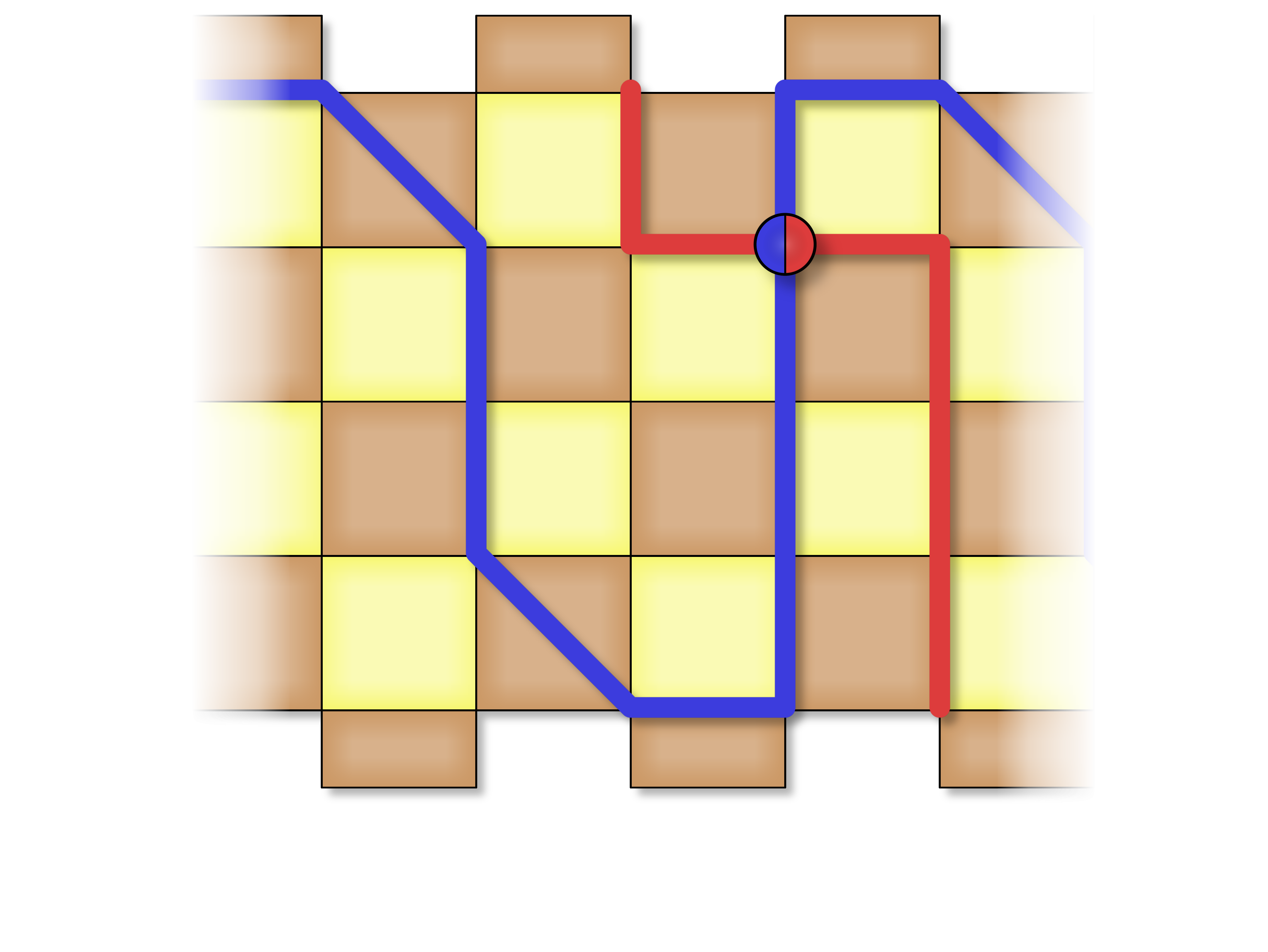
		\end{subfigure}
		\begin{subfigure}[c]{0.3\textwidth}
			\def\svgwidth{\textwidth}
			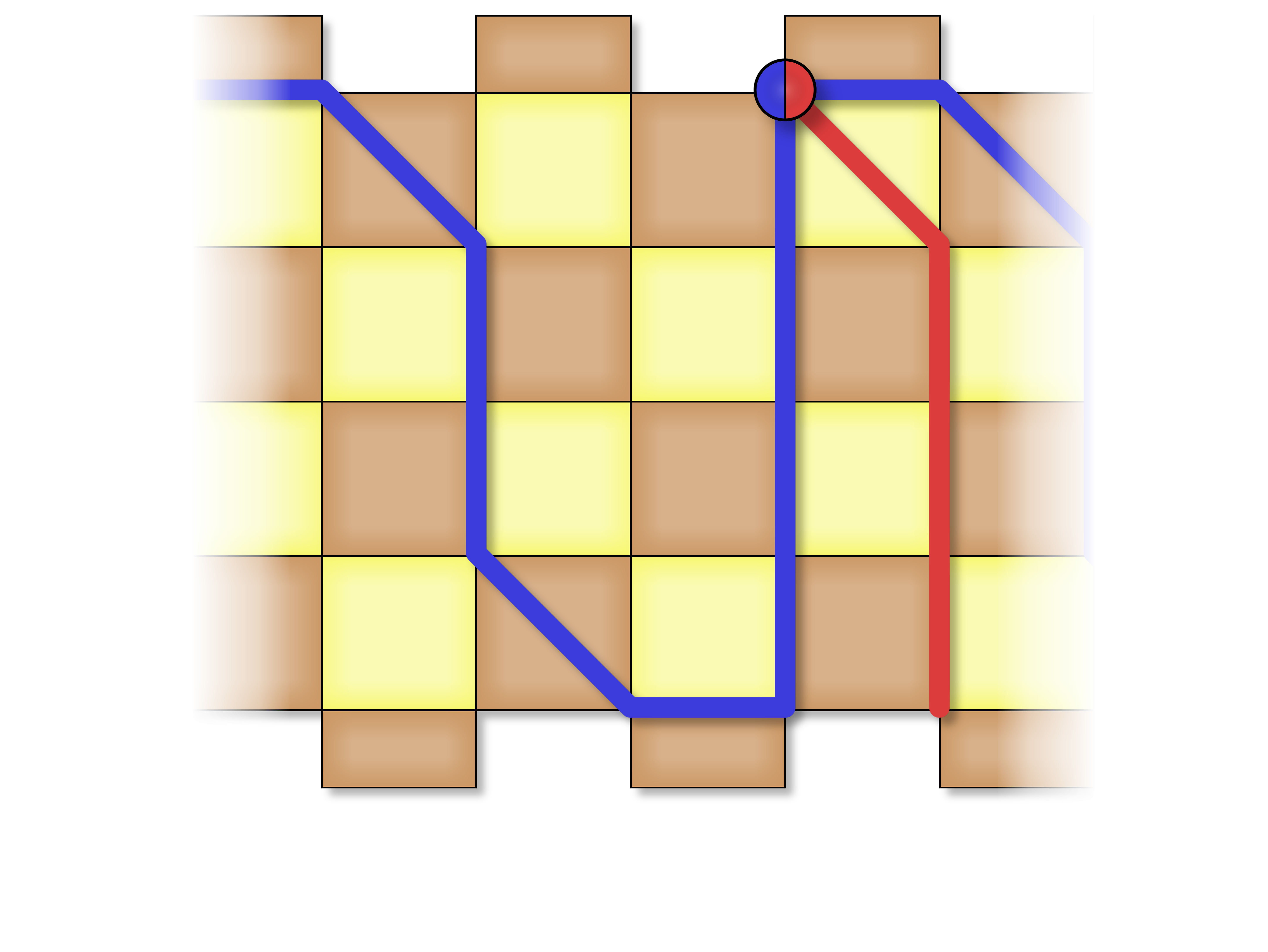
		\end{subfigure}
	\caption{\label{fig:static-optimized-sol}Logical $\overline{X}$ operator corresponding to each possible success vertex along the measured $\overline{Z}$ string for the optimized static scheme for the rotated planar surface code. Qubits filled with both red and blue indicate a successful physical BM. Red and blue strings indicate the measured $\overline{X}$ and $\overline{Z}$ string, respectively. The surface fades toward the left and right to indicate that only a single period of the wave pattern is shown, while the code may extend arbitrarily far in both directions.}
\end{figure*}

To calculate the success probability of the scheme, we compute the weight of the $\overline{Z}$ string, as illustrated in Fig.~\ref{fig:static-optimized-weight}. One period of the wave pattern is divided into four segments, and the weights of each segment is calculated individually. The total weight of the string is then obtained by summing the weight of each segment, multiplied by the number of times it occurs along the string. The number of occurrences for each segment can be determined by dividing the length of the lattice by four, accounting for the offset determined by the position in the wave period, and rounding down to the nearest integer. The resulting formula for the total weight of the string is given by:
\begin{equation}
	\begin{aligned}
	& W_Z(r,m) = & \\
		& \begin{dcases}
					\begin{aligned}
					1	& + \lfloor \frac{m+2}{4} \rfloor + (r-2) \lfloor \frac{m+1}{4} \rfloor	\\
						& + \lfloor \frac{m}{4} \rfloor + r \lfloor \frac{m-1}{4} \rfloor 		
					\end{aligned}	 
					& \quad \text{if } r \text{ odd,} 	\\
					\\
					\begin{aligned}
					1	& + \lfloor \frac{m+2}{4} \rfloor + (r-1) \lfloor \frac{m+1}{4} \rfloor	\\
						& + \lfloor \frac{m}{4} \rfloor + (r-1) \lfloor \frac{m-1}{4} \rfloor
					\end{aligned}
					& \quad \text{if } r \text{ even.} 	\\
		\end{dcases}
	\end{aligned}
\end{equation}

\begin{figure*}
		\begin{subfigure}[c]{0.35\textwidth}
			\def\svgwidth{\textwidth}
\begingroup%
  \makeatletter%
  \providecommand\color[2][]{%
    \errmessage{(Inkscape) Color is used for the text in Inkscape, but the package 'color.sty' is not loaded}%
    \renewcommand\color[2][]{}%
  }%
  \providecommand\transparent[1]{%
    \errmessage{(Inkscape) Transparency is used (non-zero) for the text in Inkscape, but the package 'transparent.sty' is not loaded}%
    \renewcommand\transparent[1]{}%
  }%
  \providecommand\rotatebox[2]{#2}%
  \newcommand*\fsize{\dimexpr\f@size pt\relax}%
  \newcommand*\lineheight[1]{\fontsize{\fsize}{#1\fsize}\selectfont}%
  \ifx\svgwidth\undefined%
    \setlength{\unitlength}{2777.95275591bp}%
    \ifx\svgscale\undefined%
      \relax%
    \else%
      \setlength{\unitlength}{\unitlength * \real{\svgscale}}%
    \fi%
  \else%
    \setlength{\unitlength}{\svgwidth}%
  \fi%
  \global\let\svgwidth\undefined%
  \global\let\svgscale\undefined%
  \makeatother%
  \begin{picture}(1,1)%
    \lineheight{1}%
    \setlength\tabcolsep{0pt}%
    \put(0,0){\includegraphics[width=\unitlength,page=1]{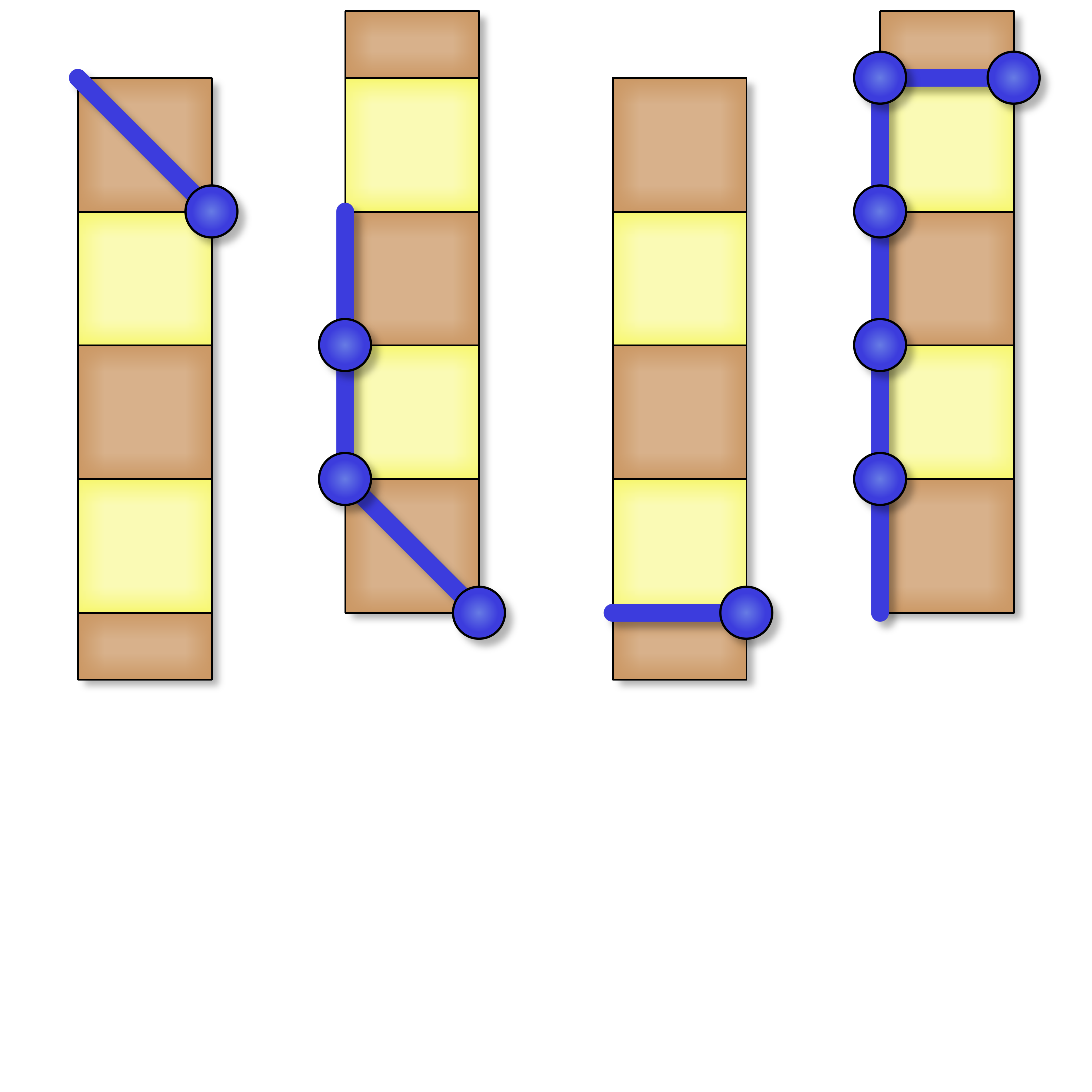}}%
    \put(0.13272433,0.13026933){\color[rgb]{0,0,0}\makebox(0,0)[t]{\lineheight{0}\smash{\begin{tabular}[t]{c}\makebox(0,0){$1$}\end{tabular}}}}%
    \put(0.3776223,0.13026933){\color[rgb]{0,0,0}\makebox(0,0)[t]{\lineheight{0}\smash{\begin{tabular}[t]{c}\makebox(0,0){$r-2$}\end{tabular}}}}%
    \put(0.86741819,0.13026933){\color[rgb]{0,0,0}\makebox(0,0)[t]{\lineheight{0}\smash{\begin{tabular}[t]{c}\makebox(0,0){$r$}\end{tabular}}}}%
    \put(0.62252023,0.13026933){\color[rgb]{0,0,0}\makebox(0,0)[t]{\lineheight{0}\smash{\begin{tabular}[t]{c}\makebox(0,0){$1$}\end{tabular}}}}%
  \end{picture}%
\endgroup%

			\caption{}
		\end{subfigure}
		\hspace{15mm}
		\begin{subfigure}[c]{0.35\textwidth}
			\def\svgwidth{\textwidth}
\begingroup%
  \makeatletter%
  \providecommand\color[2][]{%
    \errmessage{(Inkscape) Color is used for the text in Inkscape, but the package 'color.sty' is not loaded}%
    \renewcommand\color[2][]{}%
  }%
  \providecommand\transparent[1]{%
    \errmessage{(Inkscape) Transparency is used (non-zero) for the text in Inkscape, but the package 'transparent.sty' is not loaded}%
    \renewcommand\transparent[1]{}%
  }%
  \providecommand\rotatebox[2]{#2}%
  \newcommand*\fsize{\dimexpr\f@size pt\relax}%
  \newcommand*\lineheight[1]{\fontsize{\fsize}{#1\fsize}\selectfont}%
  \ifx\svgwidth\undefined%
    \setlength{\unitlength}{2777.95275591bp}%
    \ifx\svgscale\undefined%
      \relax%
    \else%
      \setlength{\unitlength}{\unitlength * \real{\svgscale}}%
    \fi%
  \else%
    \setlength{\unitlength}{\svgwidth}%
  \fi%
  \global\let\svgwidth\undefined%
  \global\let\svgscale\undefined%
  \makeatother%
  \begin{picture}(1,1)%
    \lineheight{1}%
    \setlength\tabcolsep{0pt}%
    \put(0,0){\includegraphics[width=\unitlength,page=1]{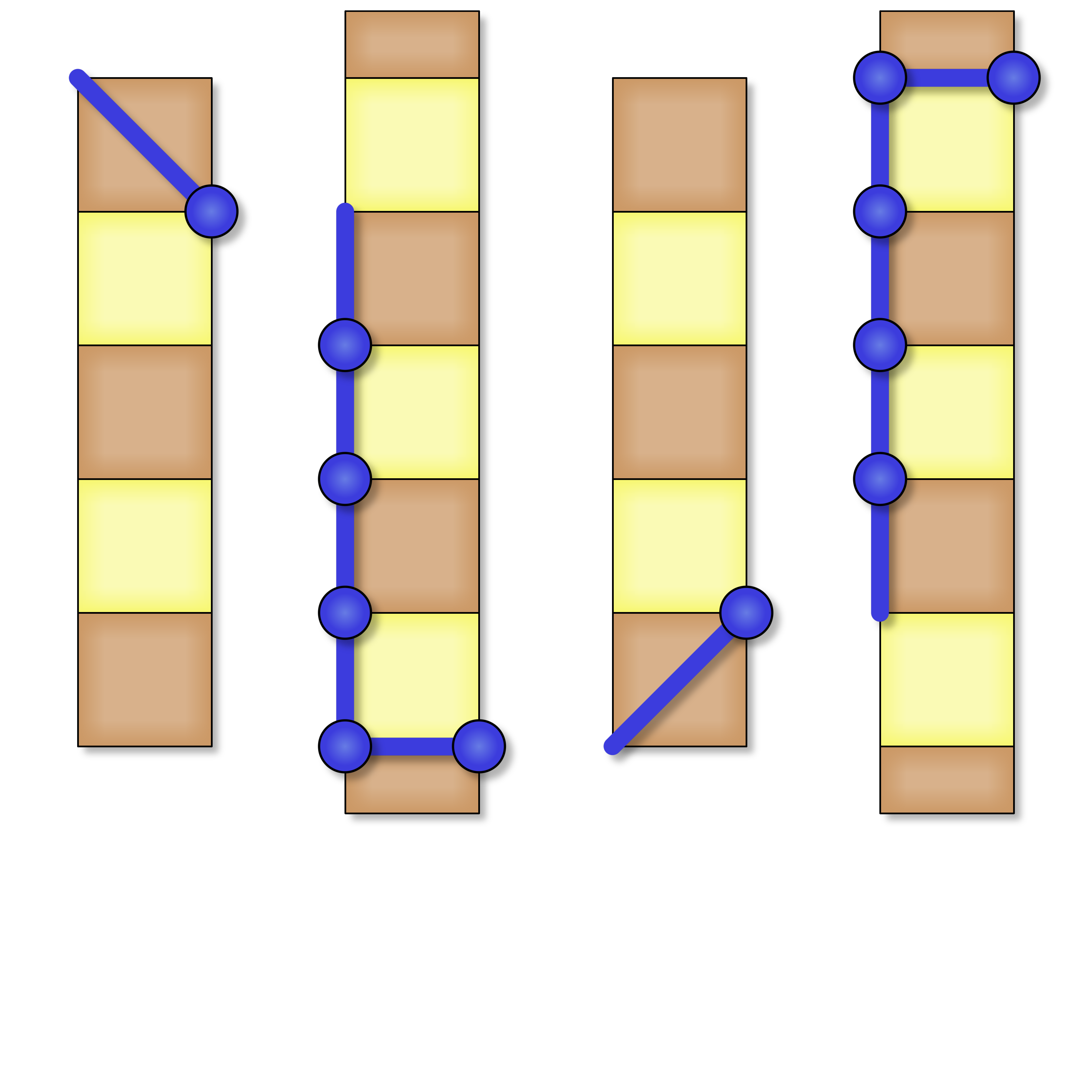}}%
    \put(0.13272434,0.13026933){\color[rgb]{0,0,0}\makebox(0,0)[t]{\lineheight{0}\smash{\begin{tabular}[t]{c}\makebox(0,0){$1$}\end{tabular}}}}%
    \put(0.3776223,0.13026933){\color[rgb]{0,0,0}\makebox(0,0)[t]{\lineheight{0}\smash{\begin{tabular}[t]{c}\makebox(0,0){$r-1$}\end{tabular}}}}%
    \put(0.86741819,0.13026933){\color[rgb]{0,0,0}\makebox(0,0)[t]{\lineheight{0}\smash{\begin{tabular}[t]{c}\makebox(0,0){$r-1$}\end{tabular}}}}%
    \put(0.62252023,0.13026933){\color[rgb]{0,0,0}\makebox(0,0)[t]{\lineheight{0}\smash{\begin{tabular}[t]{c}\makebox(0,0){$1$}\end{tabular}}}}%
  \end{picture}%
\endgroup%

			\caption{}
		\end{subfigure}
	\caption{\label{fig:static-optimized-weight}Weights for each one plaquette wide segment of a full period of the $\overline{Z}$ string for the optimized static scheme for the rotated planar surface code. The strings are shown for even (a) and odd (b) values of $r$. Note that the first qubit of the string is missing in each segment. This ensures that when the segments are connected, the qubit at the connection point is not counted twice.}
\end{figure*}

In conclusion the success probability, of the scheme is
\begin{equation}
	P_{\text{static, optimized}}(r,m) = 1 - (1-\mathbb{P}_B)^{W_Z(r,m)},
\end{equation}
which is the probability that at least one transversal BM along the $\overline{Z}$ string succeeds. While we do not claim that our optimized static scheme achieves the global optimum, it is the most efficient static scheme we have found.

Toward the end of this section, we compare schemes for planar surface codes. To the best of the authors' knowledge, no logical BM schemes for the rotated planar surface code have been published to date. Ref.~\cite{PhysRevA.99.062308} introduces a static logical BM scheme for the standard planar surface code and provides methods to calculate its success probability. The standard planar surface code requires significantly more qubits for the same code distance than the rotated planar surface code~\cite{Horsman_2012}. Both codes are characterized by two parameters, $r$ and $m$, which define the distances of the logical $\overline{X}$ and $\overline{Z}$ operators, respectively. However, while the standard planar surface code requires $2rm+1-r-m$ qubits, the rotated planar surface code requires only $rm$ qubits. To compare their performances, we consider the rotated $(5,5)$ and the standard $(4,4)$ code, as both utilize the same number of qubits, 25. For comparability, we assume standard linear-optics BMs with $\mathbb{P}_B = \frac{1}{2}$. In this case, the rotated code not only achieves a higher code distance but also exhibits a significantly higher success probability when using our optimized static scheme of $\frac{2047}{2048}$ compared to the improved scheme in Ref.~\cite{PhysRevA.99.062308}, which achieves only $\frac{127}{128}$. Both schemes significantly outperform the simplest scheme, which uses only $Z$-BMs and achieves a $\frac{6625}{8192}$ success probability for the rotated planar surface code.

In Fig.~\ref{fig:rotated-planar-surface-codes-comparison}, we compare the performance of the three approaches for the rotated planar surface code, namely the simplest scheme, which uses transversal $Z$-BMs on all qubits, our optimized static scheme, and the feedforward-based scheme described in Sec.~\ref{sec:rotated-planar-surface-code}. To evaluate the simplest scheme, we implemented an algorithm based on the methods of Ref.~\cite{PhysRevA.99.062308} to compute its logical BM success probability. 

\begin{figure}
	\def\svgwidth{0.45\textwidth}
\begingroup%
  \makeatletter%
  \providecommand\color[2][]{%
    \errmessage{(Inkscape) Color is used for the text in Inkscape, but the package 'color.sty' is not loaded}%
    \renewcommand\color[2][]{}%
  }%
  \providecommand\transparent[1]{%
    \errmessage{(Inkscape) Transparency is used (non-zero) for the text in Inkscape, but the package 'transparent.sty' is not loaded}%
    \renewcommand\transparent[1]{}%
  }%
  \providecommand\rotatebox[2]{#2}%
  \newcommand*\fsize{\dimexpr\f@size pt\relax}%
  \newcommand*\lineheight[1]{\fontsize{\fsize}{#1\fsize}\selectfont}%
  \ifx\svgwidth\undefined%
    \setlength{\unitlength}{637.79527559bp}%
    \ifx\svgscale\undefined%
      \relax%
    \else%
      \setlength{\unitlength}{\unitlength * \real{\svgscale}}%
    \fi%
  \else%
    \setlength{\unitlength}{\svgwidth}%
  \fi%
  \global\let\svgwidth\undefined%
  \global\let\svgscale\undefined%
  \makeatother%
  \begin{picture}(1,0.46666667)%
    \lineheight{1}%
    \setlength\tabcolsep{0pt}%
    \put(0,0){\includegraphics[width=\unitlength,page=1]{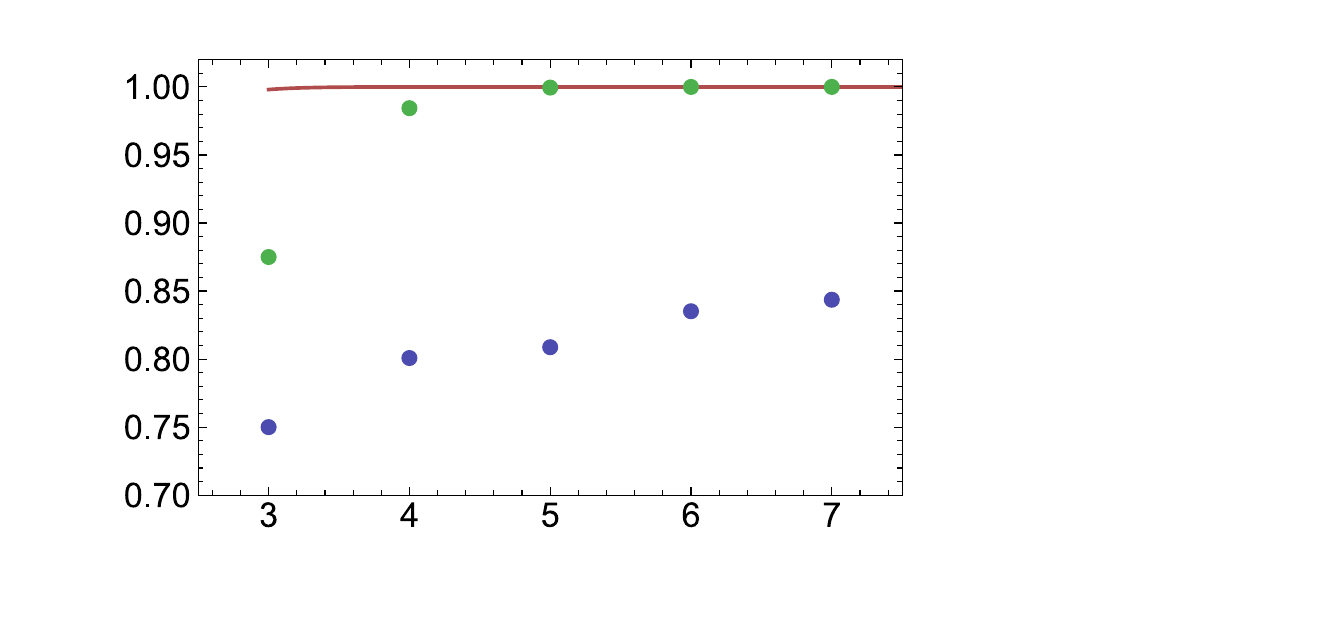}}%
    \put(0.41361164,0.03154475){\color[rgb]{0,0,0}\makebox(0,0)[t]{\lineheight{0}\smash{\begin{tabular}[t]{c}\makebox(0,0){$d$}\end{tabular}}}}%
    \put(0.05197525,0.25843341){\color[rgb]{0,0,0}\rotatebox{90}{\makebox(0,0)[t]{\lineheight{0}\smash{\begin{tabular}[t]{c}\makebox(0,0){$p$}\end{tabular}}}}}%
    \put(0,0){\includegraphics[width=\unitlength,page=2]{RotatedSurfaceComparisonPlot.pdf}}%
    \put(0.75520928,0.27103368){\color[rgb]{0,0,0}\makebox(0,0)[lt]{\lineheight{0}\smash{\begin{tabular}[t]{l}\makebox[0pt][l]{\raisebox{-0.5\height}{adaptive}}\end{tabular}}}}%
    \put(0.75520928,0.20436676){\color[rgb]{0,0,0}\makebox(0,0)[lt]{\lineheight{0}\smash{\begin{tabular}[t]{l}\makebox[0pt][l]{\raisebox{-0.5\height}{static, optimized}}\end{tabular}}}}%
    \put(0.75520928,0.13770009){\color[rgb]{0,0,0}\makebox(0,0)[lt]{\lineheight{0}\smash{\begin{tabular}[t]{l}\makebox[0pt][l]{\raisebox{-0.5\height}{static, simple}}\end{tabular}}}}%
  \end{picture}%
\endgroup%

	\caption{\label{fig:rotated-planar-surface-codes-comparison}Comparison for logical BM schemes for the quadratic rotated planar surface code. The plot displays the success probability $p$ as a function of the code dimension $r=m=d$.}			
\end{figure}

\subsection{Tree code}
\label{sec:tree-code}
In this section, we introduce our measurement scheme for the tree code~\cite{PhysRevLett.97.120501}. Tree states are a subclass of graph states. A graph state is defined from a graph with vertices $V$, where each vertex $v \in V$ corresponds to a qubit. The graph state is then characterized as the quantum state stabilized by the set of stabilizer generators $\{ K_v \mid v \in V \}$, where each stabilizer generator is given by:
\begin{equation}
    K_v = X_v \prod_{w \in \N(v)} Z_w,
\end{equation}
where $\N(v)$ denotes the set of neighbors of $v$.

A tree state is a graph state whose underlying graph is a tree. Let $r \in V$ denote the root of the tree. The stabilizer group $S_c$ of the tree code is generated by the set:
\begin{equation}
    G_c = \{ K_v \mid v \in V \setminus \{r\} \}.
\end{equation}
Removing the stabilizer generator associated with the root vertex $r$ from the graph state increases the degree of freedom of the quantum state, transforming it into a tree code.

The operator $Z_r$ is the Pauli $Z$ operator applied to the qubit at the root vertex $r$ of the tree. It commutes with all stabilizer generators in $G_c$ but is not an element of the stabilizer group $S_c$. Therefore, $Z_r$ is a logical operator of the tree code and specifically belongs to $[\overline{X}]$. This implies that, without further intervention, the tree code would effectively exhibit a distance of one. In particular, a phase flip on the root qubit would induce an undetectable logical error. Additionally, as we will discuss later in this section, any logical operator in $[\overline{X}]$ or $[\overline{Y}]$ requires the Pauli $Z$ information from the root qubit. Consequently, losing the root qubit would irreversibly destroy this logical information, making the code unsuitable for protecting against qubit loss.

Both of these issues can be addressed by measuring the root qubit $r$ immediately using an $X$ measurement, before any errors can occur. This approach effectively resolves the problem by ensuring that the remaining qubits encode information with a non-trivial distance, thereby enabling error correction capabilities. In our schemes this translates to either an $X$-BM or two single-qubit $X$ measurements on the roots of the trees.

To introduce our notation, we define several functions relevant to tree structures. The children function $\C(v)$ denotes the set of direct children of vertex $v$. The ancestor function $\anc(v,n)$ identifies the $n$-th ancestor of vertex $v$; for instance, $\anc(v,1)$ is the parent, $\anc(v,2)$ is the grandparent, and so on. Lastly, the depth function $\depth(v)$ gives the depth of vertex $v$ within the tree, defined as the number of edges on the path from the root to $v$. We define the root as the zero level of the tree, with the leaves occupying the highest level. The depth $d$ of a tree is the maximum depth of any node in the tree.

Let us look at examples of $\overline{X}$ and $\overline{Z}$ operators for the tree code. An example of an $\overline{X}$ operator is:
\begin{equation}
    X_v \prod_{w \in \C(v)} Z_w \in [\overline{X}], \quad \text{where } v \in \C(r).
    \label{eq:tree-X-example}
\end{equation}
An example of a $\overline{Z}$ operator is:
\begin{equation}
    X_r \prod_{w \in \C(r)} Z_w = K_r \in [\overline{Z}].
    \label{eq:tree-Z-example}
\end{equation}
These logical operators are illustrated in Fig.~\ref{fig:binary-tree-23-logical-base}. Note that the $\overline{X}$ operator, as shown in Eq.~\eqref{eq:tree-X-example}, is the product of $Z_r$ with the code stabilizer $K_v$ for a vertex $v \in \C(r)$. Similarly, the $\overline{Z}$ operator in Eq.~\eqref{eq:tree-Z-example} is the operator $K_r$ associated with the root vertex $r$. Consequently, defining the operators $Z_r$ and $K_r$ as the base operators for $\overline{X}$ and $\overline{Z}$ operators, respectively, allows us to express the sets $[\overline{X}]$ and $[\overline{Z}]$ in a compact form as follows:
\begin{equation}
	[\overline{X}] = \{ Z_r s \mid s \in S_c \},
	\label{eq:tree-logical-set-x}
\end{equation}
\begin{equation}
	[\overline{Z}] = \{ K_r s \mid s \in S_c \}.
	\label{eq:tree-logical-set-z}
\end{equation}

\definecolor{myred}{RGB}{221,60,60}
\definecolor{myblue}{RGB}{60,60,221}

\begin{figure*}
		\begin{subfigure}[c]{0.45\textwidth}
			\def\svgwidth{\textwidth}
\begingroup%
  \makeatletter%
  \providecommand\color[2][]{%
    \errmessage{(Inkscape) Color is used for the text in Inkscape, but the package 'color.sty' is not loaded}%
    \renewcommand\color[2][]{}%
  }%
  \providecommand\transparent[1]{%
    \errmessage{(Inkscape) Transparency is used (non-zero) for the text in Inkscape, but the package 'transparent.sty' is not loaded}%
    \renewcommand\transparent[1]{}%
  }%
  \providecommand\rotatebox[2]{#2}%
  \newcommand*\fsize{\dimexpr\f@size pt\relax}%
  \newcommand*\lineheight[1]{\fontsize{\fsize}{#1\fsize}\selectfont}%
  \ifx\svgwidth\undefined%
    \setlength{\unitlength}{935.43307087bp}%
    \ifx\svgscale\undefined%
      \relax%
    \else%
      \setlength{\unitlength}{\unitlength * \real{\svgscale}}%
    \fi%
  \else%
    \setlength{\unitlength}{\svgwidth}%
  \fi%
  \global\let\svgwidth\undefined%
  \global\let\svgscale\undefined%
  \makeatother%
  \begin{picture}(1,0.60606061)%
    \lineheight{1}%
    \setlength\tabcolsep{0pt}%
    \put(0,0){\includegraphics[width=\unitlength,page=1]{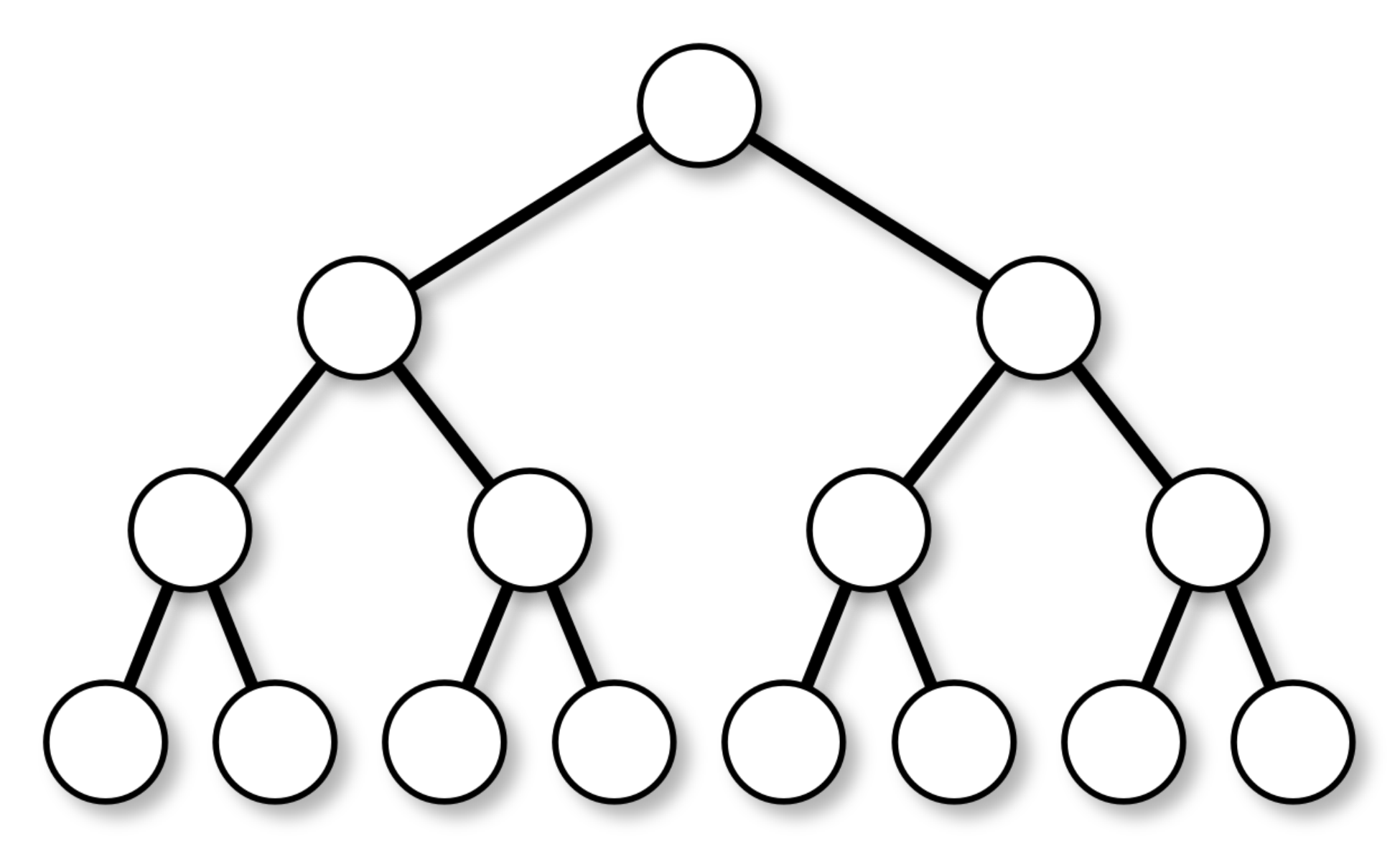}}%
    \put(0.37899954,0.2272199){\color[rgb]{0,0,0}\makebox(0,0)[t]{\lineheight{0}\smash{\begin{tabular}[t]{c}\makebox(0,0){$\color{myred}{Z}$}\end{tabular}}}}%
    \put(0.13657531,0.2272199){\color[rgb]{0,0,0}\makebox(0,0)[t]{\lineheight{0}\smash{\begin{tabular}[t]{c}\makebox(0,0){$\color{myred}{Z}$}\end{tabular}}}}%
    \put(0.25778742,0.37873507){\color[rgb]{0,0,0}\makebox(0,0)[t]{\lineheight{0}\smash{\begin{tabular}[t]{c}\makebox(0,0){$\color{myred}{X}$}\end{tabular}}}}%
  \end{picture}%
\endgroup%
			
			\caption{}
		\end{subfigure}									
		\begin{subfigure}[c]{0.45\textwidth}
			\def\svgwidth{\textwidth}
\begingroup%
  \makeatletter%
  \providecommand\color[2][]{%
    \errmessage{(Inkscape) Color is used for the text in Inkscape, but the package 'color.sty' is not loaded}%
    \renewcommand\color[2][]{}%
  }%
  \providecommand\transparent[1]{%
    \errmessage{(Inkscape) Transparency is used (non-zero) for the text in Inkscape, but the package 'transparent.sty' is not loaded}%
    \renewcommand\transparent[1]{}%
  }%
  \providecommand\rotatebox[2]{#2}%
  \newcommand*\fsize{\dimexpr\f@size pt\relax}%
  \newcommand*\lineheight[1]{\fontsize{\fsize}{#1\fsize}\selectfont}%
  \ifx\svgwidth\undefined%
    \setlength{\unitlength}{935.43307087bp}%
    \ifx\svgscale\undefined%
      \relax%
    \else%
      \setlength{\unitlength}{\unitlength * \real{\svgscale}}%
    \fi%
  \else%
    \setlength{\unitlength}{\svgwidth}%
  \fi%
  \global\let\svgwidth\undefined%
  \global\let\svgscale\undefined%
  \makeatother%
  \begin{picture}(1,0.60606061)%
    \lineheight{1}%
    \setlength\tabcolsep{0pt}%
    \put(0,0){\includegraphics[width=\unitlength,page=1]{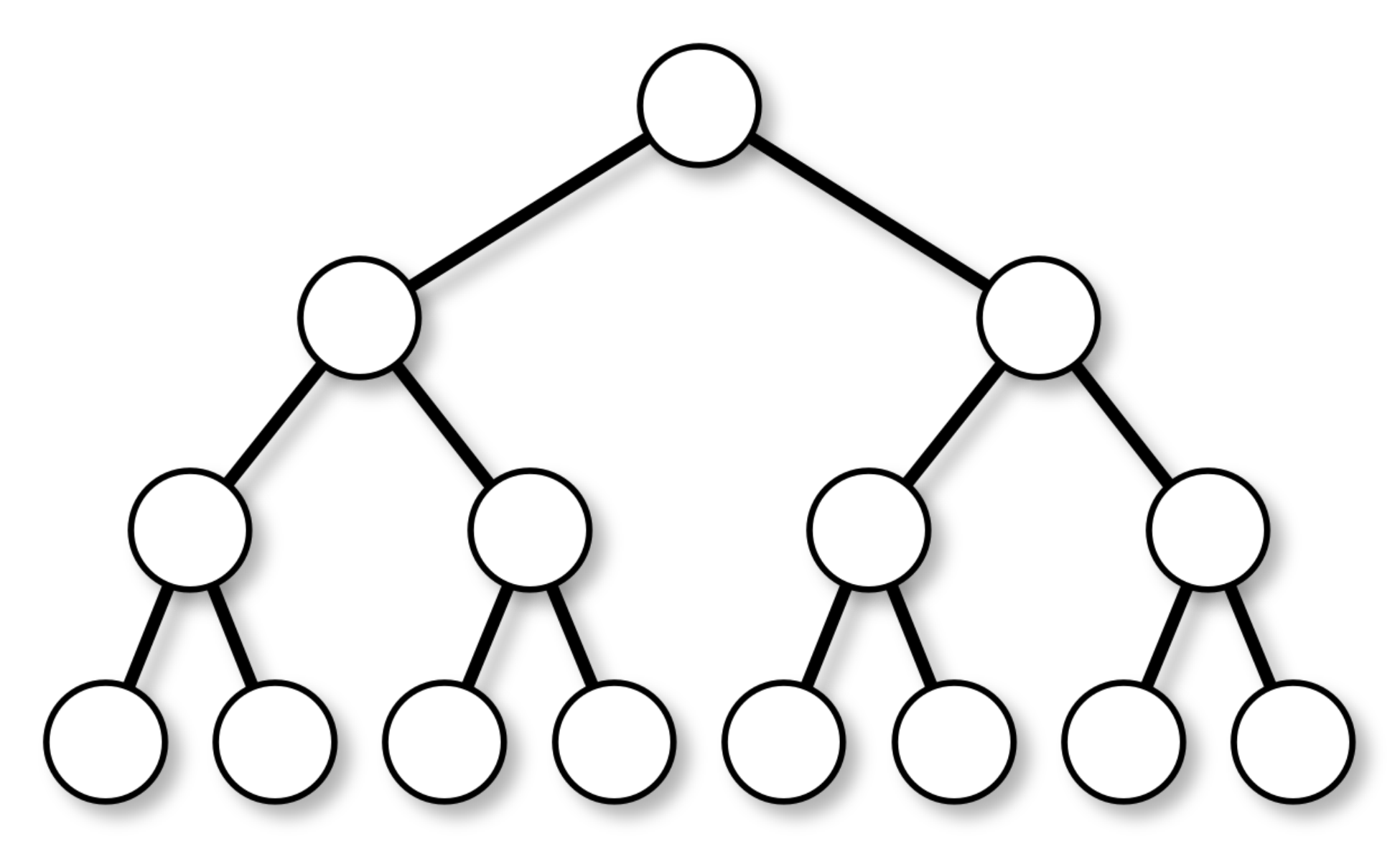}}%
    \put(0.74263591,0.37873507){\color[rgb]{0,0,0}\makebox(0,0)[t]{\lineheight{0}\smash{\begin{tabular}[t]{c}\makebox(0,0){$\color{myblue}{Z}$}\end{tabular}}}}%
    \put(0.25778741,0.37873507){\color[rgb]{0,0,0}\makebox(0,0)[t]{\lineheight{0}\smash{\begin{tabular}[t]{c}\makebox(0,0){$\color{myblue}{Z}$}\end{tabular}}}}%
    \put(0.50021167,0.53025023){\color[rgb]{0,0,0}\makebox(0,0)[t]{\lineheight{0}\smash{\begin{tabular}[t]{c}\makebox(0,0){$\color{myblue}{X}$}\end{tabular}}}}%
  \end{picture}%
\endgroup%

			\caption{}
		\end{subfigure}
	\caption{\label{fig:binary-tree-23-logical-base}Examples of logical operators of the tree code as defined in Eqs.~\eqref{eq:tree-X-example} and~\eqref{eq:tree-Z-example}. The logical $\overline{X}$ operator is shown in red letters in subfigure (a), while the logical $\overline{Z}$ operator is displayed in blue letters in subfigure (b).}
\end{figure*}

Next, we describe our measurement scheme for the tree code. The scheme begins at the leaf nodes and progresses towards the root, measuring each level in sequence. In other words, the measurements proceed from the highest to the lowest level. At each level, every qubit is measured using a $Z$-BM. All qubits within a level are measured independently of each other, so there is no need to define an order for measuring the qubits within a level.

Tracking the current stabilizer throughout this scheme is straightforward, as each $Z$-BM $Z_j$ on qubit $j$ anticommutes with exactly one stabilizer generator, specifically $K_j \in G_c$. Thus, each $Z$-BM $Z_j$ on qubit $j$ replaces $K_j$ in the current stabilizer generator. We now argue that each BM, up to the first successful one, has a success probability of $\mathbb{P}_B$. For every qubit $j$, $Y_j$ anticommutes with $K_j$, which is an element of the current stabilizer generators, as discussed for $Z_j$. For all levels except the first, $X_j$ anticommutes with the stabilizer generator associated with its parent, $K_{\anc(j,1)}$. At the first level $X_j$ completes the logical operator $X_j \prod_{w \in \C(j)} Z_w \in [\overline{X}]$. Having confirmed that all single-qubit observables on the measured qubits either anticommute with an element of the current stabilizer or complete a logical operator, we can now apply Lem.~\ref{lem:successful-bell-measurment}, concluding that the success probability for each transversal BM up to the first success is $\mathbb{P}_B$.

In the following, we will explain how the logical operators $\overline{X}_j$ and $\overline{Z}_j$ are measured if a successful BM occurs on qubit $j$. For clarity, we will analyze the cases from the lowest to the highest level, though it should be noted that the actual measurements are performed sequentially from the highest to the lowest level.

We begin with a simple example of a binary tree with height two. This tree can be characterized by the branching parameters $(b_0, b_1) = (2, 2)$. The branching parameters can be used to define a rooted tree in which, at every level, all nodes have the same number of children. Such a tree of depth $d$ is specified by the sequence $(b_0, b_1, \dots, b_{d-1})$, where $b_i$ denotes the number of children of a node at depth $i$. Since both the tree structure and our measurement scheme are symmetric for all qubits at the same level, it suffices to consider two cases, namely when the first successful BM occurs at either the first or the second level. The logical operators for these cases are illustrated in Fig.~\ref{fig:binary-tree-22-solution}.

\begin{figure*}
		\begin{subfigure}[c]{0.33\textwidth}
			\def\svgwidth{\textwidth}
\begingroup%
  \makeatletter%
  \providecommand\color[2][]{%
    \errmessage{(Inkscape) Color is used for the text in Inkscape, but the package 'color.sty' is not loaded}%
    \renewcommand\color[2][]{}%
  }%
  \providecommand\transparent[1]{%
    \errmessage{(Inkscape) Transparency is used (non-zero) for the text in Inkscape, but the package 'transparent.sty' is not loaded}%
    \renewcommand\transparent[1]{}%
  }%
  \providecommand\rotatebox[2]{#2}%
  \newcommand*\fsize{\dimexpr\f@size pt\relax}%
  \newcommand*\lineheight[1]{\fontsize{\fsize}{#1\fsize}\selectfont}%
  \ifx\svgwidth\undefined%
    \setlength{\unitlength}{481.88972053bp}%
    \ifx\svgscale\undefined%
      \relax%
    \else%
      \setlength{\unitlength}{\unitlength * \real{\svgscale}}%
    \fi%
  \else%
    \setlength{\unitlength}{\svgwidth}%
  \fi%
  \global\let\svgwidth\undefined%
  \global\let\svgscale\undefined%
  \makeatother%
  \begin{picture}(1,0.88235302)%
    \lineheight{1}%
    \setlength\tabcolsep{0pt}%
    \put(0,0){\includegraphics[width=\unitlength,page=1]{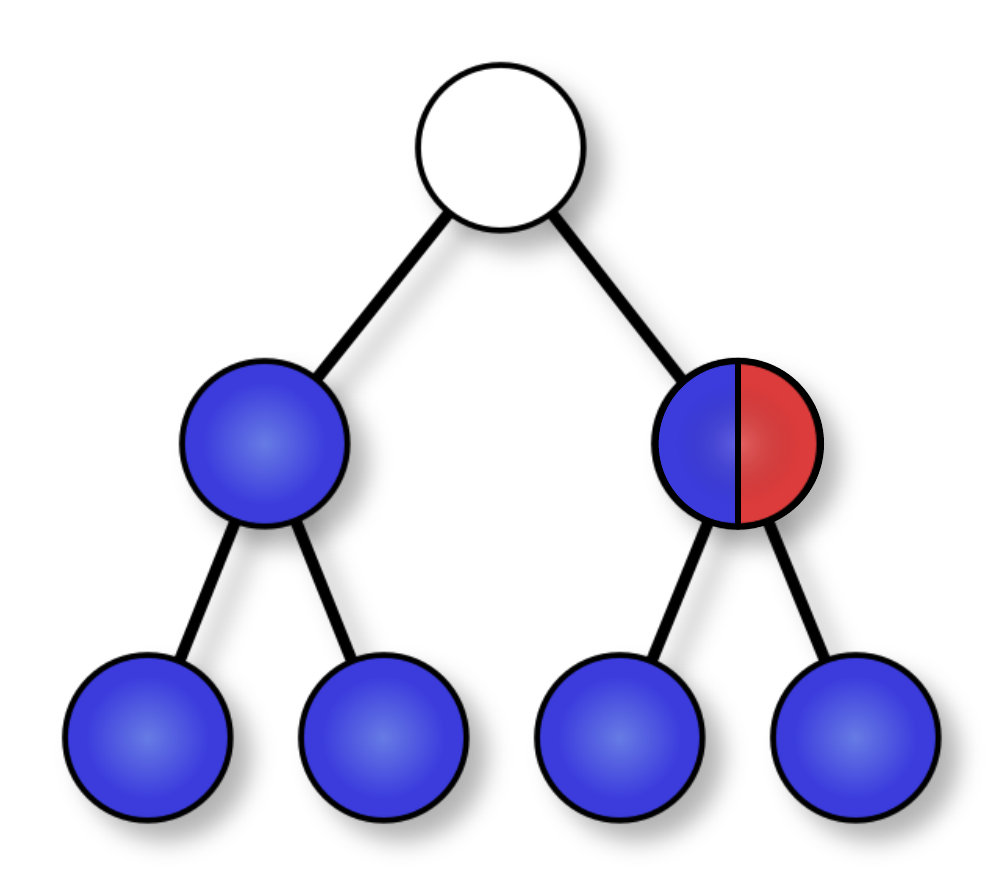}}%
    \put(0.77467743,0.44107132){\color[rgb]{1,1,1}\makebox(0,0)[t]{\lineheight{0}\smash{\begin{tabular}[t]{c}\makebox(0,0){$\contour{white}{\color{myred}{X}}$}\end{tabular}}}}%
    \put(0.69820686,0.44107132){\color[rgb]{1,1,1}\makebox(0,0)[t]{\lineheight{0}\smash{\begin{tabular}[t]{c}\makebox(0,0){$\contour{white}{\color{myblue}{Z}}$}\end{tabular}}}}%
    \put(0.61805797,0.14695629){\color[rgb]{0,0,0}\makebox(0,0)[t]{\lineheight{0}\smash{\begin{tabular}[t]{c}\makebox(0,0){$\contour{white}{\color{myred}{Z}}$}\end{tabular}}}}%
    \put(0.85335215,0.14695629){\color[rgb]{0,0,0}\makebox(0,0)[t]{\lineheight{0}\smash{\begin{tabular}[t]{c}\makebox(0,0){$\contour{white}{\color{myred}{Z}}$}\end{tabular}}}}%
    \put(0.2651168,0.44107401){\color[rgb]{0,0,0}\makebox(0,0)[t]{\lineheight{0}\smash{\begin{tabular}[t]{c}\makebox(0,0){$\contour{white}{\color{myblue}{Z}}$}\end{tabular}}}}%
    \put(0.50041089,0.73519169){\color[rgb]{0,0,0}\makebox(0,0)[t]{\lineheight{0}\smash{\begin{tabular}[t]{c}\makebox(0,0){$\color{myblue}{X}$}\end{tabular}}}}%
  \end{picture}%
\endgroup%
			
			\caption{}
		\end{subfigure}									
		\begin{subfigure}[c]{0.33\textwidth}
			\def\svgwidth{\textwidth}
\begingroup%
  \makeatletter%
  \providecommand\color[2][]{%
    \errmessage{(Inkscape) Color is used for the text in Inkscape, but the package 'color.sty' is not loaded}%
    \renewcommand\color[2][]{}%
  }%
  \providecommand\transparent[1]{%
    \errmessage{(Inkscape) Transparency is used (non-zero) for the text in Inkscape, but the package 'transparent.sty' is not loaded}%
    \renewcommand\transparent[1]{}%
  }%
  \providecommand\rotatebox[2]{#2}%
  \newcommand*\fsize{\dimexpr\f@size pt\relax}%
  \newcommand*\lineheight[1]{\fontsize{\fsize}{#1\fsize}\selectfont}%
  \ifx\svgwidth\undefined%
    \setlength{\unitlength}{481.88972053bp}%
    \ifx\svgscale\undefined%
      \relax%
    \else%
      \setlength{\unitlength}{\unitlength * \real{\svgscale}}%
    \fi%
  \else%
    \setlength{\unitlength}{\svgwidth}%
  \fi%
  \global\let\svgwidth\undefined%
  \global\let\svgscale\undefined%
  \makeatother%
  \begin{picture}(1,0.88235302)%
    \lineheight{1}%
    \setlength\tabcolsep{0pt}%
    \put(0,0){\includegraphics[width=\unitlength,page=1]{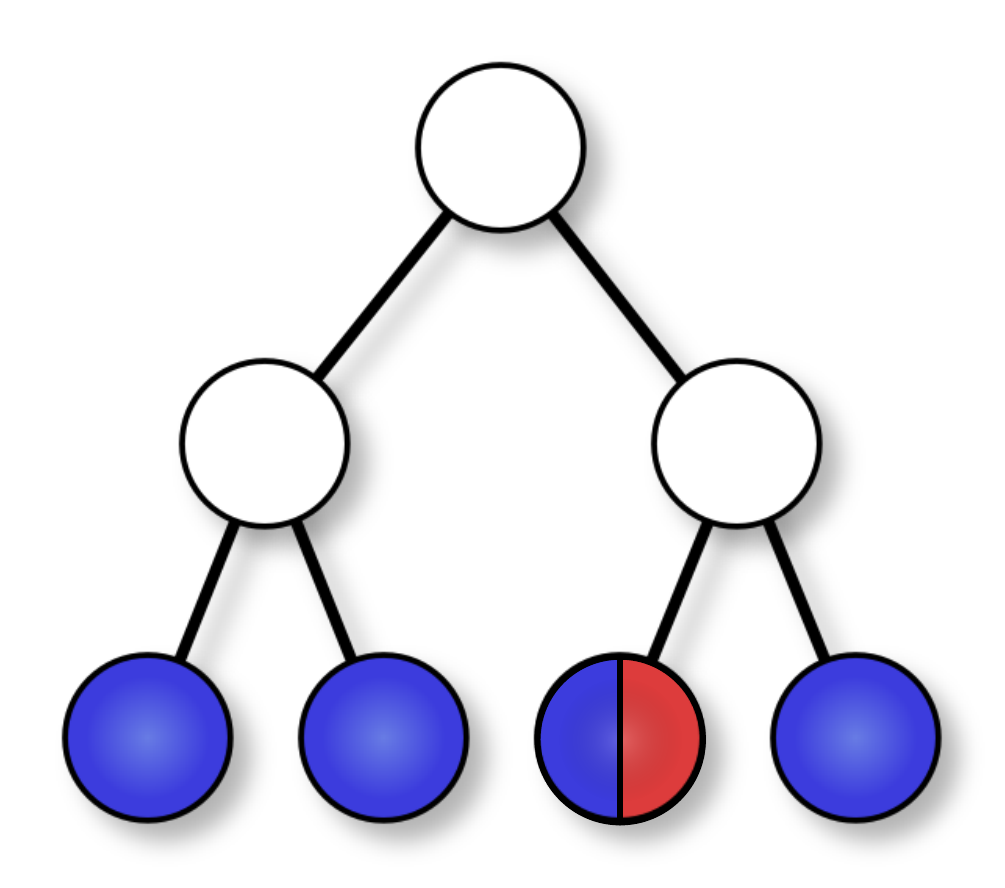}}%
    \put(0.65703067,0.14695363){\color[rgb]{1,1,1}\makebox(0,0)[t]{\lineheight{0}\smash{\begin{tabular}[t]{c}\makebox(0,0){$\contour{white}{\color{myblue}{X}}$}\end{tabular}}}}%
    \put(0.58056005,0.14695363){\color[rgb]{1,1,1}\makebox(0,0)[t]{\lineheight{0}\smash{\begin{tabular}[t]{c}\makebox(0,0){$\contour{white}{\color{myred}{Z}}$}\end{tabular}}}}%
    \put(0.85335215,0.14695629){\color[rgb]{0,0,0}\makebox(0,0)[t]{\lineheight{0}\smash{\begin{tabular}[t]{c}\makebox(0,0){$\contour{white}{\color{myred}{Z}}$}\end{tabular}}}}%
    \put(0.2651168,0.44107401){\color[rgb]{0,0,0}\makebox(0,0)[t]{\lineheight{0}\smash{\begin{tabular}[t]{c}\makebox(0,0){$\contour{white}{\color{myblue}{Z}}$}\end{tabular}}}}%
    \put(0.50041089,0.73519169){\color[rgb]{0,0,0}\makebox(0,0)[t]{\lineheight{0}\smash{\begin{tabular}[t]{c}\makebox(0,0){$\color{myblue}{X}$}\end{tabular}}}}%
    \put(0.73570504,0.44107401){\color[rgb]{0,0,0}\makebox(0,0)[t]{\lineheight{0}\smash{\begin{tabular}[t]{c}\makebox(0,0){$\contour{white}{\color{myred}{X}}$}\end{tabular}}}}%
  \end{picture}%
\endgroup%
			
			\caption{}
		\end{subfigure}
	\caption{\label{fig:binary-tree-22-solution}Measured logical operators in the event of a successful BM for a binary tree of height two, characterized by the branching parameters $(2,2)$. Panel (a) shows the case where the first successful BM occurs at the first level, while panel (b) depicts the case for the second level. Blue nodes represent qubits previously measured with a $Z$-BM. Qubits filled with both red and blue indicate a successful physical BM. Red letters denote the  measured $\overline{X}$ operator, and blue letters denote the measured $\overline{Z}$ operator.}
\end{figure*}

In the first case, if the successful BM occurs on a qubit $v_1$ at the first level (i.e., $\depth(v_1) = 1$), we complete the logical operators from Eqs.~\eqref{eq:tree-X-example} and~\eqref{eq:tree-Z-example}:
\begin{equation}
	\overline{X}_{v_1} = Z_r K_{v_1} \in [\overline{X}]
\end{equation}
and
\begin{equation}
	\overline{Z}_{v_1} = K_r \in [\overline{Z}].
\end{equation}

In the second case, if the successful BM occurs on a qubit $v_2$ at the second level (i.e., $\depth(v_2) = 2$), we complete:
\begin{equation}
	\overline{X}_{v_2} = Z_r K_{\anc(v_2,1)} \in [\overline{X}]
	\label{eq:tree-2-x-sol}
\end{equation}
and
\begin{equation}
	\overline{Z}_{v_2} = K_r K_{v_2} \in [\overline{Z}].
	\label{eq:tree-2-z-sol}
\end{equation}
It is straightforward to see that these two logical operators can always be measured with unit probability. Below the level of success, both logical operators require only $Z$ information, except for the qubit where the success occurs. Above the success level, the logical operators do not conflict on the required information on any qubit. It is important to note that if $v_2 \in C(v_1)$, the logical $\overline{X}$ operator remains identical, i.e., $\overline{X}_{v_1} = \overline{X}_{v_2}$. While the notation using the ancestor function may not make this immediately apparent, it is still a useful representation as it facilitates the generalization of the scheme in the subsequent discussion.

Before tackling the most general case, let us build on our previous example by examining a binary tree with one more level. We will now consider a binary tree of height three, characterized by the branching parameters $(2,2,2)$. The logical operators for this tree are illustrated in Fig.~\ref{fig:binary-tree-23-solution}. When the successful BM occurs on a qubit $v_1$ at the first level (i.e., $\depth(v_1) = 1$), the logical operators used to complete the measurement are identical to those for the binary tree of height two. This is because the structure of the trees at the relevant levels is the same. When the successful BM occurs on a qubit $v_2$ at the second level (i.e., $\depth(v_2) = 2$), the solution in Eqs.~\eqref{eq:tree-2-x-sol} and~\eqref{eq:tree-2-z-sol} remains essentially unchanged as well. However, we note that the operator $\overline{Z}_{v_2} = K_r K_{v_2} \in [\overline{Z}]$ now has support on nodes at the third level of the tree. When the successful BM occurs on a qubit $v_3$ at the third level (i.e., $\depth(v_3) = 3$), the logical $\overline{Z}$ operator can still be measured identically to the previous level:
\begin{equation}
	\overline{Z}_{v_3} = K_r K_{\anc(v_3,1)} \in [\overline{Z}].
\end{equation}
However, the $\overline{X}$ operator for $v_3$ differs from that of the previous level. It now includes an additional factor $K_{v_3}$:
\begin{equation}
	\overline{X}_{v_3} = Z_r K_{v_3} K_{\anc(v_3,2)} \in [\overline{X}].
\end{equation}

\begin{figure*}
		\begin{subfigure}[c]{0.5\textwidth}
			\def\svgwidth{\textwidth}
			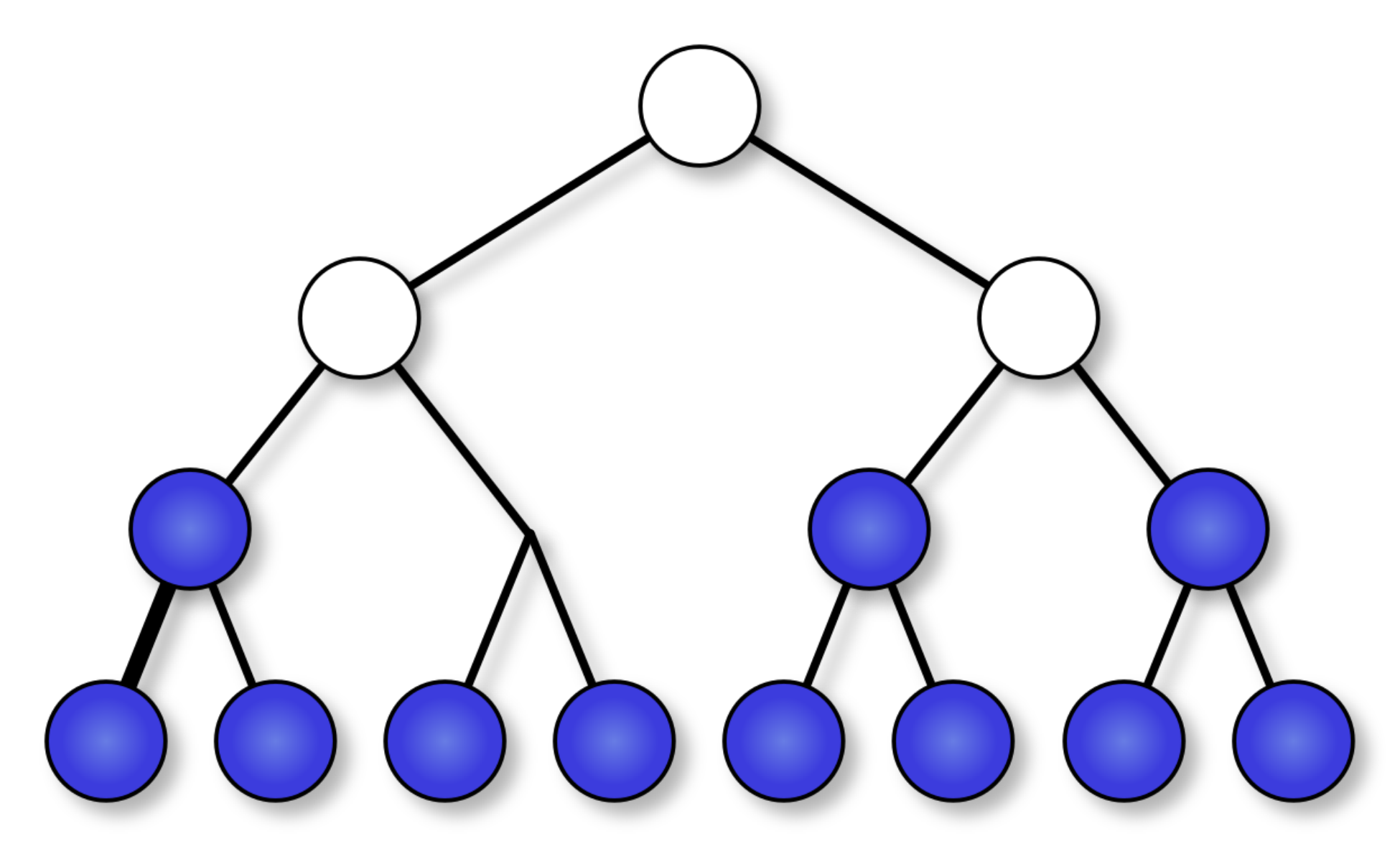			
			\caption{}
		\end{subfigure}									
		\begin{subfigure}[c]{0.5\textwidth}
			\def\svgwidth{\textwidth}
			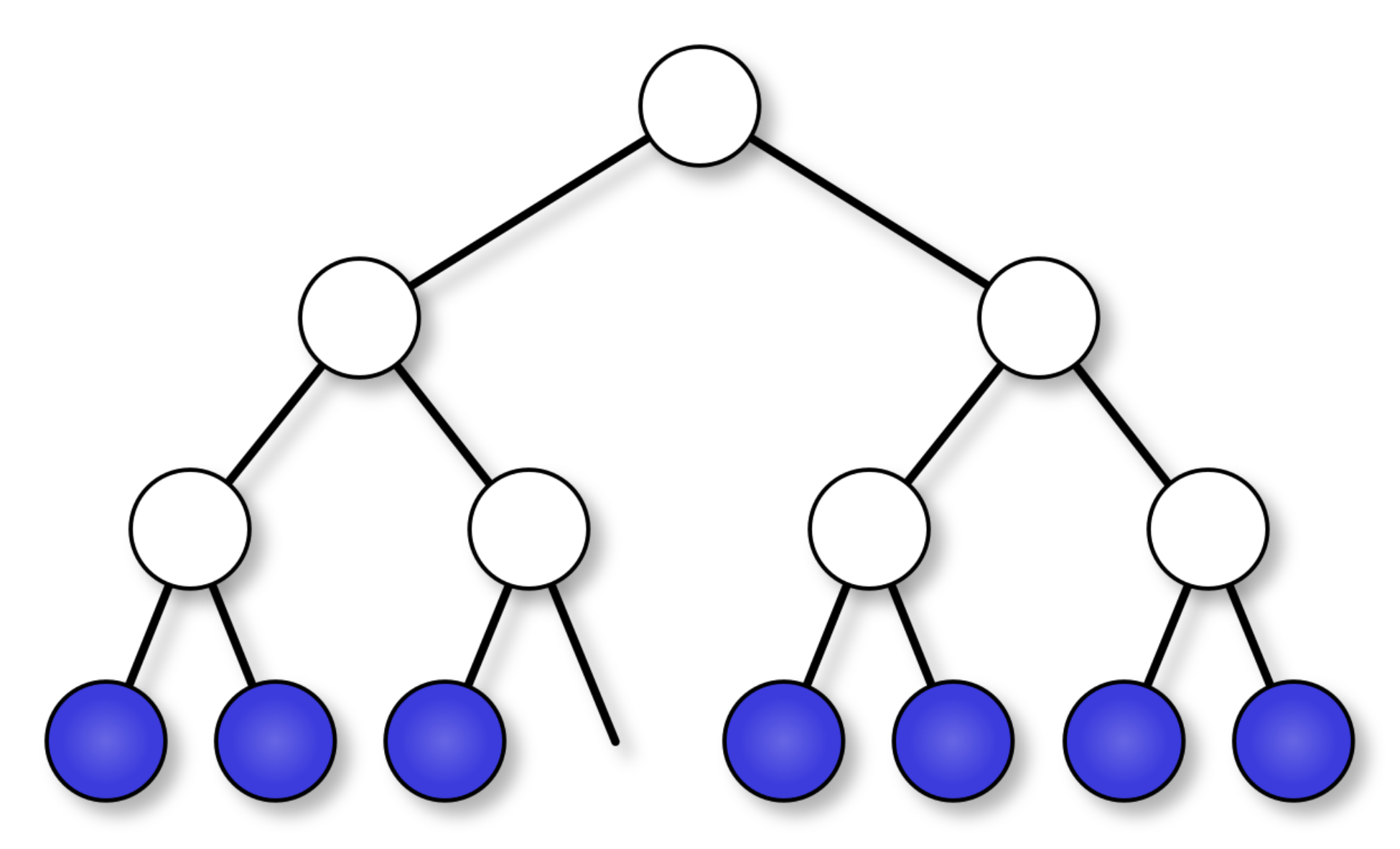			
			\caption{}
		\end{subfigure}
	\caption{\label{fig:binary-tree-23-solution}Measured logical operators in the event of a successful BM for a binary tree of height three, characterized by the branching parameters $(2,2,2)$. Panel (a) shows the scenario where the first successful BM occurs at the second level, while panel (b) depicts the case for the third level. The case for the second level is essentially identical to the solution for the binary tree of height two. Blue nodes represent qubits previously measured with a $Z$-BM. Qubits filled with both red and blue indicate a successful physical BM. Red letters denote the  measured $\overline{X}$ operator, and blue letters denote the measured $\overline{Z}$ operator.}
\end{figure*}

Building on the previous example, we now discuss the structure of the logical operators at each level for a general tree. Recall from Eqs.~\eqref{eq:tree-logical-set-x} and~\eqref{eq:tree-logical-set-z} that logical operators are represented by a base operator, $Z_r$ or $K_r$, multiplied by stabilizer generators corresponding to a set of vertices. The relevant logical operators at each level can be expressed as the product of a base operator and stabilizer generators associated with every second vertex along the path to the vertex where the successful BM occurred. For brevity, we will refer to the set of vertices on this path simply as the path. For the $\overline{X}$ operator, we include the vertices at odd levels, while for the $\overline{Z}$ operator, we include the vertices at even levels. Here and in the following including refers to the associated stabilizer being multiplied to the base operator to obtain the logical operator. For the example of a $(2,2,2)$ tree this is illustrated in Fig.~\ref{fig:binary-tree-23-logicals}.

We can now argue that the logical operators can always be obtained with probability one, based on two key observations. First, we note that the logical operators require only $X$ information along the path, excluding the final vertex of the path which is the vertex where the success occurred. This follows from the fact that the stabilizer generators are included from every second vertex. As a result, any intermediate, i.e., non-endpoint, vertex is a neighbor of exactly two included vertices. The associated stabilizer of these two included vertices each act with a $Z$ operator on the intermediate vertex. These $Z$ operators cancel out, leaving no contribution on the intermediate vertex. The $\overline{X}$ operator never requires $Z$ information on the start of the path, the root, because the base operator $Z_r$ cancels with the first included stabilizer, which is at level one. Similarly, the $\overline{Z}$ operator never requires $Z$ information on the root because the base operator acts with $X$ on the root and no adjacent vertex is included. Recall that this $X$ information on the root is always available, since we included an $X$-BM (or, alternatively, two single-qubit $X$ measurements) on the root of the tree in our code construction.

Secondly, we observe that no $X$ information is required from any vertex outside the path, as no logical operator includes a vertex which is not on the path. Thus, the logical operators do not conflict on vertices outside the path.

We conclude that once a successful BM occurs, both logical operators can be obtained with probability one by performing $X$-BMs on each vertex of the path and $Z$-BMs on the remaining vertices.
\begin{figure*}
		\begin{subfigure}[c]{\textwidth}
			\def\svgwidth{\textwidth}
\begingroup%
  \makeatletter%
  \providecommand\color[2][]{%
    \errmessage{(Inkscape) Color is used for the text in Inkscape, but the package 'color.sty' is not loaded}%
    \renewcommand\color[2][]{}%
  }%
  \providecommand\transparent[1]{%
    \errmessage{(Inkscape) Transparency is used (non-zero) for the text in Inkscape, but the package 'transparent.sty' is not loaded}%
    \renewcommand\transparent[1]{}%
  }%
  \providecommand\rotatebox[2]{#2}%
  \newcommand*\fsize{\dimexpr\f@size pt\relax}%
  \newcommand*\lineheight[1]{\fontsize{\fsize}{#1\fsize}\selectfont}%
  \ifx\svgwidth\undefined%
    \setlength{\unitlength}{2069.29133858bp}%
    \ifx\svgscale\undefined%
      \relax%
    \else%
      \setlength{\unitlength}{\unitlength * \real{\svgscale}}%
    \fi%
  \else%
    \setlength{\unitlength}{\svgwidth}%
  \fi%
  \global\let\svgwidth\undefined%
  \global\let\svgscale\undefined%
  \makeatother%
  \begin{picture}(1,0.2739726)%
    \lineheight{1}%
    \setlength\tabcolsep{0pt}%
    \put(0,0){\includegraphics[width=\unitlength,page=1]{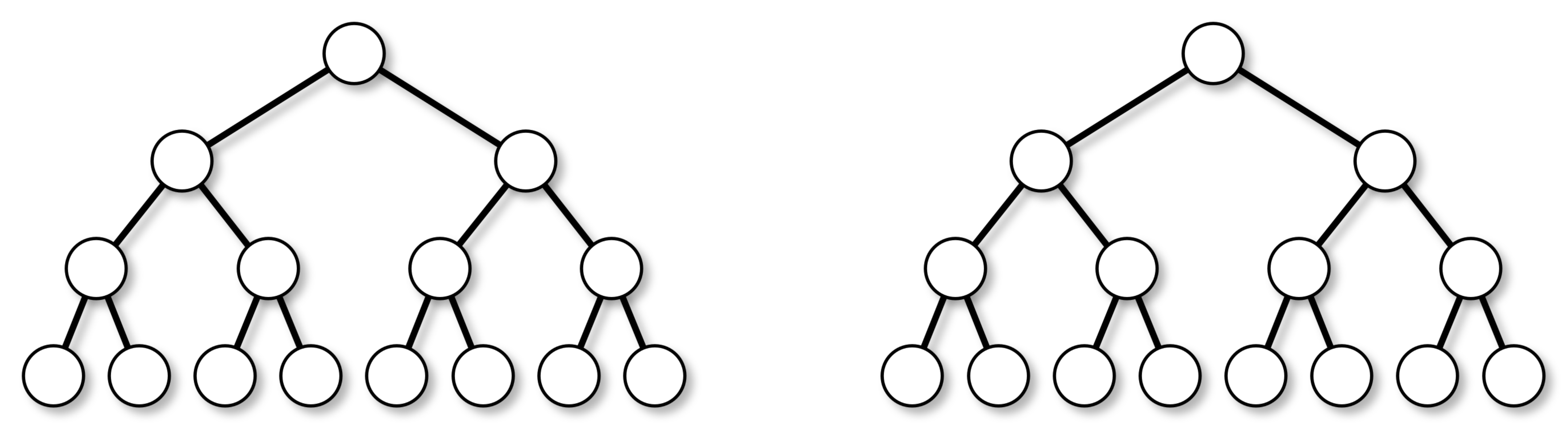}}%
    \put(0.77406825,0.23970216){\color[rgb]{0,0,0}\makebox(0,0)[t]{\lineheight{0}\smash{\begin{tabular}[t]{c}\makebox(0,0){$\color{myblue}{X}$}\end{tabular}}}}%
    \put(0.88365729,0.17120901){\color[rgb]{0,0,0}\makebox(0,0)[t]{\lineheight{0}\smash{\begin{tabular}[t]{c}\makebox(0,0){$\color{myblue}{Z}$}\end{tabular}}}}%
    \put(0.22612309,0.23970216){\color[rgb]{0,0,0}\makebox(0,0)[t]{\lineheight{0}\smash{\begin{tabular}[t]{c}\makebox(0,0){$\color{myred}{Z}$}\end{tabular}}}}%
    \put(0.66447925,0.17120901){\color[rgb]{0,0,0}\makebox(0,0)[t]{\lineheight{0}\smash{\begin{tabular}[t]{c}\makebox(0,0){$\color{myblue}{Z}$}\end{tabular}}}}%
  \end{picture}%
\endgroup%
			
			\caption{Base operators}
		\end{subfigure}									
		\begin{subfigure}[c]{\textwidth}
			\def\svgwidth{\textwidth}
			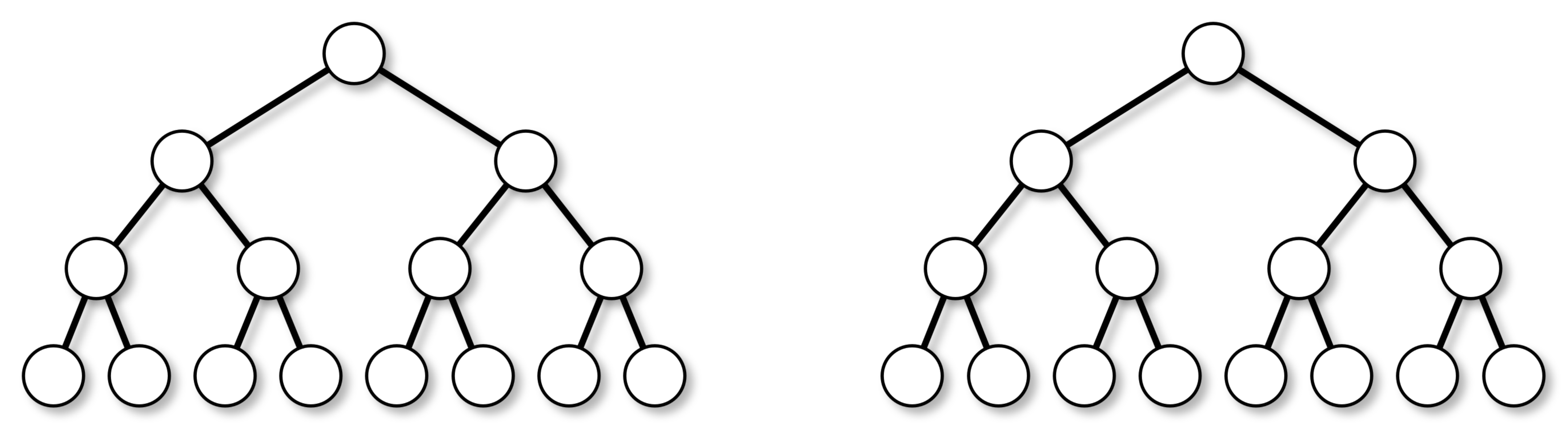			
			\caption{1st level logical operators}
		\end{subfigure}
				\begin{subfigure}[c]{\textwidth}
			\def\svgwidth{\textwidth}
			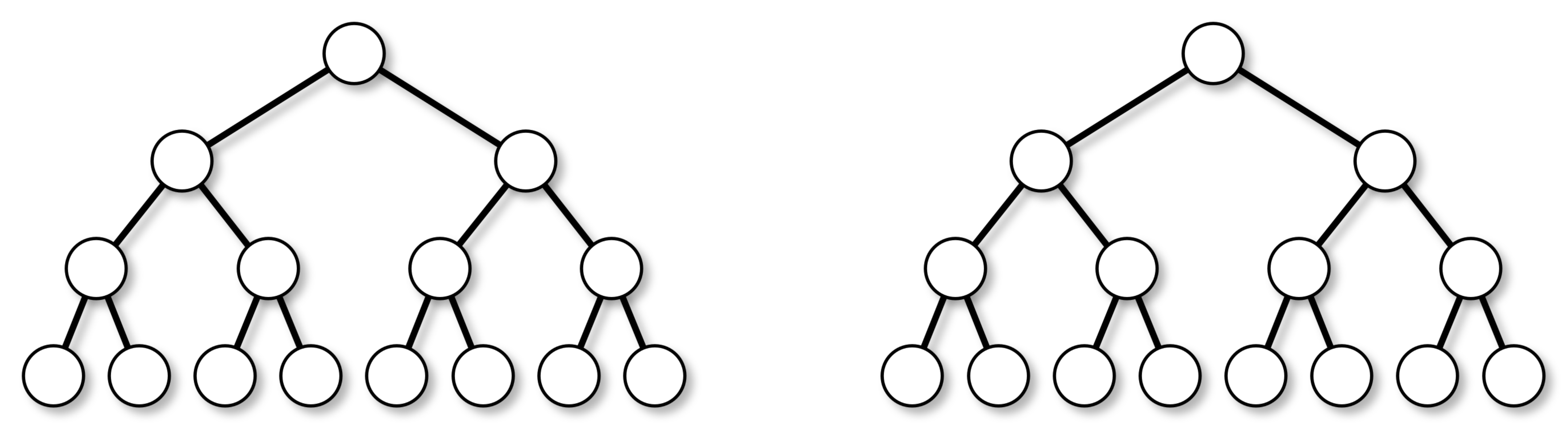			
			\caption{2nd level logical operators}
		\end{subfigure}									
		\begin{subfigure}[c]{\textwidth}
			\def\svgwidth{\textwidth}
			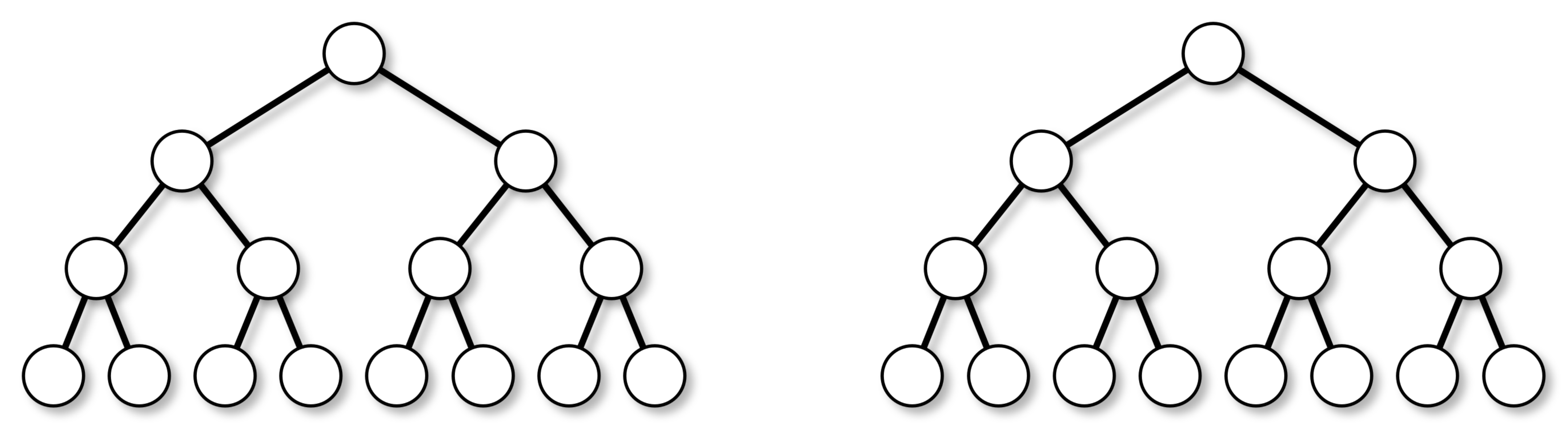			
			\caption{3rd level logical operators}
		\end{subfigure}
	\caption{\label{fig:binary-tree-23-logicals}Logical operators of a binary tree of height three, characterized by the branching parameters $(2,2,2)$. The $\overline{X}$ operators are displayed in red letters on the left, while the $\overline{Z}$ operators are displayed in blue letters on the right. Red and blue boxes around the vertices represent the stabilizer generators used to construct the $\overline{X}$ and $\overline{Z}$ operators, respectively. To illustrate this, the terms $Z^2$ are not simplified to $I$.}
\end{figure*}

It is important to emphasize that our scheme is not restricted to binary trees or trees where every node from one level has the same number of children. The scheme remains optimal for any rooted tree structure. To formulate the general scheme, we formalize the structure of the logical operators. Assume a successful BM occurred on vertex $j$ at level $\depth(j)$. As explained above, the $\overline{X}$ and $\overline{Z}$ operators are obtained by including every vertex along the path at odd and even levels, respectively:
\begin{equation}
	\overline{X}_j = 
	\begin{dcases}
		Z_r \prod_{i=0}^{\frac{\depth(j)-2}{2}} K_{\anc(j,2i+1)}	& \quad \text{if } \depth(j) \ \text{even,} \\ \\
		Z_r \prod_{i=0}^{\frac{\depth(j)-1}{2}} K_{\anc(j,2i)} 		& \quad \text{if } \depth(j) \ \text{odd,}
	\end{dcases}
	\label{eq:tree-logical-x-j}
\end{equation}
\begin{equation}
	\overline{Z}_j = 
	\begin{dcases}
		K_r \prod_{i=0}^{\frac{\depth(j)-2}{2}} K_{\anc(j,2i)}		& \quad \text{if } \depth(j) \ \text{even,} \\ \\
		K_r \prod_{i=0}^{\frac{\depth(j)-3}{2}} K_{\anc(j,2i+1)} 	& \quad \text{if } \depth(j) \ \text{odd.}
	\end{dcases}
	\label{eq:tree-logical-z-j}
\end{equation}

This generality is further illustrated by the example in Fig.~\ref{fig:large-tree}. In conclusion, the optimality for this general scheme is achieved, because the logical operators require only $X$ information along the path from the root to the vertex where the success occurred. Thus, once a successful BM occurs, both logical operators can be obtained with probability one by performing $X$-BMs on each vertex of the path and $Z$-BMs on the remaining vertices.

\begin{figure*}
	\def\svgwidth{\textwidth}
	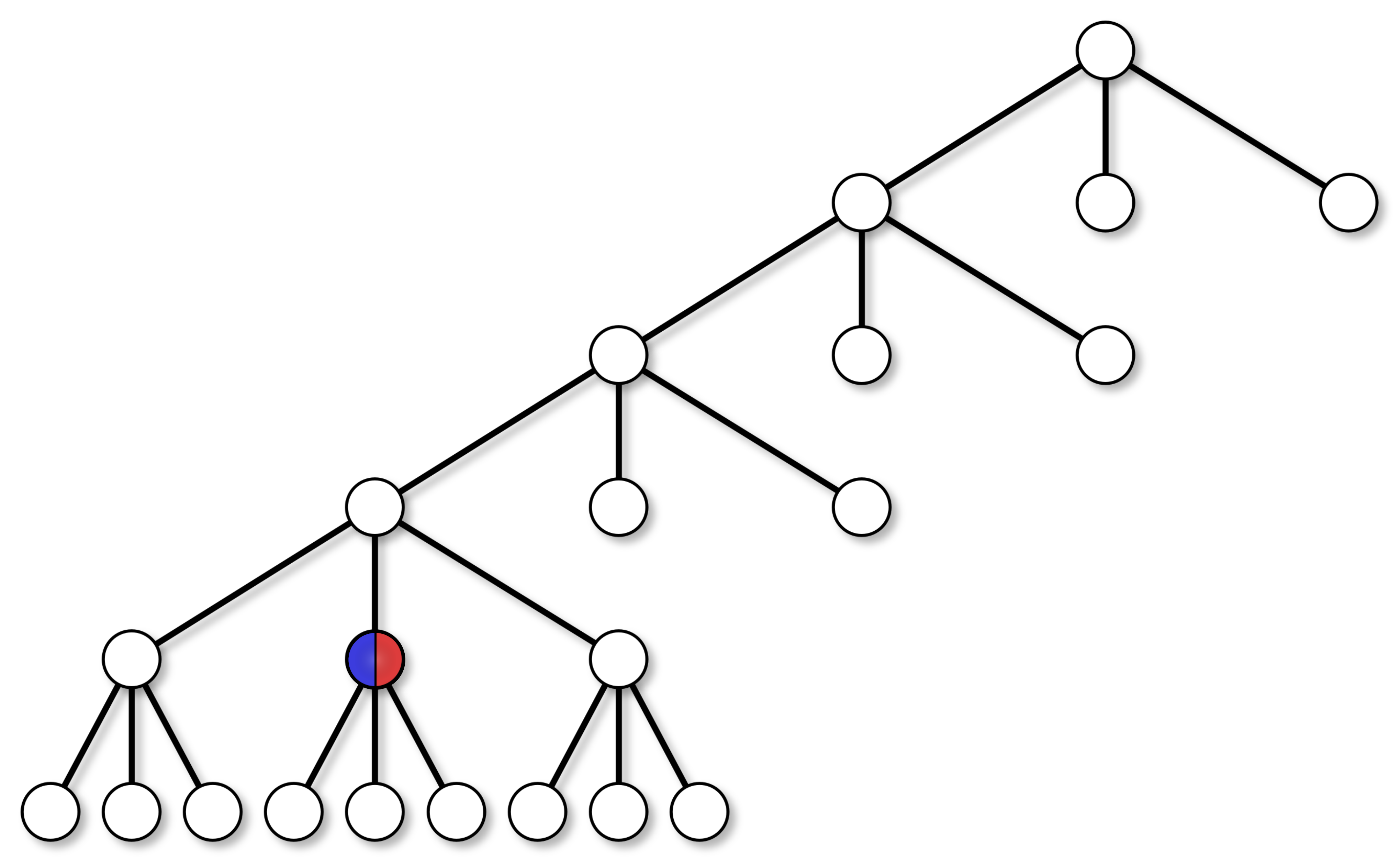	
	\caption{\label{fig:large-tree}Logical operators for a large tree, where the successful BM occurred on the fourth level. The qubit filled with both red and blue indicates a successful physical BM. Red letters denote the $\overline{X}$ operator, while blue letters denote the $\overline{Z}$ operator. Red and blue boxes around the vertices represent the stabilizer generators used to construct the $\overline{X}$ and $\overline{Z}$ operators, respectively.}
\end{figure*}

A detailed algebraic proof of the scheme's optimality, based on Thm.~\ref{thm:sufficient}, is provided in App.~\ref{app:proof-tree}.

To the best of the authors' knowledge, the only logical BM schemes for the tree code published so far appear in Refs.~\cite{Hilaire_2021} and~\cite{patil2024improveddesignallphotonicquantum}. To conclude this section we briefly compare the performance of our scheme with those presented in Refs.~\cite{Hilaire_2021} and~\cite{patil2024improveddesignallphotonicquantum}. Ref.~\cite{Hilaire_2021} introduces two schemes, namely a static and a feedforward-based scheme. The static scheme is the simplest scheme, that measures all qubits using transversal $Z$-BMs. Its success probability can be determined via a simple argument.

Recall that a static scheme is a special case of a feedforward-based scheme (see Sec.~\ref{sec:optimization-and-comparionn-of-static-logical-bell-measurements-for-rotated-planar-surface-codes}). Furthermore, similar to our feedforward-based schemes, this approach relies exclusively on transversal guaranteed partial information BMs. Consequently, the single-code reduction can be applied for this discussion.

Thus, for the sake of this argument, let us assume that all measurements on the first level of the tree are performed sequentially before any measurements on higher levels. Each single-qubit operator $Z_{v_1}$ at the first level ($\depth(v_1) = 1$) anticommutes with its corresponding stabilizer generator $K_{v_1}$, while each single-qubit operator $X_{v_1}$ and $Y_{v_1}$ anticommutes with a stabilizer generator $K_{v_2}$ from the next level ($\depth(v_2) = 2$). Therefore, using Lem.~\ref{lem:successful-bell-measurment}, we conclude that the success probability of each BM at the first level of the tree is $\mathbb{P}_B$, independent of the measurement outcomes of other qubits at the same level.

From Eq.~\eqref{eq:tree-Z-example} we immediately see that a $\overline{Z}$ operator is always measured in this scheme. Furthermore, Eq.~\eqref{eq:tree-logical-set-x} shows that any $\overline{X}$ operator contains at least one single-qubit operator, either $X_{v_1}$ or $Y_{v_1}$, at the first level ($\depth(v_1)=1$). Moreover, there always exists an $\overline{X}$ operator that requires $X$ information from only a single qubit at the first level. (Recall that $Z_r$ was excluded from the set of valid $\overline{X}$ operators to preserve the error correction properties of the code.) Therefore, the scheme succeeds if and only if at least one of the transversal BMs at the first level succeeds. Consequently, we conclude that the success probability of the static scheme is given by $1-(1-\mathbb{P}_B)^{b_0}$. This result was previously derived in Ref.~\cite{Hilaire_2021} using a recursive approach.

The feedforward-based schemes introduced in Refs.~\cite{Hilaire_2021} and~\cite{patil2024improveddesignallphotonicquantum} employ single-qubit measurements below the first level, conditioned on the outcomes of the transversal BMs at the first level. Furthermore, the feedforward-based scheme presented in Ref.~\cite{patil2024improveddesignallphotonicquantum} uses single-qubit measurements on the remaining qubits of the first level after a successful BM on the first level. While these approaches enhance loss tolerance and error robustness, the argument for the no-loss success probability of the static scheme applies to these feedforward-based schemes as well. Consequently, both the static and feedforward-based schemes in Refs.~\cite{Hilaire_2021} and~\cite{patil2024improveddesignallphotonicquantum} share the same no-loss success probability: $1-(1-\mathbb{P}_B)^{b_0}$. The success probability of our scheme for a tree consisting of $n$ nodes without the root is given by $1-(1-\mathbb{P}_B)^{n}$. Thus, in the absence of loss, our scheme significantly outperforms the schemes presented in Refs.~\cite{Hilaire_2021,patil2024improveddesignallphotonicquantum}.

\subsection{Steane code}
\label{sec:steane-code}
In this section, we introduce our measurement scheme for the seven-qubit Steane code~\cite{doi:10.1098/rspa.1996.0136}. The Steane code, displayed in Fig.~\ref{fig:steane-code}, is the smallest triangular color code~\cite{PhysRevLett.97.180501}. For an instructive and comprehensive introduction to color codes, we refer the reader to Ref.~\cite{Lidar_Brun_2013_TopologicalCodes}. A 2D color code is built from a 3-valent lattice with 3-colorable faces embedded in a closed surface. The faces are typically colored red, green, and blue.

\begin{figure}
	\def\svgwidth{0.45\textwidth}
\begingroup%
  \makeatletter%
  \providecommand\color[2][]{%
    \errmessage{(Inkscape) Color is used for the text in Inkscape, but the package 'color.sty' is not loaded}%
    \renewcommand\color[2][]{}%
  }%
  \providecommand\transparent[1]{%
    \errmessage{(Inkscape) Transparency is used (non-zero) for the text in Inkscape, but the package 'transparent.sty' is not loaded}%
    \renewcommand\transparent[1]{}%
  }%
  \providecommand\rotatebox[2]{#2}%
  \newcommand*\fsize{\dimexpr\f@size pt\relax}%
  \newcommand*\lineheight[1]{\fontsize{\fsize}{#1\fsize}\selectfont}%
  \ifx\svgwidth\undefined%
    \setlength{\unitlength}{765.35433071bp}%
    \ifx\svgscale\undefined%
      \relax%
    \else%
      \setlength{\unitlength}{\unitlength * \real{\svgscale}}%
    \fi%
  \else%
    \setlength{\unitlength}{\svgwidth}%
  \fi%
  \global\let\svgwidth\undefined%
  \global\let\svgscale\undefined%
  \makeatother%
  \begin{picture}(1,0.59259259)%
    \lineheight{1}%
    \setlength\tabcolsep{0pt}%
    \put(0,0){\includegraphics[width=\unitlength,page=1]{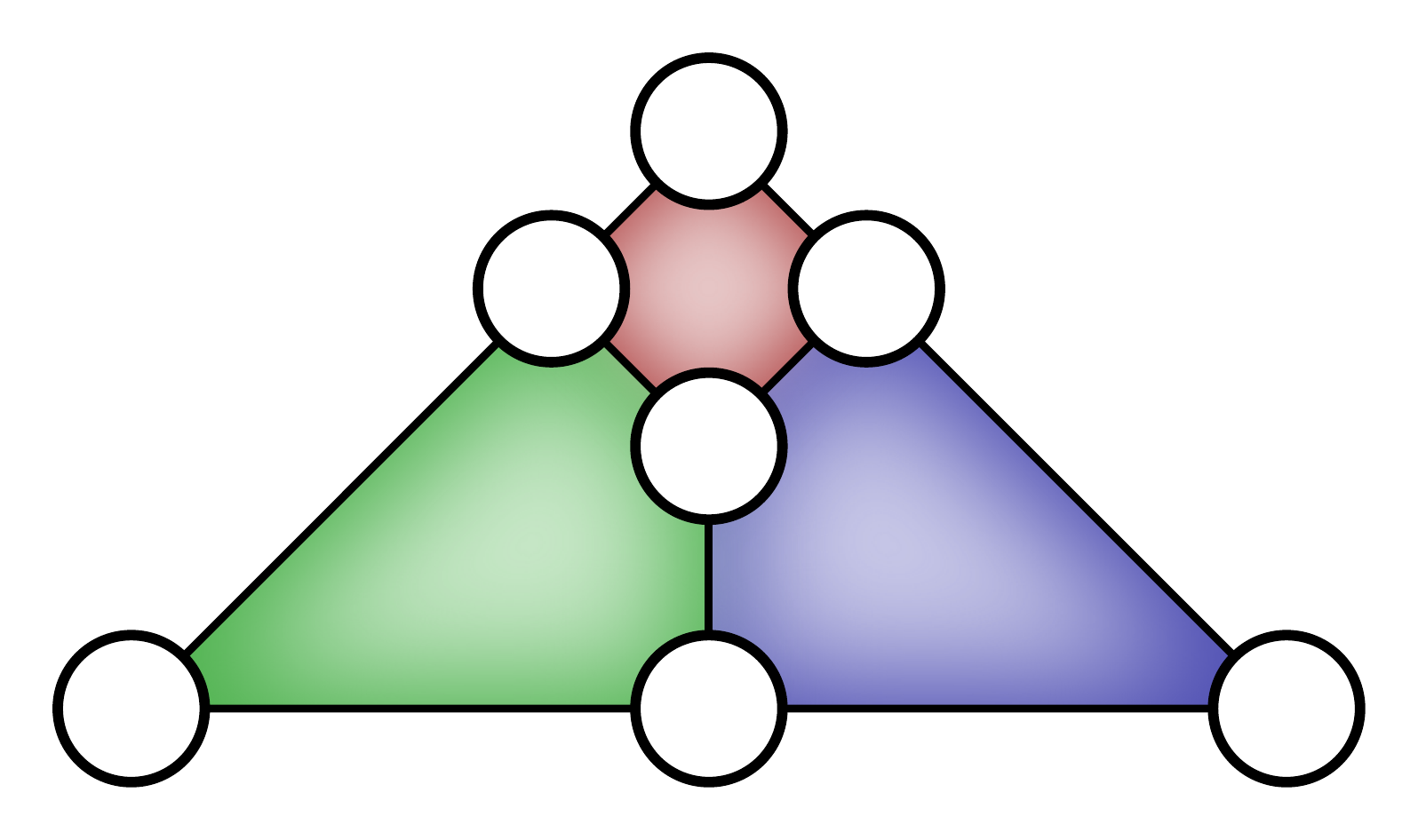}}%
    \put(0.50025872,0.49875399){\color[rgb]{0,0,0}\makebox(0,0)[t]{\lineheight{0}\smash{\begin{tabular}[t]{c}\makebox(0,0){$1$}\end{tabular}}}}%
    \put(0.50025872,0.27653176){\color[rgb]{0,0,0}\makebox(0,0)[t]{\lineheight{0}\smash{\begin{tabular}[t]{c}\makebox(0,0){$4$}\end{tabular}}}}%
    \put(0.38914758,0.38764287){\color[rgb]{0,0,0}\makebox(0,0)[t]{\lineheight{0}\smash{\begin{tabular}[t]{c}\makebox(0,0){$2$}\end{tabular}}}}%
    \put(0.61136983,0.38764287){\color[rgb]{0,0,0}\makebox(0,0)[t]{\lineheight{0}\smash{\begin{tabular}[t]{c}\makebox(0,0){$3$}\end{tabular}}}}%
    \put(0.09285128,0.09134657){\color[rgb]{0,0,0}\makebox(0,0)[t]{\lineheight{0}\smash{\begin{tabular}[t]{c}\makebox(0,0){$5$}\end{tabular}}}}%
    \put(0.50025872,0.09134657){\color[rgb]{0,0,0}\makebox(0,0)[t]{\lineheight{0}\smash{\begin{tabular}[t]{c}\makebox(0,0){$6$}\end{tabular}}}}%
    \put(0.90766612,0.09134657){\color[rgb]{0,0,0}\makebox(0,0)[t]{\lineheight{0}\smash{\begin{tabular}[t]{c}\makebox(0,0){$7$}\end{tabular}}}}%
  \end{picture}%
\endgroup%
	
	\caption{\label{fig:steane-code}Steane code. Each vertex hosts a qubit and is assigned a unique index to identify individual qubits. Each of the three faces of this triangular color code is associated with both an $X$- and a $Z$-stabilizer generator.}
\end{figure}

Triangular color codes are embedded in the orientable surface of a triangle. We place a qubit at each vertex of the lattice, and for each face $f$ of the lattice we define two stabilizer generators:
\begin{equation}
	X_f = \prod_{v \in f} X_v,
	\label{eq:X-f}
\end{equation}
and
\begin{equation}
	Z_f = \prod_{v \in f} Z_v.
	\label{eq:Z-f}
\end{equation}
The seven-qubit Steane code is defined by the combined set of the two types of stabilizer generators:
\begin{equation}
	G_c = \{ X_f \}_{f \in F} \cup \{ Z_f \}_{f \in F},
	\label{eq:steane-C}
\end{equation}
where $F$ denotes the set of all faces. From Eqs.~\eqref{eq:X-f}, \eqref{eq:Z-f}, and~\eqref{eq:steane-C} we see that color codes are symmetric under exchange of $X$ and $Z$ operators.

Logical operators in color codes are naturally associated with 0-chains, that is, subsets of the lattice vertices. Since each vertex hosts a qubit, a 0-chain specifies a set of qubits, and a logical operator is constructed by applying a single-qubit Pauli operator to each of them. Because color codes are CSS codes, $\overline{X}$ operators consist solely of tensor products of $X$ operators, and $\overline{Z}$ operators solely of tensor products of $Z$ operators. Thus, for any 0-chain that supports a valid logical operator, applying $X$ or $Z$ operators to the corresponding qubits yields a $\overline{X}$ or $\overline{Z}$ operator, respectively.

Therefore, we do not need to distinguish between $\overline{X}$ and $\overline{Z}$ operators at the level of 0-chains. We can treat all logical operators uniformly in terms of vertex subsets alone. Thus, we refer to the vertices contained in a 0-chain as the support of the 0-chain.

Although logical operators in color codes can be understood topologically, for the small Steane code it is sufficient, and instructive, to derive them by identifying one representative and generating the others through multiplication with stabilizer elements. A natural choice for a representative logical operator is the 0-chain consisting of all vertices of the code. It commutes with every stabilizer generator, since each face consists of four vertices and thus overlaps with the full vertex set in an even number of qubits. It is also not itself a stabilizer, since the $\overline{X}$ and $\overline{Z}$ operators corresponding to this 0-chain act on all qubits in the code, which is an odd number of qubits. Thus, they anticommute and cannot both be stabilizers, as all stabilizers must commute. Due to the symmetry of the code under exchange of $X$ and $Z$, neither of the two operators is a stabilizer. All logical operators of the Steane code are displayed in Fig.~\ref{fig:steane-logicals}.

\begin{figure*}
	\begin{subfigure}[c]{0.32\textwidth}
		\def\svgwidth{\textwidth}
		
		\caption{}
	\end{subfigure}
	\begin{subfigure}[c]{0.32\textwidth}
		\def\svgwidth{\textwidth}
\begingroup%
  \makeatletter%
  \providecommand\color[2][]{%
    \errmessage{(Inkscape) Color is used for the text in Inkscape, but the package 'color.sty' is not loaded}%
    \renewcommand\color[2][]{}%
  }%
  \providecommand\transparent[1]{%
    \errmessage{(Inkscape) Transparency is used (non-zero) for the text in Inkscape, but the package 'transparent.sty' is not loaded}%
    \renewcommand\transparent[1]{}%
  }%
  \providecommand\rotatebox[2]{#2}%
  \newcommand*\fsize{\dimexpr\f@size pt\relax}%
  \newcommand*\lineheight[1]{\fontsize{\fsize}{#1\fsize}\selectfont}%
  \ifx\svgwidth\undefined%
    \setlength{\unitlength}{765.35433071bp}%
    \ifx\svgscale\undefined%
      \relax%
    \else%
      \setlength{\unitlength}{\unitlength * \real{\svgscale}}%
    \fi%
  \else%
    \setlength{\unitlength}{\svgwidth}%
  \fi%
  \global\let\svgwidth\undefined%
  \global\let\svgscale\undefined%
  \makeatother%
  \begin{picture}(1,0.59259259)%
    \lineheight{1}%
    \setlength\tabcolsep{0pt}%
    \put(0,0){\includegraphics[width=\unitlength,page=1]{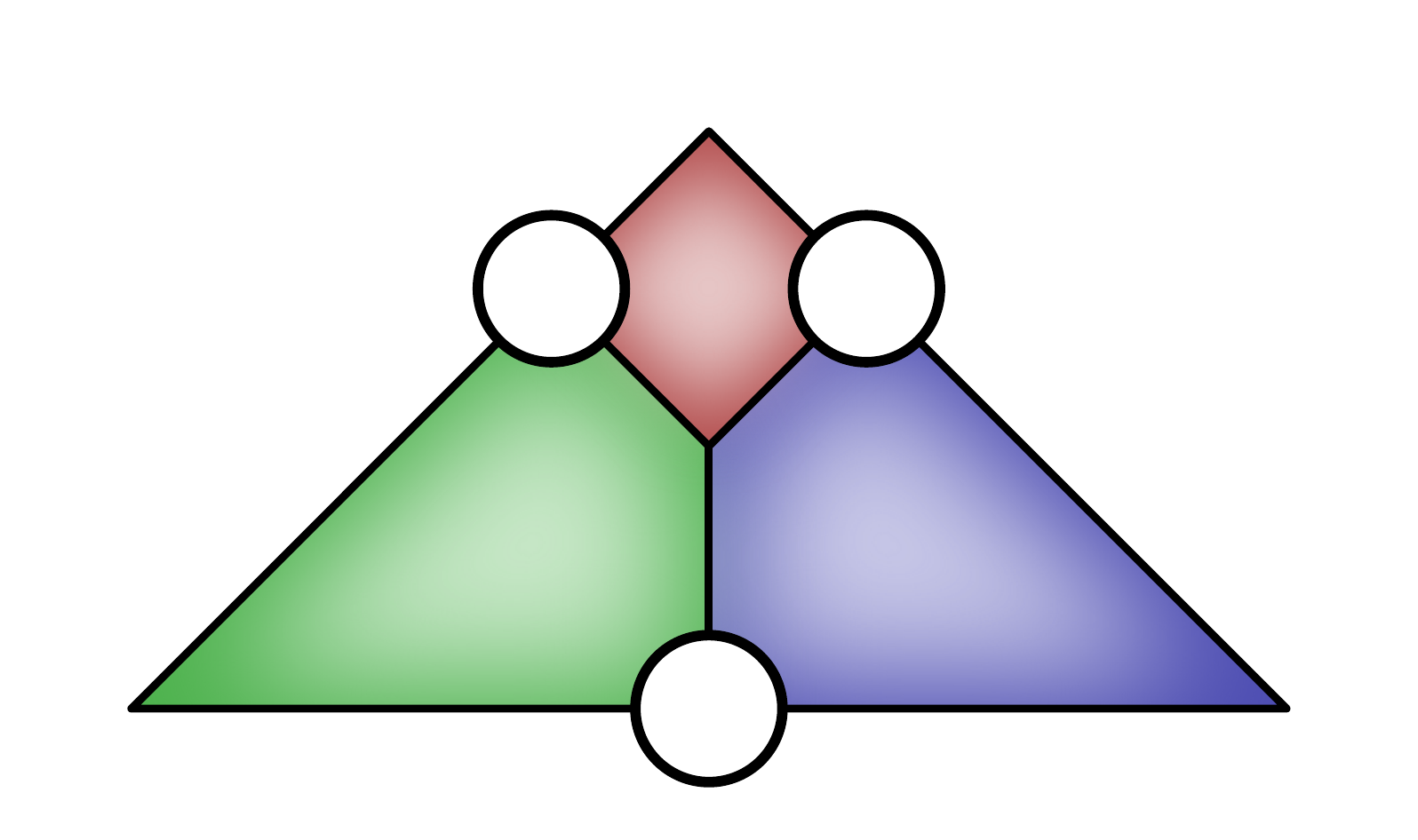}}%
    \put(0.38914758,0.38764287){\color[rgb]{0,0,0}\makebox(0,0)[t]{\lineheight{0}\smash{\begin{tabular}[t]{c}\makebox(0,0){$2$}\end{tabular}}}}%
    \put(0.61136983,0.38764287){\color[rgb]{0,0,0}\makebox(0,0)[t]{\lineheight{0}\smash{\begin{tabular}[t]{c}\makebox(0,0){$3$}\end{tabular}}}}%
    \put(0.50025872,0.09134657){\color[rgb]{0,0,0}\makebox(0,0)[t]{\lineheight{0}\smash{\begin{tabular}[t]{c}\makebox(0,0){$6$}\end{tabular}}}}%
  \end{picture}%
\endgroup%

		\caption{}
	\end{subfigure}
	\begin{subfigure}[c]{0.32\textwidth}
		\def\svgwidth{\textwidth}
\begingroup%
  \makeatletter%
  \providecommand\color[2][]{%
    \errmessage{(Inkscape) Color is used for the text in Inkscape, but the package 'color.sty' is not loaded}%
    \renewcommand\color[2][]{}%
  }%
  \providecommand\transparent[1]{%
    \errmessage{(Inkscape) Transparency is used (non-zero) for the text in Inkscape, but the package 'transparent.sty' is not loaded}%
    \renewcommand\transparent[1]{}%
  }%
  \providecommand\rotatebox[2]{#2}%
  \newcommand*\fsize{\dimexpr\f@size pt\relax}%
  \newcommand*\lineheight[1]{\fontsize{\fsize}{#1\fsize}\selectfont}%
  \ifx\svgwidth\undefined%
    \setlength{\unitlength}{765.35433071bp}%
    \ifx\svgscale\undefined%
      \relax%
    \else%
      \setlength{\unitlength}{\unitlength * \real{\svgscale}}%
    \fi%
  \else%
    \setlength{\unitlength}{\svgwidth}%
  \fi%
  \global\let\svgwidth\undefined%
  \global\let\svgscale\undefined%
  \makeatother%
  \begin{picture}(1,0.59259259)%
    \lineheight{1}%
    \setlength\tabcolsep{0pt}%
    \put(0,0){\includegraphics[width=\unitlength,page=1]{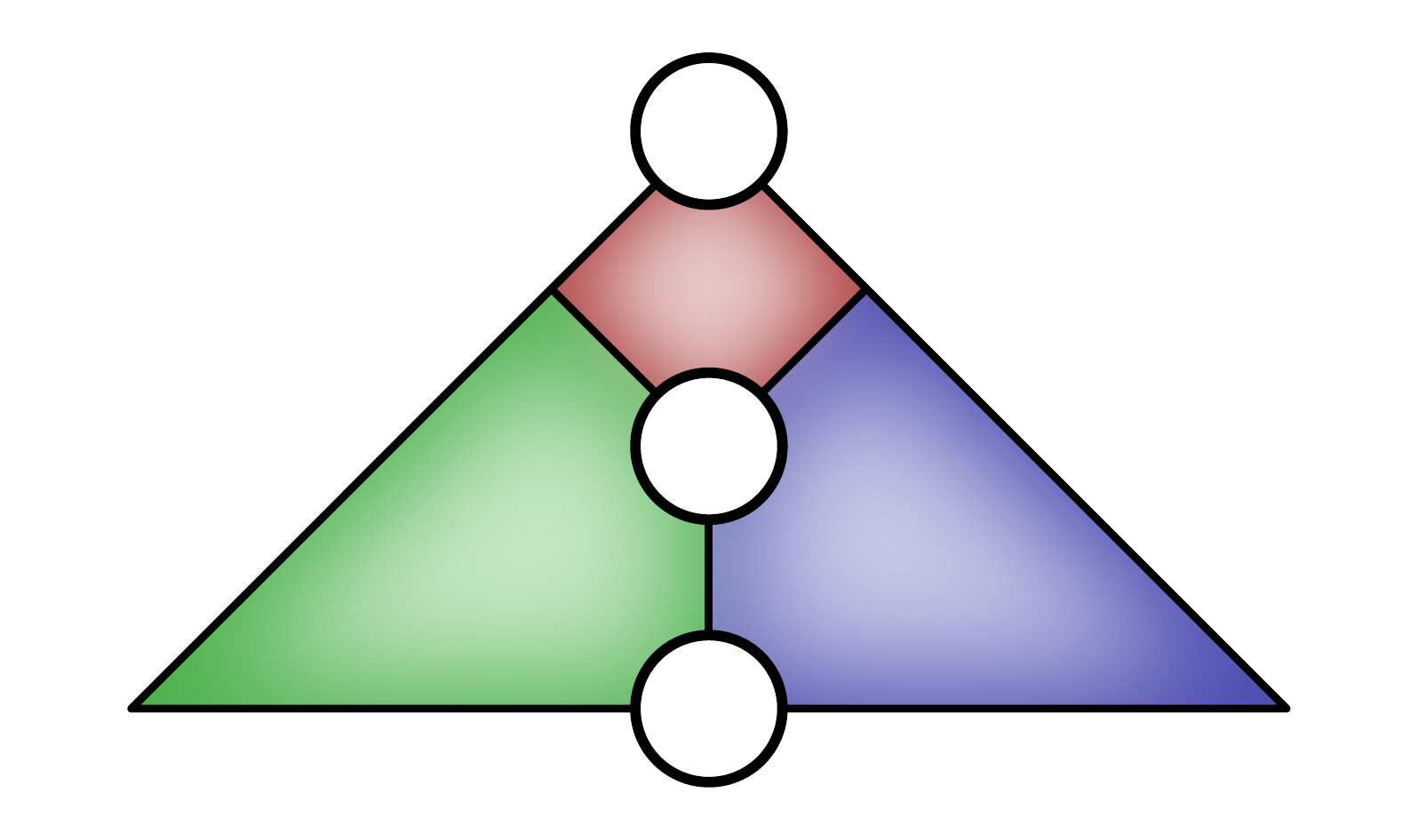}}%
    \put(0.50025872,0.49875399){\color[rgb]{0,0,0}\makebox(0,0)[t]{\lineheight{0}\smash{\begin{tabular}[t]{c}\makebox(0,0){$1$}\end{tabular}}}}%
    \put(0.50025872,0.27653176){\color[rgb]{0,0,0}\makebox(0,0)[t]{\lineheight{0}\smash{\begin{tabular}[t]{c}\makebox(0,0){$4$}\end{tabular}}}}%
    \put(0.50025872,0.09134657){\color[rgb]{0,0,0}\makebox(0,0)[t]{\lineheight{0}\smash{\begin{tabular}[t]{c}\makebox(0,0){$6$}\end{tabular}}}}%
  \end{picture}%
\endgroup%

		\caption{}
	\end{subfigure}
	\begin{subfigure}[c]{0.32\textwidth}
		\def\svgwidth{\textwidth}
\begingroup%
  \makeatletter%
  \providecommand\color[2][]{%
    \errmessage{(Inkscape) Color is used for the text in Inkscape, but the package 'color.sty' is not loaded}%
    \renewcommand\color[2][]{}%
  }%
  \providecommand\transparent[1]{%
    \errmessage{(Inkscape) Transparency is used (non-zero) for the text in Inkscape, but the package 'transparent.sty' is not loaded}%
    \renewcommand\transparent[1]{}%
  }%
  \providecommand\rotatebox[2]{#2}%
  \newcommand*\fsize{\dimexpr\f@size pt\relax}%
  \newcommand*\lineheight[1]{\fontsize{\fsize}{#1\fsize}\selectfont}%
  \ifx\svgwidth\undefined%
    \setlength{\unitlength}{765.35433071bp}%
    \ifx\svgscale\undefined%
      \relax%
    \else%
      \setlength{\unitlength}{\unitlength * \real{\svgscale}}%
    \fi%
  \else%
    \setlength{\unitlength}{\svgwidth}%
  \fi%
  \global\let\svgwidth\undefined%
  \global\let\svgscale\undefined%
  \makeatother%
  \begin{picture}(1,0.59259259)%
    \lineheight{1}%
    \setlength\tabcolsep{0pt}%
    \put(0,0){\includegraphics[width=\unitlength,page=1]{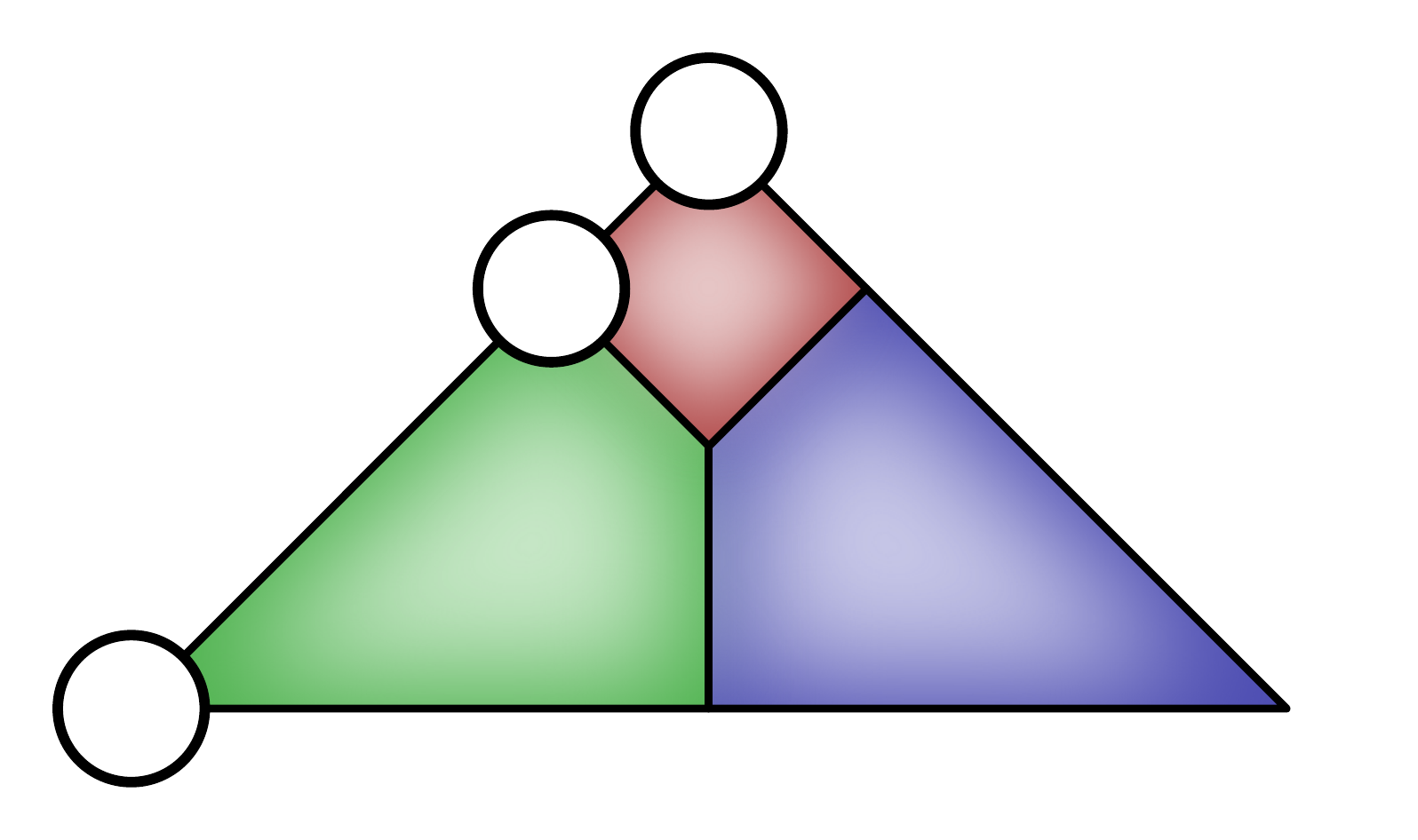}}%
    \put(0.50025872,0.49875399){\color[rgb]{0,0,0}\makebox(0,0)[t]{\lineheight{0}\smash{\begin{tabular}[t]{c}\makebox(0,0){$1$}\end{tabular}}}}%
    \put(0.38914758,0.38764287){\color[rgb]{0,0,0}\makebox(0,0)[t]{\lineheight{0}\smash{\begin{tabular}[t]{c}\makebox(0,0){$2$}\end{tabular}}}}%
    \put(0.09285128,0.09134657){\color[rgb]{0,0,0}\makebox(0,0)[t]{\lineheight{0}\smash{\begin{tabular}[t]{c}\makebox(0,0){$5$}\end{tabular}}}}%
  \end{picture}%
\endgroup%

		\caption{}
	\end{subfigure}
	\begin{subfigure}[c]{0.32\textwidth}
		\def\svgwidth{\textwidth}
\begingroup%
  \makeatletter%
  \providecommand\color[2][]{%
    \errmessage{(Inkscape) Color is used for the text in Inkscape, but the package 'color.sty' is not loaded}%
    \renewcommand\color[2][]{}%
  }%
  \providecommand\transparent[1]{%
    \errmessage{(Inkscape) Transparency is used (non-zero) for the text in Inkscape, but the package 'transparent.sty' is not loaded}%
    \renewcommand\transparent[1]{}%
  }%
  \providecommand\rotatebox[2]{#2}%
  \newcommand*\fsize{\dimexpr\f@size pt\relax}%
  \newcommand*\lineheight[1]{\fontsize{\fsize}{#1\fsize}\selectfont}%
  \ifx\svgwidth\undefined%
    \setlength{\unitlength}{765.35433071bp}%
    \ifx\svgscale\undefined%
      \relax%
    \else%
      \setlength{\unitlength}{\unitlength * \real{\svgscale}}%
    \fi%
  \else%
    \setlength{\unitlength}{\svgwidth}%
  \fi%
  \global\let\svgwidth\undefined%
  \global\let\svgscale\undefined%
  \makeatother%
  \begin{picture}(1,0.59259259)%
    \lineheight{1}%
    \setlength\tabcolsep{0pt}%
    \put(0,0){\includegraphics[width=\unitlength,page=1]{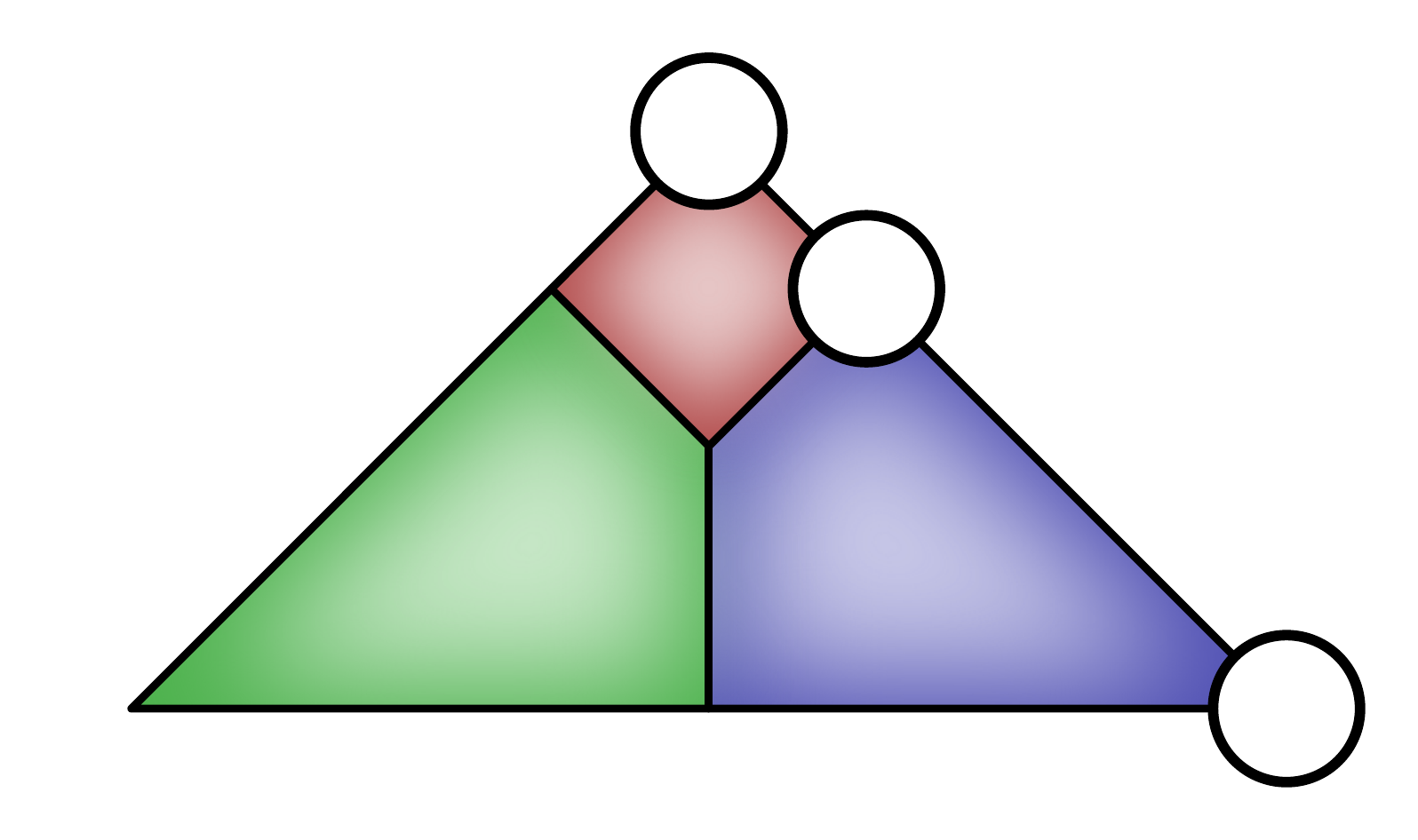}}%
    \put(0.50025872,0.49875399){\color[rgb]{0,0,0}\makebox(0,0)[t]{\lineheight{0}\smash{\begin{tabular}[t]{c}\makebox(0,0){$1$}\end{tabular}}}}%
    \put(0.61136983,0.38764287){\color[rgb]{0,0,0}\makebox(0,0)[t]{\lineheight{0}\smash{\begin{tabular}[t]{c}\makebox(0,0){$3$}\end{tabular}}}}%
    \put(0.90766612,0.09134657){\color[rgb]{0,0,0}\makebox(0,0)[t]{\lineheight{0}\smash{\begin{tabular}[t]{c}\makebox(0,0){$7$}\end{tabular}}}}%
  \end{picture}%
\endgroup%

		\caption{}
	\end{subfigure}
	\begin{subfigure}[c]{0.32\textwidth}
		\def\svgwidth{\textwidth}
\begingroup%
  \makeatletter%
  \providecommand\color[2][]{%
    \errmessage{(Inkscape) Color is used for the text in Inkscape, but the package 'color.sty' is not loaded}%
    \renewcommand\color[2][]{}%
  }%
  \providecommand\transparent[1]{%
    \errmessage{(Inkscape) Transparency is used (non-zero) for the text in Inkscape, but the package 'transparent.sty' is not loaded}%
    \renewcommand\transparent[1]{}%
  }%
  \providecommand\rotatebox[2]{#2}%
  \newcommand*\fsize{\dimexpr\f@size pt\relax}%
  \newcommand*\lineheight[1]{\fontsize{\fsize}{#1\fsize}\selectfont}%
  \ifx\svgwidth\undefined%
    \setlength{\unitlength}{765.35433071bp}%
    \ifx\svgscale\undefined%
      \relax%
    \else%
      \setlength{\unitlength}{\unitlength * \real{\svgscale}}%
    \fi%
  \else%
    \setlength{\unitlength}{\svgwidth}%
  \fi%
  \global\let\svgwidth\undefined%
  \global\let\svgscale\undefined%
  \makeatother%
  \begin{picture}(1,0.59259259)%
    \lineheight{1}%
    \setlength\tabcolsep{0pt}%
    \put(0,0){\includegraphics[width=\unitlength,page=1]{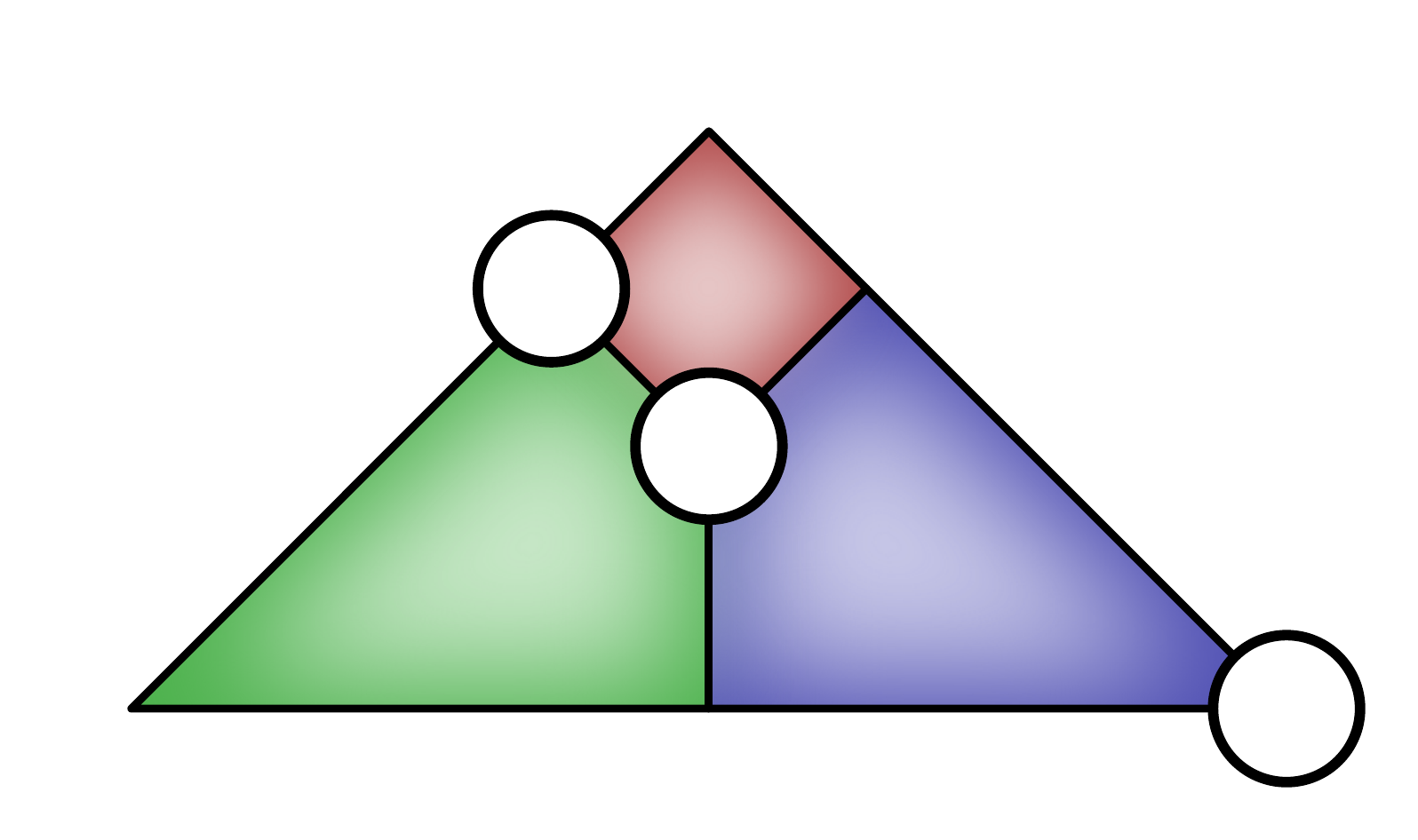}}%
    \put(0.50025872,0.27653176){\color[rgb]{0,0,0}\makebox(0,0)[t]{\lineheight{0}\smash{\begin{tabular}[t]{c}\makebox(0,0){$4$}\end{tabular}}}}%
    \put(0.38914758,0.38764287){\color[rgb]{0,0,0}\makebox(0,0)[t]{\lineheight{0}\smash{\begin{tabular}[t]{c}\makebox(0,0){$2$}\end{tabular}}}}%
    \put(0.90766612,0.09134657){\color[rgb]{0,0,0}\makebox(0,0)[t]{\lineheight{0}\smash{\begin{tabular}[t]{c}\makebox(0,0){$7$}\end{tabular}}}}%
  \end{picture}%
\endgroup%

		\caption{}
	\end{subfigure}
	\begin{subfigure}[c]{0.32\textwidth}
		\def\svgwidth{\textwidth}
\begingroup%
  \makeatletter%
  \providecommand\color[2][]{%
    \errmessage{(Inkscape) Color is used for the text in Inkscape, but the package 'color.sty' is not loaded}%
    \renewcommand\color[2][]{}%
  }%
  \providecommand\transparent[1]{%
    \errmessage{(Inkscape) Transparency is used (non-zero) for the text in Inkscape, but the package 'transparent.sty' is not loaded}%
    \renewcommand\transparent[1]{}%
  }%
  \providecommand\rotatebox[2]{#2}%
  \newcommand*\fsize{\dimexpr\f@size pt\relax}%
  \newcommand*\lineheight[1]{\fontsize{\fsize}{#1\fsize}\selectfont}%
  \ifx\svgwidth\undefined%
    \setlength{\unitlength}{765.35433071bp}%
    \ifx\svgscale\undefined%
      \relax%
    \else%
      \setlength{\unitlength}{\unitlength * \real{\svgscale}}%
    \fi%
  \else%
    \setlength{\unitlength}{\svgwidth}%
  \fi%
  \global\let\svgwidth\undefined%
  \global\let\svgscale\undefined%
  \makeatother%
  \begin{picture}(1,0.59259259)%
    \lineheight{1}%
    \setlength\tabcolsep{0pt}%
    \put(0,0){\includegraphics[width=\unitlength,page=1]{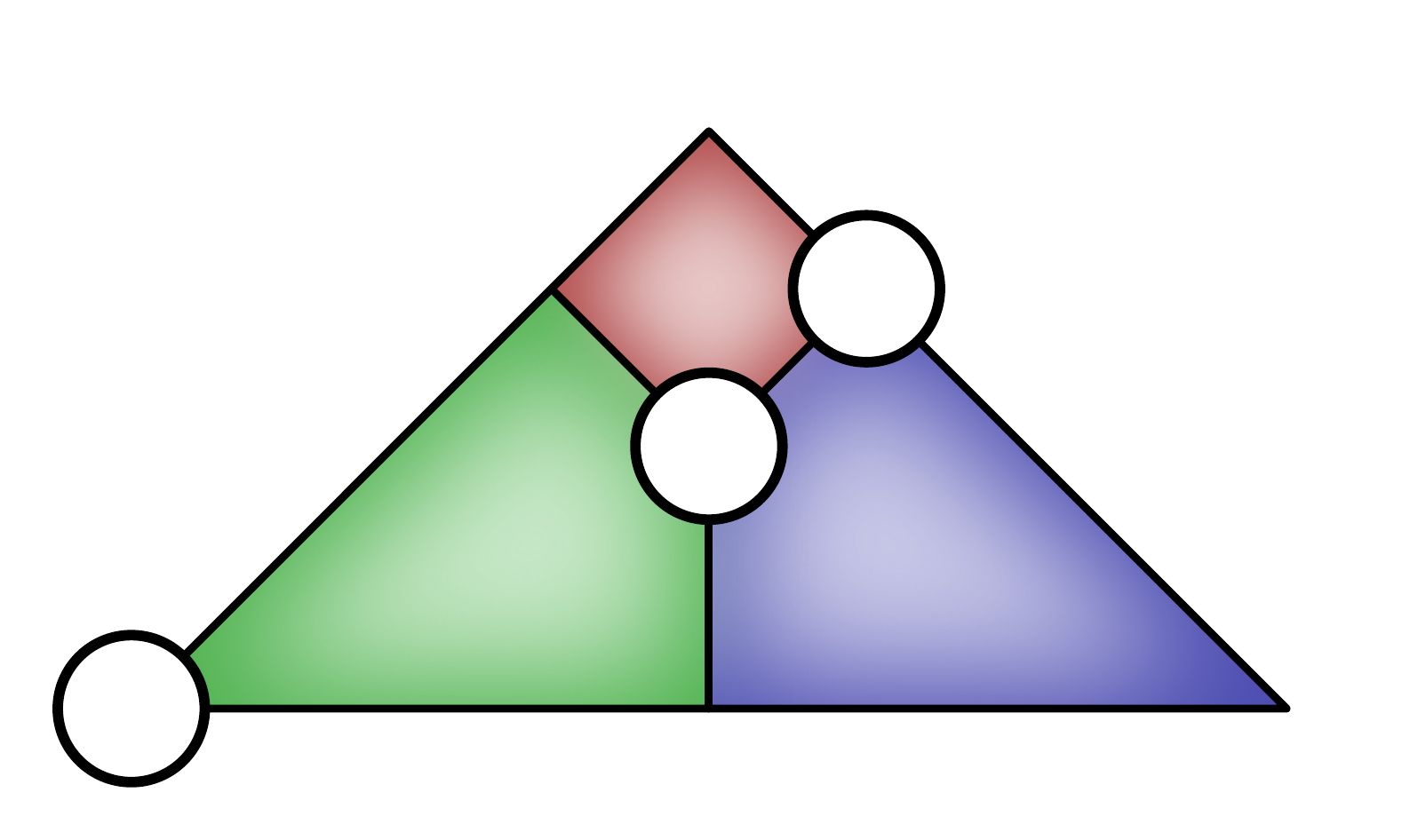}}%
    \put(0.50025872,0.27653176){\color[rgb]{0,0,0}\makebox(0,0)[t]{\lineheight{0}\smash{\begin{tabular}[t]{c}\makebox(0,0){$4$}\end{tabular}}}}%
    \put(0.61136983,0.38764287){\color[rgb]{0,0,0}\makebox(0,0)[t]{\lineheight{0}\smash{\begin{tabular}[t]{c}\makebox(0,0){$3$}\end{tabular}}}}%
    \put(0.09285128,0.09134657){\color[rgb]{0,0,0}\makebox(0,0)[t]{\lineheight{0}\smash{\begin{tabular}[t]{c}\makebox(0,0){$5$}\end{tabular}}}}%
  \end{picture}%
\endgroup%

		\caption{}
	\end{subfigure}
	\begin{subfigure}[c]{0.32\textwidth}
		\def\svgwidth{\textwidth}
\begingroup%
  \makeatletter%
  \providecommand\color[2][]{%
    \errmessage{(Inkscape) Color is used for the text in Inkscape, but the package 'color.sty' is not loaded}%
    \renewcommand\color[2][]{}%
  }%
  \providecommand\transparent[1]{%
    \errmessage{(Inkscape) Transparency is used (non-zero) for the text in Inkscape, but the package 'transparent.sty' is not loaded}%
    \renewcommand\transparent[1]{}%
  }%
  \providecommand\rotatebox[2]{#2}%
  \newcommand*\fsize{\dimexpr\f@size pt\relax}%
  \newcommand*\lineheight[1]{\fontsize{\fsize}{#1\fsize}\selectfont}%
  \ifx\svgwidth\undefined%
    \setlength{\unitlength}{765.35433071bp}%
    \ifx\svgscale\undefined%
      \relax%
    \else%
      \setlength{\unitlength}{\unitlength * \real{\svgscale}}%
    \fi%
  \else%
    \setlength{\unitlength}{\svgwidth}%
  \fi%
  \global\let\svgwidth\undefined%
  \global\let\svgscale\undefined%
  \makeatother%
  \begin{picture}(1,0.59259259)%
    \lineheight{1}%
    \setlength\tabcolsep{0pt}%
    \put(0,0){\includegraphics[width=\unitlength,page=1]{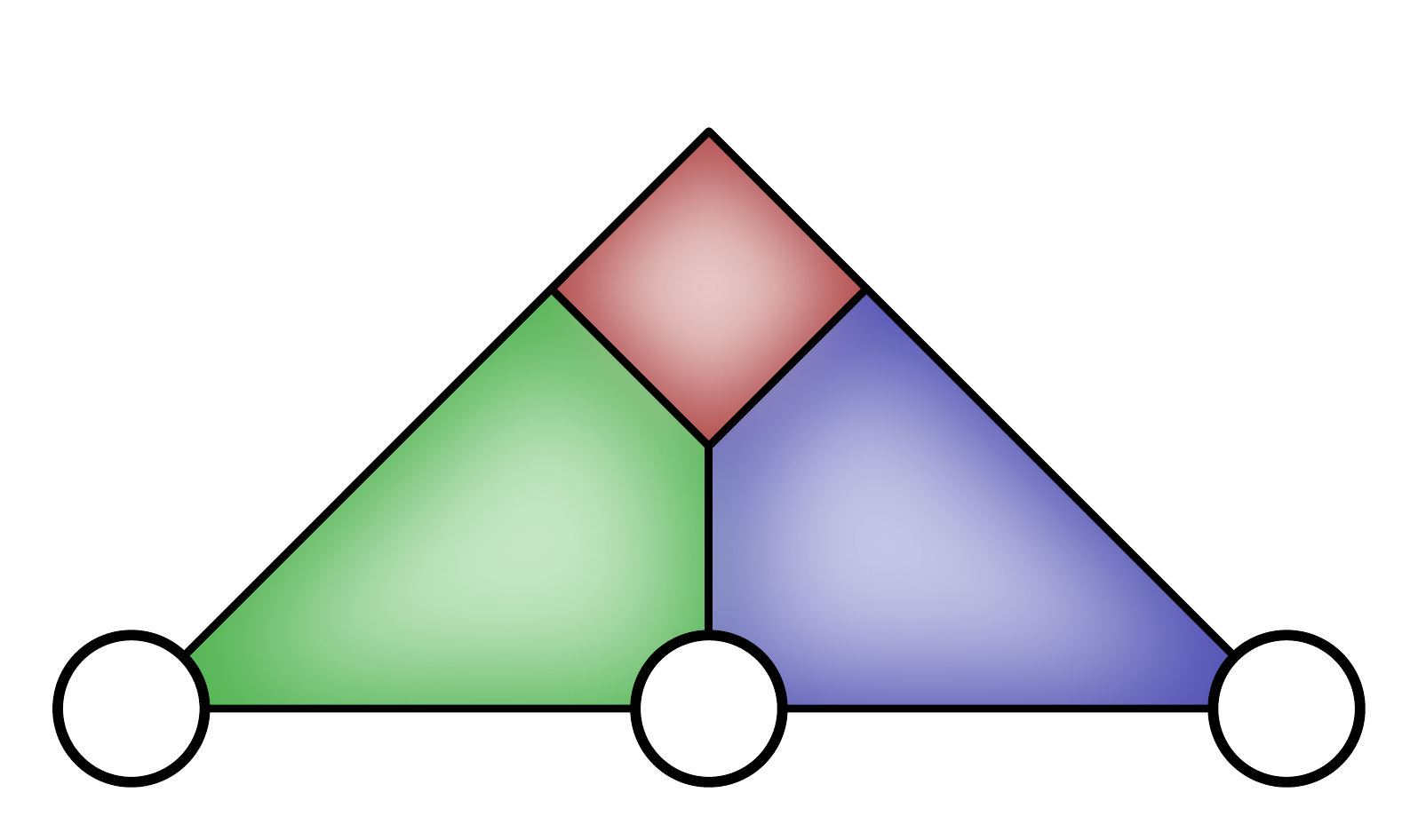}}%
    \put(0.09285128,0.09134657){\color[rgb]{0,0,0}\makebox(0,0)[t]{\lineheight{0}\smash{\begin{tabular}[t]{c}\makebox(0,0){$5$}\end{tabular}}}}%
    \put(0.50025872,0.09134657){\color[rgb]{0,0,0}\makebox(0,0)[t]{\lineheight{0}\smash{\begin{tabular}[t]{c}\makebox(0,0){$6$}\end{tabular}}}}%
    \put(0.90766612,0.09134657){\color[rgb]{0,0,0}\makebox(0,0)[t]{\lineheight{0}\smash{\begin{tabular}[t]{c}\makebox(0,0){$7$}\end{tabular}}}}%
  \end{picture}%
\endgroup%

		\caption{}
	\end{subfigure}
	\caption{\label{fig:steane-logicals}All $2^3 = 8$ 0-chains that represent logical operators in the Steane code. The chains are labeled from (a) to (h) for later reference.}
\end{figure*}

Next, we describe our measurement scheme, illustrated in Fig.~\ref{fig:steane-scheme}. We label individual qubits according to the numbering introduced in Fig.~\ref{fig:steane-code}. The protocol begins with an $X$-BM on the tip at the top of the triangle, corresponding to qubit~$1$. This is followed by $Z$-BMs on the remaining three qubits of the red face, namely qubits~$2$, $3$, and~$4$. Finally, $X$-BMs are performed on the bottom three qubits, i.e., qubits~$5$, $6$, and~$7$.

\begin{figure*}
	\begin{subfigure}[c]{0.32\textwidth}
		\def\svgwidth{\textwidth}
\begingroup%
  \makeatletter%
  \providecommand\color[2][]{%
    \errmessage{(Inkscape) Color is used for the text in Inkscape, but the package 'color.sty' is not loaded}%
    \renewcommand\color[2][]{}%
  }%
  \providecommand\transparent[1]{%
    \errmessage{(Inkscape) Transparency is used (non-zero) for the text in Inkscape, but the package 'transparent.sty' is not loaded}%
    \renewcommand\transparent[1]{}%
  }%
  \providecommand\rotatebox[2]{#2}%
  \newcommand*\fsize{\dimexpr\f@size pt\relax}%
  \newcommand*\lineheight[1]{\fontsize{\fsize}{#1\fsize}\selectfont}%
  \ifx\svgwidth\undefined%
    \setlength{\unitlength}{765.35433071bp}%
    \ifx\svgscale\undefined%
      \relax%
    \else%
      \setlength{\unitlength}{\unitlength * \real{\svgscale}}%
    \fi%
  \else%
    \setlength{\unitlength}{\svgwidth}%
  \fi%
  \global\let\svgwidth\undefined%
  \global\let\svgscale\undefined%
  \makeatother%
  \begin{picture}(1,0.59259259)%
    \lineheight{1}%
    \setlength\tabcolsep{0pt}%
    \put(0,0){\includegraphics[width=\unitlength,page=1]{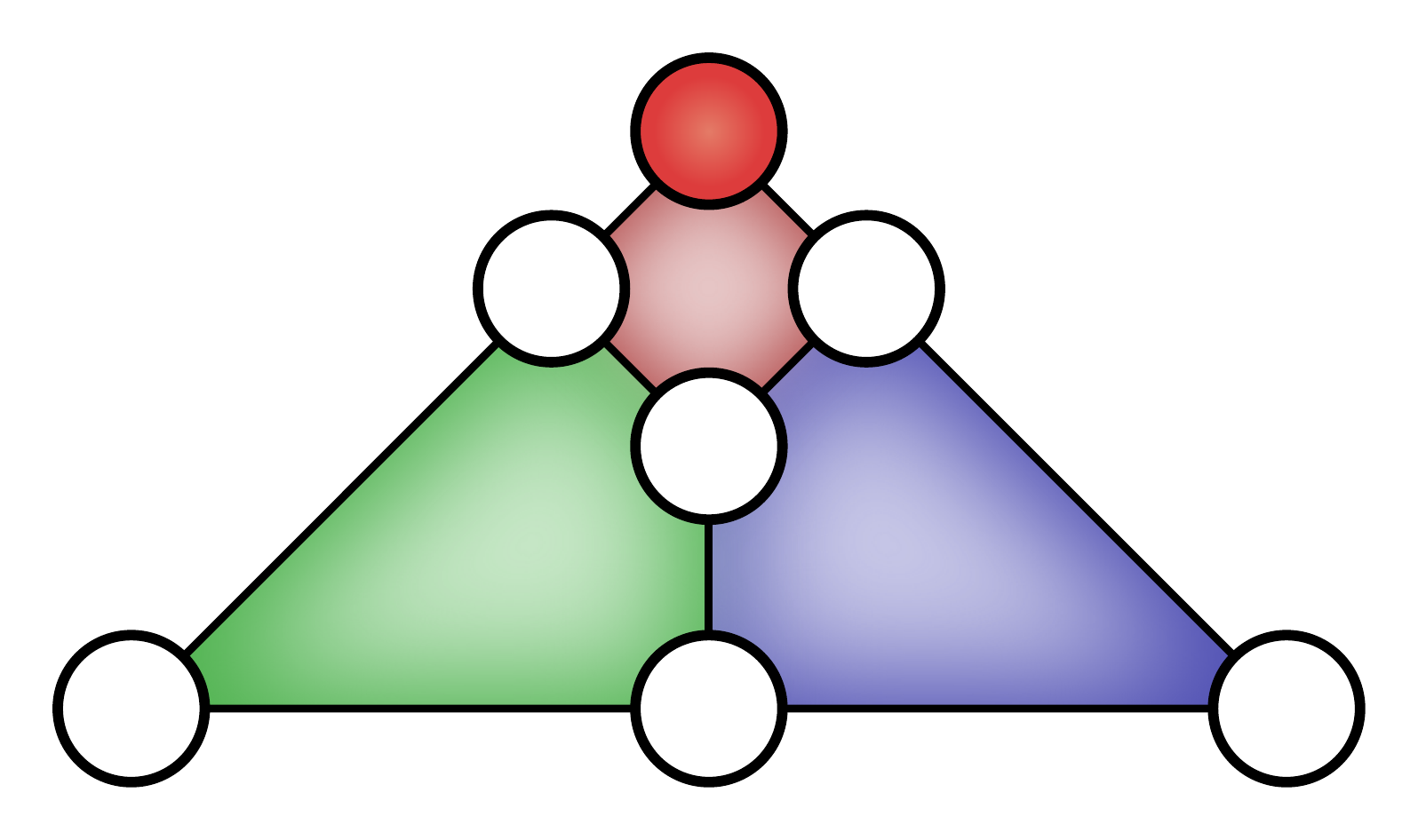}}%
    \put(0.50025872,0.49875399){\color[rgb]{0,0,0}\makebox(0,0)[t]{\lineheight{0}\smash{\begin{tabular}[t]{c}\makebox(0,0){$\contour{white}{1}$}\end{tabular}}}}%
    \put(0.50025872,0.27653176){\color[rgb]{0,0,0}\makebox(0,0)[t]{\lineheight{0}\smash{\begin{tabular}[t]{c}\makebox(0,0){$4$}\end{tabular}}}}%
    \put(0.38914758,0.38764287){\color[rgb]{0,0,0}\makebox(0,0)[t]{\lineheight{0}\smash{\begin{tabular}[t]{c}\makebox(0,0){$2$}\end{tabular}}}}%
    \put(0.61136983,0.38764287){\color[rgb]{0,0,0}\makebox(0,0)[t]{\lineheight{0}\smash{\begin{tabular}[t]{c}\makebox(0,0){$3$}\end{tabular}}}}%
    \put(0.09285128,0.09134657){\color[rgb]{0,0,0}\makebox(0,0)[t]{\lineheight{0}\smash{\begin{tabular}[t]{c}\makebox(0,0){$5$}\end{tabular}}}}%
    \put(0.50025872,0.09134657){\color[rgb]{0,0,0}\makebox(0,0)[t]{\lineheight{0}\smash{\begin{tabular}[t]{c}\makebox(0,0){$6$}\end{tabular}}}}%
    \put(0.90766612,0.09134657){\color[rgb]{0,0,0}\makebox(0,0)[t]{\lineheight{0}\smash{\begin{tabular}[t]{c}\makebox(0,0){$7$}\end{tabular}}}}%
  \end{picture}%
\endgroup%

		\caption{}
	\end{subfigure}
	\begin{subfigure}[c]{0.32\textwidth}
		\def\svgwidth{\textwidth}
\begingroup%
  \makeatletter%
  \providecommand\color[2][]{%
    \errmessage{(Inkscape) Color is used for the text in Inkscape, but the package 'color.sty' is not loaded}%
    \renewcommand\color[2][]{}%
  }%
  \providecommand\transparent[1]{%
    \errmessage{(Inkscape) Transparency is used (non-zero) for the text in Inkscape, but the package 'transparent.sty' is not loaded}%
    \renewcommand\transparent[1]{}%
  }%
  \providecommand\rotatebox[2]{#2}%
  \newcommand*\fsize{\dimexpr\f@size pt\relax}%
  \newcommand*\lineheight[1]{\fontsize{\fsize}{#1\fsize}\selectfont}%
  \ifx\svgwidth\undefined%
    \setlength{\unitlength}{765.35433071bp}%
    \ifx\svgscale\undefined%
      \relax%
    \else%
      \setlength{\unitlength}{\unitlength * \real{\svgscale}}%
    \fi%
  \else%
    \setlength{\unitlength}{\svgwidth}%
  \fi%
  \global\let\svgwidth\undefined%
  \global\let\svgscale\undefined%
  \makeatother%
  \begin{picture}(1,0.59259259)%
    \lineheight{1}%
    \setlength\tabcolsep{0pt}%
    \put(0,0){\includegraphics[width=\unitlength,page=1]{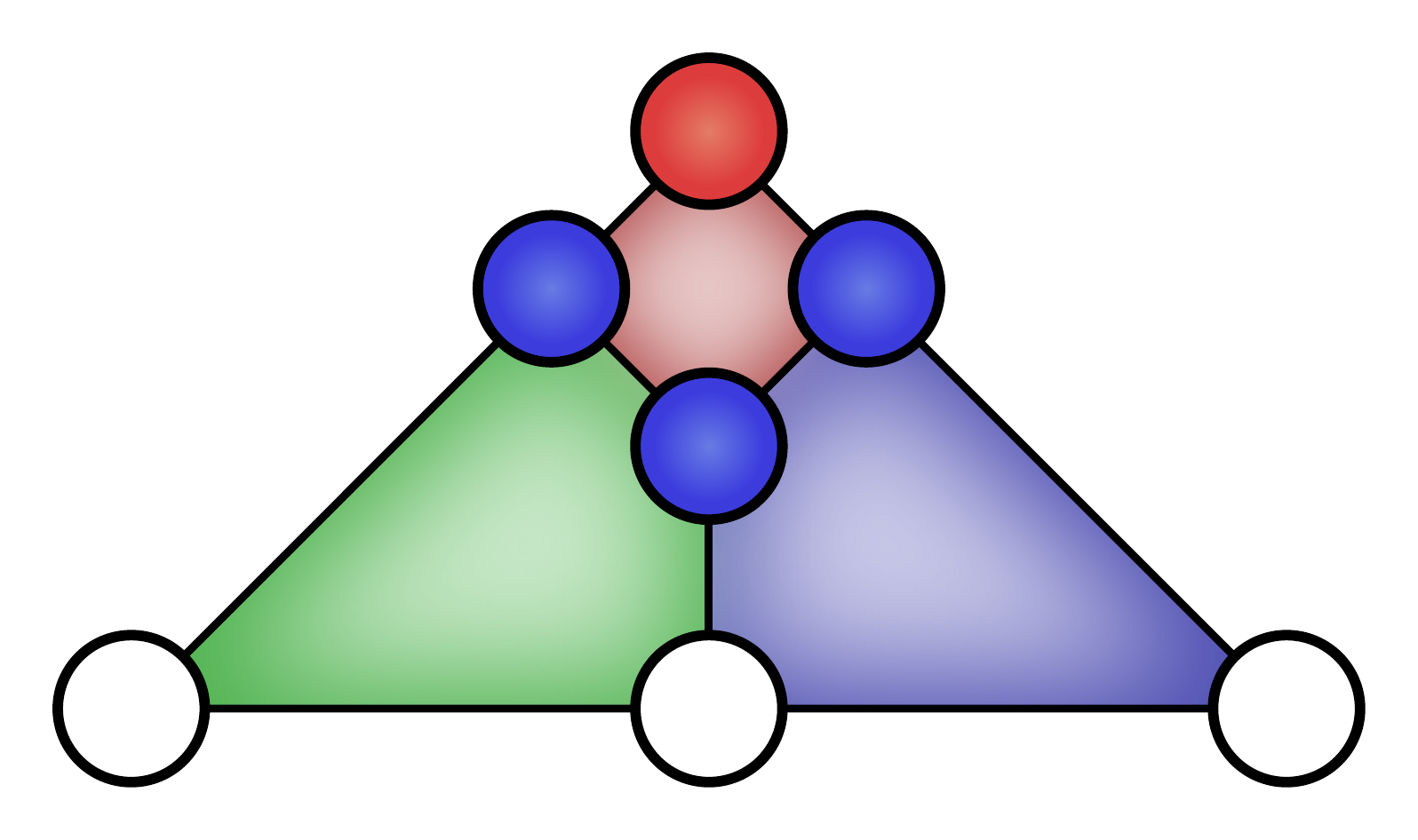}}%
    \put(0.50025872,0.49875399){\color[rgb]{0,0,0}\makebox(0,0)[t]{\lineheight{0}\smash{\begin{tabular}[t]{c}\makebox(0,0){$\contour{white}{1}$}\end{tabular}}}}%
    \put(0.50025872,0.27653176){\color[rgb]{0,0,0}\makebox(0,0)[t]{\lineheight{0}\smash{\begin{tabular}[t]{c}\makebox(0,0){$\contour{white}{4}$}\end{tabular}}}}%
    \put(0.38914758,0.38764287){\color[rgb]{0,0,0}\makebox(0,0)[t]{\lineheight{0}\smash{\begin{tabular}[t]{c}\makebox(0,0){$\contour{white}{2}$}\end{tabular}}}}%
    \put(0.61136983,0.38764287){\color[rgb]{0,0,0}\makebox(0,0)[t]{\lineheight{0}\smash{\begin{tabular}[t]{c}\makebox(0,0){$\contour{white}{3}$}\end{tabular}}}}%
    \put(0.09285128,0.09134657){\color[rgb]{0,0,0}\makebox(0,0)[t]{\lineheight{0}\smash{\begin{tabular}[t]{c}\makebox(0,0){$5$}\end{tabular}}}}%
    \put(0.50025872,0.09134657){\color[rgb]{0,0,0}\makebox(0,0)[t]{\lineheight{0}\smash{\begin{tabular}[t]{c}\makebox(0,0){$6$}\end{tabular}}}}%
    \put(0.90766612,0.09134657){\color[rgb]{0,0,0}\makebox(0,0)[t]{\lineheight{0}\smash{\begin{tabular}[t]{c}\makebox(0,0){$7$}\end{tabular}}}}%
  \end{picture}%
\endgroup%

		\caption{}
	\end{subfigure}
	\begin{subfigure}[c]{0.32\textwidth}
		\def\svgwidth{\textwidth}
		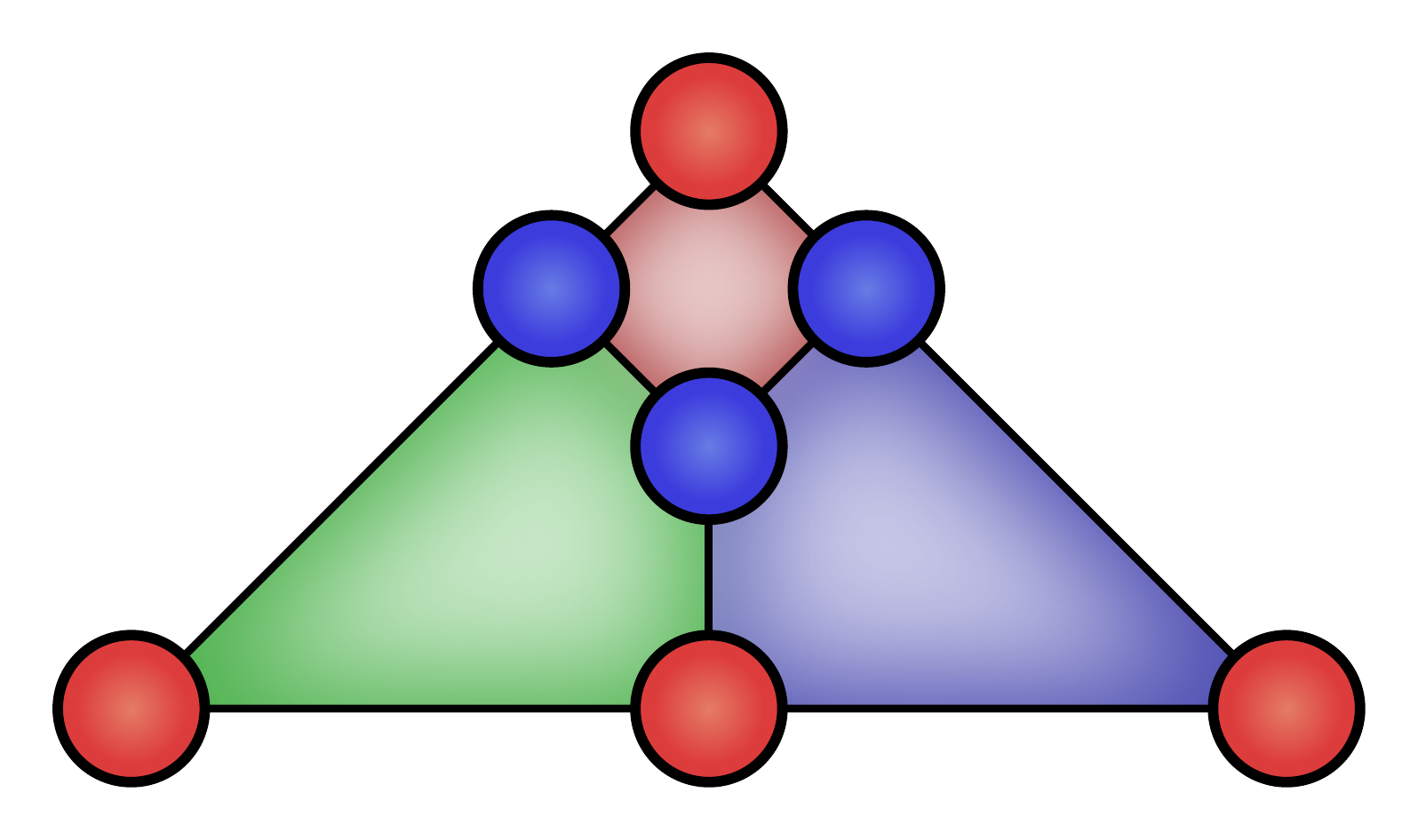
		\caption{}
	\end{subfigure}
	\caption{\label{fig:steane-scheme}Measurement scheme for the Steane code. $X$-BMs and $Z$-BMs are indicated by red and blue vertices, respectively. The three steps of the scheme are: (a) measuring the first qubit, at the tip of the triangle using an $X$-BM; (b) measuring the remaining three qubits of the red plaquette with $Z$-BMs; (c) measuring the remaining three qubits at the bottom with $X$-BMs.}
\end{figure*}

In the following, we explain how the logical operators $\overline{X}$ and $\overline{Z}$ are measured if any of the transversal BMs succeeds, as illustrated in Fig.~\ref{fig:steane-sol}. We refer to the logical operators by their label in Fig.~\ref{fig:steane-logicals}; for example, logical operator~(e) corresponds to the 0-chain consisting of vertices $1$, $3$, and $7$, i.e., the logical operators $X_1X_3X_7 \in [\overline{X}]$ and $Z_1Z_3Z_7 \in [\overline{Z}]$. Thus, we refer to these two operators as $\overline{X}$~(e) and $\overline{Z}$~(e), respectively.

The first measurement is an $X$-BM on the tip at the top of the triangle, i.e., qubit~$1$. Three logical 0-chains have support on this qubit: (c), (d), and (e). (We exclude~(a), since it contains all qubits and thus requires unnecessary measurements.) If this BM succeeds, we can choose any two of these three logical 0-chains to complete the logical $\overline{X}$ and $\overline{Z}$ measurements.

Next, the remaining three qubits of the red face, i.e., qubits $2$, $3$, and $4$, are measured using $Z$-BMs. If a success occurs on qubit~$2$, we can complete $\overline{X}$~(d) and either $\overline{Z}$~(b) or~(f).

The case for qubit~$3$ mirrors the one for qubit~$2$. We can complete $\overline{X}$~(e), and either $\overline{Z}$~(b) or~(g). If the BM on qubit~$4$ succeeds, we complete $\overline{X}$~(c), and either $\overline{Z}$~(f) or~(g).

In the final step, the remaining qubits, $5$, $6$, and $7$, at the bottom of the triangle, are measured with $X$-BMs. $\overline{X}$~(h) is thus completed in all cases. If a success occurs on the left bottom tip of the triangle, i.e., on qubit~$5$, $\overline{Z}$~(g) is completed. If the success is on qubit~$7$, the right bottom tip, $\overline{Z}$~(f) is completed. Finally, if the success occurs on qubit~$6$, $\overline{Z}$~(b) is completed.

\begin{figure*}
	\begin{subfigure}[c]{0.32\textwidth}
		\def\svgwidth{\textwidth}
\begingroup%
  \makeatletter%
  \providecommand\color[2][]{%
    \errmessage{(Inkscape) Color is used for the text in Inkscape, but the package 'color.sty' is not loaded}%
    \renewcommand\color[2][]{}%
  }%
  \providecommand\transparent[1]{%
    \errmessage{(Inkscape) Transparency is used (non-zero) for the text in Inkscape, but the package 'transparent.sty' is not loaded}%
    \renewcommand\transparent[1]{}%
  }%
  \providecommand\rotatebox[2]{#2}%
  \newcommand*\fsize{\dimexpr\f@size pt\relax}%
  \newcommand*\lineheight[1]{\fontsize{\fsize}{#1\fsize}\selectfont}%
  \ifx\svgwidth\undefined%
    \setlength{\unitlength}{765.35433071bp}%
    \ifx\svgscale\undefined%
      \relax%
    \else%
      \setlength{\unitlength}{\unitlength * \real{\svgscale}}%
    \fi%
  \else%
    \setlength{\unitlength}{\svgwidth}%
  \fi%
  \global\let\svgwidth\undefined%
  \global\let\svgscale\undefined%
  \makeatother%
  \begin{picture}(1,0.59259259)%
    \lineheight{1}%
    \setlength\tabcolsep{0pt}%
    \put(0,0){\includegraphics[width=\unitlength,page=1]{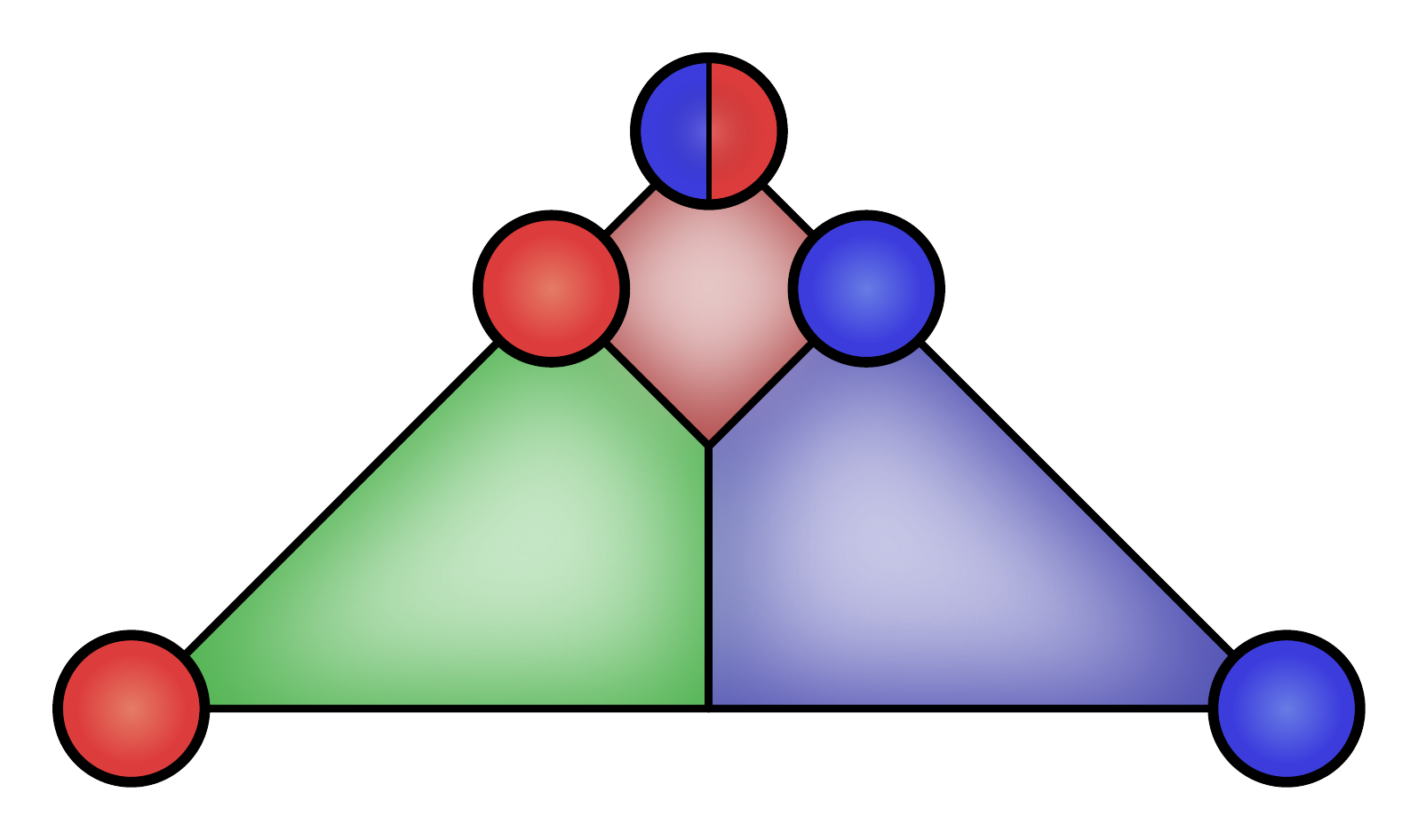}}%
    \put(0.50025872,0.49875399){\color[rgb]{0,0,0}\makebox(0,0)[t]{\lineheight{0}\smash{\begin{tabular}[t]{c}\makebox(0,0){$\contour{white}{1}$}\end{tabular}}}}%
    \put(0.38914758,0.38764287){\color[rgb]{0,0,0}\makebox(0,0)[t]{\lineheight{0}\smash{\begin{tabular}[t]{c}\makebox(0,0){$\contour{white}{2}$}\end{tabular}}}}%
    \put(0.61136983,0.38764287){\color[rgb]{0,0,0}\makebox(0,0)[t]{\lineheight{0}\smash{\begin{tabular}[t]{c}\makebox(0,0){$\contour{white}{3}$}\end{tabular}}}}%
    \put(0.0928513,0.09134657){\color[rgb]{0,0,0}\makebox(0,0)[t]{\lineheight{0}\smash{\begin{tabular}[t]{c}\makebox(0,0){$\contour{white}{5}$}\end{tabular}}}}%
    \put(0.90766613,0.09134657){\color[rgb]{0,0,0}\makebox(0,0)[t]{\lineheight{0}\smash{\begin{tabular}[t]{c}\makebox(0,0){$\contour{white}{7}$}\end{tabular}}}}%
  \end{picture}%
\endgroup%

		\caption{}
	\end{subfigure}
	\begin{subfigure}[c]{0.32\textwidth}
		\def\svgwidth{\textwidth}
\begingroup%
  \makeatletter%
  \providecommand\color[2][]{%
    \errmessage{(Inkscape) Color is used for the text in Inkscape, but the package 'color.sty' is not loaded}%
    \renewcommand\color[2][]{}%
  }%
  \providecommand\transparent[1]{%
    \errmessage{(Inkscape) Transparency is used (non-zero) for the text in Inkscape, but the package 'transparent.sty' is not loaded}%
    \renewcommand\transparent[1]{}%
  }%
  \providecommand\rotatebox[2]{#2}%
  \newcommand*\fsize{\dimexpr\f@size pt\relax}%
  \newcommand*\lineheight[1]{\fontsize{\fsize}{#1\fsize}\selectfont}%
  \ifx\svgwidth\undefined%
    \setlength{\unitlength}{765.35433071bp}%
    \ifx\svgscale\undefined%
      \relax%
    \else%
      \setlength{\unitlength}{\unitlength * \real{\svgscale}}%
    \fi%
  \else%
    \setlength{\unitlength}{\svgwidth}%
  \fi%
  \global\let\svgwidth\undefined%
  \global\let\svgscale\undefined%
  \makeatother%
  \begin{picture}(1,0.59259259)%
    \lineheight{1}%
    \setlength\tabcolsep{0pt}%
    \put(0,0){\includegraphics[width=\unitlength,page=1]{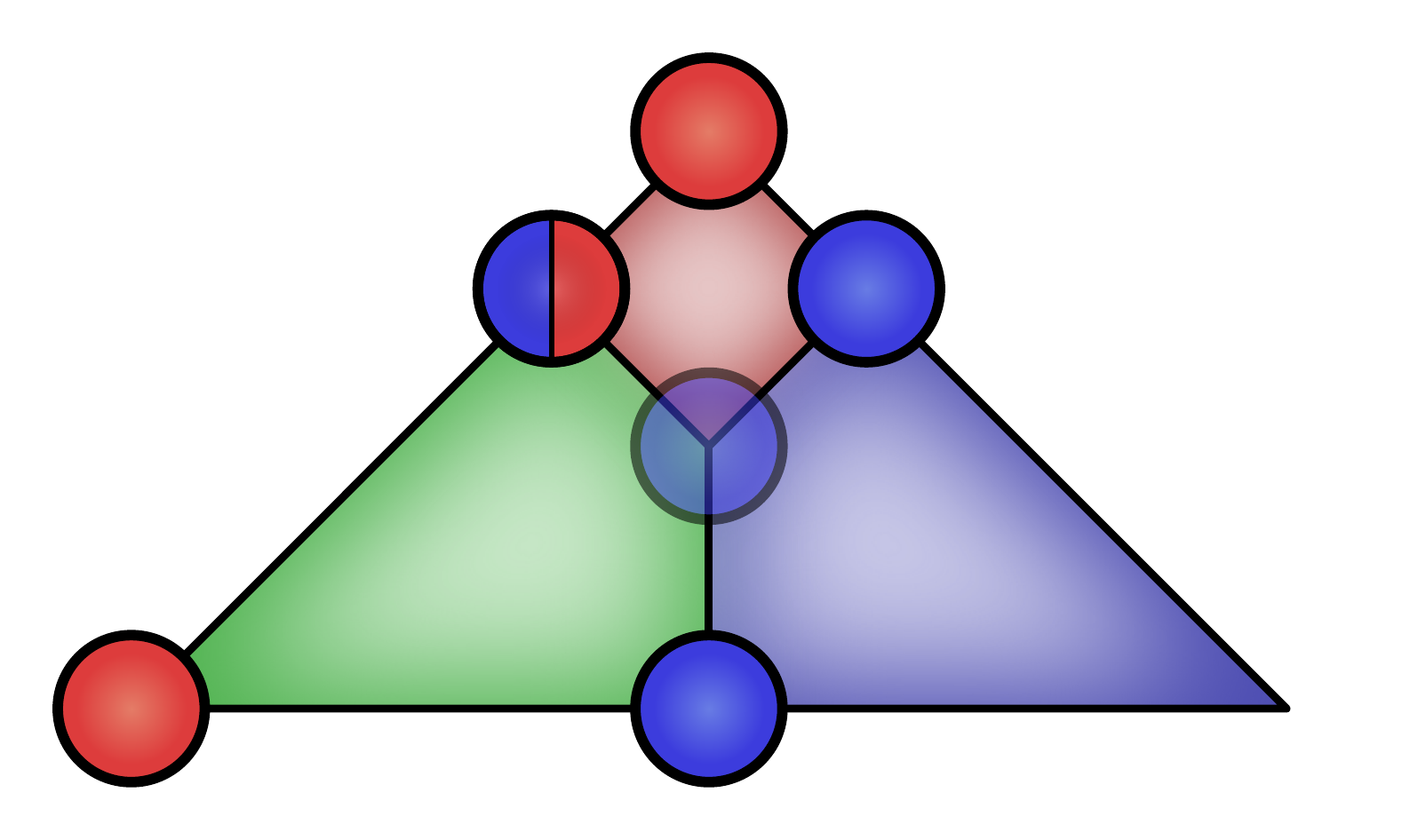}}%
    \put(0.50025872,0.49875399){\color[rgb]{0,0,0}\makebox(0,0)[t]{\lineheight{0}\smash{\begin{tabular}[t]{c}\makebox(0,0){$\contour{white}{1}$}\end{tabular}}}}%
    \put(0.50025872,0.27653176){\color[rgb]{0,0,0}\makebox(0,0)[t]{\lineheight{0}\smash{\begin{tabular}[t]{c}\makebox(0,0){$\contour{white}{4}$}\end{tabular}}}}%
    \put(0.38914758,0.38764287){\color[rgb]{0,0,0}\makebox(0,0)[t]{\lineheight{0}\smash{\begin{tabular}[t]{c}\makebox(0,0){$\contour{white}{2}$}\end{tabular}}}}%
    \put(0.61136983,0.38764287){\color[rgb]{0,0,0}\makebox(0,0)[t]{\lineheight{0}\smash{\begin{tabular}[t]{c}\makebox(0,0){$\contour{white}{3}$}\end{tabular}}}}%
    \put(0.0928513,0.09134657){\color[rgb]{0,0,0}\makebox(0,0)[t]{\lineheight{0}\smash{\begin{tabular}[t]{c}\makebox(0,0){$\contour{white}{5}$}\end{tabular}}}}%
    \put(0.50025872,0.09134657){\color[rgb]{0,0,0}\makebox(0,0)[t]{\lineheight{0}\smash{\begin{tabular}[t]{c}\makebox(0,0){$\contour{white}{6}$}\end{tabular}}}}%
  \end{picture}%
\endgroup%

		\caption{}
	\end{subfigure}
	\begin{subfigure}[c]{0.32\textwidth}
		\def\svgwidth{\textwidth}
\begingroup%
  \makeatletter%
  \providecommand\color[2][]{%
    \errmessage{(Inkscape) Color is used for the text in Inkscape, but the package 'color.sty' is not loaded}%
    \renewcommand\color[2][]{}%
  }%
  \providecommand\transparent[1]{%
    \errmessage{(Inkscape) Transparency is used (non-zero) for the text in Inkscape, but the package 'transparent.sty' is not loaded}%
    \renewcommand\transparent[1]{}%
  }%
  \providecommand\rotatebox[2]{#2}%
  \newcommand*\fsize{\dimexpr\f@size pt\relax}%
  \newcommand*\lineheight[1]{\fontsize{\fsize}{#1\fsize}\selectfont}%
  \ifx\svgwidth\undefined%
    \setlength{\unitlength}{765.35433071bp}%
    \ifx\svgscale\undefined%
      \relax%
    \else%
      \setlength{\unitlength}{\unitlength * \real{\svgscale}}%
    \fi%
  \else%
    \setlength{\unitlength}{\svgwidth}%
  \fi%
  \global\let\svgwidth\undefined%
  \global\let\svgscale\undefined%
  \makeatother%
  \begin{picture}(1,0.59259259)%
    \lineheight{1}%
    \setlength\tabcolsep{0pt}%
    \put(0,0){\includegraphics[width=\unitlength,page=1]{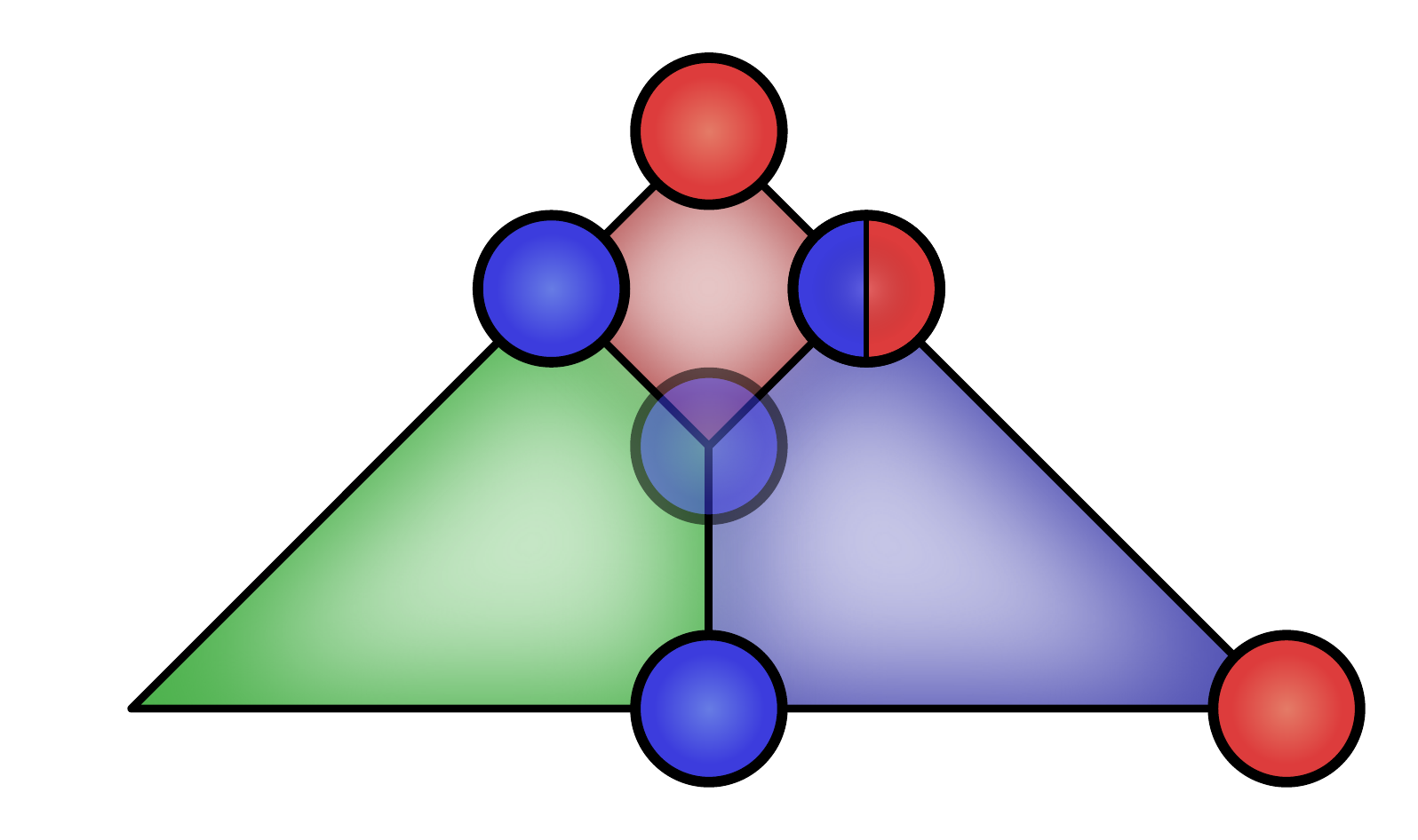}}%
    \put(0.50025872,0.49875399){\color[rgb]{0,0,0}\makebox(0,0)[t]{\lineheight{0}\smash{\begin{tabular}[t]{c}\makebox(0,0){$\contour{white}{1}$}\end{tabular}}}}%
    \put(0.50025872,0.27653176){\color[rgb]{0,0,0}\makebox(0,0)[t]{\lineheight{0}\smash{\begin{tabular}[t]{c}\makebox(0,0){$\contour{white}{4}$}\end{tabular}}}}%
    \put(0.38914758,0.38764287){\color[rgb]{0,0,0}\makebox(0,0)[t]{\lineheight{0}\smash{\begin{tabular}[t]{c}\makebox(0,0){$\contour{white}{2}$}\end{tabular}}}}%
    \put(0.61136983,0.38764287){\color[rgb]{0,0,0}\makebox(0,0)[t]{\lineheight{0}\smash{\begin{tabular}[t]{c}\makebox(0,0){$\contour{white}{3}$}\end{tabular}}}}%
    \put(0.50025872,0.09134657){\color[rgb]{0,0,0}\makebox(0,0)[t]{\lineheight{0}\smash{\begin{tabular}[t]{c}\makebox(0,0){$\contour{white}{6}$}\end{tabular}}}}%
    \put(0.90766613,0.09134657){\color[rgb]{0,0,0}\makebox(0,0)[t]{\lineheight{0}\smash{\begin{tabular}[t]{c}\makebox(0,0){$\contour{white}{7}$}\end{tabular}}}}%
  \end{picture}%
\endgroup%

		\caption{}
	\end{subfigure}
	\begin{subfigure}[c]{0.32\textwidth}
		\def\svgwidth{\textwidth}
\begingroup%
  \makeatletter%
  \providecommand\color[2][]{%
    \errmessage{(Inkscape) Color is used for the text in Inkscape, but the package 'color.sty' is not loaded}%
    \renewcommand\color[2][]{}%
  }%
  \providecommand\transparent[1]{%
    \errmessage{(Inkscape) Transparency is used (non-zero) for the text in Inkscape, but the package 'transparent.sty' is not loaded}%
    \renewcommand\transparent[1]{}%
  }%
  \providecommand\rotatebox[2]{#2}%
  \newcommand*\fsize{\dimexpr\f@size pt\relax}%
  \newcommand*\lineheight[1]{\fontsize{\fsize}{#1\fsize}\selectfont}%
  \ifx\svgwidth\undefined%
    \setlength{\unitlength}{765.35433071bp}%
    \ifx\svgscale\undefined%
      \relax%
    \else%
      \setlength{\unitlength}{\unitlength * \real{\svgscale}}%
    \fi%
  \else%
    \setlength{\unitlength}{\svgwidth}%
  \fi%
  \global\let\svgwidth\undefined%
  \global\let\svgscale\undefined%
  \makeatother%
  \begin{picture}(1,0.59259259)%
    \lineheight{1}%
    \setlength\tabcolsep{0pt}%
    \put(0,0){\includegraphics[width=\unitlength,page=1]{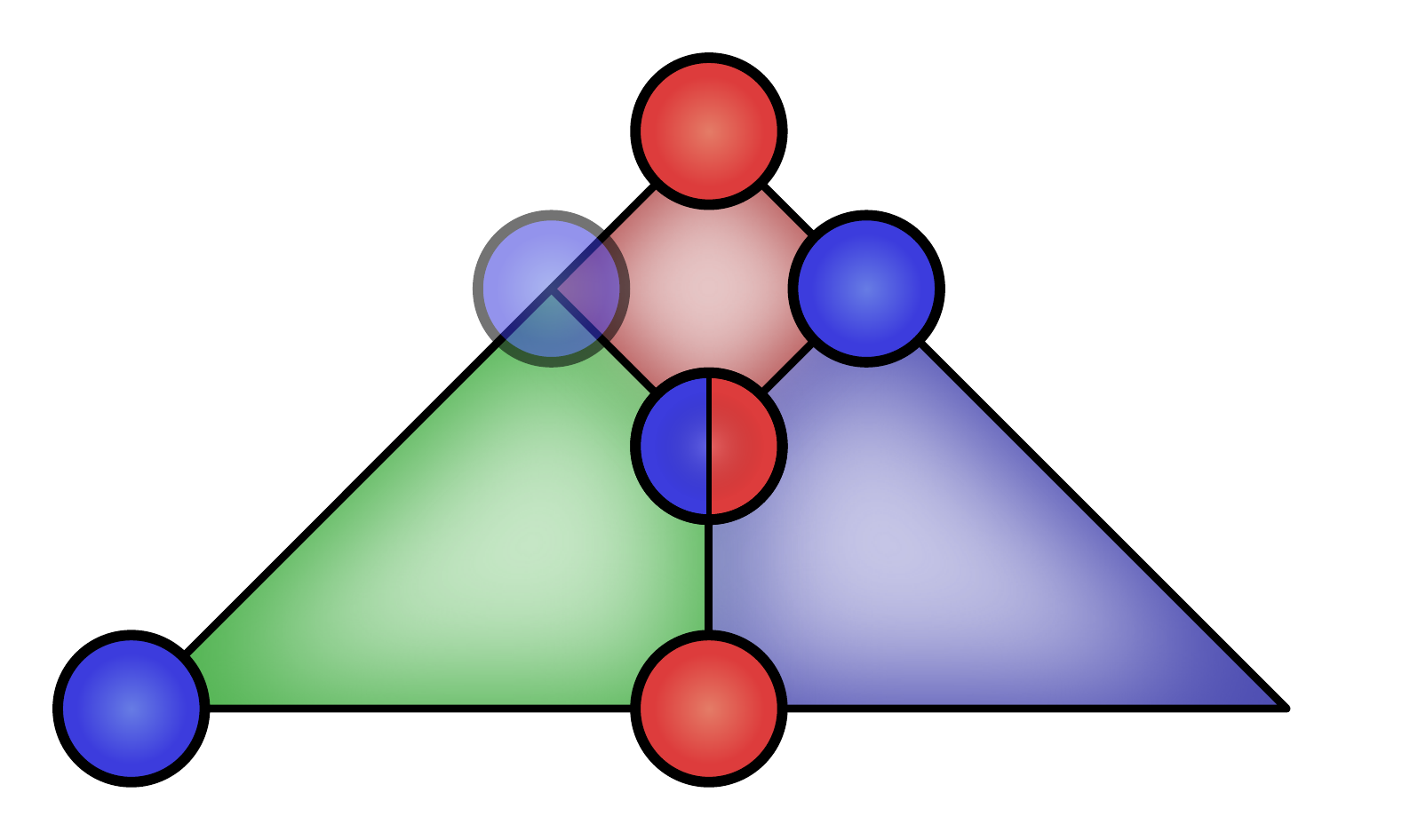}}%
    \put(0.50025872,0.49875399){\color[rgb]{0,0,0}\makebox(0,0)[t]{\lineheight{0}\smash{\begin{tabular}[t]{c}\makebox(0,0){$\contour{white}{1}$}\end{tabular}}}}%
    \put(0.50025872,0.27653176){\color[rgb]{0,0,0}\makebox(0,0)[t]{\lineheight{0}\smash{\begin{tabular}[t]{c}\makebox(0,0){$\contour{white}{4}$}\end{tabular}}}}%
    \put(0.38914758,0.38764287){\color[rgb]{0,0,0}\makebox(0,0)[t]{\lineheight{0}\smash{\begin{tabular}[t]{c}\makebox(0,0){$\contour{white}{2}$}\end{tabular}}}}%
    \put(0.61136983,0.38764287){\color[rgb]{0,0,0}\makebox(0,0)[t]{\lineheight{0}\smash{\begin{tabular}[t]{c}\makebox(0,0){$\contour{white}{3}$}\end{tabular}}}}%
    \put(0.0928513,0.09134657){\color[rgb]{0,0,0}\makebox(0,0)[t]{\lineheight{0}\smash{\begin{tabular}[t]{c}\makebox(0,0){$\contour{white}{5}$}\end{tabular}}}}%
    \put(0.50025872,0.09134657){\color[rgb]{0,0,0}\makebox(0,0)[t]{\lineheight{0}\smash{\begin{tabular}[t]{c}\makebox(0,0){$\contour{white}{6}$}\end{tabular}}}}%
  \end{picture}%
\endgroup%

		\caption{}
	\end{subfigure}
	\begin{subfigure}[c]{0.32\textwidth}
		\def\svgwidth{\textwidth}
		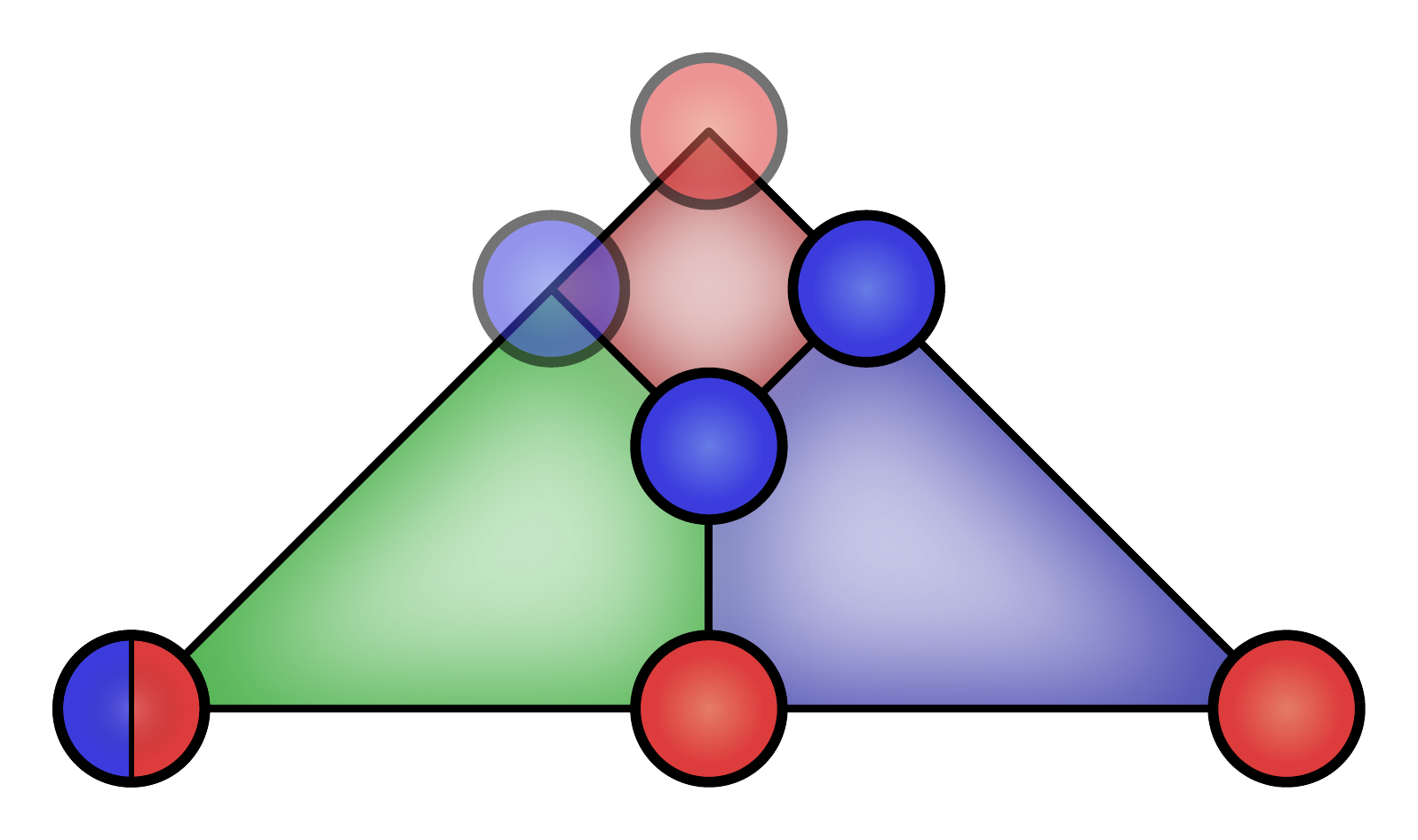
		\caption{}
	\end{subfigure}
	\begin{subfigure}[c]{0.32\textwidth}
		\def\svgwidth{\textwidth}
		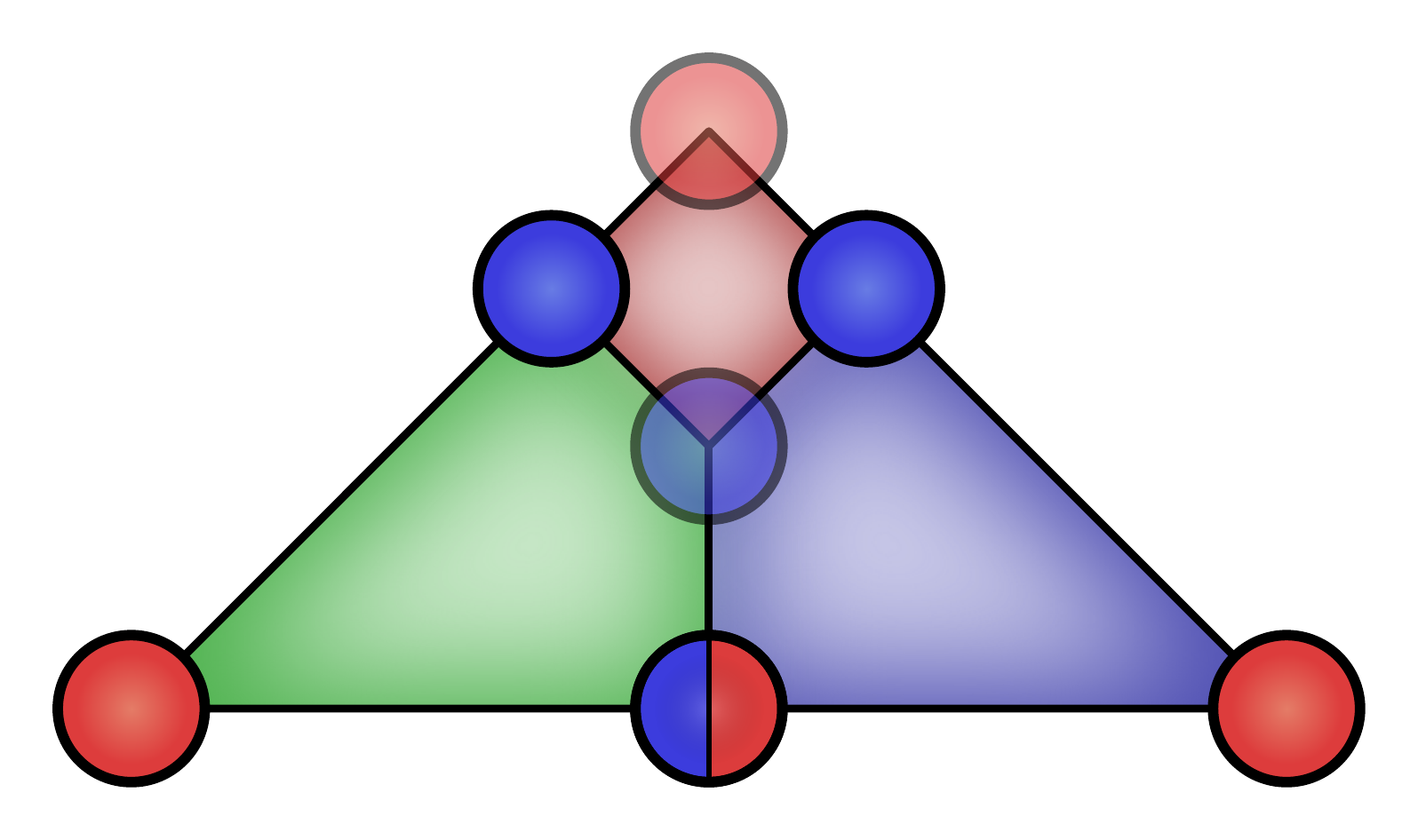
		\caption{}
	\end{subfigure}
	\begin{subfigure}[c]{0.32\textwidth}
		\def\svgwidth{\textwidth}
		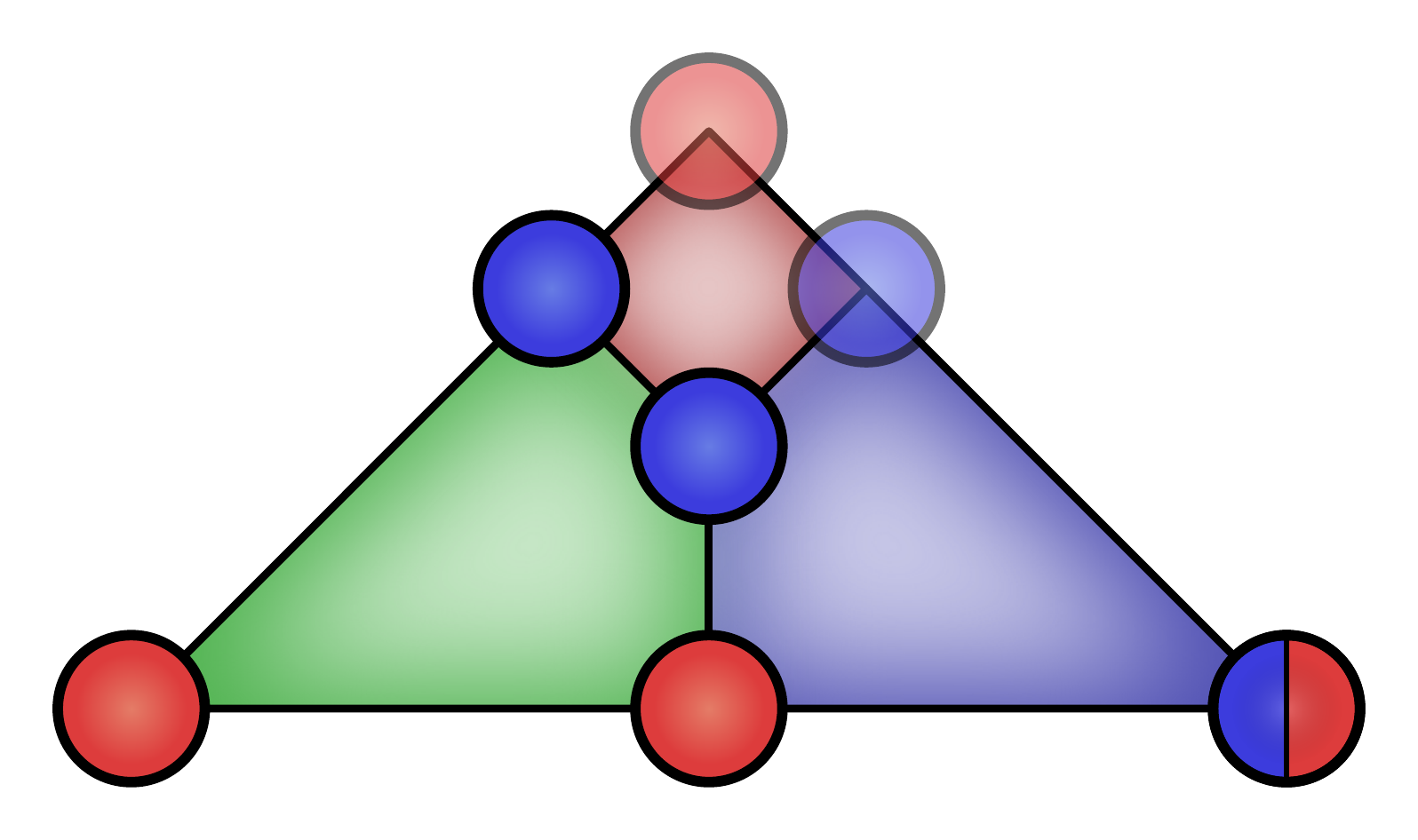
		\caption{}
	\end{subfigure}
	\caption{\label{fig:steane-sol}Logical operators measured upon a successful transversal BM for the Steane code. Vertices colored with both red and blue represent a successful BM. Red and blue vertices correspond to $X$- and $Z$-BMs, which form the logical $\overline{X}$ and $\overline{Z}$ operators, respectively. Transparent red and blue vertices indicate transversal BMs that have already been performed in the measurement scheme, but are not part of the measured logical operators. For each qubit $j$, the corresponding tuple $\left( \overline{X}_j , \overline{Z}_j \right)$ is shown in (a) to (g), respectively.}
\end{figure*}

We now demonstrate that the success probability for every physical BM is given by $\mathbb{P}_B$. Since we are dealing with a CSS code, we can treat $X$ and $Z$ stabilizers separately: $G_c = G_{c,X} \cup G_{c,Z}$. Furthermore, we label the stabilizer generators according to the color of their corresponding plaquette: the operators $X_r$, $X_g$, and $X_b$ correspond to the $X$ stabilizers of the red, green, and blue plaquettes, respectively. First we transform the $X$ stabilizers of the code,
\begin{equation}
	\begin{aligned}
		G_{c,X} & = \{ X_r, X_g, X_b \} \\
				& = \{ XXXXIII, IXIXXXI, IIXXIXX \},
	\end{aligned}
\end{equation}
into the form:
\begin{equation}
	\begin{aligned}
		\tilde{G}_{c,X} & = \{ X_g, X_b, X_r X_g X_b \} \\
						& = \{ IXIXXXI, IIXXIXX, XIIXXIX \},
	\end{aligned}
\end{equation}
where the red plaquette stabilizer $X_r$ is replaced with the product of all three $X$-stabilizers and we reordered the set. It is straightforward to check that:
\begin{equation}
	\langle G_{c,X} \rangle = \langle \tilde{G}_{c,X} \rangle.
\end{equation}
To connect our argument to Thm.~\ref{thm:sufficient} we define the ordered code stabilizers:
\begin{equation}
	\begin{aligned}
		\mathbb{C} 		& = ( 	c_{j} )_{j \in \{ 1, \dots, n-1 \} } \\
				& = ( 	Z_r, X_g, X_b, X_r X_g X_b, Z_g, Z_b ) \\
				& = ( 	ZZZZIII, IXIXXXI, IIXXIXX, \\
				&	 	XIIXXIX, IZIZZZI, IIZZIZZ ).
	\end{aligned}
\label{eq:steane-sequence-c}
\end{equation}
The measurement sequence, which was explained above, is given by:
\begin{equation}
\begin{aligned}
	\mathbb{B} 	& = ( b_{j} \}_{j \in \{ 1, \dots, n-1 ) } \\
				& = ( X_1, Z_2, Z_3, Z_4, X_5, X_6 ).
\end{aligned}
\end{equation}
In this definition, we impose an ordering on the measurements performed within one step of the scheme. Recall that this is physically equivalent because no feedforward is applied between these measurements (see Sec.~\ref{sec:optimization-and-comparionn-of-static-logical-bell-measurements-for-rotated-planar-surface-codes}). It is straightforward to verify that, for all $j$, $c_j$ is the only element of the current stabilizer generators that anticommutes with the measurement $b_j$. Therefore, the stabilizer generators are easy to track, and each measurement $b_j$ replaces $c_j$ in the current stabilizer generators as long as no success occurs.

For every measurement, $Y_j$ anticommutes with the current stabilizer by the same reasoning. For the first $X$-BM on qubit $1$, the operator $Z_1$ anticommutes with $c_4 = X_r X_g X_b = XIIXXIX$. For the three $Z$-BMs on qubits $2$, $3$, and $4$, the corresponding $X$ operators anticommute with one of the two remaining $Z$ stabilizers, specifically $X_2$ with $c_5=Z_g$, $X_3$ with $c_6=Z_b$, and $X_4$ with both $c_5=Z_g$ and $c_6=Z_b$.

For the subsequent two $XX$-BMs on qubits $5$ and $6$ the two respective $Z$ operators $Z_5$ and $Z_6$ complete the measurement of $\overline{Z}$~(g) and~(b), respectively, as previously discussed. Finally, on the last qubit $7$, $X_7$ completes $\overline{X}$~(h), $Z_7$ completes $\overline{Z}$~(f), and $Y_7$ completes the $\overline{Y}$ operator~$IZIZXXY$, which is the product of $\overline{X}$~(h) and $\overline{Z}$~(f).

Having established that all single-qubit operators on the measured qubits either anticommute with an element of the current stabilizer or complete a logical operator, we can apply Lem.~\ref{lem:successful-bell-measurment} to conclude that each transversal BM up to the first success succeeds with probability $\mathbb{P}_B$, and the overall success probability of the logical BM scheme is $1 - (1 - \mathbb{P}_B)^7$.

A detailed algebraic proof of the scheme's optimality, based on Thm.~\ref{thm:sufficient}, is provided in App.~\ref{app:proof-steane}.

To the best of the authors' knowledge, the only logical BM scheme for the Steane code published so far appears in Ref.~\cite{PhysRevA.99.062308}, which is a static scheme that does not use feedforward. The scheme in Ref.~\cite{PhysRevA.99.062308} achieves a success probability of $1 - 2^{-5}$, assuming standard linear-optics BMs with $\mathbb{P}_B = \frac{1}{2}$.

\section{Conclusion}
\label{sec:conclusion}
In this work, we investigated feedforward-based linear-optics logical Bell measurements (BMs) on stabilizer codes, restricted to a toolbox consisting of physical BMs, single-qubit Clifford gates, and single-qubit Pauli measurements. To identify fundamental limits, we focused on an idealized, error-free, i.e., especially loss-free setting. This error-free case serves as a necessary foundational step, providing the basis upon which any treatment of imperfections can be built. We have shown that at least a single successful physical BM is required for a successful logical BM. As a necessary condition this provides the general upper bound of $1- \left( 1 - \mathbb{P}_B \right)^{ \min(n_1,n_2) }$ for the success probability of logical linear-optics BMs on stabilizer encoded qubits, where $\mathbb{P}_B$ is the success probability of a physical guaranteed partial information BM, and $n_1$ and $n_2$ are the numbers of the physical qubits used to encode the first and second logical qubit, respectively. For standard linear-optics BMs with $\mathbb{P}_B = \frac{1}{2}$ and identically encoded logical qubits, this simplifies to $1-2^{-n}$. This improves upon a proof previously given in Ref.~\cite{PhysRevA.100.052303} by extending it from static linear optics to feedforward-based schemes and by circumventing the restrictive assumption that photon-number-resolving detectors can distinguish only up to two-photon events.

We derived methods to find feedforward-based schemes satisfying this bound which are generally applicable to any stabilizer code, and we demonstrated this for quantum parity, five-qubit, standard and rotated planar surface, tree, and seven-qubit Steane codes. Our schemes attain the general upper bound for all these codes, while this bound had previously only been reached for the quantum parity code in Ref.~\cite{PhysRevA.100.052303}. Additionally, we presented an optimized static scheme for the rotated planar surface code. While this scheme does not achieve the success probability of the feedforward-based bound, it still performs significantly better than a simple static scheme.

Interestingly, the scheme we developed for the five-qubit code does not require feedforward, thus it can be fully implemented using static operations alone. This observation implies that, in general, there is no tighter bound for static schemes than for feedforward-based ones. However, for certain codes, the standard toolbox, which relies on guaranteed partial information BMs, fails to achieve the bound when constrained to static operations~\cite{PhysRevA.99.062308}. Therefore, even though we have disproved the existence of a tighter bound for static schemes for general stabilizer codes, it seems unlikely that the bound can be achieved with static means in many cases.

The results presented here, fully based on the stabilizer formalism, deepen the theoretical foundations of linear-optics logical BMs by introducing a rigorous framework to formally describe logical measurement processes. Within this framework, we not only characterized the ultimate performance limits but also demonstrated how they can be reached in practice, thereby obtaining schemes that significantly improve upon the efficiencies reported in the existing literature. Because logical BMs are a common primitive in both MBQC or FBQC and all-optical quantum communication, these advances apply directly across both domains. In computation, higher success probabilities and more efficient logical schemes strengthen the prospects for implementing encoded operations that are essential for scalable fault-tolerant architectures. More specifically, in FBQC, the universal resource states can be built more efficiently and with larger loss thresholds~\cite{Bartolucci2023,PhysRevLett.133.050604,PhysRevLett.133.050605}. In communication, the same improvements enable more resource-efficient entanglement swapping at the logical level, directly supporting the construction of long-distance, loss-resilient, all-optical quantum repeaters. In conclusion, these contributions advance the efficiency, viability, and scalability of photonic quantum technologies by providing a rigorous theoretical foundation for logical BMs that benefits both computation and communication. Viewed in this broader context, our results represent concrete progress toward the overarching goal of realizing scalable, fault-tolerant, optical quantum technologies.

An important direction for future work is the incorporation of photon loss, which plays a critical role in realistic settings. As this requires a dedicated analysis beyond the scope of the present study, it will be pursued in subsequent work. Furthermore, while a set of sufficient conditions has been rigorously established, they are cumbersome in form. A more elegant set of heuristics introduced here performs well across all examined cases, and it remains open whether they can be formally proven sufficient. Finally, while the scheme developed here has been demonstrated for the Steane code, the smallest instance of a color code, it would be natural to explore its extension to more general color codes.

\begin{acknowledgments}
We thank Frank Schmidt for useful discussions and we also thank Evgeny Shchukin for assistance with the efficient implementation of the algorithm used to generate the data points of the simple static scheme in Fig.~\ref{fig:rotated-planar-surface-codes-comparison}. We further thank the BMFTR in Germany for support via PhotonQ, QR.N, QuKuK, and QuaPhySI.
\end{acknowledgments}

\appendix

\section{Formal derivation of $\mathbb{P}_B$ and perfect specificity of physical Bell measurements}
\label{app:generalized-physical-bell-measurements}

In the following, we briefly present a complete formulation of the probability $\mathbb{P}_B$ of successfully performing a BM on a uniform mixture of Bell states.

We start by defining the set of unambiguous measurement outcomes $ \{ s \} \subseteq \{ m \}$ which is the set of all measurement results satisfying the condition in Eq.~\eqref{eq:s}:
\begin{equation}
\begin{aligned}
	\{ s \} = \{	& s \in \{ m \} \mid \quad \exists \sigma : \\
					& P_s U \ket{\Phi_\sigma} \ket{A} \neq 0 \quad \\
					& \wedge \quad \forall i \neq \sigma \quad P_s U \ket{\Phi_i} \ket{A} = 0 \}.
\end{aligned}
	\label{eq:set-s}
\end{equation}
Using this definition we can write the success probability $\mathbb{P}_\varrho(\{s\})$ of a physical BM on a quantum state $\varrho$ as
\begin{equation}
	\mathbb{P}_\varrho(\{s\}) = \sum_{s \in \{s\}} p_\varrho(s),
\end{equation}
where $p_\varrho (m)$ is the probability of the measurement outcome $m$ assuming the initial quantum state $\varrho$,
\begin{equation}
	p_{\varrho} (m) = \Tr \{ P_m U \left( \varrho \otimes \ket{A} \bra{A} \right) U^\dagger \}.
	\label{eq:p-rho-m}
\end{equation}
Hence, the success probability $\mathbb{P}_\varrho(\{s\})$ is the probability to obtain any outcome in $\{s\}$ for a given initial state $\varrho$. When examining a uniform mixture of Bell states, defined as
\begin{equation}
	\varrho_B = \frac{1}{4} \sum_{j=1}^4 \ket{\Phi_j} \bra{\Phi_j} = \frac{I}{4},
\end{equation}
we obtain:
\begin{equation}
\begin{aligned}
	p_{\varrho_B} (m) & = \Tr \{ P_m U \left( \varrho_B \otimes \ket{A} \bra{A} \right) U^\dagger \} \\
		& = \frac{1}{4} \sum_{j=1}^4 \bra{A} \bra{\Phi_j} U^\dagger P_m U \ket{\Phi_j} \ket{A}.
\end{aligned}
\end{equation}
In the special case where no ancillary state is used, again we simply remove the ancilla vectors, while $U$ and $P_m$ only act on $\mathcal{H}_B$. We easily seew how the probability $p_{\varrho_B}(m)$ simplifies for a measurement outcome $s \in \{ s \}$. By inserting Eq.~\eqref{eq:set-s}, we obtain
\begin{equation}
	p_{\varrho_B} (s) = \frac{1}{4} \bra{A} \bra{\Phi_\sigma} U^\dagger P_s U \ket{\Phi_\sigma} \ket{A},
	\label{eq:def-pb-a}
\end{equation}
for some unique $\sigma \in \{1,2,3,4\}$, which we use to calculate:
\begin{equation}
	\mathbb{P}_{\varrho_B} (\{s\}) = \sum_{s \in \{s\}} p_{\varrho_B} (s) \eqqcolon \mathbb{P}_B.
	\label{eq:def-pb-b}
\end{equation}
The following simple lemma captures an important property of this definition of a successful BM. Using this corollary alongside with Cor.~\ref{cor:pm-success}, which we will present in App.~\ref{app:physical-bell-measurements-on-entangled-quantum-states}, we will formally verify that Eq.~\eqref{eq:s} defines an unambiguous BM.

\begin{lemma}
\label{lem:perfect-specificity}
(Perfect specificity of BMs) Let us assume a physical BM as defined in Sec.~\ref{sec:generalized-physical-bell-measurements} on an encoded quantum state $\ket{\psi} \in \mathcal{H}_B$, which is entirely within the code space. Excluding any errors, the probability of a successful measurement result $p_{\varrho} (s)$, where $s \in \{ s \}$, is proportional to the probability to project the quantum states onto $\ket{\Phi_\sigma}$ for some unique $\sigma \in \{1,2,3,4\}$. Specifically, it is impossible to have measurement result $s \in \{ s \}$, when the projection of the measured quantum state onto $\ket{\Phi_\sigma}$ is zero.
\end{lemma}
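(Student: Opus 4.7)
The plan is to exploit the fact that the four Bell states form an orthonormal basis of the two-qubit code space in $\mathcal{H}_B$, and then to invoke the selective annihilation property built into the definition of an unambiguous outcome in Eq.~\eqref{eq:s}. Since $\ket{\psi}$ is assumed to lie entirely in the code space, I would first expand it as $\ket{\psi} = \sum_{j=1}^{4} c_j \ket{\Phi_j}$ with coefficients $c_j = \braket{\Phi_j|\psi}$. The central step is then to apply $P_s U$ to $\ket{\psi}\ket{A}$ and use linearity: for any $s \in \{s\}$, the defining condition guarantees $P_s U \ket{\Phi_i}\ket{A} = 0$ for every $i \neq \sigma$, so only the single term indexed by $\sigma$ survives.

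This collapse yields $P_s U \ket{\psi}\ket{A} = c_\sigma\, P_s U \ket{\Phi_\sigma}\ket{A}$. Substituting into the probability expression from Eq.~\eqref{eq:p-rho-m} with $\varrho = \ket{\psi}\bra{\psi}$ gives
\begin{equation}
p_{\ket{\psi}\bra{\psi}}(s) = |c_\sigma|^2 \, \bigl\| P_s U \ket{\Phi_\sigma}\ket{A} \bigr\|^2 = |\braket{\Phi_\sigma|\psi}|^2 \cdot \bra{A}\bra{\Phi_\sigma} U^\dagger P_s U \ket{\Phi_\sigma}\ket{A}.
\end{equation}
The second factor is a non-negative real number that depends only on the measurement apparatus, not on $\ket{\psi}$, whereas $|\braket{\Phi_\sigma|\psi}|^2$ is precisely the probability that a projective measurement in the Bell basis would project $\ket{\psi}$ onto $\ket{\Phi_\sigma}$. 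Comparing with the expression for $\mathbb{P}_B$ in Eqs.~\eqref{eq:def-pb-a}--\eqref{eq:def-pb-b}, the constant of proportionality is exactly $4 p_{\varrho_B}(s)$, which makes the proportionality claim fully explicit and consistent with the uniform-mixture case.

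The second assertion of the lemma then follows as an immediate corollary: if the projection amplitude $\braket{\Phi_\sigma|\psi}$ vanishes, the prefactor $|c_\sigma|^2$ is zero and hence $p_{\ket{\psi}\bra{\psi}}(s) = 0$, so the outcome $s$ cannot occur. I do not expect a serious obstacle in carrying this out; the argument is essentially a one-line consequence of linearity combined with the annihilation clause in Eq.~\eqref{eq:s}. The only subtlety worth flagging is the role of the assumption that $\ket{\psi}$ lies entirely in the two-qubit code space of $\mathcal{H}_B$, as this is what licenses the Bell-basis expansion in the first place; without it, there could be additional components orthogonal to the code space for which the behavior under $P_s U$ is not constrained by Eq.~\eqref{eq:s}.
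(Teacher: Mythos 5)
Your proposal is correct and follows essentially the same route as the paper's proof: both rely on the annihilation clause of Eq.~\eqref{eq:s} to kill every term except the one indexed by $\sigma$, leaving a probability proportional to the weight of $\ket{\Phi_\sigma}$ in the measured state. The only cosmetic difference is that the paper phrases the argument for a general density operator $\varrho=\sum_{i,j}\alpha_{ij}\ket{\Phi_i}\bra{\Phi_j}$ (so the proportionality is to $\alpha_{\sigma\sigma}$), whereas you specialize to the pure state $\ket{\psi}$ with $\alpha_{\sigma\sigma}=|c_\sigma|^2$, and you additionally make the proportionality constant $4\,p_{\varrho_B}(s)$ explicit.
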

\begin{proof}
We consider a general quantum state:
\begin{equation}
	\varrho = \sum_{i,j=1}^4 \alpha_{ij} \ket{\Phi_i}\bra{\Phi_j}.
	\label{eq:general-density-op}
\end{equation}
We insert the conditions from Eq.~\eqref{eq:s} and the state from Eq.~\eqref{eq:general-density-op} into Eq.~\eqref{eq:p-rho-m} and immediately arrive at:
\begin{equation}
\begin{aligned}
	p_\varrho (s) & = \Tr\{ P_s U \left( \varrho \otimes \ket{A}\bra{A} \right) U^\dagger \} \\
	& = \Tr\{ \alpha_{\sigma \sigma} P_s U \left( \ket{\Phi_\sigma}\bra{\Phi_\sigma} \otimes \ket{A}\bra{A} \right) U^\dagger \} \\
	& \propto \alpha_{\sigma \sigma}.
\end{aligned} 
\end{equation}
\end{proof}

\section{Derivations for physical Bell measurements on entangled quantum states}
\label{app:physical-bell-measurements-on-entangled-quantum-states}
In the first lemma of this appendix we treat the most general form of a measurement outcome. We will show that for any given measurement result $\alpha \in \{m\}$ we can obtain the post-measurement state by replacing the physical BM with a simpler effective projection on $\mathcal{H}_B$ without the need of a unitary operation $U$ or an ancillary state.

\begin{lemma}
\label{lem:pm-general}
(Post-measurement state for a physical BM) Let us assume a quantum state $\ket{\psi} \in \mathcal{H}_B \otimes \mathcal{H}_R$. Furthermore, we assume that the quantum state in $\mathcal{H}_B$ is entirely within the two-qubit code space. We consider a physical BM on $\mathcal{H}_B$ and a measurement outcome $\alpha \in \{m\}$ with corresponding projector $P_\alpha = \ket{\alpha} \bra{\alpha}$. Then the measurement outcome $\alpha$ can be fully characterized by complex coefficients $\alpha_j$, where $j \in \{ 1, 2, 3, 4 \}$ via,
\begin{equation}
	P_\alpha U \ket{\Phi_j}_B \ket{A}_A = \alpha_j \ket{\alpha}_{BA},
	\label{eq:alpha}
\end{equation}
where the coefficients $\alpha_j$ satisfy $\sum_{j=1}^4 | \alpha_j |^2 \leq 1$.

Then after performing the physical measurement and obtaining the outcome $\alpha$ the post-measurement state on $\mathcal{H}_R$ is identical to the post-measurement state from the effective projection
\begin{equation}
		\Pi_\alpha = \ket{B_\alpha}_B \bra{B_\alpha} \otimes I_R,
		\label{eq:p-alpha}
\end{equation}
where
\begin{equation}
	\ket{B_\alpha} = \frac{1}{\sqrt{\sum_{j=1}^4 | \alpha_j |^2 }} \sum_{j=1}^4 \alpha_j^* \ket{\Phi_j}.
\end{equation}
Hence the post-measurement state of the physical measurement is
\begin{equation}
	\frac{\Tr_B \{ \Pi_\alpha \ket{\psi} \bra{\psi} \}}{\Tr \{ \Pi_\alpha \ket{\psi} \bra{\psi} \} },
\end{equation}
where $\Tr \equiv \Tr_{BR}$.

\end{lemma}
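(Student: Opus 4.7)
The plan is to reduce the lemma to a direct linear-algebra comparison by first expanding the joint state in the Bell basis on $\mathcal{H}_B$. Because the state on $\mathcal{H}_B$ lies entirely within the two-qubit code space, I can write $\ket{\psi}_{BR} = \sum_{j=1}^4 \ket{\Phi_j}_B \otimes \ket{r_j}_R$ for uniquely determined, generally unnormalized and non-orthogonal vectors $\ket{r_j} \in \mathcal{H}_R$. This turns the question into a comparison of how the physical channel $(P_\alpha U \otimes I_R)(\,\cdot\, \otimes \ket{A}\bra{A}_A)(U^\dagger \otimes I_R)$ and the effective projection $\Pi_\alpha$ act on the sequences $(\alpha_j)_{j=1}^4$ and $(\ket{r_j})_{j=1}^4$.

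For the physical side, I would apply the defining relation $P_\alpha U \ket{\Phi_j}_B \ket{A}_A = \alpha_j \ket{\alpha}_{BA}$ term by term by linearity and pull the common factor $\ket{\alpha}_{BA}$ outside, obtaining
\begin{equation*}
(P_\alpha U \otimes I_R)\bigl(\ket{\psi}_{BR} \otimes \ket{A}_A\bigr) = \ket{\alpha}_{BA} \otimes \sum_{j=1}^4 \alpha_j \ket{r_j}_R.
\end{equation*}
Tracing out $\mathcal{H}_B \otimes \mathcal{H}_A$ then uses only $\braket{\alpha|\alpha}=1$ and yields the unnormalized post-measurement operator $\bigl(\sum_j \alpha_j \ket{r_j}\bigr)\bigl(\sum_k \alpha_k^* \bra{r_k}\bigr)$ on $\mathcal{H}_R$, whose trace equals the measurement probability $p_{\ket{\psi}\bra{\psi}}(\alpha) = \lVert \sum_j \alpha_j \ket{r_j}\rVert^2$. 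The bound $\sum_j |\alpha_j|^2 \le 1$ will fall out as a side remark by specializing $\ket{\psi}_B$ to a unit-norm Bell superposition and applying Cauchy--Schwarz.

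I would then run the analogous calculation with $\Pi_\alpha = \ket{B_\alpha}\bra{B_\alpha}_B \otimes I_R$. Using $\langle B_\alpha | \Phi_j \rangle = \alpha_j / \sqrt{\sum_k |\alpha_k|^2}$, this gives $\Pi_\alpha \ket{\psi}_{BR} = \ket{B_\alpha}_B \otimes \bigl(\sum_j \alpha_j \ket{r_j}_R\bigr)/\sqrt{\sum_k |\alpha_k|^2}$; tracing out $\mathcal{H}_B$ produces the same $\mathcal{H}_R$-operator as on the physical side, up to the overall prefactor $1/\sum_k |\alpha_k|^2$. This prefactor cancels between $\Tr_B\{\Pi_\alpha \ket{\psi}\bra{\psi}\}$ and $\Tr\{\Pi_\alpha \ket{\psi}\bra{\psi}\}$, so the normalized post-measurement states on $\mathcal{H}_R$ coincide exactly, which is the claim. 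The main obstacle I anticipate is purely bookkeeping: $\ket{\alpha}$ is a general vector in $\mathcal{H}_B \otimes \mathcal{H}_A$ and need not factorize into a product state, so the partial trace must be handled as a single trace over the full product space using $\braket{\alpha|\alpha}=1$; and one must keep clearly in mind that $\Pi_\alpha$ is an \emph{effective}, not a physically realized, projection, whose overall probability differs from the physical one by exactly the factor $\sum_k |\alpha_k|^2$ even though the normalized output state does not.
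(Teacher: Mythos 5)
Your proposal is correct and follows essentially the same route as the paper's proof in App.~\ref{app:physical-bell-measurements-on-entangled-quantum-states}: expand $\ket{\psi}$ in the Bell basis on $\mathcal{H}_B$, apply the defining relation $P_\alpha U \ket{\Phi_j}\ket{A} = \alpha_j\ket{\alpha}$ by linearity, and compare the resulting $\mathcal{H}_R$-operators for the physical measurement and the effective projection, noting that the probabilities differ only by the factor $\sum_k|\alpha_k|^2$ which cancels upon normalization. The only difference is cosmetic — you work with unnormalized relative states $\ket{r_j}$ where the paper expands further in an explicit orthonormal basis $\{\ket{R_k}\}$ with coefficients $\lambda_{jk}$.
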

\begin{proof}
We use a general decomposition of $\ket{\psi}$,
\begin{equation}
	\ket{\psi} = \sum_{j=1}^4 \sum_{k=1}^d \lambda_{jk} \ket{\Phi_j}_B \ket{R_k}_R,
	\label{eq:psi-comp}
\end{equation}
with normalization
\begin{equation}
	\sum_{j=1}^4 \sum_{k=1}^d |\lambda_{jk}|^2 = 1,
\end{equation}
where we use $\{ \ket{\Phi_j}_B \}_{j \in \{1,2,3,4\}}$ as a basis for the encoded qubits in $\mathcal{H}_B$ and an arbitrary orthonormal basis $\{ \ket{R_k} \}_{k \in \{1,\dots,d\}}$ for the remaining quantum state in $\mathcal{H}_R$. We can use $\{ \ket{\Phi_j}_B \}_{j \in \{1,2,3,4\}}$ as a basis for the quantum state in $\mathcal{H}_B$ since the lemma demands the qubits on $\mathcal{H}_B$ to be entirely within the code space. We use Eq.~\eqref{eq:psi-comp} and Eq.~\eqref{eq:alpha} to obtain:
\begin{equation}
\begin{aligned}
	P_\alpha U \ket{\psi}_{BR} \ket{A}_A	& = \sum_{j=1}^4 \sum_{k=1}^d \lambda_{jk} P_\alpha U \ket{\Phi_j}_B \ket{A}_A \ket{R_k}_R \\
											& = \sum_{j=1}^4 \sum_{k=1}^d \lambda_{jk} \alpha_j \ket{\alpha}_{BA} \ket{R_k}_R. \\
\end{aligned}
\label{eq:physical-alpha-projection}
\end{equation}
We continue by computing the probability of this measurement outcome $\alpha$ using Eq.~\eqref{eq:physical-alpha-projection}:
\begin{equation}
\begin{aligned}
	\bra{A} \bra{\psi} U^\dagger P_\alpha U \ket{\psi} \ket{A} \\
	= \sum_{j,i=1}^4 \alpha_j \alpha_i^* \braket{\alpha | \alpha} \sum_{k,l=1}^d \lambda_{jk} \lambda_{ik}^* \braket{R_l | R_k} \\
	= \sum_{j,i=1}^4 \alpha_j \alpha_i^* \sum_{k=1}^d \lambda_{jk} \lambda_{ik}^*\\
	\coloneqq p.
\end{aligned}
\end{equation}
Next, we again use Eq.~\eqref{eq:physical-alpha-projection} to compute the post-measurement state for the outcome $\alpha$:
\begin{equation}
	\begin{split}
		\varrho_R^\prime & = \frac{1}{p} \Tr_{BA} \{ P_\alpha U ( \ket{\psi} \bra{\psi} \otimes \ket{A} \bra{A} ) U^\dagger P_\alpha \} \\
		& = \frac{1}{p} \Tr_{BA} \{ \sum_{j,i=1}^4 \alpha_j \alpha_i^* \ket{\alpha} \bra{\alpha} \otimes \sum_{k,l=1}^{d} \lambda_{jk} \lambda_{il}^* \ket{R_k} \bra{R_l} \} \\
		& = \frac{1}{p} \sum_{j,i=1}^4 \alpha_j \alpha_i^* \sum_{k,l=1}^{d} \lambda_{jk} \lambda_{il}^* \ket{R_k} \bra{R_l} \\
		& \eqqcolon \ket{\psi^\prime}_R \bra{\psi^\prime},
	\end{split}
\end{equation}
where we defined
\begin{equation}
	\ket{\psi^\prime}_R = \frac{1}{\sqrt{p}} \sum_{j=1}^4 \alpha_j \sum_{k=1}^{d} \lambda_{jk} \ket{R_k}.
\end{equation}

In the remainder of this proof we will show that the post-measurement state of the effective projection is identical to this state. We define the normalized state
\begin{equation}
	\begin{split}
		\ket{B_\alpha} & = \frac{1}{\sqrt{\sum_{m=1}^4 | \alpha_m |^2 }} \sum_{m=1}^4 \alpha_m^* \ket{\Phi_m} \\
		& \eqqcolon N_\alpha \sum_{m=1}^4 \alpha_m^* \ket{\Phi_m} \\
	\end{split}
\end{equation}
which defines the effective projector
\begin{equation}
	\begin{split}
		\Pi_\alpha & = \ket{B_\alpha} \bra{B_\alpha} \\
		& = N_\alpha^2 \sum_{m,n=1}^4 \alpha_m^* \alpha_n \ket{\Phi_m} \bra{\Phi_n}.
		\end{split}
\end{equation}

\begin{widetext}

We now compute the properties of this effective projective measurement defined by the projector $\Pi_\alpha$. We start by computing the probability of this projection:
\begin{equation}
	\begin{split}
		\Tr \{ \Pi_\alpha \ket{\psi} \bra{\psi} \} & = \bra{\psi} \Pi_\alpha \ket{\psi} \\
		& = N_\alpha^2 \sum_{m,n=1}^4 \sum_{j,i=1}^4 \sum_{k,l=1}^{d} \alpha_m^* \alpha_n \lambda_{jk} \lambda_{il}^* \braket{\Phi_i | \Phi_m} \braket{\Phi_n | \Phi_j} \braket{R_l | R_k} \\
		& = N_\alpha^2 \sum_{j,i=0}^3 \alpha_j \alpha_i^* \sum_{k=1}^{d-1} \lambda_{jk} \lambda_{ik}^* \\
		& = N_\alpha^2 \cdot p \\
		& \eqqcolon p_\alpha.
		\end{split}
\end{equation}
Thus, the post-measurement state of the effective projection is given by:
\begin{equation}
	\begin{split}
		\varrho_{\text{effective}}^\prime & = \frac{1}{p_\alpha} \Tr_B \{ \Pi_\alpha \ket{\psi} \bra{\psi} \Pi_\alpha \} \\
		& = \frac{N_\alpha^4}{p_\alpha} \Tr_B \{ \sum_{m,n,o,p=1}^4 \sum_{j,i=1}^4 \sum_{k,l=1}^{d} \alpha_m^* \alpha_n \alpha_o^* \alpha_p \lambda_{jk} \lambda_{il}^* \ket{\Phi_m} \braket{\Phi_n | \Phi_j} \braket{\Phi_i | \Phi_o} \bra{\Phi_p} \otimes \ket{R_k} \bra{R_l} \} \\
		& = \frac{N_\alpha^4}{p_\alpha}	\Tr_B \{ \sum_{m,p=1}^4 \sum_{j,i=1}^4 \sum_{k,l=1}^{d} \alpha_m^* \alpha_j \alpha_i^* \alpha_p \lambda_{jk} \lambda_{il}^* \ket{\Phi_m} \bra{\Phi_p} \otimes \ket{R_k} \bra{R_l} \} \\
		& = \frac{N_\alpha^4}{p_\alpha} \sum_{m=1}^4 \sum_{j,i=1}^4 \sum_{k,l=1}^{d} \alpha_m^* \alpha_j \alpha_i^* \alpha_m \lambda_{jk} \lambda_{il}^* \ket{R_k}\bra{R_l} \\
		& = \frac{N_\alpha^4}{ p N_\alpha^2 N_\alpha^2 } \sum_{j,i=1}^4 \sum_{k,l=1}^{d} \alpha_j \alpha_i^* \lambda_{jk} \lambda_{il}^* \ket{R_k}\bra{R_l} \\
		& = \frac{1}{p} \sum_{j,i=1}^4 \sum_{k,l=1}^{d} \alpha_j \alpha_i^* \lambda_{jk} \lambda_{il}^* \ket{R_k}\bra{R_l} \\
		& = \varrho_R^\prime.
	\end{split}
	\label{eq:proof-cor-pm-general}
\end{equation}
\end{widetext}

This final Eq.~\eqref{eq:proof-cor-pm-general} concludes the proof, since it proves that on the remaining Hilbert space $\mathcal{H}_R$ the physical measurement result defined by Eq.~\eqref{eq:alpha} is indistinguishable from the projection in Eq.~\eqref{eq:p-alpha}. Note that the probabilities $p_\alpha$ and $p$ differ by a factor of $N_\alpha^2$ but this does not interfere with Lem.~\ref{lem:pm-general}, which makes no claim about the probability of the measurement outcome.
\end{proof}

From Lem.~\ref{lem:pm-general} directly follows our next corollary. It states that a successful BM result will project the post-measurement state in $\mathcal{H}_R$ as we would expect: it projects it onto the same state a perfect Bell projection would do. While this assertion appears intuitive and not surprising, we have rigorously verified it.

\begin{corollary}
\label{cor:pm-success}
(Post-measurement state of a successful destructive physical BM on an entangled state) Let us assume a quantum state $\ket{\psi} \in \mathcal{H}_B \otimes \mathcal{H}_R$. Furthermore, we assume that the quantum state in $\mathcal{H}_B$ is entirely within the two-qubit code space. We consider a physical BM on $\mathcal{H}_B$ and a measurement outcome $s \in \{s\} $.

Then the post-measurement state in $\mathcal{H}_R$ is identical to the post-measurement state of the initial state $\ket{\psi}$ projected onto $\ket{\Phi_\sigma}_B \bra{\Phi_\sigma} \otimes I_R$.
\end{corollary}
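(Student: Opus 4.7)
The plan is to derive Cor.~\ref{cor:pm-success} as a direct specialization of Lem.~\ref{lem:pm-general}. First I would invoke the general characterization of a measurement outcome in Eq.~\eqref{eq:alpha}, which assigns to the outcome $s$ four complex coefficients $\alpha_j$ via $P_s U \ket{\Phi_j}_B \ket{A}_A = \alpha_j \ket{s}_{BA}$. The defining condition of a successful outcome from Eq.~\eqref{eq:s} then translates immediately into a sharp statement about these coefficients: there exists a unique $\sigma$ with $\alpha_\sigma \neq 0$, while $\alpha_j = 0$ for all $j \neq \sigma$.

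Next I would plug this sparse coefficient vector into the effective projector given by Lem.~\ref{lem:pm-general}. The normalized state $\ket{B_s}$ collapses to
\begin{equation}
    \ket{B_s} = \frac{\alpha_\sigma^*}{\sqrt{|\alpha_\sigma|^2}}\, \ket{\Phi_\sigma} = e^{i\phi}\, \ket{\Phi_\sigma},
\end{equation}
for some global phase $\phi$. Consequently, the effective projection satisfies
\begin{equation}
    \Pi_s = \ket{B_s}_B \bra{B_s} \otimes I_R = \ket{\Phi_\sigma}_B \bra{\Phi_\sigma} \otimes I_R,
\end{equation}
since the global phase cancels. This is exactly the projector of an ideal Bell projection identifying $\ket{\Phi_\sigma}$.

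Finally, I would appeal to the conclusion of Lem.~\ref{lem:pm-general}: the post-measurement state on $\mathcal{H}_R$ is obtained from $\ket{\psi}$ by applying $\Pi_s$, tracing out $\mathcal{H}_B$, and normalizing. Since $\Pi_s$ coincides with the Bell projector $\ket{\Phi_\sigma}\bra{\Phi_\sigma} \otimes I_R$, the resulting reduced state on $\mathcal{H}_R$ is identical to that obtained by projecting the initial state with the ideal Bell projector, which is the claim of the corollary. No obstacles are expected: the proof is essentially a syntactic specialization of Lem.~\ref{lem:pm-general} to the case where all but one of the $\alpha_j$ vanish, and the only minor care point is observing that the residual global phase in $\ket{B_s}$ drops out when forming the projector.
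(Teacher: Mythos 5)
Your proposal is correct and follows exactly the route the paper takes: the paper's own proof is the one-line remark that the corollary ``follows directly from Lem.~\ref{lem:pm-general} and Eq.~\eqref{eq:s}'', and your argument simply fills in the details of that specialization (the coefficients $\alpha_j$ vanish for $j\neq\sigma$, so the effective projector collapses to $\ket{\Phi_\sigma}\bra{\Phi_\sigma}\otimes I_R$ up to an irrelevant global phase). Nothing to add.
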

\begin{proof}
Follows directly from Lem.~\ref{lem:pm-general} and Eq.~\eqref{eq:s}.
\end{proof}

Previously, in Sec.~\ref{sec:generalized-physical-bell-measurements}, we postulated the definition of an unambiguous measurement outcome of a physical BM. In Lem.~\ref{lem:perfect-specificity}, which we presented in the previous App.~\ref{app:generalized-physical-bell-measurements}, we showed that the probability to identify a Bell state is proportional to the probability of the projection of this Bell state on the measured quantum state and specifically, that it is impossible to identify a Bell state when the projection of the measured state on this Bell state is zero. In Cor.~\ref{cor:pm-success} we proved that the post-measurement state on the surviving qubits is indistinguishable from the projection of a complete, ideal BM on the Bell state identified by the physical measurement. Thus, Lem.~\ref{lem:perfect-specificity} and Cor.~\ref{cor:pm-success} give a formal notion of an unambiguous physical BM result.

As a next step, in Lem.~\ref{lem:prob-success} below, we address the special case where the measurement probabilities of a perfect BM are uniformly distributed. It states that the success probability of the physical BM $\mathbb{P}_{\ket{\psi} \bra{\psi}} (\{s\})$ in this case is identical to physically measuring a uniform mixture of Bell states. Again, this matches the intuition that the local state in $\mathcal{H}_B$ on which the BM is performed mimics a uniform mixture of Bell states. However, recall that this intuition is insufficient as a rigorous argument, since the condition for the state to have uniform probability for the Bell projections could be fulfilled by local states that are not a uniform mixture of Bell states (see Sec.~\ref{sec:physical-bell-measurements}).

\begin{lemma}
\label{lem:prob-success}
(Success probability on uniformly distributed outcomes) Let us assume a quantum state $\ket{\psi} \in \mathcal{H}_B \otimes \mathcal{H}_R$, where the quantum state in $\mathcal{H}_B$ is entirely within the two-qubit code space. We assume that measuring $ZZ$ and $XX$ on the subspace $\mathcal{H}_B$ has uniform probability for the four outcomes, where the observables act on the encoded qubits in $\mathcal{H}_B$.

Then, if we perform a physical BM on $\mathcal{H}_B$ the probability of a successful measurement result $s \in \{s\}$ is $p_{\varrho_B} (s)$. Therefore, the success probability of the physical BM $\mathbb{P}_{\ket{\psi} \bra{\psi}} (\{s\})$ is identical to a measurement on a uniform mixture of Bell states, $\mathbb{P}_{\ket{\psi} \bra{\psi}} (\{s\}) = \mathbb{P}_B$.
\end{lemma}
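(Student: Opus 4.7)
The plan is to exploit the joint uniform-probability hypothesis on $ZZ$ and $XX$ to constrain the diagonal of the reduced density matrix in the Bell basis, and then plug this directly into the probability formula to collapse it onto the uniform-mixture expression of Eq.~\eqref{eq:def-pb-a}.

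First I would expand $\ket{\psi}$ in the Bell basis together with an arbitrary orthonormal basis $\{\ket{R_k}\}$ of $\mathcal{H}_R$, writing
\begin{equation}
    \ket{\psi} = \sum_{j=1}^{4}\sum_{k=1}^{d} \lambda_{jk}\,\ket{\Phi_j}_B\ket{R_k}_R ,
\end{equation}
which is allowed precisely because the $B$-part lies in the two-qubit code space. The reduced state on $\mathcal{H}_B$ has diagonal entries in the Bell basis equal to $\sum_k |\lambda_{jk}|^2$. Since the four Bell states are the simultaneous eigenstates of $ZZ$ and $XX$, the probability of the joint outcome corresponding to $\ket{\Phi_j}$ is exactly $\sum_k |\lambda_{jk}|^2$. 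The hypothesis that these four joint probabilities are uniform thus gives $\sum_k |\lambda_{jk}|^2 = 1/4$ for each $j$. This is the key translation of the assumption into an algebraic form usable in the physical-BM calculation.

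Next I would compute $p_{\ket{\psi}\bra{\psi}}(s)$ directly from its definition. Following the template of Eq.~\eqref{eq:p-rho-m} and using the Bell decomposition, the probability expands as
\begin{equation}
    p_{\ket{\psi}\bra{\psi}}(s)
    = \sum_{j,i=1}^{4} \Bigl(\sum_{k}\lambda_{jk}\lambda_{ik}^{*}\Bigr)
      \bra{A}\bra{\Phi_i} U^{\dagger} P_s U \ket{\Phi_j}\ket{A}.
\end{equation}
Now the defining property~\eqref{eq:s} of an unambiguous outcome kills every term in which either $i\neq\sigma$ or $j\neq\sigma$, leaving only $i=j=\sigma$. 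The remaining diagonal coefficient is $\sum_k |\lambda_{\sigma k}|^2 = 1/4$, so
\begin{equation}
    p_{\ket{\psi}\bra{\psi}}(s)
    = \tfrac{1}{4}\bra{A}\bra{\Phi_\sigma} U^{\dagger} P_s U \ket{\Phi_\sigma}\ket{A}
    = p_{\varrho_B}(s),
\end{equation}
matching Eq.~\eqref{eq:def-pb-a}. Summing over $s \in \{s\}$ and invoking Eq.~\eqref{eq:def-pb-b} yields $\mathbb{P}_{\ket{\psi}\bra{\psi}}(\{s\}) = \mathbb{P}_B$.

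I do not expect a real obstacle here; the lemma is essentially a bookkeeping exercise once one realizes that (i) uniformity of $ZZ$ and $XX$ outcomes is equivalent to uniformity of the full Bell projection probabilities, and (ii) the unambiguity condition~\eqref{eq:s} suppresses all off-diagonal Bell-basis coherences in the probability formula. The only subtle point to be careful about is justifying (i): one must note that uniformity of $ZZ$ and $XX$ separately is not enough (as the counterexample $(\ket{\Phi^-}+\ket{\Psi^+})/\sqrt{2}$ discussed after Lem.~\ref{lem:successful-bell-measurment} shows), but the stated hypothesis of uniform probability for all four joint outcomes is exactly what fixes all four diagonal weights to $1/4$, and this is the only property actually used.
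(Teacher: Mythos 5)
Your proposal is correct and follows essentially the same route as the paper's proof in App.~\ref{app:physical-bell-measurements-on-entangled-quantum-states}: the same Bell-basis decomposition $\ket{\psi}=\sum_{j,k}\lambda_{jk}\ket{\Phi_j}\ket{R_k}$, the same translation of the joint-uniformity hypothesis into $\sum_k|\lambda_{jk}|^2=\tfrac14$ (the paper's Eq.~\eqref{eq:lambda}), and the same use of the unambiguity condition~\eqref{eq:s} to kill all terms except $i=j=\sigma$ in the expansion of $p_{\ket{\psi}\bra{\psi}}(s)$, yielding Eq.~\eqref{eq:ps}. No substantive difference.
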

\begin{proof}
We use a general decomposition of $\ket{\psi}$:
\begin{equation}
	\ket{\psi} = \sum_{j=1}^4 \sum_{k=1}^d \lambda_{jk} \ket{\Phi_j}_B \ket{R_k}_R,
	\label{eq:psi-comp-b}
\end{equation}
with normalization
\begin{equation}
	\sum_{j=1}^4 \sum_{k=1}^d |\lambda_{jk}|^2 = 1,
\end{equation}
where we use $\{ \ket{\Phi_j}_B \}_{j \in \{1,2,3,4\}}$ as a basis for the encoded qubits in $\mathcal{H}_B$ and an arbitrary orthonormal basis $\{ \ket{R_k} \}_{k \in \{1,\dots,d\}}$ for the remaining quantum state in $\mathcal{H}_R$. We can use $\{ \ket{\Phi_j}_B \}_{j \in \{1,2,3,4\}}$ as a basis for the quantum state in $\mathcal{H}_B$ since the lemma demands the qubits on $\mathcal{H}_B$ to be entirely within the code space.

Following from the conditions the probabilities for the projections $\{ \ket{\Phi_r} \bra{\Phi_r}\}_{r \in \{1,2,3,4\} }$ onto the four simultaneous eigenspaces of the commuting set of observables $\{ XX, ZZ \}$ are uniformly distributed. We use this to obtain a condition on the coefficients $\lambda_{jk}$ of the quantum state,
\begin{equation}
\begin{aligned}
	\forall r \in \{1,2,3,4\}: \ \frac{1}{4} = & \braket{\Psi | \Phi_r} \braket{\Phi_r | \Psi} \\
	= & \sum_{j,i=1}^4 \sum_{k,l=1}^d \lambda_{jk}^* \lambda_{il} \braket{\Phi_j | \Phi_r} \braket{\Phi_r | \Phi_i} \braket{R_k | R_l} \\
	= & \sum_{j,i=1}^4 \sum_{k,l=1}^d \lambda_{jk}^* \lambda_{il} \delta_{jr} \delta_{ri} \delta_{kl} \\
	= & \sum_{k=1}^4 \lambda_{rk}^* \lambda_{rk} \\
	= & \sum_{k=1}^4 |\lambda_{rk}|^2,
\end{aligned}
	\label{eq:lambda}
\end{equation}
where we used the decomposition in Eq.~\eqref{eq:psi-comp-b}. As a next step we use Eq.~\eqref{eq:lambda} to calculate the probability of an arbitrary unambiguous BM result $s \in \{s\}$ as defined in Eq.~\eqref{eq:s}, 
\begin{widetext}
\begin{equation}
\begin{aligned}
	p_{\ket{\psi} \bra{\psi}} (s) & = \bra{A} \bra{\psi} U^\dagger P_s U \ket{\psi} \ket{A} \\
	& = \sum_{j,i=1}^4 \sum_{k,l=1}^d \lambda_{j,k}^* \lambda_{il} \bra{A} \bra{\Phi_j} U^\dagger P_s P_s U \ket{\Phi_i} \ket{A} \braket{R_k | R_l} \\
	& = \sum_{j,i=1}^4 \sum_{k,l=1}^d \lambda_{j,k}^* \lambda_{il} \bra{A} \bra{\Phi_j} U^\dagger P_s U \ket{\Phi_i} \ket{A} \delta_{kl} \\
	& = \sum_{j,i=1}^4 \bra{A} \bra{\Phi_j} U^\dagger P_s U \ket{\Phi_i} \ket{A} \sum_{k=1}^d \lambda_{j,k}^* \lambda_{ik} \\
	& = \bra{A} \bra{\Phi_\sigma} U^\dagger P_s U \ket{\Phi_\sigma} \ket{A} \sum_{k=1}^d |\lambda_{\sigma k}|^2\\
	& = \frac{1}{4} \bra{A} \bra{\Phi_\sigma} U^\dagger P_s U \ket{\Phi_\sigma} \ket{A} \\
	& = p_{\varrho_B} (s),
\end{aligned}
	\label{eq:ps}
\end{equation}
\end{widetext}
where we inserted Eqs.~\eqref{eq:lambda} and~\eqref{eq:def-pb-a} in the final steps. Finally, we insert Eqs.~\eqref{eq:ps} into~\eqref{eq:def-pb-b} to obtain:
\begin{equation}
	\mathbb{P}_{\ket{\psi} \bra{\psi}} (\{s\}) = \mathbb{P}_B.
\end{equation}
\end{proof}

We conclude this section by combining the results of Cor.~\ref{cor:pm-success} and Lem.~\ref{lem:prob-success}, which together directly prove Lem.~\ref{lem:successful-bell-measurment}, which was presented in Sec.~\ref{sec:physical-bell-measurements-on-entangled-quantum-states}.

\section{Observables that commute with $S^{(\mathbb{M})}$}
\label{app:observables-that-commute-with-s}
In this appendix, we provide a complete formal treatment, along with an illustrative discussion of measurements where the observable commutes with the current stabilizer. Before we explore these cases in detail and present a formal analysis in Lem.~\ref{lem:commuting-observables}, we want to provide a simple example of how logical measurements are performed and how they fit into our formalism. Let us consider one logical qubit encoded in QPC($2,2$). Note that for simplicity, we only consider one logical qubit in this example. The stabilizer group of this four-qubit code is generated by:
\begin{equation}
	G_c = \{ XXXX, ZZII, IIZZ \}.
\end{equation}
The relevant logical operators of this code are:
\begin{equation}
	\{ XXII, IIXX \} \subset [\overline{X}],
\end{equation}
\begin{equation}
	\{ ZIZI, ZIIZ, IZZI, IZIZ \} \subset [\overline{Z}].
\end{equation}
Therefore we can express the stabilizer group of a logical qubit in an $\overline{X}$ eigenstate as follows:
\begin{equation}
	\begin{aligned}
	S 	& = \langle G_c \cup \{ l_x XXII \} \rangle\\
		& = \langle XXXX, ZZII, IIZZ, l_x XXII \rangle,
	\end{aligned}
\end{equation}
where $l_x$ is a random variable and its value is the logical $\overline{X}$ information of the logical qubit. The operator $XXII$ is an arbitrarily chosen representative of $[\overline{X}]$. The first step to perform a logical $\overline{X}$ measurement is to measure the observable $M_1 = XIII$. The observable $M_1$ anticommutes only with the second stabilizer generator $ZZII$ and we follow that the measurement result $m_1$ has equal probability for both outcomes. We obtain the post-measurement state:
\begin{equation}
\begin{aligned}
	S^{(m_1 M_1)} 	& = \langle G_c^{(m_1 M_1)} \cup \{m_1 M_1 \} \cup \{ l_x XXII \} \rangle\\
					& = \langle XXXX, IIZZ, m_1 XIII ,l_x XXII \rangle,
	\end{aligned}
\end{equation}
where the second code stabilizer generator was replaced by the measured observable. Now we complete the logical $\overline{X}$ measurement by measuring the observable $M_2 = IXII$. We note that the observable $M_2$ commutes with the current stabilizer $S^{(m_1 M_1)}$. Therefore, $M_2$ is an element of the stabilizer up to the sign $m_2$. We deduce the sign $m_2$ by finding the unique decomposition of $m_2 M_2$ in terms of the current stabilizer generators:
\begin{equation}
	m_2 M_2 = m_1 M_1 l_x XXII,
	\label{eq:example-composition}
\end{equation}
where we used that $M_1 M_2 = XXII \in [\overline{X}]$. We rearrange Eq.~\eqref{eq:example-composition} to
\begin{equation}
	m_1 m_2 M_1 M_2 = l_x XXII,
\end{equation}
obtaining our logical measurement outcome $m_1 m_2 = l_x$. In conclusion, we performed a logical $\overline{X}$ measurement by decomposing a logical $\overline{X}$ operator into a set of measurements: $M_1 M_2 = XXII = \overline{X}$, and then multiplying their outcomes $l_x = m_1 m_2$. In essence, we measured the logical operator $XXII \in [\overline{X}]$ by decomposing it into two single-qubit Pauli measurements $XIII$ and $IXII$ using the fact that $XIII \times IXII = XXII \in [\overline{X}]$. Expanding on the intuition build in this example we now present a general and formal treatment in Lem.~\ref{lem:commuting-observables}.

\begin{lemma}
\label{lem:commuting-observables}
(Observables that commute with $S^{(\mathbb{M})}$) Let us assume a current stabilizer state as defined in Eq.~\eqref{eq:general-state}:
\begin{equation}
	S^{(\mathbb{M})} = \langle G_c^{(\mathbb{M})} \cup \mathbb{M} \cup L^{(\mathbb{M})}_{x,z} \rangle.
	\tag{\ref{eq:general-state}}
\end{equation}
If an observable which commutes with the current stabilizer $S^{(\mathbb{M})}$ is measured, it will have no effect on the global quantum state and the outcome is predetermined by the stabilizer $S^{(\mathbb{M})}$ with unit probability. This implies two possibilities: either the observable is uncorrelated with the logical information or it is correlated. In the case where the observable is uncorrelated with the logical information, we know the outcome beforehand and we obtain no additional knowledge from the measurement; we merely completed measuring a code stabilizer in $\langle G_c \rangle$. In the case where the observable is correlated with the logical information the outcome is predetermined in one-to-one correspondence by one of the logical variables $\{ l_x, l_y, l_z \}$. Therefore, assuming no prior knowledge of the logical information, both outcomes are equally likely, since the logical variables are uniformly distributed. Consequently, the measurement will yield the value of one of the three logical variables. In other words, a logical measurement of one of the logical operators $\{ \overline{XX}, \overline{YY}, \overline{ZZ} \}$ is performed by measuring the observable.

\end{lemma}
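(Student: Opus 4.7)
The plan is to exploit the fact that $S^{(\mathbb{M})}$, having $n_c$ independent commuting generators on $n_c$ qubits, fully characterizes a stabilizer state. First, I would observe that because $M$ commutes with all generators of $S^{(\mathbb{M})}$ and this group is a maximal abelian subgroup of $\mathcal{P}_{n_c}$ up to phases, $M$ itself must lie in $S^{(\mathbb{M})}$ up to an overall sign. Concretely, either $M\in S^{(\mathbb{M})}$ or $-M\in S^{(\mathbb{M})}$. In the first case the measurement outcome is deterministically $+1$; in the second it is deterministically $-1$. Either way the post-measurement state is unchanged, so the global state is invariant under the measurement.

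Next I would decompose the corresponding stabilizer element using the generating set $G_c^{(\mathbb{M})}\cup\mathbb{M}_r\cup L^{(\mathbb{M})}_{x,z}$ introduced in Sec.~\ref{sec:pauli-measurements-on-encoded-uniform-mixtures-of-bell-states}. Because these $n_c$ generators are independent, every element of $S^{(\mathbb{M})}$ admits a unique expression (modulo the order-two redundancy of Pauli generators and an overall sign) of the form
\begin{equation}
    \pm M \;=\; \gamma\,\mu\,\lambda,
\end{equation}
with $\gamma\in\langle G_c^{(\mathbb{M})}\rangle$, $\mu\in\langle\mathbb{M}_r\rangle$, and $\lambda$ a product of zero, one, or both of the logical generators in $L^{(\mathbb{M})}_{x,z}=\{l_x\overline{XX}^{(\mathbb{M})},\,l_z\overline{ZZ}^{(\mathbb{M})}\}$. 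I would split into four cases according to $\lambda$. If $\lambda=II$, the measured eigenvalue equals a known sign times the already-known eigenvalues carried by $\mu$ and $\gamma$, so the outcome is predictable and the relation $\pm M\,\mu^{-1}=\gamma$ simply exhibits $\pm M\mu^{-1}$ as a code stabilizer in $\langle G_c\rangle$ whose measurement has been completed. If $\lambda$ contains exactly one of the two logical generators, the predetermined outcome equals (a product of known signs times) $l_x$ or $l_z$ respectively, so this single measurement is in one-to-one correspondence with that logical variable. If $\lambda$ contains both, then $\lambda=-l_xl_z\overline{YY}^{(\mathbb{M})}$ and the outcome equals a known sign times $l_y=-l_xl_z$.

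To finish, I would invoke the assumption from Sec.~\ref{sec:pauli-measurements-and-logical-operators-on-encoded-uniform-mixtures-of-bell-states} that no prior knowledge of $l_x,l_y,l_z$ has been obtained. Since each $l_{\bullet}$ is a symmetric Bernoulli random variable, the logical cases above yield outcomes that are uniformly distributed on $\{-1,+1\}$ and reveal the value of the corresponding logical variable; equivalently, the measurement realizes a logical Pauli measurement of the coset $[\overline{XX}]$, $[\overline{YY}]$, or $[\overline{ZZ}]$. Throughout, the quantum state is left invariant because $M\in\pm S^{(\mathbb{M})}$. The main subtlety, and where the argument must be handled carefully, is the uniqueness of the decomposition and the bookkeeping of signs: one has to argue that $G_c^{(\mathbb{M})}\cup\mathbb{M}_r\cup L^{(\mathbb{M})}_{x,z}$ really is a minimal independent generating set of $S^{(\mathbb{M})}$, so that the factor $\lambda$ is well-defined, and that the proscription of Lem.~\ref{lem:forbidden-measurements} is what rules out the pathological case in which $\lambda$ appears in a decomposition of $\pm M$ without $M$ commuting with every other generator. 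With $\mathbb{M}_r$ in hand these facts are immediate, and the classification above is then exhaustive.
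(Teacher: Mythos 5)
Your proposal is correct and follows essentially the same route as the paper's proof in App.~\ref{app:observables-that-commute-with-s}: both use the fact that a Pauli commuting with a full stabilizer group lies in it up to a sign, then uniquely decompose $\pm M = \gamma\mu\nu$ over the minimal generating set $G_c^{(\mathbb{M})}\cup\mathbb{M}_r\cup L^{(\mathbb{M})}_{x,z}$ and case-split on the logical factor $\nu$ to identify either a completed code-stabilizer measurement or a logical $\overline{XX}$, $\overline{YY}$, or $\overline{ZZ}$ measurement. The only inessential wrinkle is your closing remark invoking Lem.~\ref{lem:forbidden-measurements}, which is not needed here since the hypothesis that $M$ commutes with the entire current stabilizer already guarantees the decomposition exists.
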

\begin{proof}
Per assumption in the lemma we measure an observable $M$ with result $m$ which commutes with $S^{(\mathbb{M})}$. Thus we know that $mM \in S^{(\mathbb{M})}$. As we discussed in Sec.~\ref{sec:pauli-measurements-on-encoded-uniform-mixtures-of-bell-states}, we can use the reduced set of measurements $\mathbb{M}_r$ to ensure that we have a minimal generating set for the current stabilizer group $G_c^{{\mathbb{M}}} \cup \mathbb{M}_r \cup L_{x,z}^{(\mathbb{M})}$. Therefore, $mM$ has a unique decomposition in terms of this minimal generating set. Using the unique way $mM$ can decomposed using elements exclusively from the three subsets, we denote its decomposition as:
\begin{equation}
	mM = \gamma \mu \nu,
	\label{eq:measurement-decomposition}
\end{equation}
where $\gamma \in \langle G^{(\mathbb{M})}_c \rangle $, $\mu \in \langle \mathbb{M}_r \rangle$ and $\nu \in \langle L^{(\mathbb{M})}_{x,z}\rangle $. Recall that $\langle G^{(\mathbb{M})}_c \rangle \subseteq \langle G_c \rangle $ are code stabilizers. We recall from Lem.~\ref{lem:standard-form}, that the sign of every element in $\langle L^{(\mathbb{M})}_{x,z}\rangle$ is determined by one of the three logical variables $l_x$, $l_y$, or $l_z$. Additionally, Lem.~\ref{lem:standard-form} established that every element in $\langle G^{(\mathbb{M})}_c \rangle$ and $\langle \mathbb{M}_r \rangle$ is uncorrelated with the logical variables.

Now we will consider the two cases $\nu = I$ and $\nu \neq I$ separately. In the first case, $\nu = I$, Eq.~\eqref{eq:measurement-decomposition} simplifies to $mM = \gamma \mu$. The operator $\gamma \mu$ is a product of code stabilizers and observables of prior measurements, and is uncorrelated to the logical variables $l_x$ and $l_z$. Recall, that the logical variables are the only unknowns of the quantum state. Thus, in the case where $\nu = I$ we know the measurement result $m$ prior to the measurement and we learn nothing from the measurement. In the second case, where $\nu \neq I$, the decomposition of $mM$ is given by $mM = \gamma \mu \nu$, and we observe that the measurement result $m$ is uniquely determined by $l_x$, $l_y$, or $l_z$, depending on the factors of the logical stabilizers $L_{x,z}^{(\mathbb{M})}$ in the decomposition of $\nu$.

Therefore, from the measurement result $m$ we learn the value of the respective random variable. In other words, we conclude a logical measurement of either $\overline{XX}$, $\overline{YY}$, or $\overline{ZZ}$. To understand this logical measurement better let us rearrange Eq.~\eqref{eq:measurement-decomposition}:
\begin{equation}
	m M \mu = \gamma \nu.
	\label{eq:logical-measurement}
\end{equation}
The rhs of Eq.~\eqref{eq:logical-measurement} consists of code stabilizers and logical stabilizers. Therefore the rhs is itself a logical operator:
\begin{equation}
	\gamma \nu \in [\overline{XX}] \cup [\overline{YY}] \cup [\overline{ZZ}],
\end{equation}
with its sign determined by one of the logical variables, $l_x$, $l_y$ or $l_z$. The lhs of Eq.~\eqref{eq:logical-measurement} is the product of measured observables. Thus, upon examination of Eq.~\eqref{eq:logical-measurement}, it is apparent that the performed measurements $m M \mu$ constitute the logical measurement $\gamma \nu \in [\overline{XX}] \cup [\overline{YY}] \cup [\overline{ZZ}]$.

\end{proof}

The results of Lem.~\ref{lem:commuting-observables} should be unsurprising. Essentially, any observable $O$ can be factored into a set of observables $\{o_i\}$ via $O = \prod_i o_i$. This allows us to measure the set of observables ${o_i}$, with the product of their eigenvalues determining the eigenvalue of $O$.  Thus, in principle, any logical measurement can be decomposed into a set of obsevables which are simpler to measure, e.g. a decomposition into single-qubit Pauli measurements.

\section{Proof: Lem.~\ref{lem:acomm-logicals} of Sec.~\ref{sec:logical-operators-on-encoded-uniform-mixtures-of-bell-states}}
\label{app:proof-lem-acomm-logicals}

In this appendix, we provide the formal proof of Lem.~\ref{lem:acomm-logicals}.
\acommlogicalsrestatable*
\begin{proof}
Let us choose the two logical operators $\overline{XX} \in [\overline{XX}]$ and $\overline{ZZ} \in [\overline{ZZ}]$, which together form a logical BM. Without loss of generality, we choose these operators, but any pair from the set $\{ \overline{XX}, \overline{YY}, \overline{ZZ} \}$ would equally constitute a logical BM and could have been chosen for the proof of this lemma. Recall from Sec.~\ref{sec:encoded-uniform-mixture-of-bell-states} that these three possible pairs comprise all possible Pauli measurements that, when measured jointly, project onto the Bell basis.

Since the two logical qubits are encoded in independent stabilizer codes on disjoint sets of qubits, we may use their unique factorization into single-code logical operators:
\begin{equation}
	\begin{aligned}
	\overline{XX} 	& = \overline{X}_1 \otimes \overline{X}_2 \\
					& = \left( \overline{X}_1 \otimes I^{\otimes n_2} \right) \left( I^{\otimes n_1} \otimes \overline{X}_2 \right),
	\end{aligned}
\end{equation}
and similarly,
\begin{equation}
	\begin{aligned}
	\overline{ZZ} 	& = \overline{Z}_1 \otimes \overline{Z}_2 \\
					& = \left( \overline{Z}_1 \otimes I^{\otimes n_2} \right) \left( I^{\otimes n_1} \otimes \overline{Z}_2 \right).
	\end{aligned}
\end{equation}
Because in each factorization the first factor has support exclusively on the first code and the second factor exclusively on the second code, we make the following observations. Each of the pairs
\begin{equation}
	\left( \overline{X}_1 \otimes I^{\otimes n_2} \right), \ \left( I^{\otimes n_1} \otimes \overline{Z}_2 \right)
\end{equation}
and
\begin{equation}
	\left( I^{\otimes n_1} \otimes \overline{X}_2 \right), \ \left( \overline{Z}_1 \otimes I^{\otimes n_2} \right),
\end{equation}
anticommute in zero qubits. However, using the canonical single-code Pauli commutation relation
\begin{equation}
	\acomm{\overline{X}}{\overline{Z}}=0 \quad \text{for all } \overline{X} \in [\overline{X}], \ \overline{Z} \in [\overline{Z}],
\end{equation}
we deduce that each of the pairs
\begin{equation}
	\left( \overline{X}_1 \otimes I^{\otimes n_2} \right), \ \left( \overline{Z}_1 \otimes I^{\otimes n_2} \right),
\end{equation}
and
\begin{equation}
	\left( I^{\otimes n_1} \otimes \overline{X}_2 \right), \ \left( I^{\otimes n_1} \otimes \overline{Z}_2 \right)	,
\end{equation}
anticommute. Therefore, they anticommute in an odd number of qubits in their respective codes. Thus, combining the two observations we conclude that any two logical operators $\overline{XX}$ and $\overline{ZZ}$ anticommute in an odd number of qubits in each code.
\end{proof}

\section{Derivation of the single-code reduction}
\label{app:single-code-reduction}

In Sec.~\ref{sec:single-code-reduction}, we introduced the symmetry in the stabilizer generators during the first part of our measurement schemes and showed how this symmetry can be exploited to reduce our schemes to a single-code picture. In this appendix, we provide a more detailed technical derivation of this symmetry and the single-code reduction. If two operators have support exclusively on the first or the second code, respectively, and are the same operators including the sign, we refer to them as being the same or identical in their respective codes. In the following we prove the symmetry in the stabilizer generators during the first part of the scheme.

\begin{lemma}
\label{lem:symmetry}
(Symmetry of the stabilizer generators) We assume two logical qubits encoded in identical stabilizer codes and the exclusive use of transversal BMs and that we never measure an observable which anticommutes with at least one element of $L^{(\mathbb{M})}_{x,z}$ and commutes with the rest of the generators of the current stabilizer group $G_c^{(\mathbb{M})} \cup \mathbb{M}$. Under these assumptions, the current stabilizer generators, up to and including the first successful BM, can always be chosen to exhibit the following symmetry.

Every element of the stabilizer generators that is not transversal has support on only one of the two codes, and there exists another element in the generators which is the same operator on the other code up to a sign. The signs of these two operators may only differ for elements of the measurements $\mathbb{M}$, but not for the elements of the code generators $G_c^{\mathbb{M}}$. Furthermore, the logical generators $L_{x,z}^{(\mathbb{M})}$ can be chosen to be transversal at all times.
\end{lemma}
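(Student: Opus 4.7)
The plan is to prove the symmetry by induction on the number of transversal BMs performed, up to and including the first successful one. The central tool is the swap operator $S_{12}$ that exchanges the $j$-th qubit of code~1 with the $j$-th qubit of code~2 for every $j$. Because the two codes are identical by assumption, $S_{12}$ fixes the transversal operators and exchanges $\mathcal{O}_1 = p_j \otimes I^{\otimes n}$ with $\mathcal{O}_2 = I^{\otimes n} \otimes p_j$ for any transversal partial BM observables. The base case is immediate: the initial generating set $G_c \cup L_{x,z} = G_1 \cup G_2 \cup \{l_x \overline{XX}, l_z \overline{ZZ}\}$ already has the required symmetric form, once the logical representatives are chosen transversally, which is always possible.

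For the inductive step, assume the current generators $G_c^{(\mathbb{M})} \cup \mathbb{M} \cup L_{x,z}^{(\mathbb{M})}$ are of the symmetric form: each non-transversal generator belongs to a twin pair $(g \otimes I^{\otimes n}, I^{\otimes n} \otimes g)$, with equal signs when the pair lies in $G_c^{(\mathbb{M})}$ and possibly different signs when it lies in $\mathbb{M}$, and the logicals are transversal. Apply a transversal partial BM on the $j$-th qubit pair with observables $\mathcal{O}_1, \mathcal{O}_2$ and outcomes $r_1, r_2$; since the two measurements commute, I would apply them sequentially via the procedure of Lem.~\ref{lem:standard-form}, but with pivot and update choices coordinated by the swap symmetry. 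Whenever possible, I would select a twin pivot pair $(\pi_1, \pi_2) = (g_0 \otimes I^{\otimes n}, I^{\otimes n} \otimes g_0)$ from $G_c^{(\mathbb{M})}$ with $g_0$ at qubit~$j$ anticommuting with $p_j$. The remaining anticommuting non-transversal generators on code~1 and their twins on code~2 are then multiplied by $\pi_1$ and $\pi_2$, respectively, preserving the twin structure, and every anticommuting transversal generator $t \otimes t$, including any anticommuting logicals, is multiplied by $\pi_1 \pi_2 = g_0 \otimes g_0$, which is transversal and an element of $\langle G_c \rangle$ so that both transversality and the cosets $[\overline{XX}], [\overline{ZZ}]$ are preserved. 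Finally $\pi_1$ and $\pi_2$ are replaced by $r_1 \mathcal{O}_1$ and $r_2 \mathcal{O}_2$, producing a new twin pair in $\mathbb{M}$ whose signs may differ by $r_1 r_2$, exactly as the lemma permits. The treatment of a successful BM is analogous after replacing the two single-qubit observables by the corresponding transversal operators on the qubit pair.

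The main obstacles will be the edge cases where no non-transversal code-stabilizer twin pivot is available: either the only anticommuting twin pair lies in $\mathbb{M}$ rather than $G_c^{(\mathbb{M})}$, or $\mathcal{O}_1$ anticommutes only with transversal generators. For the first situation, I would use the freedom to rewrite the current generating set, multiplying the offending measurement generator by a code stabilizer of matching structure to transfer the anticommutation onto a $G_c^{(\mathbb{M})}$ pivot while preserving the twin organization. For the second, any anticommuting transversal generator $t_0 \otimes t_0$ automatically anticommutes with both $\mathcal{O}_1$ and $\mathcal{O}_2$, so a single transversal pivot suffices, and the two measurement outcomes enter the updated generating set as a new twin pair $(r_1 \mathcal{O}_1, r_2 \mathcal{O}_2)$ added symmetrically to $\mathbb{M}$. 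In both cases, Lem.~\ref{lem:forbidden-measurements} rules out the scenario in which the logicals alone are responsible for the anticommutation, so the coset-preservation argument for $L_{x,z}^{(\mathbb{M})}$ never breaks down. The hard part will be establishing in full generality that this swap-symmetric pivot choice is always simultaneously compatible with Lem.~\ref{lem:standard-form}'s freedom to rewrite the stabilizer and with the coset condition on the logicals.
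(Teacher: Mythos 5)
Your proposal is correct and follows essentially the same route as the paper's proof: induction over the BMs, a base case fixing transversal representatives of the logicals, and an inductive step in which twin code-generator pairs absorb the anticommutation while anticommuting transversal generators (in particular the logicals) are multiplied by the transversal product $g_0 \otimes g_0$ of the pivot pair, preserving both transversality and the cosets; the swap-operator language is only a cosmetic repackaging of this symmetry. The edge cases you flag as the "hard part" are in fact vacuous under the paper's standing assumptions: all destructive measurements act on disjoint qubits and hence commute, so $\mathcal{O}_1$ can never anticommute with an element of $\mathbb{M}$, and before the first success the only transversal generators are the logicals, so anticommuting exclusively with transversal generators is exactly the forbidden case already excluded by hypothesis via Lem.~\ref{lem:forbidden-measurements}.
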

\begin{proof}
The initial stabilizer generators consist of the sets $G_c = G_1 \cup G_2$ and $L_{x,z}$. Assuming identical codes, the code stabilizer generators $G_1$ and $G_2$ are identical on both codes. Since the two logical qubits are encoded in independent stabilizer codes on disjoint sets of qubits, we may use the unique factorization of $l_x \overline{XX}$ in $L_{x,z}$ into single-code logical operators:
\begin{equation}
	\overline{XX} = \overline{X}_1 \otimes \overline{X}_2.
\end{equation}
Since the codes are identical, we may choose $\overline{X}_1$ and $\overline{X}_2$ to be identical on both codes, which makes $\overline{XX}$ a transversal operator. The argument applies analogously to $\overline{ZZ}$.

Having established that the symmetry is fulfilled for the initial state, we will prove by induction that it is preserved under transversal BMs up to and including the first successful BM. We denote the current stabilizer as:
\begin{equation}
	S = \langle G_1^{(\mathbb{M})} \cup G_2^{(\mathbb{M})} \cup \mathbb{M} \cup L_{x,z}^{(\mathbb{M})} \rangle,
\end{equation}
where $G_1^{(\mathbb{M})} \subseteq G_1$ and $G_2^{(\mathbb{M})} \subseteq G_2$. We start by examining how the stabilizer generators transform under a partial BM. Recall from Lem.~\ref{lem:forbidden-measurements} that, similarly to Lem.~\ref{lem:standard-form}, we exclude measurements which anticommute with at least one element of $L^{(\mathbb{M})}_{x,z}$ and commute with the rest of the current stabilizer generators $G_c^{(\mathbb{M})} \cup \mathbb{M}$ since they irreversibly destroy logical information. Let us denote the two single-qubit observables of the partial BMs as $M_ 1 = b_j \otimes I^{\otimes n_2}$ and $M_2 = I^{\otimes n_1} \otimes b_j$, which are identical in their respective codes, with $b_j \in \{X_j, Y_j, Z_j\}$ and $j$ labeling the qubit pair on which the BM is performed.

Since measuring an operator which commutes with the current stabilizer does not change the quantum state we only need to consider the case where at least one of $M_1$ and $M_2$ anticommute with an element of the current stabilizer generators. We note, that if $M_1$ anticommutes with an element $g_1 = g \otimes I^{\otimes n_2}$ of $G_1^{(\mathbb{M})}$ then $M_2$ anticommutes with an element $g_2 = I^{\otimes n_1} \otimes g$ of $G_2^{(\mathbb{M})}$ where $g_1$ and $g_2$ are the same operators on the first and second code, respectively. Therefore, measuring $M_1$ and $M_2$ replaces $g_1$ and $g_2$ with $m_1 M_1$ and $m_2 M_2$ in the stabilizer generator, where $m_1$ and $m_2$ are the measurement outcomes of $M_1$ and $M_2$, respectively.

Now, we consider the case where $M_1$ also anticommutes with an element $L$ in $L_{x,z}^{(\mathbb{M})}$. Since $M_2$ is identical to $M_1$ on the other code and $L$ is transversal, it follows that $M_2$ also anticommutes with $L$. Consequently, we replace $L$ with $g_1 g_2 L$ in the stabilizer generators. Note that the new logical operator $g_1 g_2 L$ lies in the same logical coset as $L$, since they differ only by code stabilizers. The new generator $g_1 g_2 L$ is transversal, as $g_1$ and $g_2$ are identically on their respective codes and $L$ is transversal. Lastly, $g_1 g_2 L$ commutes with $M_1$ and $M_2$. Thus, we can always transform the elements of $L_{x_z}^{(\mathbb{M})}$ to commute with the observables of the partial BMs while keeping them transversal. We complete the argument by noting that the sets $G_1^{(\mathbb{M})}$ and $G_2^{(\mathbb{M})}$ transform identically under the partial BM, preserving their symmetry.

As the final step, we examine how the stabilizer generators transform under a successful BM. The transversal observables measured by the BM commute with $L_{x,z}^{(\mathbb{M})}$, as all its elements are also transversal. If an element $g_1 = g \otimes I^{\otimes n_2}$ of $G_1^{(\mathbb{M})}$ anticommutes with a transversal observable $B$ than the same element $g_2 = I^{\otimes n_1} \otimes g$ of $G_2$ on the other code anticommutes with $B$ as well. Thus, the measurement replaces the code stabilizers $g_1$ and $g_2$ in the stabilizer generators with $g_1 g_2$ which is a transversal operator, since $g_1$ and $g_2$ are the same operators in their respective codes.
\end{proof}

The following Lem.~\ref{lem:transversal-bm} from the main text is a direct consequence of Lem.~\ref{lem:symmetry}. It states that partial BMs act as single-qubit measurements in the single-code picture.

\transversalbmrestatable*
\begin{proof}
In the proof of Lem.~\ref{lem:symmetry} we already established, that the current code stabilizers $G_1^{\mathbb{M}}$ and $G_2^{\mathbb{M}}$ have support exclusively on their respective codes and are identical on their respective codes. We again define the two single-qubit observables of the partial BMs as $M_ 1 = b_j \otimes I^{\otimes n_2}$ and $M_2 = I^{\otimes n_1} \otimes b_j$, with $b_j \in \{X_j, Y_j, Z_j\}$ and $j$ labeling the qubit pair on which the BM is performed. Thus, these two measurements, $M_1$ and $M_2$, act identically and independently on the two current code stabilizer generator sets, $G_1^{\mathbb{M}}$ and $G_2^{\mathbb{M}}$, as single-qubit observables $b_j$.
\end{proof}

\section{Proof: Thm.~\ref{thm:sufficient} (Sufficient conditions for an optimal logical Bell measurement)}
\label{app:proof-thm:sufficient}

\sufficientrestatable*
\begin{proof}
To make the derivation more explicit, we give the proof in the two-code picture. We split this proof into two parts. First, we demonstrate that the success probability for each transversal BM up to the first successful one is $\mathbb{P}_B$. Second, we prove that once a success occurs, the observables completing the logical BM can be obtained with probability one.

Recall, that we defined the logical two-qubit code using the trivial extension of the generators of the single-qubit code $G_c = \{g_s\}_{s \in \{ 1, \dots, n-1 \}}$:
\begin{equation}
	G_1 = \{ g_{1,s} \}_{s \in \{ 1, \dots, n-1\}} = \{ g_{s} \otimes I^{\otimes n} \}_{s \in \{1, \dots, n-1\}},
\end{equation}
\begin{equation}
	G_2 = \{ g_{2,s} \}_{s \in \{ 1, \dots, n-1 \}} = \{ I^{\otimes n} \otimes g_{s} \}_{s \in \{ 1, \dots, n-1 \}}.
\end{equation}

We start by analyzing the scheme before the first success occurred. The scheme proceeds in steps, with one transversal BM performed at each step. We define the set of measurements up to step $j$ as $\mathbb{M}_j$. For brevity, we denote $G^{(\mathbb{M}_j)}$ as $G^{(j)}$, $G_i^{(\mathbb{M}_j)}$ as $G_i^{(j)}$, $S^{(\mathbb{M}_j)}$ as $S^{(j)}$, $\overline{XX}^{(\mathbb{M}_j)}$ as $\overline{XX}^{(j)}$ and $\overline{ZZ}^{(\mathbb{M}_j)}$ as $\overline{ZZ}^{(j)}$. The stabilizer group of the initial logical uniform mixture of Bell states is given by:
\begin{equation}
\begin{aligned}
	S = S^{(0)} = \langle 	& \{ g_{1,s} \}_{s \in \{ 1, \dots, n-1 \}} \\
							& \cup \{ g_{2,s} \}_{s \in \{ 1, \dots, n-1 \}} \\
							& \cup \{ l_x \overline{XX}, l_z \overline{ZZ} \} \rangle.
\end{aligned}
	\label{eq:initial-scheme}
\end{equation}
In the following we will show by induction that, as long as no success occurred, the current stabilizer $S^{(j)} = \langle G^{(j)} \rangle$ at step $j$ is generated by
\begin{equation}
\begin{aligned}
	G^{(j)} = 		& \{ g_{1,s} \}_{s \in \{ j+1, \dots, n-1 \} } \\
					& \cup \{ g_{2,s} \}_{s \in \{ j+1, \dots, n-1 \} } \\
					& \cup \mathbb{M}_j \cup \{ l_x \overline{XX}^{(j)}, l_z \overline{ZZ}^{(j)} \}.
\end{aligned}
	\label{eq:current-scheme}
\end{equation}
We identify the following sets of the current code stabilizers for each code:
\begin{equation}
	G_1^{(j)} = \{ g_{1,s} \}_{s \in \{ j+1, \dots, n-1 \} }
\end{equation}
and
\begin{equation}
	G_2^{(j)} = \{ g_{2,s} \}_{s \in \{ j+1, \dots, n-1 \} }.
\end{equation}
The base case of Eq.~\eqref{eq:current-scheme} is already proven in Eq.~\eqref{eq:initial-scheme}. For the induction step, we assume that Eq.~\eqref{eq:current-scheme} holds for $j-1$. A partial transversal BM at step $j$ measures the single-qubit observables
\begin{equation}
	b_j \otimes I^{\otimes n}
\end{equation}
and
\begin{equation}
	I^{\otimes n} \otimes b_j.
\end{equation}
Using Lem.~\ref{lem:symmetry} we choose $\overline{XX}^{(j-1)}$ to be transversal at all times. Then, if the logical operator $\overline{XX}^{(j-1)}$ anticommutes with either $b_j \otimes I^{\otimes n}$ or $I^{\otimes n} \otimes b_j$ it anticommutes with both. In this case, we multiply the logical operator with the observables:
\begin{equation}
	\overline{XX}^{(j)} = b_j \otimes b_j \times \overline{XX}^{(j-1)},
\end{equation}
so that the new logical operator $\overline{XX}^{(j)}$ remains transversal and commutes with $b_j \otimes I^{\otimes n}$ and $I^{\otimes n} \otimes b_j$. We proceed analogously with $\overline{ZZ}^{(j-1)}$. This transformation of the logical operators does not change the current stabilizer group
\begin{equation}
	S^{(j-1)} = \langle \tilde{G}^{(j-1)} \rangle,
\end{equation}
where
\begin{equation}
\begin{aligned}
	\tilde{G}^{(j-1)} =	& \{ g_{1,s} \}_{s \in \{ j, \dots, n-1 \} } \\
						& \cup \{ g_{2,s} \}_{s \in  \{ j, \dots, n-1 \} } \\
						& \cup \mathbb{M}_{j-1} \cup \{ l_x \overline{XX}^{(j)}, l_z \overline{ZZ}^{(j)} \}.
\end{aligned}
\end{equation}
Recall that per assumptions of our schemes all elements of the measured observables $\mathbb{M}_j$ commute with each other. From conditions~1 and~2, it follows that $g_{1,j}$ is the only element of $\tilde{G}^{(j-1)}$ that anticommutes with $b_j \otimes I^{\otimes n}$. Thus, $b_j \otimes I^{\otimes n}$ replaces $g_{1,j}$ in $G^{(j)}$. Using the analogous argument for the second observable, we reach the final state defined in Eq.~\eqref{eq:current-scheme}. The set $\mathbb{M}_j$ contains all partial BM results up to the $j$-th step:
\begin{equation}
\begin{aligned}
	\mathbb{M}_j = \{ 	& r_{1,1} b_1 \otimes I^{\otimes n}, r_{2,1} I^{\otimes n} \otimes b_1, \dots, \\
						& r_{1,j} b_j \otimes I^{\otimes n}, r_{2,j} I^{\otimes n} \otimes b_j \},
\end{aligned}	
\end{equation}
where $r_{i,k}$ is the $k$th measurement result on code $i$.

Having established the current state at every step before the first success, we now proceed to discuss the transversal operators on the $j$-th qubit pair. Any measurement that reveals logical information commutes with the current stabilizer as shown in Lem.~\ref{lem:commuting-observables}. From condition~1, we deduce that all observables measured from partial BMs $b_j \otimes I^{\otimes n}$ and $I^{\otimes n} \otimes b_j$ anticommute with the current stabilizer. Consequently, without a successful transversal BM, we never measure a logical operator.

The three transversal operators are $b_j \otimes b_j$ together with $\tilde{b}_j \otimes \tilde{b}_j$ for $\tilde{b}_j \in \{X_j, Y_j, Z_j\} \setminus \{b_j\}$. Thus, the set of these three operators is always $\{ X_j \otimes X_j, Y_j \otimes Y_j, Z_j \otimes Z_j \}$. For the next argument, we use the ordering of the sequence $\mathbb{C}$ to refer to the elements of the sets $G_j^{(j)}$, i.e. $g_{1,j} = c_{j} \otimes I^{\otimes n}$ and $g_{2,j} = I^{\otimes n} \otimes c_{j}$. Let us first examine the case $j<n$. Condition~1 states that $b_j \otimes b_j$ always anticommutes with $g_{1,j}$ and $g_{2,j}$ and therefore does never commute with the current stabilizer $S^{(j)}$. It follows from condition~3 that $\tilde{b}_j \otimes \tilde{b}_j \in\{ X_j, Y_j, Z_j \} \setminus \{ b_j \}$ either anticommutes with elements of the current stabilizer $c_k \otimes I^{\otimes n} \in \{G_1^{(j)}\}$ and $I^{\otimes n} \otimes c_k\in \{G_2^{(j)}\}$ or completes a transversal logical operator. The first case follows directly from Eqs.~\eqref{eq:con-3-a} and~\eqref{eq:current-scheme}. Eq.~\eqref{eq:con-3-a} shows that $b_j$ anticommutes with $c_k$ for $k \geq j$, and Eq.~\eqref{eq:current-scheme} confirms that this $c_k$ belongs to the current stabilizer in step $j$. The second case becomes clear upon translating Eq.~\eqref{eq:con-3-b} into the two-code picture:
\begin{equation}
\begin{aligned}
	& \mu \in \langle b_1, \dots, b_{j-1} \rangle \wedge \mu \tilde{b}_j \in \left[ \overline{X} \right] \cup \left[ \overline{Y} \right] \cup \left[ \overline{Z} \right] \\
	& \Rightarrow \mu \otimes \mu \in \langle \mathbb{M}_j \rangle \\
	& \qquad \wedge  \mu \otimes \mu \times \tilde{b}_j \otimes \tilde{b}_j \in \left[ \overline{X} \overline{X} \right] \cup \left[ \overline{Y} \overline{Y} \right] \cup \left[ \overline{Z} \overline{Z} \right].
\end{aligned}
\end{equation}
For the case $j=n$ we infer from Eq.~\eqref{eq:current-scheme} that the current stabilizer generator after the $(n-1)$-th transversal BM is given by: $G^{(n-1)} = \mathbb{M}_{n-1} \cup \{ l_x \overline{XX}^{(n-1)}, l_z \overline{ZZ}^{(n-1)} \}$. Furthermore, recall that any operator commuting with the current stabilizer generators can be decomposed in terms of the same generators. Therefore, if $(n-1)$ qubit pairs where measured with partial outcomes, a transversal operator on the $n$-th qubit pair which commutes with the current stabilizer completes a logical measurement in $[\overline{XX}] \cup [\overline{YY}] \cup  [\overline{ZZ}]$. We recall that the outcomes of the logical operators $\left[ \overline{X} \overline{X} \right]$, $\left[ \overline{Y} \overline{Y} \right]$ and $\left[ \overline{Z} \overline{Z} \right]$ are uniformly distributed. In conclusion, for all $j$, each of the three transversal operators either anticommutes with the current stabilizer or completes a logical operator, and in both cases their outcomes are equally likely. Thus, by Lem.~\ref{lem:successful-bell-measurment}, the success probability of all transversal BMs up to the first success is $P_B$.

In the second part of the proof, we demonstrate that once a transversal BM is successful, the logical information can be obtained with probability one. Let us assume that a success occurred at index $s$. We will now discuss how we can measure $\overline{X}_s \otimes \overline{X}_s$ and $\overline{Z}_s \otimes \overline{Z}_s$ with probability one.

We refer back to the decompositions defined in Eqs.~\eqref{eq:logical-decomp-x} and~\eqref{eq:logical-decomp-z}:
\begin{equation}
	\overline{X}_s = \bigotimes_{t=1}^{n} u_{s,t}, \quad \text{where } u_{s,t} \ \in \{ I, X, Y, Z \},
\end{equation}
\begin{equation}
	\overline{Z}_s = \bigotimes_{t=1}^{n} v_{s,t}, \quad \text{where } v_{s,t} \in \{ I, X, Y, Z \}.
\end{equation}
Therefore, the two-code decompositions are given by:
\begin{equation}
\begin{aligned}
	\overline{X}_s \otimes \overline{X}_s = \bigotimes_{t=1}^{n} \left( u_{s,t} \otimes u_{s,t}, \right), \\
	\text{where } u_{s,t} \in \{ I, X, Y, Z \},
\end{aligned}
\end{equation}
\begin{equation}
\begin{aligned}
	\overline{Z}_s \otimes \overline{Z}_s = \bigotimes_{t=1}^{n} \left( v_{s,t} \otimes v_{s,t}, \right), \\
	\text{where } v_{s,t} \in \{ I, X, Y, Z \}.
\end{aligned}
\end{equation}
We separate the decompositions into three terms as follows:
\begin{equation}
\begin{aligned}
	\overline{X}_s \otimes \overline{X}_s = & \bigotimes_{t=1}^{s-1} \left( u_{s,t} \otimes u_{s,t} \right) \\
											& \otimes \left( u_{s,s} \otimes u_{s,s} \right) \\
											& \bigotimes_{t=s+1}^{n} \left( u_{s,t} \otimes u_{s,t} \right),
\end{aligned}
\end{equation}
\begin{equation}
\begin{aligned}
	\overline{Z}_s \otimes \overline{Z}_s = & \bigotimes_{t=1}^{s-1} \left( v_{s,t} \otimes v_{s,t} \right) \\
											& \otimes \left( v_{s,s} \otimes v_{s,s} \right) \\
											& \bigotimes_{t=s+1}^{n} \left( v_{s,t} \otimes v_{s,t} \right).
\end{aligned}
\end{equation}
For each of these decompositions, the first term has support exclusively on qubits that have already been measured, the second term has support on the qubit pair where a successful transversal BM occurred, and the third term has support exclusively on qubits that have not been measured yet. Condition~4 ensures that the logical operators do not conflict with previous measurements. Condition~5 guarantees that the $s$-th qubit pair is the sole pair requiring double information, and that the logical operators do not conflict with any unmeasured qubits. We will now verify this in a technical proof. We discuss each term individually.

It follows from condition~4 that the single code logical operator $\overline{X}_s$ commutes with all measured operators:
\begin{equation}
	\forall k<s : \quad \comm{\overline{X}_s}{b_k} = 0.
\end{equation}
Since $b_k$ are single-qubit operators i.e. they have support only on $k$, we deduce that the measurements commute with the decomposition in $k$:
\begin{equation}
	\forall k<s : \quad \comm{u_{s,k}}{b_k} = 0.
\end{equation}
We recall from Eq.~\eqref{eq:B-def}, that per definition $b_j$ cannot be an identity, since $\forall j : b_j \in \{ X_j, Y_j, Z_j \}$. Therefore, for every $k<s$, $u_{s,k}$ is either the identity $u_{s,k} = I$ or identical to $u_{s,k} = b_k$. Thus, the first term of $\overline{X}_s \otimes \overline{X}_s$ can be decomposed from prior measurements:
\begin{equation}
	\bigotimes_{t=1}^{s-1} \left( u_{s,t} \otimes u_{s,t} \right) \in \langle \mathbb{M}_{s-1} \rangle.
\end{equation}
The discussion for the first term $\bigotimes_{t=1}^{s-1} \left( v_{s,t} \otimes v_{s,t} \right)$ of the logical operator $\overline{Z}_s \otimes \overline{Z}_s$ proceeds analogously and leads to the same conclusion. The argument for the second terms is straightforward. Recall that two operators are said to conflict on a qubit if their decompositions into single-qubit Pauli operators require different Pauli information in that qubit. The decompositions of the two logical operators conflict in the $s$-th qubit pair, which is stated in Eq.~\eqref{eq:double-info}. However, since a successful transversal Bell measured occurred at index $s$, the operators $u_{s,t} \otimes u_{s,t}$ and $v_{s,s} \otimes v_{s,s}$ were both measured. Finally, it follwos from condition~5 that the logical operators $\overline{X}_s \otimes \overline{X}_s$ and $\overline{Z}_s \otimes \overline{Z}_s$ do not conflict on any unmeasured qubit. Since the logical operators $\overline{X}_s \otimes \overline{X}_s$ and $\overline{Z}_s \otimes \overline{Z}_s$ are transversal by definition, the unmeasured portions of these operators can be obtained either through transversal BMs, which always yield the partial logical information, or through single-qubit Pauli measurements.

In conclusion, we have shown that each transversal BM up to the first success has a success probability of $\mathbb{P}_B$, and that if any of them succeeds, the logical BM can be completed with probability one. For two identical $n$-qubit codes, we achieve the upper-bound success probability given in Thm.~\ref{thm:bound}: $1-(1-\mathbb{P}_B)^{n}$.
\end{proof}

\section{Rectangular rotated planar surface code}
\label{app:rectangular-rotated-planar-surface-code}
Our scheme for the rotated planar surface code extends naturally to the rectangular code with $r \neq m$. A small adaptation for the additional diagonals is sufficient to achieve an optimal logical Bell measurement. Without loss of generality we assume $r<m$. For the rectangular code, two cases arise depending on the parity of $r+m$. If $r+m$ is even, the number of diagonals is odd, and a middle diagonal exists, just as in the quadratic case. If $r+m$ is odd, the number of diagonals is even, so no single middle diagonal exists, and the bottom-right plaquette is a $Z$-plaquette instead of an $X$-plaquette. Our scheme generalizes seamlessly to both cases.

In the first part of the scheme, the top-left triangle, i.e., the diagonals up to the one with index sum $r$, and its bottom-right counterpart are measured, as illustrated in Figs.~\ref{fig:rotated-surface-rect-solution-A} and~\ref{fig:rotated-surface-rect-solution-B}. This part of the scheme is identical to the quadratic case.

\begin{figure*}
		\begin{subfigure}[c]{0.45\textwidth}
			\def\svgwidth{\textwidth}
			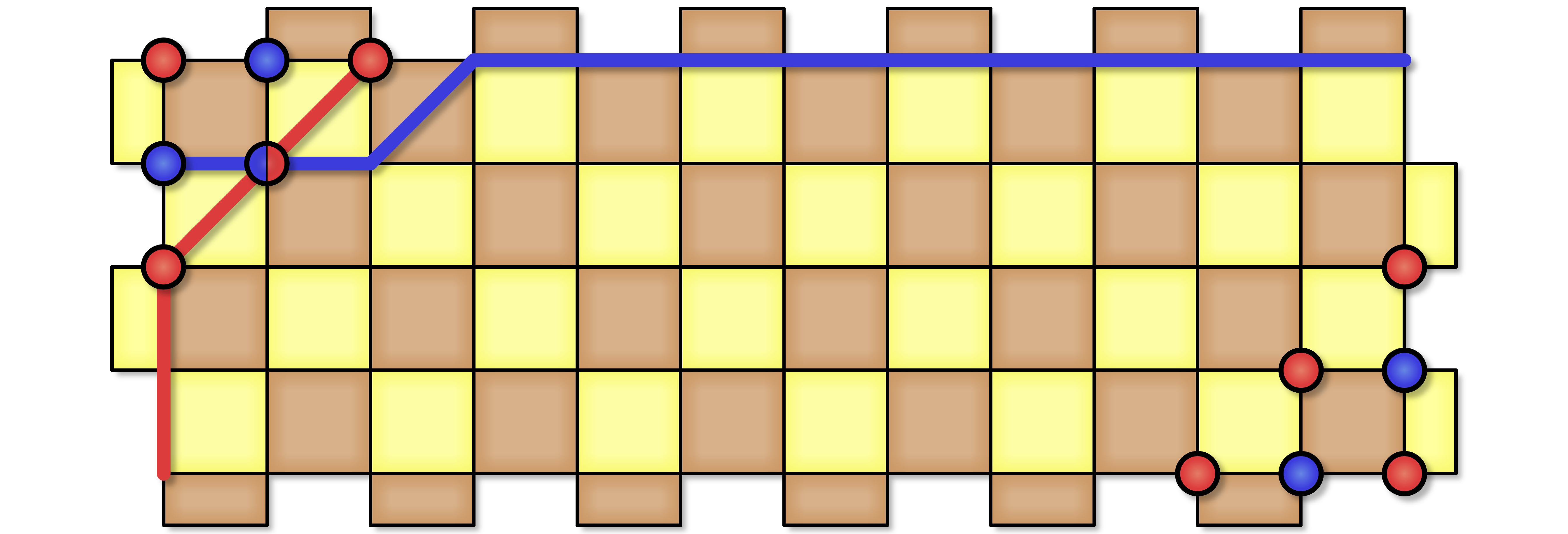
			\caption{\label{fig:rotated-surface-rect-solution-A}}
		\end{subfigure}									
		\begin{subfigure}[c]{0.45\textwidth}
			\def\svgwidth{\textwidth}
			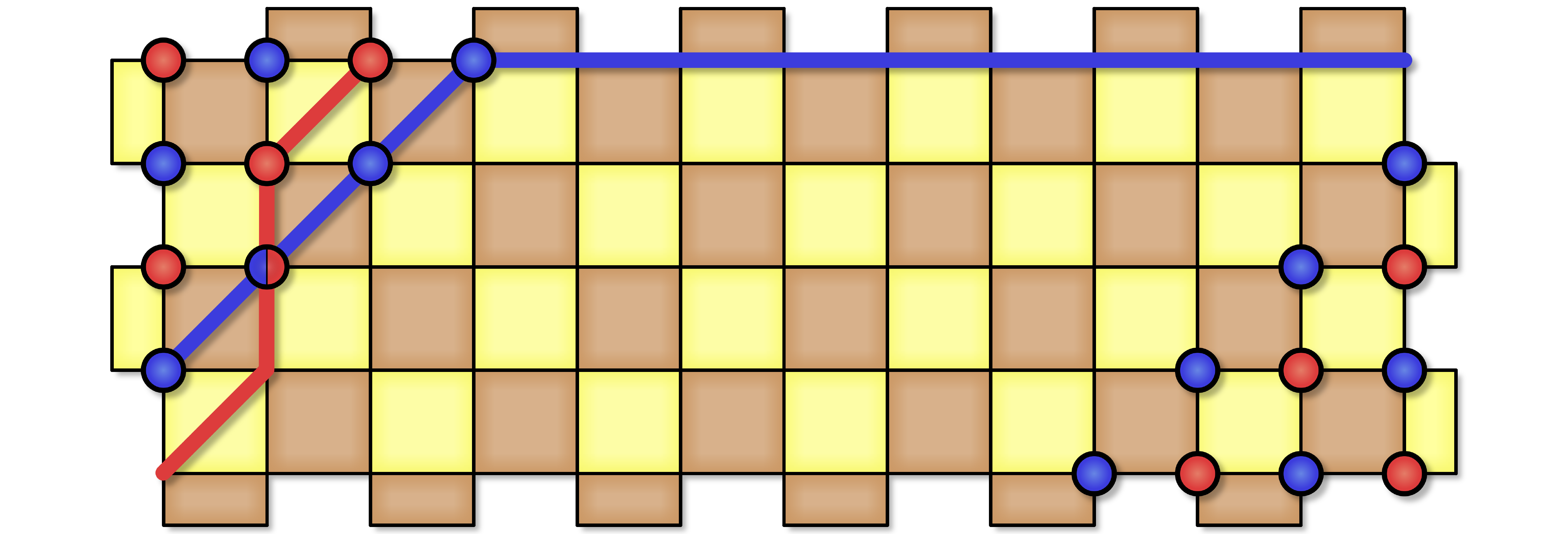
			\caption{\label{fig:rotated-surface-rect-solution-B}}
		\end{subfigure}
		\begin{subfigure}[c]{0.45\textwidth}
			\def\svgwidth{\textwidth}
			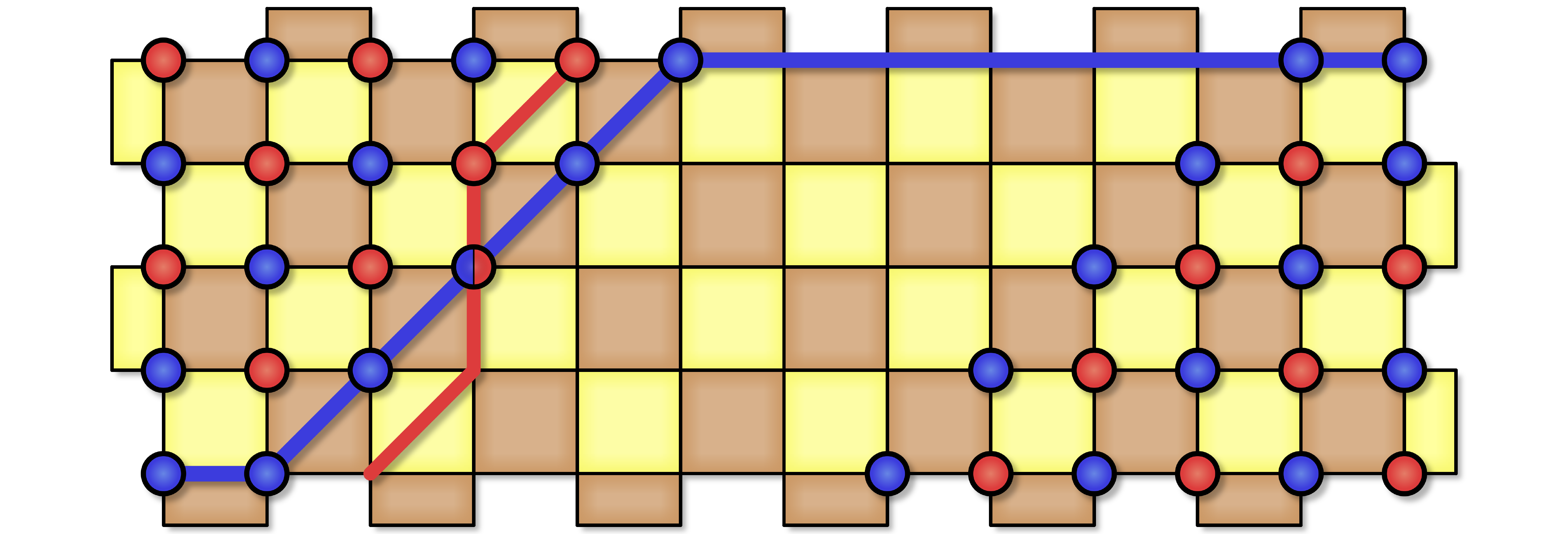
			\caption{\label{fig:rotated-surface-rect-solution-C}}
		\end{subfigure}									
		\begin{subfigure}[c]{0.45\textwidth}
			\def\svgwidth{\textwidth}
			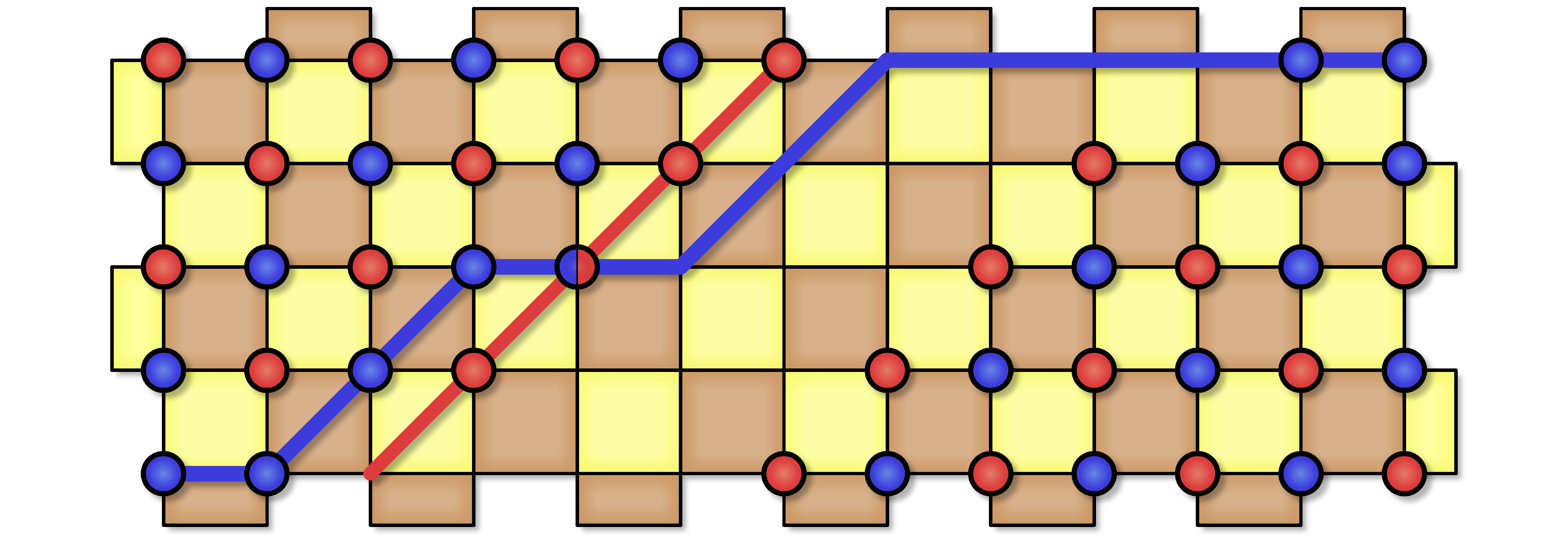
			\caption{\label{fig:rotated-surface-rect-solution-D}}
		\end{subfigure}		
		\begin{subfigure}[c]{0.45\textwidth}
			\def\svgwidth{\textwidth}
			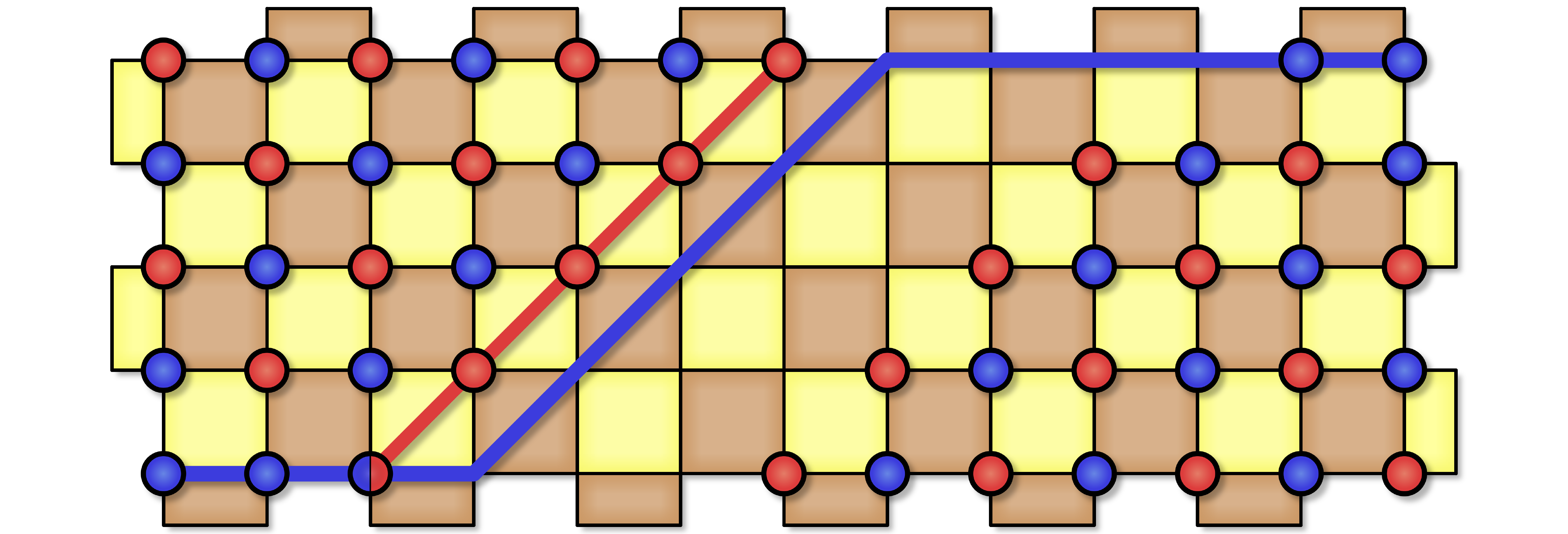
			\caption{\label{fig:rotated-surface-rect-solution-E}}
		\end{subfigure}									
		\begin{subfigure}[c]{0.45\textwidth}
			\def\svgwidth{\textwidth}
			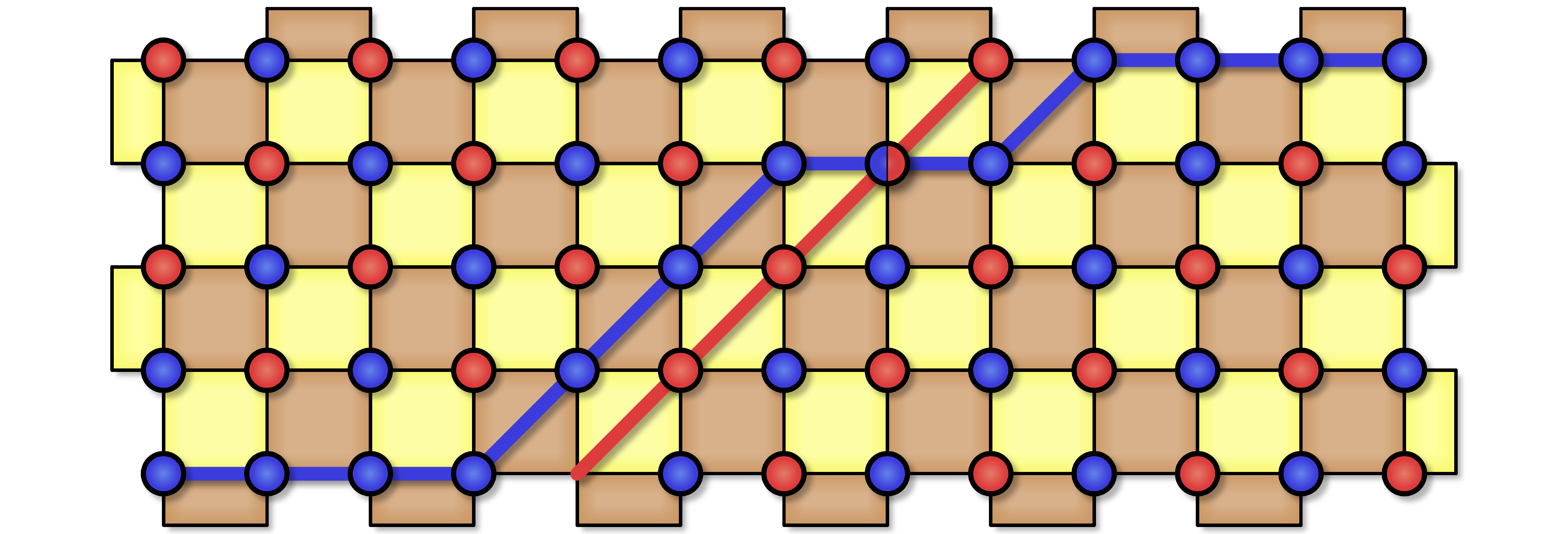
			\caption{\label{fig:rotated-surface-rect-solution-F}}
		\end{subfigure}
		\begin{subfigure}[c]{0.45\textwidth}
			\def\svgwidth{\textwidth}
			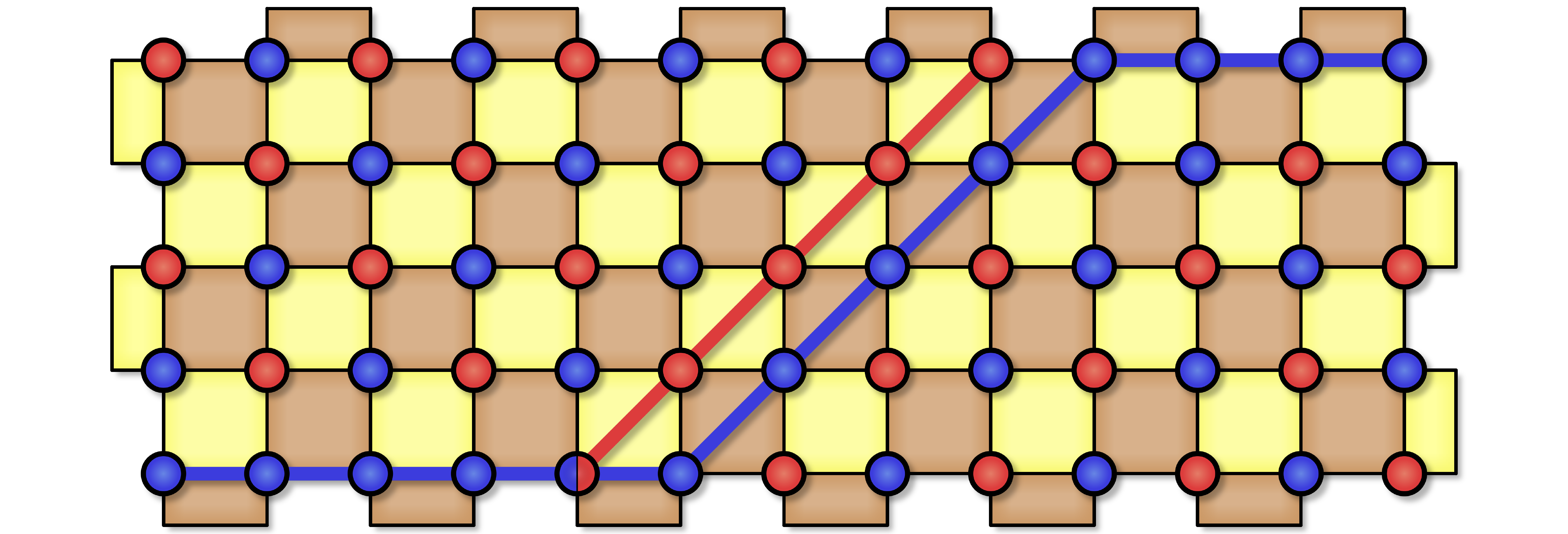
			\caption{\label{fig:rotated-surface-rect-solution-G}}
		\end{subfigure}									
	\caption{\label{fig:rotated-surface-rect-solution}Examples illustrating how the scheme measures the logical operators in the event of a successful BM for the rectangular rotated planar surface code. Red and blue qubits indicate $X$-BMs and $Z$-BMs, respectively. Qubits filled with both red and blue indicate a successful physical BM. Red and blue strings indicate the measured $\overline{X}$ and $\overline{Z}$ string, respectively. The following cases for the success vertex in the first part of the scheme are shown: (a) $X$-diagonal; (b) $Z$-diagonal. The following cases for the success vertex in the second part of the scheme are shown: (c) $Z$-diagonal; (d) $X$-diagonal, any qubit but the last; (e) $X$-diagonal, last qubit of the diagonal; (f) middle $X$-diagonal, any qubit but the last; (g) middle diagonal, last qubit of the diagonal. The solutions for the right side mirror those of the left side.}
\end{figure*}

The second part of the scheme addresses the remaining diagonals. As before, the scheme iterates over the diagonals in increasing order of index sum, i.e., from left to right, while a mirrored process runs from right to left. The order in which the vertices are addressed within each diagonal is identical to the quadratic case, as illustrated in Fig.~\ref{fig:rotated-surface-rect-scheme}.

\begin{figure}
	\def\svgwidth{0.45\textwidth}
	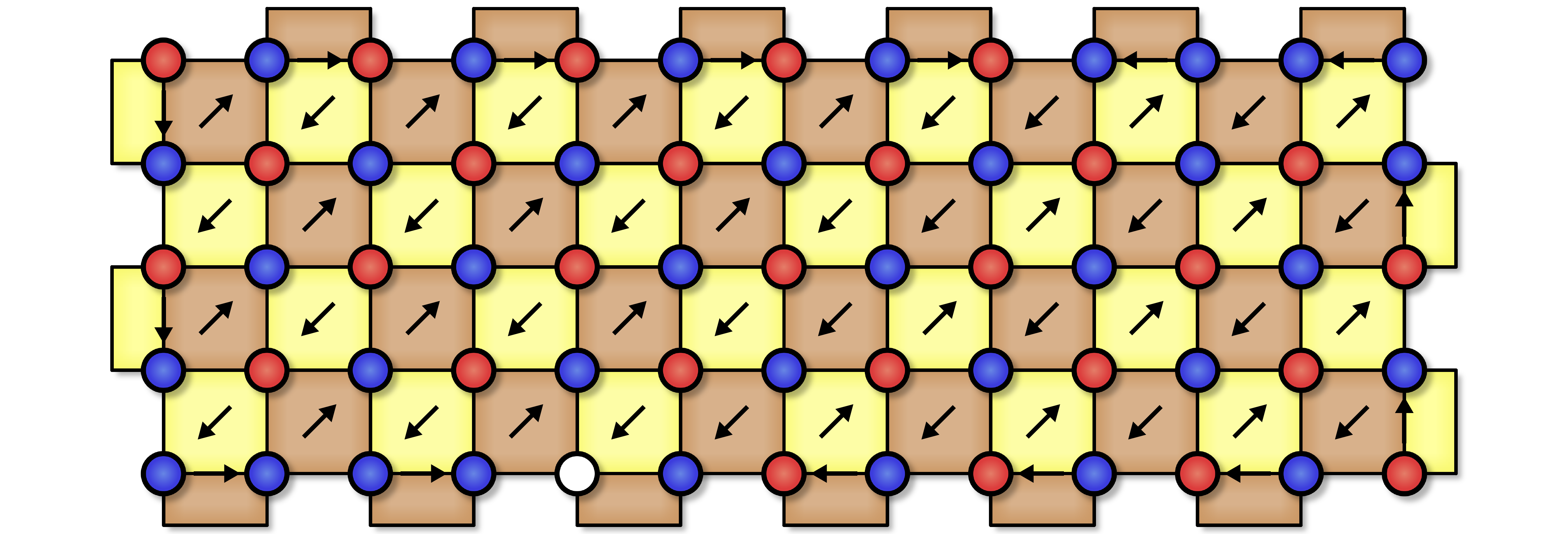
	\caption{\label{fig:rotated-surface-rect-scheme}Measurement scheme for the rectangular rotated planar surface code. The scheme starts simultaneously at the top-left and the bottom-right vertex. The qubits at the vertices are measured following the black arrows. Red and blue vertices are measured with $X$- and $Z$-BMs, respectively. The measurement type on the last qubit is inconsequential and thus remains uncolored.}
\end{figure}

The adaptation of the scheme is a small modification in the way $X$-diagonals are measured, which is illustrated in Figs.~\ref{fig:rotated-surface-rect-solution} and~\ref{fig:rotated-surface-rect-scheme}. In the following, we discuss the part of the scheme starting from the left; the mirrored part starting from the right is analogous. Instead of measuring the entire $X$-diagonal with $X$-BMs, we leave the vertex at the bottom boundary unmeasured. Only if no successful BM occurs in the rest of the diagonal will this last vertex be measured with a $Z$-BM. Thus, as long as no success occurs, the vertices from the bottom-left corner along the bottom boundary to the current diagonal are never measured with an $X$-BM. This ensures that a $Z$-diagonal can always be connected to the left boundary via this path.

Again, the strings used to complete the logical BM may decompose into several parts, some of which can vanish for qubits near the boundaries. However, the strings remain valid in these cases.

If one of the $X$-BMs on the $X$-diagonal is successful, the logical operators can be measured in a manner similar to the quadratic case, as illustrated in Figs.~\ref{fig:rotated-surface-rect-solution-D} and~\ref{fig:rotated-surface-rect-solution-F}. In this case, $\overline{X}$ is completed by performing an $X$-BM on the last vertex of the diagonal. This provides $X$ information for all qubits along the diagonal, which connects the top and bottom boundaries, thus forming an $\overline{X}$ operator. Let the index sum of the success vertex be $k$. The $\overline{Z}$ string starts at the bottom-left corner and follows the bottom boundary until it reaches the diagonal with index sum $k-1$. From there, it moves diagonally upwards to the row of the success vertex, then two steps horizontally to the right, crossing the $\overline{X}$ string at the success vertex. The $\overline{Z}$ string continues along the diagonal with index sum $k+1$ to the top boundary, then extends rightward along the top boundary to the top-right corner.

If the $Z$-BM on the last vertex touching the bottom boundary succeeds, the strings are essentially the same, as illustrated in Figs.~\ref{fig:rotated-surface-rect-solution-E} and~\ref{fig:rotated-surface-rect-solution-G}. The $\overline{X}$ operator is already measured, as $X$ information has been obtained on every vertex in the current diagonal. The $\overline{Z}$ string starts at the bottom-left corner, traverses the bottom boundary passing through the success vertex, and then follows the next diagonal with index sum $k+1$ upwards to the top boundary. From there, it extends rightward along the top boundary to the top-right corner.

Measuring the $Z$-diagonals is straightforward. The entire diagonal is measured with $Z$-BMs. The logical operators are measured similarly to the previous cases, as illustrated in Fig.~\ref{fig:rotated-surface-rect-solution-C}. $\overline{X}$ is completed by connecting the two adjacent diagonals, $k-1$, which touches the top boundary, and $k+1$, which touches the bottom boundary, vertically through the success vertex. The $\overline{Z}$ string starts at the bottom-left corner, traverses the bottom boundary to the current diagonal $k$, follows this upwards to the top boundary, and then extends rightward along the top boundary to the top-right corner.

The same argument for the $X$-diagonals and $Z$-diagonals also applies to a potential middle diagonal, and it does not matter whether we assign the middle diagonal to the part of the scheme coming from the right or from the left.

We now turn to the transformation of stabilizer generators through the measurement scheme, as illustrated in Fig.~\ref{fig:rotated-planar-surface-59-generators}. The argument is very similar to the quadratic case, but there is one difference. For such $X$-diagonals, where the last measurement is now a $Z$-BM, this final vertex now touches one $X$-plaquette from the prior diagonal and an $X$-boundary plaquette to the right. The stabilizer generator associated with the $X$-plaquette from the prior diagonal has already been replaced by a $Z$-BM of the prior diagonal. The only exception is the diagonal with index sum $r+1$, which touches only the $X$-boundary plaquette to the right. Therefore, the transformation of the stabilizer generators is straightforward to track, as each measurement successively replaces the next plaquette along the measured path. Recall from Sec.~\ref{sec:sufficient-conditions-for-an-optimal-logical-Bell-measurement-with-feedforward-based-linear-optics} that this argument does not need to apply to the very last qubit of the code.

For the quadratic case, we argued that an entire diagonal, along with its mirrored counterpart, can be measured simultaneously. This argument still holds, with the one exception that the $Z$-BM on the last vertex of an $X$-diagonal can only be performed if no successful BM has occurred along the rest of the diagonal, as an $X$-BM would otherwise be required on this qubit. However, since this qubit will always be measured with a $Z$-BM if no success has occurred, we can simply perform its measurement in the step for the next $Z$-diagonal. That means, an entire $Z$-diagonal together with the final BM in the previous $X$-diagonal can be measured simultaneously.

\begin{figure}
	\def\svgwidth{0.45\textwidth}
	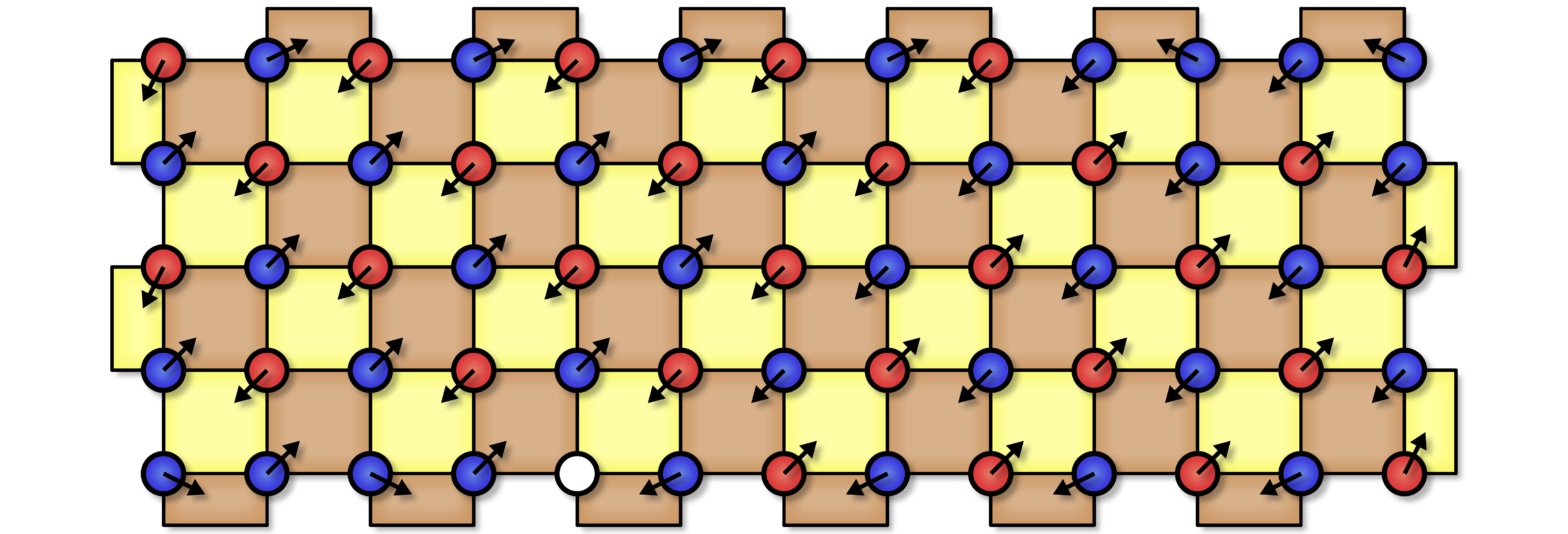
	\caption{\label{fig:rotated-planar-surface-59-generators}Schematic representation of the transformation of stabilizer generators for the rectangular rotated planar surface code. Red and blue qubits indicate $X$-BMs and $Z$-BMs, respectively. For each vertex, the black arrow points to the plaquette whose stabilizer generator is replaced by the BM on that vertex. Note that each vertex touches only one plaquette of the opposite type that has not been replaced by a previous measurement.}
\end{figure}

The argument to demonstrate the success probability for every physical BM is essentially identical to the rectangular case. In the first part of the scheme, which addresses the top-left triangle and its mirrored counterpart, the argument is identical to the quadratic case. For the second part of the scheme, let us, for now, exclude the middle diagonal and analyze an arbitrary diagonal. For each measurement, the $Y$ operator at any vertex on the diagonal anticommutes with the associated plaquette that also anticommutes with the type of the BM. For $Z$-diagonals, any single-qubit $X$ operator anticommutes with the adjacent $Z$-plaquette of the neighboring diagonal towards the middle of the lattice, which is unmeasured at this point. For $X$-diagonals, any single-qubit $Z$ operator anticommutes with the adjacent $X$-plaquette of the neighboring diagonal towards the middle of the lattice, which is unmeasured at this point. Finally, the single-qubit $X$ operator on the last vertex of an $X$-diagonal, which is measured with a $Z$-BM, completes an $\overline{X}$ string by obtaining $X$ information from this final vertex in the diagonal. If the sum $r+m$ is odd, so that no single middle diagonal exists, we select one of the two diagonals closest to the center to serve as the middle diagonal, which is measured last. The arguments for non-middle diagonals then apply to the diagonal not chosen, while the upcoming arguments for the middle diagonal apply to the selected one.

Again, for the middle diagonal, every argument from the quadratic case applies, with the caveat that if the middle diagonal is an $X$-diagonal, the two adjacent $Z$-diagonals do not touch the left and right boundaries. However, this issue can be easily resolved, as these diagonals can be connected to the left and right boundaries by traversing along the top and bottom boundaries, respectively, similar to the previous discussion on the logical operators.

With these observations and by applying Lem.~\ref{lem:successful-bell-measurment}, we conclude that the success probability for each BM is given by $\mathbb{P}_B$.

This concludes the generalization of our scheme to the rectangular code.

\section{Proofs: Optimal logical Bell measurements}
\label{app:proofs-optimal-schemes}
In this appendix, we provide the algebraic proofs of the optimality of our schemes based on Thm.~\ref{thm:sufficient}.

\subsection{Proof: Optimal logical Bell measurement for the quantum parity code}
\label{app:proof-qpc}
Recall from Sec.~\ref{sec:quantum-parity-code} that each qubit is indexed by a pair $(i,j)$, with $i \in \{1, \dots, r\}$ denoting the row and $j \in \{1, \dots, m\}$ enumerating the qubits within each row. To formally apply the conditions of Thm.~\ref{thm:sufficient}, we impose a lexicographic order on these index pairs. Furthermore, for convenience, we define $I$ as the set of all index pairs including the last qubit,
\begin{equation}
	\begin{aligned}
	I 	& = \{ (1,1), \dots, (r,m) \} \\
		& = \{ (i,j) \mid 1 \leq i \leq r, 1 \leq j \leq m \},
	\end{aligned}
\end{equation}
and $I'$ as the set of all index pairs excluding the last qubit,
\begin{equation}
	\begin{aligned}
	I' 	& = \{ (1,1), \dots, (r,m) \} \setminus \{ \left( r,m \right) \} \\
		& = \{ (i,j) \mid 1 \leq i \leq r, 1 \leq j \leq m, (i,j) \neq (r,m) \}.
	\end{aligned}
\end{equation}

First, we define the sequences, as discussed in Sec.~\ref{sec:quantum-parity-code}, of the code stabilizer generators
\begin{equation}
	\mathbb{C} = \left( c_{i,j} \right)_{(i,j) \in I' },
\end{equation}
where
\begin{equation}
	c_{i,j} =
    \begin{dcases}
      Z_{i,j} Z_{i,j+1} & \quad \text{if } j<m\\
      \prod_{t=1}^{m} X_{i,t} X_{i+1,t} & \quad \text{if } j=m \wedge i<r, \\
    \end{dcases}
    \label{eq:qpc-sequence-c}
\end{equation}
the measurement sequence
\begin{equation}
	\mathbb{B} = \left( b_{i,j} \right)_{(i,j) \in I' },
\end{equation}
where
\begin{equation}
	b_{i,j} =
	\begin{dcases}
		X_{i,j} & \quad \text{if } j<m \\
		Z_{i,j} & \quad \text{if } j=m, \\
	\end{dcases}
	\label{eq:qpc-sequence-b}
\end{equation}
and the sequence of pairs of logical operators
\begin{equation}
	\mathbb{L} = \left( \left( \overline{X}_{i,j} , \overline{Z}_{i,j} \right) \right)_{(i,j) \in I },
\end{equation}
where
\begin{equation}
	\overline{X}_{i,j} = \prod_{t=1}^{m} X_{i,t},
	\label{eq:qpc-sequence-x-logical}
\end{equation}
\begin{equation}
	\overline{Z}_{i,j} = Z_{i,j} Z_{i,m} \prod_{t=1}^{r} Z_{t,m}.
	\label{eq:qpc-sequence-z-logical}
\end{equation}
The $\overline{Z}$ operators in Eq.~\eqref{eq:qpc-sequence-z-logical} are chosen as in Sec.~\ref{sec:quantum-parity-code}, where the last qubit of each remaining row is used to complete the logical operator. Note that the factor $Z_{i,m}$ appears twice, once before the product and once within it, and thus cancels out. We now verify the conditions of Thm.~\ref{thm:sufficient} individually. Note that the conditions are stated as in Thm.~\ref{thm:sufficient}, using a single index. In the proof of each condition, the single index is replaced with the index pair.

\textit{Condition~1}: Each operator $b_j$ anticommutes with $c_j$:
\begin{equation}
	\forall j \in \{ 1 , \dots, n-1 \} : \quad \acomm{b_j}{c_j} = 0.
	\tag{\ref{eq:con-1}}
\end{equation}

\begin{proof}
We verify the condition	$\forall (i,j) \in I'$:
\begin{equation}
\begin{aligned}
	& \acomm{b_{i,j}}{c_{i,j}} = \\
	&	
	\begin{dcases}
		\acomm{X_{i,j}}{Z_{i,j}Z_{i,j+1}} & \quad \text{if } j<m\\
		\acomm{Z_{i,j}}{ \prod_{t=1}^{m} X_{i,t} X_{i+1,t}} & \quad \text{if } j=m \wedge i<r\\
	\end{dcases}
	\\
	& = 0. \\
\end{aligned}
\end{equation}
\end{proof}

\vspace{7\baselineskip}

\textit{Condition~2}: Each operator $b_j$ commutes with every later stabilizer generator:
\begin{equation}
	\forall k>j: \quad \comm{b_j}{c_k} = 0.
	\tag{\ref{eq:con-2}}
\end{equation}

\begin{proof}
In the equation below, we include relations between indices in brackets when they follow directly from the preceding index relation, as indicated by the symbol ``$\rightarrow$''. We verify the condition $\forall \left( k , l \right) > \left( i,j \right)$:
\begin{widetext}
\begin{equation}
\begin{aligned}
	\comm{b_{i,j}}{c_{k,l}} & =
	\begin{dcases}
		\comm{X_{i,j}}{Z_{k,l}Z_{k,l+1}} & \quad \text{if } j<m \wedge l<m \wedge k = i (\rightarrow l>j)\\
		\comm{X_{i,j}}{Z_{k,l}Z_{k,l+1}} & \quad \text{if } j<m \wedge l<m \wedge k > i \\
		\\
		\comm{X_{i,j}}{\prod_{t=1}^{m} X_{k,t} X_{k+1,t}} & \quad \text{if } j<m \wedge l=m (\rightarrow k<r)\\
		\\
		\comm{Z_{i,j}}{Z_{k,l}Z_{k,l+1}} & \quad \text{if } j=m (\rightarrow i<r) \wedge l<m\\
		\\
		\comm{Z_{i,j}}{\prod_{t=1}^{m} X_{k,t} X_{k+1,t}} & \quad \text{if } j=m \wedge l=m (\rightarrow i,k<r) (\rightarrow k>i)\\
	\end{dcases}
	\\
	& = 0.
\end{aligned}
\end{equation}
\end{widetext}
Both sequences $\mathbb{C}$ and $\mathbb{B}$, define two cases in their respective Eq.~\eqref{eq:qpc-sequence-c} and~\eqref{eq:qpc-sequence-b}, depending on whether the second index is strictly less than $m$ or equal to it, i.e, whether the index pair corresponds to the last qubit of a row. This leads to an initial total of four cases for combinations of one operator from each sequence. For simplicity, we further split the case where both indices $j$ and $l$ are less than $m$ into two subcases: one where the row indices are equal, and one where they differ. We now examine these cases one by one.

In the first subcase of the first case, note that $k = i$ implies $l > j$, since the equation holds for $(k, l) > (i, j)$ in lexicographic order. Therefore, the operators commute, since their support lies on different rows, $j$ and $l$. In the second subcase, where $k > i$, the supports are again on different rows, so the operators commute as well. In the second and third cases the operators commute because they consist of the same single-qubit Pauli operators, $X$ and $Z$, respectively. In the fourth case, the sequences are not defined on the last qubit $(r, m)$, which implies $i, k < r$. Moreover, since $(k, l) > (i, j)$, it follows that $k > i$, so the operators have support on different rows and thus commute.

\end{proof}

\textit{Condition~3}: For all $j \in \{ 1, \dots, n-1 \}$ each operator in the set $\tilde{b}_j \in \{ X_j, Y_j, Z_j \} \setminus \{ b_j \}$ either anticommutes with at least one non-prior stabilizer generator,
\begin{equation}
	\exists k \geq j: \quad \acomm{\tilde{b}_j}{c_k} = 0,
	\tag{\ref{eq:con-3-a}}
\end{equation}
or completes a logical measurement,
\begin{equation}
	\exists \mu \in \langle b_1, \dots, b_{j-1} \rangle: \quad \mu \tilde{b}_j \in \left[ \overline{X} \right] \cup \left[ \overline{Y} \right] \cup \left[ \overline{Z} \right].
	\tag{\ref{eq:con-3-b}}
\end{equation}

\begin{proof}
For every $Y_{i,j}$, the proof that it anticommutes with at least one non-prior stabilizer generator proceeds analogously to the calculation of condition~2, except for the final qubit $Y_{r,m}$, which completes a logical operator:
\begin{equation}
	\mu Y_{r,m} = Y_{r,m} \prod_{t=1}^{m} X_{r,t} \prod_{t=1}^{r} Z_{t,m} \in [\overline{Y}].
\end{equation}

For $j<m$, we show that all operators $\tilde{b}_{i,j} \in \{ X_{i,j} , Y_{i,j} , Z_{i,j} \} \setminus \{ b_{i,j} \}$ anticommute with at least one non-prior stabilizer generator $c_{k,l}$. Since $b_{i,j}=X_{i,j}$ for all $j<m$, the remaining operator is $Z_{i,j}$. We then show that:
\begin{equation}
	\exists \left( k , l \right) \geq \left( i,j \right): \quad \acomm{Z_{i,j}}{c_{k,l}} = 0.
\end{equation}
Specifically, the operator $Z_{i,j}$ anticommutes with $c_{i,m}$:
\begin{equation}
	\acomm{Z_{i,j}}{c_{i,m}} = \acomm{Z_{i,j}}{\prod_{t=1}^{m} X_{i,t} X_{i+1,t}} = 0,
\end{equation}
where we note that $j < m$ implies $(i,m) > (i,j)$ in lexicographic order.

Since $b_{i,j} = Z_{i,j}$ for all $j=m$, we have $\{ X_{i,j}, Y_{i,j}, Z_{i,j} \} \setminus \{ b_{i,j} \} = \{ X_{i,j}, Y_{i,j} \},$ and thus $X_{i,j} = \tilde{b}_{i,j} \in \{ X_{i,j}, Y_{i,j} \}$. We now show that $X_{i,j} = \tilde{b}_{i,j}$ completes a logical operator:
\begin{equation}
	\mu X_{i,j} = X_{i,j} \prod_{t=1}^{m-1} b_{i,t} = \prod_{t=1}^m X_{i,t} \in \left[ \overline{X} \right],
\end{equation}
where $\mu = \prod_{t=1}^{m-1} b_{i,t} \in \langle b_1, \dots, b_{i,j-1} \rangle$, and $j-1>0$ holds since $j=m\geq2$.
\end{proof}

\textit{Condition~4}: The logical operators $\overline{X}_j$ and $\overline{Z}_j$ commute with every prior element of $\mathbb{B}$ for all $j \in \{ 1, \dots, n \}$:
\begin{equation}
	\forall k<j : \quad \comm{\overline{X}_j}{b_k} = 0,
	\tag{\ref{eq:con-4-a}}
\end{equation}
\begin{equation}
	\forall k<j : \quad \comm{\overline{Z}_j}{b_k} = 0.
	\tag{\ref{eq:con-4-b}}
\end{equation}

\begin{proof}
We verify the condition	$\forall \left( k , l \right) < \left( i,j \right)$:
\begin{equation}
\begin{aligned}
	\comm{\overline{X}_{i,j}}{b_{k,l}} & =
	\begin{dcases}
		\comm{\prod_{t=1}^{m} X_{i,t}}{X_{k,l}} & \quad \text{if } l<m \\
		\comm{\prod_{t=1}^{m} X_{i,t}}{Z_{k,l}} & \quad \text{if } l=m \\
	\end{dcases}
	\\
	& = 0,  \\
\end{aligned}
\end{equation}
\begin{equation}
\begin{aligned}
	\comm{\overline{Z}_{i,j}}{b_{k,l}} & =
	\begin{dcases}
		\comm{Z_{i,j} Z_{i,m} \prod_{t=1}^{r} Z_{t,m}}{X_{k,l}} & \quad \text{if } l<m \\
		\comm{Z_{i,j} Z_{i,m} \prod_{t=1}^{r} Z_{t,m}}{Z_{k,l}} & \quad \text{if } l=m \\
	\end{dcases}
	\\
	& = 0. \\
\end{aligned}
\end{equation}

\end{proof}

\textit{Condition~5}: We decompose the logical operators into single-qubit Pauli operators to formulate the last condition:
\begin{equation}
	\overline{X}_j = \bigotimes_{t=1}^{n} u_{j,t}, \quad \text{where } u_{j,t} \in \{ I, X, Y, Z \},
	\tag{\ref{eq:logical-decomp-x}}
\end{equation}
\begin{equation}
	\overline{Z}_j = \bigotimes_{t=1}^{n} v_{j,t}, \quad \text{where } v_{j,t} \in \{ I, X, Y, Z \}.
	\tag{\ref{eq:logical-decomp-z}}
\end{equation}
The logical operators $\overline{X}_j$ and $\overline{Z}_j$ anticommute only in $j$:
\begin{equation}
	\forall j \in \{ 1, \dots, n \} : \acomm{u_{j,j}}{v_{j,j}} = 0,
	\tag{\ref{eq:double-info}}
\end{equation}
\begin{equation}
	\forall k,j \in \{1, \dots, n \} \wedge k \neq j : \comm{u_{j,k}}{v_{j,k}} = 0.
	\tag{\ref{eq:double-info-b}}
\end{equation}

\begin{proof}

The decomposition of the logical operators into single-qubit Pauli operators reads
\begin{equation}
	\overline{X}_{i,j} = \bigotimes_{t=1}^r \bigotimes_{s=1}^m u_{(i,j),(t,s)},
\end{equation}
\begin{equation}
	\overline{Z}_{i,j} = \bigotimes_{t=1}^r \bigotimes_{s=1}^m v_{(i,j),(t,s)},
\end{equation}
where $u_{(i,j),(t,s)},v_{(i,j),(t,s)} \in \{ I, X, Y, Z \}$. We take the decompositions from Eqs.~\eqref{eq:qpc-sequence-x-logical} and~\eqref{eq:qpc-sequence-z-logical}:
\begin{equation}
\begin{aligned}
	u_{(i,j),(t,s)} & = 
	\begin{dcases}
		X_{t,s} & \quad \text{if } i=t \\
		I_{t,s} & \quad \text{if } i \neq t, \\
	\end{dcases} 
\end{aligned}
\label{eq:qpc-logical-x-decomposition}
\end{equation}
\begin{equation}
\begin{aligned}
	v_{(i,j),(t,s)} & = 
	\begin{dcases}
		Z_{t,s} & \quad \text{if } (i,j)=(t,s) \vee (i \neq t \wedge s = m )\\
		I_{t,s} & \quad \text{else}.\\
	\end{dcases} 
\end{aligned}
\label{eq:qpc-logical-z-decomposition}
\end{equation}
It is straightforward to see that the logical operators $\overline{X}_{i,j}$ and $\overline{Z}_{i,j}$ anticommute in $(i,j)$, $\forall (i,j) \in \{ (1,1), \dots, (r,m) \}$:
\begin{equation}
\begin{aligned}
	\acomm{u_{(i,j),(i,j)}}{v_{(i,j),(i,j)}} = \acomm{X_{i,j}}{Z_{i,j}} = 0,
\end{aligned}
\end{equation}
and commute in every other qubit $\forall (i,j),(k,l) \in \{ (1,1), \dots, (r,m) \} \wedge (k,l) \neq (i,j)$:
\begin{equation}
\begin{aligned}
	& \quad \comm{u_{(i,j),(k,l)}}{v_{(i,j),(k,l)}} \\
	&	\quad =
		\begin{dcases}
			\comm{X_{k,l}}{Z_{k,l}} & \quad \text{if } i=k \wedge ( 	(i,j)=(k,l) \\
									&							\qquad \qquad \quad \vee \left( i \neq k \wedge l = m \right) ) \\
			0 						& \quad \text{else} 				\\
		\end{dcases} \\
	&	\quad =
		\begin{dcases}
			\comm{X_{k,l}}{Z_{k,l}} & \quad \text{if } i=k \wedge ( \left( i \neq k \wedge l = m \right) ) \\
			0 						& \quad \text{else} \\
		\end{dcases} \\
	& \quad = 0.
\end{aligned}
\label{eq:qpc-con-5-b-calc}
\end{equation}
In Eq. \eqref{eq:qpc-con-5-b-calc} we begin by separating the commutator into two cases.  The first case includes all terms from Eqs.~\eqref{eq:qpc-logical-x-decomposition} and~\eqref{eq:qpc-logical-z-decomposition} where the two operators could potentially anticommute. In the next step, we eliminated the term $(i,j) = (k,l)$, as this contradicts the assumption $(k,l) \neq (i,j)$. Finally, we observed that the condition $i = k \wedge i \neq k$ is inherently contradictory and therefore always evaluates to false.

\end{proof}

Having verified all conditions of Thm.~\ref{thm:sufficient} we conclude that our scheme is optimal.

\subsection{Proof: Optimal logical Bell measurement for the five-qubit code}
\label{app:proof-five-qubit}
First, we define the sequences of the code stabilizer generators
\begin{equation}
\begin{aligned}
	\mathbb{C} 	& = ( c_j )_{ j \in \{ 1, \dots, n-1 \} } \\
		& = ( XXYIY, YXXYI, IYXXY, YIYXX ),
\end{aligned}
\end{equation}
the measurement sequence
\begin{equation}
	\mathbb{B} = ( b_{j} = Y_j )_{j \in \{ 1, \dots, n-1 \} },
\end{equation}
and the sequence of pairs of logical operators
\begin{equation}
	\mathbb{L} = \left( \left( \overline{X}_j , \overline{Z}_j \right) \right)_{j \in \{ 1, \dots, n \} },
\end{equation}
where
\begin{equation}
\begin{aligned}
	( \overline{X}_{j} )_{j \in \{ 1, \dots, n \} } = ( 	& XIYYI, IXIYY, \\
															& YIXIY, YYIXI, \\
															& IYYIX ),
\end{aligned}
	\label{eq:five-qubit-sequence-x-logical}
\end{equation}
\begin{equation}
\begin{aligned}
	( \overline{Z}_{j} )_{j \in \{ 1, \dots, n \} } = (	& ZYIIY, YZYII, \\
															& IYZYI, IIYZY, \\
															& YIIYZ ).
\end{aligned}
	\label{eq:five-qubit-sequence-z-logical}
\end{equation}
The logical operators in Eqs.~\eqref{eq:five-qubit-sequence-x-logical} and~\eqref{eq:five-qubit-sequence-z-logical} can be obtained by multiplying $XXXXX \in [\overline{X}]$ and $ZZZZZ \in [\overline{Z}]$ with all elements of $\mathbb{C}$ and with $ZZXIX = XZZXI \times IXZZX \times XIXZZ \times ZXIXZ \in S$, respectively. We now verify the conditions of Thm.~\ref{thm:sufficient} individually.

\textit{Condition~1}: Each operator $b_j$ anticommutes with $c_j$:
\begin{equation}
	\forall j \in \{ 1 , \dots, n-1 \} : \quad \acomm{b_j}{c_j} = 0.
	\tag{\ref{eq:con-1}}
\end{equation}

\begin{proof}

We verify the condition for each index individually:
\begin{equation}
\begin{aligned}
	\acomm{b_1}{c_1} & = \acomm{Y_1}{XXYIY} = 0, \\
	\acomm{b_2}{c_2} & = \acomm{Y_2}{YXXYI} = 0, \\
	\acomm{b_3}{c_3} & = \acomm{Y_3}{IYXXY} = 0, \\
	\acomm{b_4}{c_4} & = \acomm{Y_4}{YIYXX} = 0.
\end{aligned}
\end{equation}

\end{proof}

\textit{Condition~2}: Each operator $b_j$ commutes with every later stabilizer generator:
\begin{equation}
	\forall k>j: \quad \comm{b_j}{c_k} = 0.
	\tag{\ref{eq:con-2}}
\end{equation}

\begin{proof}

Again, we verify the condition for each index individually:
\begin{equation}
\begin{aligned}
	\comm{b_1}{c_2} & = \comm{Y_1}{YXXYI} & = 0, \\
	\comm{b_1}{c_3} & = \comm{Y_1}{IYXXY} & = 0, \\
	\comm{b_1}{c_4} & = \comm{Y_1}{YIYXX} & = 0, \\
	\comm{b_2}{c_3} & = \comm{Y_2}{IYXXY} & = 0, \\
	\comm{b_2}{c_4} & = \comm{Y_2}{YIYXX} & = 0, \\
	\comm{b_3}{c_4} & = \comm{Y_3}{YIYXX} & = 0.
\end{aligned}
\end{equation}

\end{proof}

\textit{Condition~3}: For all $j \in \{ 1, \dots, n-1 \}$ each operator in the set $\tilde{b}_j \in \{ X_j, Y_j, Z_j \} \setminus \{ b_j \}$ either anticommutes with at least one non-prior stabilizer generator,
\begin{equation}
	\exists k \geq j: \quad \acomm{\tilde{b}_j}{c_k} = 0,
	\tag{\ref{eq:con-3-a}}
\end{equation}
or completes a logical measurement,
\begin{equation}
	\exists \mu \in \langle b_1, \dots, b_{j-1} \rangle: \quad \mu \tilde{b}_j \in \left[ \overline{X} \right] \cup \left[ \overline{Y} \right] \cup \left[ \overline{Z} \right].
	\tag{\ref{eq:con-3-b}}
\end{equation}

\begin{proof}

Since $b_j=Y_j$ for all $j$ the set $\tilde{b}_j \in \{ X_j , Y_j , Z_j \} \setminus \{ b_j \}$ becomes $\{ X_j , Z_j \}$. For $j<4$, we show that all operators $\tilde{b}_j \in \{ X_j , Z_j \}$ anticommute with at least one non-prior stabilizer generator $c_k$:
\begin{equation}
	\exists k \geq j : \acomm{\tilde{b}_j}{c_k} = 0,
\end{equation}
which we verify for each index individually:
\begin{equation}
\begin{aligned}
	\acomm{X_1}{c_2} & = \acomm{X_1}{YXXYI} & = 0, \\
	\acomm{Z_1}{c_2} & = \acomm{Z_1}{YXXYI} & = 0, \\
	\acomm{X_2}{c_3} & = \acomm{X_2}{IYXXY} & = 0, \\
	\acomm{Z_2}{c_3} & = \acomm{Z_2}{IYXXY} & = 0, \\
	\acomm{X_3}{c_4} & = \acomm{X_3}{YIYXX} & = 0, \\
	\acomm{Z_3}{c_4} & = \acomm{Z_3}{YIYXX} & = 0.
\end{aligned}
\end{equation}
For the final index $j=4$, we show that the operator $Z_4 = \tilde{b}_4 \in \{ X_4, Z_4 \}$ anticommutes with the non-prior stabilizer generator $c_4$:
\begin{equation}
	\acomm{Z_4}{c_4} = \acomm{Z_4}{YIYXX} = 0,
\end{equation}
and the operator $X_4 = \tilde{b}_4 \in \{ X_4, Z_4 \}$ completes a logical measurement:
\begin{equation}
	\mu X_4 = X_4 b_1 b_2 = YYIKI \in [\overline{X}],
\end{equation}
where $\mu = b_1 b_2 \in \langle b_1, \dots, b_4 \rangle$.

\end{proof}

\textit{Condition~4}: The logical operators $\overline{X}_j$ and $\overline{Z}_j$ commute with every prior element of $\mathbb{B}$ for all $j \in \{ 1, \dots, n \}$:
\begin{equation}
	\forall k<j : \quad \comm{\overline{X}_j}{b_k} = 0,
	\tag{\ref{eq:con-4-a}}
\end{equation}
\begin{equation}
	\forall k<j : \quad \comm{\overline{Z}_j}{b_k} = 0.
	\tag{\ref{eq:con-4-b}}
\end{equation}

\begin{proof}

We verify the condition for each index pair individually:
\begin{equation}
\begin{aligned}
	\comm{\overline{X}_2}{b_1} & = \comm{IXIYY}{Y_1} & = 0, \\
\end{aligned}
\end{equation}

\begin{equation}
\begin{aligned}
	\comm{\overline{X}_3}{b_1} & = \comm{YIXIY}{Y_1} & = 0, \\
	\comm{\overline{X}_3}{b_2} & = \comm{YIXIY}{Y_2} & = 0, \\
\end{aligned}
\end{equation}

\begin{equation}
\begin{aligned}
	\comm{\overline{X}_4}{b_1} & = \comm{YYIXI}{Y_1} & = 0, \\
	\comm{\overline{X}_4}{b_2} & = \comm{YYIXI}{Y_2} & = 0, \\
	\comm{\overline{X}_4}{b_3} & = \comm{YYIXI}{Y_3} & = 0, \\
\end{aligned}
\end{equation}

\begin{equation}
\begin{aligned}
	\comm{\overline{X}_5}{b_1} & = \comm{IYYIX}{Y_1} & = 0, \\
	\comm{\overline{X}_5}{b_2} & = \comm{IYYIX}{Y_2} & = 0, \\
	\comm{\overline{X}_5}{b_3} & = \comm{IYYIX}{Y_3} & = 0, \\
	\comm{\overline{X}_5}{b_4} & = \comm{IYYIX}{Y_4} & = 0, \\
\end{aligned}
\end{equation}

\begin{equation}
\begin{aligned}
	\comm{\overline{Z}_2}{b_1} & = \comm{YZYII}{Y_1} & = 0, \\
\end{aligned}
\end{equation}

\begin{equation}
\begin{aligned}
	\comm{\overline{Z}_3}{b_1} & = \comm{IYZYI}{Y_1} & = 0, \\
	\comm{\overline{Z}_3}{b_2} & = \comm{IYZYI}{Y_2} & = 0, \\
\end{aligned}
\end{equation}

\begin{equation}
\begin{aligned}
	\comm{\overline{Z}_4}{b_1} & = \comm{IIYZY}{Y_1} & = 0, \\
	\comm{\overline{Z}_4}{b_2} & = \comm{IIYZY}{Y_2} & = 0, \\
	\comm{\overline{Z}_4}{b_3} & = \comm{IIYZY}{Y_3} & = 0, \\
\end{aligned}
\end{equation}

\begin{equation}
\begin{aligned}
	\comm{\overline{Z}_5}{b_1} & = \comm{YIIYZ}{Y_1} & = 0, \\
	\comm{\overline{Z}_5}{b_2} & = \comm{YIIYZ}{Y_2} & = 0, \\
	\comm{\overline{Z}_5}{b_3} & = \comm{YIIYZ}{Y_3} & = 0, \\
	\comm{\overline{Z}_5}{b_4} & = \comm{YIIYZ}{Y_4} & = 0. \\
\end{aligned}
\end{equation}

\end{proof}

\textit{Condition~5}: We decompose the logical operators into single-qubit Pauli operators to formulate the last condition:
\begin{equation}
	\overline{X}_j = \bigotimes_{t=1}^{n} u_{j,t}, \quad \text{where } u_{j,t} \in \{ I, X, Y, Z \},
	\tag{\ref{eq:logical-decomp-x}}
\end{equation}
\begin{equation}
	\overline{Z}_j = \bigotimes_{t=1}^{n} v_{j,t}, \quad \text{where } v_{j,t} \in \{ I, X, Y, Z \}.
	\tag{\ref{eq:logical-decomp-z}}
\end{equation}
The logical operators $\overline{X}_j$ and $\overline{Z}_j$ anticommute only in $j$:
\begin{equation}
	\forall j \in \{ 1, \dots, n \} : \acomm{u_{j,j}}{v_{j,j}} = 0,
	\tag{\ref{eq:double-info}}
\end{equation}
\begin{equation}
	\forall k,j \in \{1, \dots, n \} \wedge k \neq j : \comm{u_{j,k}}{v_{j,k}} = 0.
	\tag{\ref{eq:double-info-b}}
\end{equation}

\begin{proof}

In principle the decompositions can be taken from Eqs.~\eqref{eq:five-qubit-sequence-x-logical} and~\eqref{eq:five-qubit-sequence-z-logical}, which can be used to calculate that the logical operators $\overline{X}_j$ and $\overline{Z}_j$ anticommute only in $j$. Since the result can be readily seen from Eqs.~\eqref{eq:five-qubit-sequence-x-logical} and~\eqref{eq:five-qubit-sequence-z-logical}, we omit the straightforward computation of these 25 equations.

\end{proof}

Having verified all conditions of Thm.~\ref{thm:sufficient} we conclude that our scheme is optimal.

\vspace{1\baselineskip}
\subsection{Proof: Optimal logical Bell measurement for the standard planar surface code}
\label{app:proof-standard-planar-surface-code}

Recall from Sec.~\ref{sec:standard-planar-surface-code} that we defined a coordinate system $(l,c)$, with $l \in \{1, \dots, 2r-1\}$ denoting the layer and $c \in \{1, \dots, 2m-1\}$ the column. Notably, positions where the sum $l + c$ is even correspond to qubits (edges), whereas positions where $l + c$ is odd correspond to vertices and faces associated with stabilizer generators. For clarity, we denote the indices of qubits by $(i,j)$, with $(i,j) \in I = \{ (1,1), \dots, (r,m) \} \cap \{ (i,j) \mid i+j \text{ even} \} $. Furthermore, for convenience, we define $I'$ as the set of all index pairs excluding the last qubit:
\begin{equation}
	\begin{aligned}
	I' = 	& \left( \{ 1, \dots, r \} \times \{ 1, \dots, m \} \setminus \{ (r,m) \} \right) \\
			& \cap \{ (i,j) \mid i+j \text{ even} \}.
	\end{aligned}
\end{equation}

In Sec.~\ref{sec:standard-planar-surface-code}, the ordering of the qubits was chosen to simplify the discussion of the stabilizer transformations. Here, we instead adopt a lexicographic ordering of the qubit indices $(i,j)$, which is more convenient for the algebraic derivation. This means that in this proof the layers are always measured left to right. Note that the order in which the qubits are addressed within a layer, including the $Z$-BM of the previous layer for $Z$-layers, is inconsequential in our scheme. We adapt the stabilizer generators to this ordering by replacing, in each even layer $i$, the face operator $Z_{i,3}$ with the product $Z_{i,1}Z_{i,3}$. As a result, each $X$-BM $X_{i,j}$ anticommutes with exactly one of the current stabilizer generators. This is especially the case for $X_{i,2}$, which anticommutes solely with $Z_{i,1}$ in every even layer $i$.

First, we define the sequences, of the code stabilizer generators
\begin{equation}
	\mathbb{C} = \left( c_{i,j} =
    \begin{dcases}
		X_{i,j+1} 		& \quad \text{if } i \text{ odd } \wedge j<m\\
		Z_{i+1,j} 		& \quad \text{if } i \text{ odd } \wedge j=m\\
		Z_{i,j-1} 		& \quad \text{if } i \text{ even}\wedge j\neq4\\
		Z_{i,1}Z_{i,3}	& \quad \text{if } i \text{ even}\wedge j=4\\
    \end{dcases}
    \right)_{(i,j) \in I' },
\end{equation}

the measurement sequence
\begin{equation}
	\mathbb{B} = \left( b_{i,j} =
	\begin{dcases}
		Z_{i,j} & \quad \text{if } i \text{ odd} \wedge j<m\\
		X_{i,j} & \quad \text{if } i \text{ odd} \wedge j=m\\
		X_{i,j} & \quad \text{if } i \text{ even}\\
	\end{dcases}
	\right)_{ (i,j) \in I' },
\end{equation}

and the sequence of pairs of logical operators

\begin{equation}
	\mathbb{L} = \left( \left( \overline{X}_{i,j} , \overline{Z}_{i,j} \right) \right)_{(i,j) \in I },
\end{equation}
where
\begin{widetext}
\begin{equation}
	\overline{X}_{i,j} =
	\begin{dcases}
		 \left( \prod_{t=1}^{\frac{i-1}{2}} X_{2t-1,2m-1} \right) \left( \prod_{t=1}^{\frac{2m-1-j}{2}} X_{i-1,j+2t-1} \right) \left( \prod_{t=0}^{\frac{2r-1-i}{2}} X_{i+2t,j} \right) & \quad \text{if } i \text{ odd}\\
		 \left( \prod_{t=1}^{\frac{i}{2}-1} X_{2t-1,2m-1} \right) \left( \prod_{t=0}^{\frac{2m-j}{2}-1} X_{i,j+2t} \right) \left( \prod_{t=0}^{\frac{2r-i}{2}-1} X_{i+2t+1,j-1} \right) & \quad \text{if } i \text{ even},\\
	\end{dcases}
\end{equation}
\begin{equation}
	\overline{Z}_{i,j} =
	\begin{dcases}
		 \prod_{t=0}^{\frac{2m}{2}} Z_{i,2t+1} & \quad \text{if } i \text{ odd}\\
		 \left( \prod_{t=0}^{\frac{j}{2}-1} Z_{i-1,2t+1} \right) Z_{i,j} \left( \prod_{t=0}^{\frac{2m-j}{2}-1} Z_{i+1,j+2t+1} \right) & \quad \text{if } i \text{ even}.\\
	\end{dcases}
\end{equation}
\end{widetext}
For edge cases, some of these parts may vanish. To account for this, we define an empty product $\prod_{t=a}^b$ with $b<a$ to be the identity in the equations above.

We now verify the conditions of Thm.~\ref{thm:sufficient} individually. Note that the conditions are stated as in Thm.~\ref{thm:sufficient}, i.e., using a single index. In the proof of each condition, the single index is replaced with the index pair.

\textit{Condition~1}: Each operator $b_j$ anticommutes with $c_j$:
\begin{equation}
	\forall j \in \{ 1 , \dots, n-1 \} : \quad \acomm{b_j}{c_j} = 0.
	\tag{\ref{eq:con-1}}
\end{equation}

\begin{proof}
We verify the condition $\forall (i,j) \in I' \}$:
\begin{equation}	
	\begin{aligned}
	\acomm{b_{i,j}}{c_{i,j}} & =
	\begin{dcases}
		\acomm{Z_{i,j}}{X_{i,j+1}} 		& \quad \text{if } i \text{ odd } \wedge j<m\\
		\acomm{X_{i,j}}{Z_{i+1,j}} 		& \quad \text{if } i \text{ odd } \wedge j=m\\
		\acomm{X_{i,j}}{Z_{i,j-1}}		& \quad \text{if } i \text{ even} \wedge j\neq4\\
		\acomm{X_{i,4}}{Z_{i,1}Z_{i,3}}	& \quad \text{if } i \text{ even} \wedge j=4\\
	\end{dcases}
	\\
	& = 0. \\
	\end{aligned}
\end{equation}
\end{proof}

\textit{Condition~2}: Each operator $b_j$ commutes with every later stabilizer generator:
\begin{equation}
	\forall k>j: \quad \comm{b_j}{c_k} = 0.
	\tag{\ref{eq:con-2}}
\end{equation}

\begin{proof}
We write the condition explicitly $\forall \left( k , l \right) > \left( i,j \right) \wedge k+l \text{ even} \wedge i+j \text{ even}$:
\begin{widetext}
\begin{equation}	
	\begin{aligned}
	\comm{b_{i,j}}{c_{k,l}} & =
	\begin{dcases}
		\comm{Z_{i,j}}{X_{k,l+1}} 		& \quad \text{if } i \text{ odd } \wedge j<m 	\wedge k \text{ odd } \wedge l<m\\
		\comm{Z_{i,j}}{Z_{k+1,l}}		& \quad \text{if } i \text{ odd } \wedge j<m	\wedge k \text{ odd } \wedge l=m\\
		\comm{Z_{i,j}}{Z_{k,l-1}}		& \quad \text{if } i \text{ odd } \wedge j<m	\wedge k \text{ even} \wedge l\neq4\\
		\comm{Z_{i,j}}{Z_{k,1}Z_{k,3}}	& \quad \text{if } i \text{ odd } \wedge j<m	\wedge k \text{ even} \wedge l=4\\
		\\
		\comm{X_{i,j}}{X_{k,l+1}}		& \quad \text{if } i \text{ odd } \wedge j=m	\wedge k \text{ odd } \wedge l<m\\
		\comm{X_{i,j}}{Z_{k+1,l}} 		& \quad \text{if } i \text{ odd } \wedge j=m 	\wedge k \text{ odd } \wedge l=m\\
		\comm{X_{i,j}}{Z_{k,l-1}}		& \quad \text{if } i \text{ odd } \wedge j=m	\wedge k \text{ even} \wedge l\neq4\\
		\comm{X_{i,j}}{Z_{k,1}Z_{k,3}}	& \quad \text{if } i \text{ odd } \wedge j=m	\wedge k \text{ even} \wedge l=4\\
		\\
		\comm{X_{i,j}}{X_{k,l+1}}		& \quad \text{if } i \text{ even}				\wedge k \text{ odd } \wedge l<m\\
		\comm{X_{i,j}}{Z_{k+1,l}}		& \quad \text{if } i \text{ even}				\wedge k \text{ odd } \wedge l=m\\
		\comm{X_{i,j}}{Z_{k,l-1}}		& \quad \text{if } i \text{ even} 				\wedge k \text{ even} \wedge l\neq4\\
		\comm{X_{i,j}}{Z_{k,1}Z_{k,3}}	& \quad \text{if } i \text{ even}				\wedge k \text{ even} \wedge l=4\\
	\end{dcases}
	\\
	& = 0. \\
	\end{aligned}
\end{equation}
\end{widetext}
While straightforward to verify by hand, we omit the explicit calculations here for brevity.

\end{proof}

\textit{Condition~3}: For all $j \in \{ 1, \dots, n-1 \}$ each operator in the set $\tilde{b}_j \in \{ X_j, Y_j, Z_j \} \setminus \{ b_j \}$ either anticommutes with at least one non-prior stabilizer generator,
\begin{equation}
	\exists k \geq j: \quad \acomm{\tilde{b}_j}{c_k} = 0,
	\tag{\ref{eq:con-3-a}}
\end{equation}
or completes a logical measurement,
\begin{equation}
	\exists \mu \in \langle b_1, \dots, b_{j-1} \rangle: \quad \mu \tilde{b}_j \in \left[ \overline{X} \right] \cup \left[ \overline{Y} \right] \cup \left[ \overline{Z} \right].
	\tag{\ref{eq:con-3-b}}
\end{equation}

\begin{proof}
For every $Y_{i,j}$, the proof that it anticommutes with at least one non-prior stabilizer generator proceeds analogously to the calculation of condition~2, except for the final qubit $Y_{2r-1,2m-1}$, which completes a logical operator:
\begin{equation}
	\begin{aligned}
	& \mu Y_{2r-1,2m-1} \\
	& = Y_{2r-1,2m-1} \prod_{{t=1}}^{r-1} b_{2t-1,2m-1} \prod_{{t=1}}^{m-1} b_{2r-1,2t-1}\\
	& = Y_{2r-1,2m-1} \prod_{{t=1}}^{r-1} X_{2t-1,2m-1} \prod_{{t=1}}^{m-1} Z_{2r-1,2t-1}\\
	& \in [\overline{Y}].
	\end{aligned}
\end{equation}
We go through the rest of the cases individually:

for $i \text{ odd} \wedge j<m \wedge j\neq 3$:

\begin{equation}
	\acomm{X_{i,j}}{c_{i+1,j+1}} = \acomm{X_{i,j}}{Z_{i+1,j}} = 0,
\end{equation}

for $i \text{ odd} \wedge j<m \wedge j=3$:

\begin{equation}
	\acomm{X_{i,j}}{c_{i+1,j+1}} = \acomm{X_{i,j}}{Z_{i+1,1}Z_{i+1,3}} = 0,
\end{equation}

for $i \text{ odd} \wedge j=m$:

\begin{equation}
	\mu Z_{i,m} = Z_{i,m} \prod_{t=1}^{m-1} b_{i,2t-1} = Z_{i,m} \prod_{t=1}^{m-1} Z_{i,2t-1} \in [\overline{Z}],
\end{equation}

and for $i \text{ even}$:
\begin{equation}
	\acomm{Z_{i,j}}{X_{i+1,j}} = 0.
\end{equation}

\end{proof}

The final two conditions are beyond the scope of a manual calculation. In principle, it could guide the development of an algorithmic approach to complete the algebraic proof. However, given the sheer size of the resulting expressions, such a computation is not expected to offer additional insight beyond the topological argument presented in the main text, and we therefore omit it here. For condition~4 we write the condition out explicitly to illustrate the complexity of the expression.

\textit{Condition~4}: The logical operators $\overline{X}_j$ and $\overline{Z}_j$ commute with every prior element of $\mathbb{B}$ for all $j \in \{ 1, \dots, n \}$:
\begin{equation}
	\forall k<j : \quad \comm{\overline{X}_j}{b_k} = 0,
	\tag{\ref{eq:con-4-a}}
\end{equation}
\begin{equation}
	\forall k<j : \quad \comm{\overline{Z}_j}{b_k} = 0.
	\tag{\ref{eq:con-4-b}}
\end{equation}

\begin{proof}
We write the condition explicitly $\forall \left( k , l \right) < \left( i,j \right)  \wedge i+j \text{ even} \wedge k+l \text{ even}$:
\begin{widetext}
\begin{equation}
\begin{aligned}
	\comm{\overline{X}_{i,j}}{b_{k,l}} & =
	\begin{dcases}
		\comm{\left( \prod_{t=1}^{\frac{i-1}{2}} X_{2t-1,2m-1} \right) \left( \prod_{t=1}^{\frac{2m-1-j}{2}} X_{i-1,j+2t-1} \right) \left( \prod_{t=0}^{\frac{2r-1-i}{2}} X_{i+2t,j} \right)}{Z_{k,l}} & \quad \text{if } i \text{ odd } \wedge k \text{ odd} \wedge l<m \\
		\comm{\overline{X}_{i,j}}{X_{k,l}} & \quad \text{if } i \text{ odd } \wedge k \text{ odd} \wedge l=m \\
		\comm{\overline{X}_{i,j}}{X_{k,l}} & \quad \text{if } i \text{ odd } \wedge k \text{ even} \\
		\\
		\comm{\left( \prod_{t=1}^{\frac{i}{2}-1} X_{2t-1,2m-1} \right) \left( \prod_{t=0}^{\frac{2m-j}{2}-1} X_{i,j+2t} \right) \left( \prod_{t=0}^{\frac{2r-i}{2}-1} X_{i+2t+1,j-1} \right)}{Z_{k,l}} & \quad \text{if } i \text{ even} \wedge k \text{ odd} \wedge l<m \\
		\comm{\overline{X}_{i,j}}{X_{k,l}} & \quad \text{if } i \text{ even} \wedge k \text{ odd} \wedge l=m \\
		\comm{\overline{X}_{i,j}}{X_{k,l}} & \quad \text{if } i \text{ even} \wedge k \text{ even} \\
	\end{dcases}
	\\
	& = 0,  \\
\end{aligned}
\end{equation}
\begin{equation}
\begin{aligned}
	\comm{\overline{Z}_{i,j}}{b_{k,l}} & =
	\begin{dcases}
		\comm{\overline{Z}_{i,j}}{Z_{k,l}} & \quad \text{if } i \text{ odd } \wedge k \text{ odd} \wedge l<m \\
		\comm{\prod_{t=0}^{\frac{2m}{2}} Z_{i,2t+1}}{X_{k,l}} & \quad \text{if } i \text{ odd } \wedge k \text{ odd} \wedge l=m \\
		\comm{\prod_{t=0}^{\frac{2m}{2}} Z_{i,2t+1}}{X_{k,l}} & \quad \text{if } i \text{ odd } \wedge k \text{ even} \\
		\\		
		\comm{\overline{Z}_{i,j}}{Z_{k,l}} & \quad \text{if } i \text{ even} \wedge k \text{ odd} \wedge l<m \\
		\comm{\left( \prod_{t=0}^{\frac{j}{2}-1} Z_{i-1,2t+1} \right) Z_{i,j} \left( \prod_{t=0}^{\frac{2m-j}{2}-1} Z_{i+1,j+2t+1} \right)}{X_{k,l}} & \quad \text{if } i \text{ even} \wedge k \text{ odd} \wedge l=m \\
		\comm{\left( \prod_{t=0}^{\frac{j}{2}-1} Z_{i-1,2t+1} \right) Z_{i,j} \left( \prod_{t=0}^{\frac{2m-j}{2}-1} Z_{i+1,j+2t+1} \right)}{X_{k,l}} & \quad \text{if } i \text{ even} \wedge k \text{ even} \\
	\end{dcases}
	\\
	& = 0. \\
\end{aligned}
\end{equation}
\end{widetext}
\end{proof}

\subsection{Proof: Optimal logical Bell measurement for the tree code}
\label{app:proof-tree}

As a first step, we need to index the qubits to impose an ordering on them. A natural approach would be to assign an index pair and impose a lexicographic order. However, similar to our approach in Sec.~\ref{sec:tree-code}, we refrain from explicitly assigning an index pair to each qubit. Instead, we implicitly assume that an arbitrary index~$v$ refers to a single qubit and impose the following order for any two indices~$v_1$ and~$v_2$:
\begin{equation}
    \depth(v_1) > \depth(v_2) \;\; \leftrightarrow \;\; v_1 < v_2.
\end{equation}

Recall that the graph state stabilizer generators are defined as $K_v = X_v \prod_{w \in \N(v)} Z_w$. Furthermore, we denoted by~$V$ the set of all indices, and by~$l$ the final index, such that $\forall v \in V \setminus \{l\} :\quad l > v$, where~$l$ is an arbitrary index satisfying $\depth(l) = 1$.

We define the sequences, as discussed in Sec.~\ref{sec:tree-code}, of the code stabilizer generators
\begin{equation}
	\begin{aligned}
		\mathbb{C} 	= ( c_j = K_j )_{j \in V \setminus \{l\}},
	\end{aligned}
\end{equation}
the measurement sequence
\begin{equation}
\begin{aligned}
	\mathbb{B} = ( b_j = Z_j )_{j \in V \setminus \{l\}},
\end{aligned}
\end{equation}
and the sequence of pairs of logical operators
\begin{equation}
	\mathbb{L} = \left( \left( \overline{X}_j , \overline{Z}_j \right) \right)_{j \in V},
\end{equation}
where
\begin{equation}
	\overline{X}_j = 
	\begin{dcases}
		Z_r \prod_{i=0}^{\frac{\depth(j)-2}{2}} K_{\anc(j,2i+1)}	& \quad \text{if } \depth(j) \ \text{even} \\ \\
		Z_r \prod_{i=0}^{\frac{\depth(j)-1}{2}} K_{\anc(j,2i)} 	& \quad \text{if } \depth(j) \ \text{odd},
	\end{dcases}
\label{eq:tree-sequence-x-logical}	
\end{equation}
\begin{equation}
	\overline{Z}_j = 
	\begin{dcases}
		K_r \prod_{i=0}^{\frac{\depth(j)-2}{2}} K_{\anc(j,2i)}		& \quad \text{if } \depth(j) \ \text{even} \\ \\
		K_r \prod_{i=0}^{\frac{\depth(j)-3}{2}} K_{\anc(j,2i+1)} 	& \quad \text{if } \depth(j) \ \text{odd}.
	\end{dcases}
\label{eq:tree-sequence-z-logical}
\end{equation}
We now verify the conditions of Thm.~\ref{thm:sufficient} individually.

\textit{Condition~1}: Each operator $b_j$ anticommutes with $c_j$:
\begin{equation}
	\forall j \in \{ 1 , \dots, n-1 \} : \quad \acomm{b_j}{c_j} = 0.
	\tag{\ref{eq:con-1}}
\end{equation}

\begin{proof}
We verify the condition	$\forall (i,j) \in V \setminus \{l\}$:
\begin{equation}
\begin{aligned}
	\acomm{b_j}{c_j}	& = \acomm{Z_j}{K_j} \\
						& = \acomm{Z_j}{X_j \prod_{w \in \N(j)} Z_w} \\
						& = 0.
\end{aligned}
\end{equation}

\end{proof}

\textit{Condition~2}: Each operator $b_j$ commutes with every later stabilizer generator:
\begin{equation}
	\forall k>j: \quad \comm{b_j}{c_k} = 0.
	\tag{\ref{eq:con-2}}
\end{equation}

\begin{proof}
We verify the condition	$\forall k > j$:
\begin{equation}
\begin{aligned}
	\comm{b_j}{c_k} 	& = \comm{Z_j}{K_k} \\
						& = \comm{Z_j}{X_k \prod_{w \in \N(k)} Z_w} \\
						& = 0.
\end{aligned}
\end{equation}

\end{proof}

\textit{Condition~3}: For all $j \in \{ 1, \dots, n-1 \}$ each operator in the set $\tilde{b}_j \in \{ X_j, Y_j, Z_j \} \setminus \{ b_j \}$ either anticommutes with at least one non-prior stabilizer generator,
\begin{equation}
	\exists k \geq j: \quad \acomm{\tilde{b}_j}{c_k} = 0,
	\tag{\ref{eq:con-3-a}}
\end{equation}
or completes a logical measurement,
\begin{equation}
	\exists \mu \in \langle b_1, \dots, b_{j-1} \rangle: \quad \mu \tilde{b}_j \in \left[ \overline{X} \right] \cup \left[ \overline{Y} \right] \cup \left[ \overline{Z} \right].
	\tag{\ref{eq:con-3-b}}
\end{equation}

\begin{proof}

Since $b_j=Z_j$ for all $j$, the set $\tilde{b}_j \in \{ X_j , Y_j , Z_j \} \setminus \{ b_j \}$ becomes $\{ X_j , Y_j \}$. For $\tilde{b}_j = Y_j$ it is straightforward to show that
\begin{equation}
	\forall j \in V \setminus \{l\}: \exists k \geq j : \acomm{Y_j}{c_k} = 0.
\end{equation}
Specifically, the element $c_j$ anticommutes with $Y_j$:
\begin{equation}
	\acomm{Y_j}{c_j} = \acomm{Y_j}{K_j} = 0.
\end{equation}
For all but the first level a similar argument applies to $\tilde{b} = X_j$ which anticommutes with the stabilizer generator associated with the parent $k = \anc(j,1)$, i.e., $\forall j \in \{ j \mid \depth(j) > 1 \}$ and $k = \anc(j,1)$:
\begin{equation}	
\begin{aligned}	
	\acomm{X_j}{c_{\anc(j,1)}} 	& = \acomm{X_j}{c_{\anc(j,1)}} \\
								& = \acomm{X_j}{X_k \prod_{w \in \N(k)} Z_w} \\
								& = 0.
\end{aligned}
\end{equation}
For the first level $\tilde{b} = X_j$ completes a logical $\overline{X}$ operator, i.e., $\forall j \in \{ j \mid \depth(j) = 1 \}$:
\begin{equation}
\begin{aligned}
	\mu X_j & = X_j \prod_{w \in \C(j)} b_w \\
			& = X_j \prod_{w \in \C(j)} Z_w \in [\overline{X}],
\end{aligned}
\end{equation}
where $\mu = \prod_{w \in \C(j)} b_w \in \langle b_1, \dots, b_l \rangle$.
\end{proof}

\textit{Condition~4}: The logical operators $\overline{X}_j$ and $\overline{Z}_j$ commute with every prior element of $\mathbb{B}$ for all $j \in \{ 1, \dots, n \}$:
\begin{equation}
	\forall k<j : \quad \comm{\overline{X}_j}{b_k} = 0,
	\tag{\ref{eq:con-4-a}}
\end{equation}
\begin{equation}
	\forall k<j : \quad \comm{\overline{Z}_j}{b_k} = 0.
	\tag{\ref{eq:con-4-b}}
\end{equation}

\begin{proof}
As a preliminary observation we note that all elements $b_k$ anticommute with any $K_{\anc(k,a)}$,
\begin{equation}
\begin{aligned}
	& \forall a>0 : \\
	& \comm{K_{\anc(k,a)}}{b_k} \\
	& \quad = \comm{X_{\anc(k,a)} \prod_{w \in \N(\anc(k,a))} Z_w}{Z_k} \\
	& \quad = 0.
\end{aligned}
	\label{eq:anc-comm}
\end{equation}

We now compute both commutators for the two cases of the logical operators, corresponding to odd and even $\depth(j)$. The first two steps of each calculation are a standard commutator identity and inserting $b_k = Z_k$. The third step is observing that the second commutator vanishes and applying Eq.~\eqref{eq:anc-comm}. We verify the condition for the first two cases, $\forall k<j \wedge \depth(j) \text{ even}$:
\begin{equation}
\begin{aligned}
	\comm{\overline{X}_j}{b_k} 	& = \comm{Z_r \prod_{i=0}^{\frac{\depth(j)-2}{2}} K_{\anc(j,2i+1)}}{b_k} \\
								& =	Z_r \comm{\prod_{i=0}^{\frac{\depth(j)-2}{2}} K_{\anc(j,2i+1)}}{b_k} \\
								& \phantom{= Z_r} + \comm{Z_r}{b_k} \prod_{i=0}^{\frac{\depth(j)-2}{2}} K_{\anc(j,2i+1)} \\
								& =	Z_r \comm{\prod_{i=0}^{\frac{\depth(j)-2}{2}} K_{\anc(j,2i+1)}}{Z_k} \\
								& \phantom{= Z_r} + \comm{Z_r}{Z_k} \prod_{i=0}^{\frac{\depth(j)-2}{2}} K_{\anc(j,2i+1)} \\
								& =	0,
\end{aligned}
\end{equation}
\newpage
and $\forall k<j \wedge \depth(j) \text{ odd}$:
\begin{equation}
\begin{aligned}
	\comm{\overline{X}_j}{b_k} 	& = \comm{Z_r \prod_{i=0}^{\frac{\depth(j)-1}{2}} K_{\anc(j,2i)}}{b_k} \\
								& = Z_r \comm{\prod_{i=0}^{\frac{\depth(j)-1}{2}} K_{\anc(j,2i)}}{b_k} \\
								& \phantom{= Z_r} + \comm{Z_r}{b_k} \prod_{i=0}^{\frac{\depth(j)-1}{2}} K_{\anc(j,2i)} \\
								& = Z_r \comm{\prod_{i=0}^{\frac{\depth(j)-1}{2}} K_{\anc(j,2i)}}{Z_k} \\
								& \phantom{= Z_r}+ \comm{Z_r}{Z_k} \prod_{i=0}^{\frac{\depth(j)-1}{2}} K_{\anc(j,2i)} \\
								& = Z_r \comm{K_{\anc(j,0)}}{Z_k} \\
								& = Z_r \comm{K_{j}}{Z_k} \\
								& = Z_r \comm{X_j \prod_{w \in \N(j)} Z_w}{Z_k} \\
								& = 0.
\end{aligned}
\end{equation}
In the final step, we used that the commutator vanishes because $k < j$, and therefore $X_j$ and $Z_k$ act on different qubits. We verify the condition for the third case, $\forall k<j \wedge \depth(j) \text{ even}$:
\begin{equation}
\begin{aligned}
	\comm{\overline{Z}_j}{b_k} 	& = \comm{K_r \prod_{i=0}^{\frac{\depth(j)-2}{2}} K_{\anc(j,2i)}}{b_k} \\
								& = K_r \comm{\prod_{i=0}^{\frac{\depth(j)-2}{2}} K_{\anc(j,2i)}}{b_k} \\
								& \phantom{= K_r}+ \comm{K_r}{b_k} \prod_{i=0}^{\frac{\depth(j)-2}{2}} K_{\anc(j,2i)} \\
								& = K_r \comm{\prod_{i=0}^{\frac{\depth(j)-2}{2}} K_{\anc(j,2i)}}{Z_k} \\
								& \phantom{= K_r}+ \comm{K_r}{Z_k} \prod_{i=0}^{\frac{\depth(j)-2}{2}} K_{\anc(j,2i)} \\
								& = K_r \comm{K_{\anc(j,0)}}{Z_k} \\
								& = K_r \comm{K_{j}}{Z_k} \\
								& = K_r \comm{X_j \prod_{w \in \N(j)} Z_w}{Z_k} \\
								& = 0.
\end{aligned}
\end{equation}
Again, in the final step, we used that the commutator vanishes because $k < j$, and therefore $X_j$ and $Z_k$ act on different qubits. We verify the condition for the last case, $\forall k<j \wedge \depth(j) \text{ odd}$:
\begin{equation}
\begin{aligned}
	\comm{\overline{Z}_j}{b_k} 	& = \comm{K_r \prod_{i=0}^{\frac{\depth(j)-3}{2}} K_{\anc(j,2i+1)}}{b_k} \\
								& = K_r \comm{\prod_{i=0}^{\frac{\depth(j)-3}{2}} K_{\anc(j,2i+1)}}{b_k} \\
								& \phantom{= K_r} + \comm{K_r}{b_k} \prod_{i=0}^{\frac{\depth(j)-3}{2}} K_{\anc(j,2i+1)} \\
								& = K_r \comm{\prod_{i=0}^{\frac{\depth(j)-3}{2}} K_{\anc(j,2i+1)}}{Z_k} \\
								& \phantom{= Z_r} + \comm{K_r}{Z_k} \prod_{i=0}^{\frac{\depth(j)-3}{2}} K_{\anc(j,2i+1)} \\
								& = 0.
\end{aligned}
\end{equation}

\end{proof}

\textit{Condition~5}: We decompose the logical operators into single-qubit Pauli operators to formulate the last condition:
\begin{equation}
	\overline{X}_j = \bigotimes_{t=1}^{n} u_{j,t}, \quad \text{where } u_{j,t} \in \{ I, X, Y, Z \},
	\tag{\ref{eq:logical-decomp-x}}
\end{equation}
\begin{equation}
	\overline{Z}_j = \bigotimes_{t=1}^{n} v_{j,t}, \quad \text{where } v_{j,t} \in \{ I, X, Y, Z \}.
	\tag{\ref{eq:logical-decomp-z}}
\end{equation}
The logical operators $\overline{X}_j$ and $\overline{Z}_j$ anticommute only in $j$:
\begin{equation}
	\forall j \in \{ 1, \dots, n \} : \acomm{u_{j,j}}{v_{j,j}} = 0,
	\tag{\ref{eq:double-info}}
\end{equation}
\begin{equation}
	\forall k,j \in \{1, \dots, n \} \wedge k \neq j : \comm{u_{j,k}}{v_{j,k}} = 0.
	\tag{\ref{eq:double-info-b}}
\end{equation}

\begin{proof}

For each vertex $v \in V$, we define the path set $P(v) \subseteq V$ to be the unique set of vertices forming the simple path from root $r$ to $v$. We take the decomposition from Eqs.~\eqref{eq:tree-sequence-x-logical} and~\eqref{eq:tree-sequence-z-logical}:

\begin{equation}
	u_{j,t} = 
	\begin{dcases}
		X_t		& \quad \text{if } t \in P(j) \wedge \depth(t) \text{ odd} \\ \\
		Z_t 	& \quad \text{if } t \notin P(j) \setminus \{j\} \\
				& \quad \quad \wedge \ \anc(t,1) \in P(j) \\
				& \quad \quad \wedge \ \depth(t) \text{ even} \\ \\
		I_t		& \quad \text{else, } \\
	\end{dcases}
	\label{eq:tree-logical-x-decomposition}
\end{equation}

\begin{equation}
	v_{j,t} = 
	\begin{dcases}
		X_t		& \quad \text{if } t \in P(j) \wedge \depth(t) \text{ even} \\ \\
		Z_t 	& \quad \text{if } t \notin P(j) \setminus \{j\} \\
				& \quad \quad \wedge \ \anc(t,1) \in P(j) \\
				& \quad \quad \wedge \ \depth(t) \text{ odd} \\ \\
		I_t		& \quad \text{else. } \\
	\end{dcases}
	\label{eq:tree-logical-z-decomposition}
\end{equation}
It is straightforward to verify that the logical operators anticommute for each $j$. We first consider the case of even $j$,
\begin{equation}
\begin{aligned}
	& \forall j \in V \wedge \depth(j) \text{ even} : \\
	& \acomm{u_{j,j}}{v_{j,j}} = \acomm{Z_j}{X_j} = 0,
\end{aligned}
\end{equation}
and then the case of odd $j$,
\begin{equation}
\begin{aligned}
	& \forall j \in V \wedge \depth(j) \text{ odd} : \\
	& \acomm{u_{j,j}}{v_{j,j}} = \acomm{X_j}{Z_j} = 0.
\end{aligned}
\end{equation}
It is also straightforward to argue by contradiction that the logical operators do not anticommute in any qubit other than~$j$. For the commutator $\comm{u_{j,k}}{v_{j,k}}$ to be non-zero, the operators $u_{j,k}$ and $v_{j,k}$ must be different single-qubit Pauli operators, e.g., $u_{j,k} = X_k$ and $v_{j,k} = Z_k$. In this case, Eq.~\eqref{eq:tree-logical-x-decomposition} implies that $k \in P(j)$, while Eq.~\eqref{eq:tree-logical-z-decomposition} implies that $k \in P(j) \setminus \{j\}$. The only index satisfying both conditions is $k = j$, completing the argument.

Having verified all conditions of Thm.~\ref{thm:sufficient} we conclude that our scheme is optimal.

\end{proof}

\subsection{Proof: Optimal logical Bell measurement for the Steane code}
\label{app:proof-steane}

First, we define the sequences, as discussed in Sec.~\ref{sec:steane-code}, of the code stabilizer generators
\begin{equation}
	\begin{aligned}
		\mathbb{C} 	= ( & ZZZZIII, IXIXXXI, IIXXIXX, \\
				& XIIXXIX, IZIZZZI, IIZZIZZ ),
	\end{aligned}
\end{equation}
the measurement sequence
\begin{equation}
\begin{aligned}
	\mathbb{B} 	& = ( b_{j} \}_{j \in \{ 1, \dots, n-1 ) } \\
				& = ( X_1, Z_2, Z_3, Z_4, X_5, X_6 ),
\end{aligned}
\end{equation}
and the sequence of pairs of logical operators
\begin{equation}
	\mathbb{L} = \left( \left( \overline{X}_{i,j} , \overline{Z}_{i,j} \right) \right)_{j \in \{ 1, \dots, n \} },
\end{equation}
where
\begin{equation}
\begin{aligned}
	( \overline{X}_{j} )_{j \in \{ 1, \dots, n \} } = ( 	& XXIIXII, XXIIXII, \\
															& XIXIIIX, XIIXIXI, \\
															& IIIIXXX, IIIIXXX, \\
															& IIIIXXX ),
\end{aligned}
	\label{eq:steane-sequence-x-logical}
\end{equation}
\begin{equation}
\begin{aligned}
	( \overline{Z}_{j} )_{j \in \{ 1, \dots, n \} } = (	& ZIZIIIZ, IZIZIIZ, \\
															& IIZZZII, IZIZIIZ, \\
															& IIZZZII, IZZIIZI, \\
															& IZIZIIZ ).
\end{aligned}
	\label{eq:steane-sequence-z-logical}
\end{equation}
Note that the sequences in Eqs.~\eqref{eq:steane-sequence-x-logical} and~\eqref{eq:steane-sequence-z-logical} are just one of the possible solutions presented in Sec.~\ref{sec:steane-code}. We now verify the conditions of Thm.~\ref{thm:sufficient} individually.

\textit{Condition~1}: Each operator $b_j$ anticommutes with $c_j$:
\begin{equation}
	\forall j \in \{ 1 , \dots, n-1 \} : \quad \acomm{b_j}{c_j} = 0.
	\tag{\ref{eq:con-1}}
\end{equation}

\begin{proof}

We verify the condition for each index individually:
\begin{equation}
\begin{aligned}
	\acomm{b_1}{c_1} & = \acomm{X_1}{ZZZZIII} & = 0, \\
	\acomm{b_2}{c_2} & = \acomm{Z_2}{IXIXXXI} & = 0, \\
	\acomm{b_3}{c_3} & = \acomm{Z_3}{IIXXIXX} & = 0, \\
	\acomm{b_4}{c_4} & = \acomm{Z_4}{XIIXXIX} & = 0, \\
	\acomm{b_5}{c_5} & = \acomm{X_5}{IZIZZZI} & = 0, \\
	\acomm{b_6}{c_6} & = \acomm{X_6}{IIZZIZZ} & = 0.
	\end{aligned}
\end{equation}

\end{proof}

\textit{Condition~2}: Each operator $b_j$ commutes with every later stabilizer generator:
\begin{equation}
	\forall k>j: \quad \comm{b_j}{c_k} = 0.
	\tag{\ref{eq:con-2}}
\end{equation}

\begin{proof}

Again, we verify the condition for each index individually:
\begin{equation}
\begin{aligned}
	\comm{b_1}{c_2} & = \comm{X_1}{IXIXXXI} & = 0, \\
	\comm{b_1}{c_3} & = \comm{X_1}{IIXXIXX} & = 0, \\
	\comm{b_1}{c_4} & = \comm{X_1}{XIIXXIX} & = 0, \\
	\comm{b_1}{c_5} & = \comm{X_1}{IZIZZZI} & = 0, \\
	\comm{b_1}{c_6} & = \comm{X_1}{IIZZIZZ} & = 0, \\
	\comm{b_2}{c_3} & = \comm{Z_2}{IIXXIXX} & = 0, \\
	\comm{b_2}{c_4} & = \comm{Z_2}{XIIXXIX} & = 0, \\
	\comm{b_2}{c_5} & = \comm{Z_2}{IZIZZZI} & = 0, \\
	\comm{b_2}{c_6} & = \comm{Z_2}{IIZZIZZ} & = 0, \\
	\comm{b_3}{c_4} & = \comm{Z_3}{XIIXXIX} & = 0, \\
	\comm{b_3}{c_5} & = \comm{Z_3}{IZIZZZI} & = 0, \\
	\comm{b_3}{c_6} & = \comm{Z_3}{IIZZIZZ} & = 0, \\
	\comm{b_4}{c_5} & = \comm{Z_4}{IZIZZZI} & = 0, \\
	\comm{b_4}{c_6} & = \comm{Z_4}{IIZZIZZ} & = 0, \\
	\comm{b_5}{c_6} & = \comm{Z_5}{IIZZIZZ} & = 0.
\end{aligned}
\end{equation}

\end{proof}

\textit{Condition~3}: For all $j \in \{ 1, \dots, n-1 \}$ each operator in the set $\tilde{b}_j \in \{ X_j, Y_j, Z_j \} \setminus \{ b_j \}$ either anticommutes with at least one non-prior stabilizer generator,
\begin{equation}
	\exists k \geq j: \quad \acomm{\tilde{b}_j}{c_k} = 0,
	\tag{\ref{eq:con-3-a}}
\end{equation}
or completes a logical measurement,
\begin{equation}
	\exists \mu \in \langle b_1, \dots, b_{j-1} \rangle: \quad \mu \tilde{b}_j \in \left[ \overline{X} \right] \cup \left[ \overline{Y} \right] \cup \left[ \overline{Z} \right].
	\tag{\ref{eq:con-3-b}}
\end{equation}

\begin{proof}

For $j=1$, it is trivial that the operators $\tilde{b}_1 \in \{ X_1 , Y_1 , Z_1 \} \setminus \{ b_1 \} = \{ X_1 , Z_1 \}$, anticommute with the non-prior stabilizer generators $XIIXXIX$ and $ZZZZIII$. Since $b_j = Z_j$ for $j \in \{ 2,3,4 \}$ the set $\tilde{b}_j \in \{ X_j , Y_j , Z_j \} \setminus \{ b_j \}$ becomes $\{ X_j , Y_j \}$. In Sec.~\ref{sec:steane-code}, we identified, for each of these indices, the non-prior stabilizer generators that anticommute with the corresponding elements of~$\mathbb{B}$, which we now formally list, $\forall j \in \{ 2,3,4 \} \wedge \tilde{b}_j \in \{ X_j , Y_j \}$:
\begin{equation}
\begin{aligned}
	\acomm{\tilde{b}_2}{c_5} & = \acomm{\tilde{b}_2}{IZIZZZI} & = 0, \\
	\acomm{\tilde{b}_3}{c_6} & = \acomm{\tilde{b}_3}{IIZZIZZ} & = 0, \\
	\acomm{\tilde{b}_4}{c_5} & = \acomm{\tilde{b}_4}{IZIZZZI} & = 0.
\end{aligned}
\end{equation}
Since $b_j = X_j$ for $j \in \{ 5,6 \}$ the set $\tilde{b}_j \in \{ X_j , Y_j , Z_j \} \setminus \{ b_j \}$ becomes $\{ Y_j , Z_j \}$. Again, in Sec.~\ref{sec:steane-code}, we identified, for each of these indices, the non-prior stabilizer generators that anticommute with the corresponding elements of~$\mathbb{B}$, which we now formally list, $\forall j \in \{ 5,6 \} \wedge \tilde{b}_j \in \{ Y_j , Z_j \}$:
\begin{equation}
\begin{aligned}
	\acomm{\tilde{b}_5}{c_5} & = \acomm{\tilde{b}_}{IZIZZZI} & = 0, \\
	\acomm{\tilde{b}_6}{c_6} & = \acomm{\tilde{b}_}{IIZZIZZ} & = 0.
\end{aligned}
\end{equation}
Finally, again as described Sec.~\ref{sec:steane-code}, the single-qubit operators on the final qubit all complete a logical operator:
\begin{equation}
	\mu_x X_7 = X_7 b_5 b_6 = IIIIXXX \in [\overline{X}],
\end{equation}
\begin{equation}
	\mu_y Y_7 = Y_7 b_2 b_4 b_5 b_6 = IZIZXXY \in [\overline{Y}],
\end{equation}
\begin{equation}
	\mu_z Z_7 = Z_7 b_2 b_4 = IZIZIIZ \in [\overline{Z}],
\end{equation}
where $\mu_x, \mu_y, \mu_z \in \langle b_1, \dots, b_6 \rangle$.

\end{proof}

\textit{Condition~4}: The logical operators $\overline{X}_j$ and $\overline{Z}_j$ commute with every prior element of $\mathbb{B}$ for all $j \in \{ 1, \dots, n \}$:
\begin{equation}
	\forall k<j : \quad \comm{\overline{X}_j}{b_k} = 0,
	\tag{\ref{eq:con-4-a}}
\end{equation}
\begin{equation}
	\forall k<j : \quad \comm{\overline{Z}_j}{b_k} = 0.
	\tag{\ref{eq:con-4-b}}
\end{equation}

\begin{proof}

We verify the condition for each index pair individually:
\begin{equation}
\begin{aligned}
	\comm{\overline{X}_2}{b_1} & = \comm{XXIIXII}{X_1} & = 0, \\
\end{aligned}
\end{equation}

\begin{equation}
\begin{aligned}
	\comm{\overline{X}_3}{b_1} & = \comm{XIXIIIX}{X_1} & = 0, \\
	\comm{\overline{X}_3}{b_2} & = \comm{XIXIIIX}{Z_2} & = 0, \\
\end{aligned}
\end{equation}

\begin{equation}
\begin{aligned}
	\comm{\overline{X}_4}{b_1} & = \comm{XIIXIXI}{X_1} & = 0, \\
	\comm{\overline{X}_4}{b_2} & = \comm{XIIXIXI}{Z_2} & = 0, \\
	\comm{\overline{X}_4}{b_3} & = \comm{XIIXIXI}{Z_3} & = 0, \\
\end{aligned}
\end{equation}

\begin{equation}
\begin{aligned}
	\comm{\overline{X}_5}{b_1} & = \comm{IIIIXXX}{X_1} & = 0, \\
	\comm{\overline{X}_5}{b_2} & = \comm{IIIIXXX}{Z_2} & = 0, \\
	\comm{\overline{X}_5}{b_3} & = \comm{IIIIXXX}{Z_3} & = 0, \\
	\comm{\overline{X}_5}{b_4} & = \comm{IIIIXXX}{Z_4} & = 0, \\
\end{aligned}
\end{equation}

\begin{equation}
\begin{aligned}
	\comm{\overline{X}_6}{b_1} & = \comm{IIIIXXX}{X_1} & = 0, \\
	\comm{\overline{X}_6}{b_2} & = \comm{IIIIXXX}{Z_2} & = 0, \\
	\comm{\overline{X}_6}{b_3} & = \comm{IIIIXXX}{Z_3} & = 0, \\
	\comm{\overline{X}_6}{b_4} & = \comm{IIIIXXX}{Z_4} & = 0, \\
	\comm{\overline{X}_6}{b_5} & = \comm{IIIIXXX}{X_5} & = 0, \\
\end{aligned}
\end{equation}

\begin{equation}
\begin{aligned}
	\comm{\overline{X}_7}{b_1} & = \comm{IIIIXXX}{X_1} & = 0, \\
	\comm{\overline{X}_7}{b_2} & = \comm{IIIIXXX}{Z_2} & = 0, \\
	\comm{\overline{X}_7}{b_3} & = \comm{IIIIXXX}{Z_3} & = 0, \\
	\comm{\overline{X}_7}{b_4} & = \comm{IIIIXXX}{Z_4} & = 0, \\
	\comm{\overline{X}_7}{b_5} & = \comm{IIIIXXX}{X_5} & = 0, \\
	\comm{\overline{X}_7}{b_6} & = \comm{IIIIXXX}{X_6} & = 0. \\
\end{aligned}
\end{equation}

\begin{equation}
\begin{aligned}
	\comm{\overline{Z}_2}{b_1} & = \comm{IZIZIIZ}{X_1} & = 0, \\
\end{aligned}
\end{equation}

\begin{equation}
\begin{aligned}
	\comm{\overline{Z}_3}{b_1} & = \comm{IIZZZII}{X_1} & = 0, \\
	\comm{\overline{Z}_3}{b_2} & = \comm{IIZZZII}{Z_2} & = 0, \\
\end{aligned}
\end{equation}

\begin{equation}
\begin{aligned}
	\comm{\overline{Z}_4}{b_1} & = \comm{IZIZIIZ}{X_1} & = 0, \\
	\comm{\overline{Z}_4}{b_2} & = \comm{IZIZIIZ}{Z_2} & = 0, \\
	\comm{\overline{Z}_4}{b_3} & = \comm{IZIZIIZ}{Z_3} & = 0, \\
\end{aligned}
\end{equation}

\begin{equation}
\begin{aligned}
	\comm{\overline{Z}_5}{b_1} & = \comm{IIZZZII}{X_1} & = 0, \\
	\comm{\overline{Z}_5}{b_2} & = \comm{IIZZZII}{Z_2} & = 0, \\
	\comm{\overline{Z}_5}{b_3} & = \comm{IIZZZII}{Z_3} & = 0, \\
	\comm{\overline{Z}_5}{b_4} & = \comm{IIZZZII}{Z_4} & = 0, \\
\end{aligned}
\end{equation}

\begin{equation}
\begin{aligned}
	\comm{\overline{Z}_6}{b_1} & = \comm{IZZIIZI}{X_1} & = 0, \\
	\comm{\overline{Z}_6}{b_2} & = \comm{IZZIIZI}{Z_2} & = 0, \\
	\comm{\overline{Z}_6}{b_3} & = \comm{IZZIIZI}{Z_3} & = 0, \\
	\comm{\overline{Z}_6}{b_4} & = \comm{IZZIIZI}{Z_4} & = 0, \\
	\comm{\overline{Z}_6}{b_5} & = \comm{IZZIIZI}{X_5} & = 0, \\
\end{aligned}
\end{equation}

\begin{equation}
\begin{aligned}
	\comm{\overline{Z}_7}{b_1} & = \comm{IZIZIIZ}{X_1} & = 0, \\
	\comm{\overline{Z}_7}{b_2} & = \comm{IZIZIIZ}{Z_2} & = 0, \\
	\comm{\overline{Z}_7}{b_3} & = \comm{IZIZIIZ}{Z_3} & = 0, \\
	\comm{\overline{Z}_7}{b_4} & = \comm{IZIZIIZ}{Z_4} & = 0, \\
	\comm{\overline{Z}_7}{b_5} & = \comm{IZIZIIZ}{X_5} & = 0, \\
	\comm{\overline{Z}_7}{b_6} & = \comm{IZIZIIZ}{X_6} & = 0. \\
\end{aligned}
\end{equation}

\end{proof}

\textit{Condition~5}: We decompose the logical operators into single-qubit Pauli operators to formulate the last condition:
\begin{equation}
	\overline{X}_j = \bigotimes_{t=1}^{n} u_{j,t}, \quad \text{where } u_{j,t} \in \{ I, X, Y, Z \},
	\tag{\ref{eq:logical-decomp-x}}
\end{equation}
\begin{equation}
	\overline{Z}_j = \bigotimes_{t=1}^{n} v_{j,t}, \quad \text{where } v_{j,t} \in \{ I, X, Y, Z \}.
	\tag{\ref{eq:logical-decomp-z}}
\end{equation}
The logical operators $\overline{X}_j$ and $\overline{Z}_j$ anticommute only in $j$:
\begin{equation}
	\forall j \in \{ 1, \dots, n \} : \acomm{u_{j,j}}{v_{j,j}} = 0,
	\tag{\ref{eq:double-info}}
\end{equation}
\begin{equation}
	\forall k,j \in \{1, \dots, n \} \wedge k \neq j : \comm{u_{j,k}}{v_{j,k}} = 0.
	\tag{\ref{eq:double-info-b}}
\end{equation}

\begin{proof}
In principle the decomposition can be taken from Eqs.~\eqref{eq:steane-sequence-x-logical} and~\eqref{eq:steane-sequence-z-logical}, which can be used to calculate that the logical operators $\overline{X}_j$ and $\overline{Z}_j$ anticommute only in $j$. Since the result can be readily seen from Eqs.~\eqref{eq:steane-sequence-x-logical} and~\eqref{eq:steane-sequence-z-logical}, we omit the straightforward computation of these 49 equations.

\end{proof}

Having verified all conditions of Thm.~\ref{thm:sufficient} we conclude that our scheme is optimal.

\bibliographystyle{apsrev4-2} 
\bibliography{bib-refs} 

\end{document}